\newtheorem{lemma}{Lemma}
\newcommand{\Pro}[1]{\mathrm{P(#1)}}
\newcommand{\tr}{\mathrm{Tr}}
\newcommand{\rhoin}{\rho^{\mathrm{(in)}}}
\newcommand{\rhoout}{\rho^{\mathrm{(out)}}}
\newcommand{\rhoctc}{\rho_{\mathrm{CTC}}}
\def\ie{i.~e.~}
\def\eg{e.~g.~}
\newcommand{\st}[1]{\mathbf{#1}}
\newcommand{\map}[1]{\mathcal{#1}}
\newcommand{\spc}[1]{\mathcal{#1}}
\newcommand{\trg}{\operatorname{Tr}}
\newtheorem{axiom}{Axiom}
\begin{document}

 \pagenumbering{Roman}

% \setcounter{page}{3}

%\include{abstract}
% -*-latex-*-
%
% For questions, comments, concerns or complaints:
% iiisgrad@mail.tsinghua.edu.cn
%
% $Log: cover.tex,v $
% THU Revision 1.0
% 2015/03/13 by kyoi
% Based on MIT Thesis template package Ver 1.8 by jdreed
%
% Revision 1.8  2008/05/13 15:02:15  jdreed

% NOTE:
% These templates make an effort to conform to the THU Thesis specifications,
% however the specifications can change.  We recommend that you verify the
% layout of your title page with your thesis advisor and/or the institute
% and/or THU Libraries before printing your final copy.

\title{Interplay between Quantumness, Randomness, and Selftesting}

\author{Xiao Yuan}
\department{Institute for Interdisciplinary Information Sciences}

% Computer or Physics
\major{Physics}

% Choose your degree you're pursuing
\degree{Doctor of Philosophy}

% The default is June.
\degreemonth{December}
\degreeyear{2016}

% The date you are going to submit your thesis
\thesisdate{December 20, 2016}

%% By default, the thesis will be copyrighted to THU.
%% Other copyright options need further modification.

% Specify your supervisor information as \supervisor{name}{title}
% If you have a co-supervisor, use \cosupervisor command
\supervisor{Xiongfeng Ma}{Assistant Professor}
 
% Make the titlepage based on the above information.
% If you need something special and can't use the standard form,
% you can specify the exact text of the titlepage yourself.
% Put it in a titlepage environment and leave blank lines where you want vertical space.
% The spaces will be adjusted to fill the entire page.
% The dotted lines for the signatures are made with the \signature command.
\maketitle

% Signature page
%\cleardoublepage
% Uncomment the next line if you do NOT want a page number on your signature page.
% \pagestyle{empty}
%\signaturepage

% The abstractpage environment sets up everything on the page except
% the text itself.  The title and other header material are put at the
% top of the page, and the supervisors are listed at the bottom.  A
% new page is begun both before and after.  Of course, an abstract may
% be more than one page itself.  If you need more control over the
% format of the page, you can use the abstract environment, which puts
% the word "Abstract" at the beginning and single spaces its text.

%% You can either \input (*not* \include) your abstract file, or you can put
%% the text of the abstract directly between the \begin{abstractpage} and
%% \end{abstractpage} commands.

% First copy: start a new page, and save the page number.
\cleardoublepage
% Uncomment the next line if you do NOT want a page number on your
% abstract and acknowledgments pages.
% \pagestyle{empty}
\begin{abstractpage}

{Quantum information processing shows advantages in many tasks, including quantum communication and computation, comparing to its classical counterpart. The essence of quantum processing lies on the fundamental difference between classical and quantum states. For a physical system, the coherent superposition on a computational basis is different from the statistical mixture of states in the same basis. Such coherent superposition endows the possibility of generating true random numbers, realizing parallel computing, and other classically impossible tasks such as quantum Bernoulli factory. Considering a system that consists of multiple parts, the coherent superposition that exists nonlocally on different systems is called entanglement. By properly manipulating entanglement, it is possible to realize computation and simulation tasks that are intractable with classical means.

Investigating quantumness, coherent superposition, and entanglement can shed light on the original of quantum advantages and lead to the design of new quantum protocols. This thesis mainly focuses on the interplay between quantumness and two information tasks, randomness generation and selftesting quantum information processing. We discuss how quantumness can be used to generate randomness and show that randomness can in turn be used to quantify quantumness. In addition, we introduce the Bernoulli factory problem and present the quantum advantage with only coherence in both theory and experiment. Furthermore, we show a method to witness entanglement that is independent of the realization of the measurement. We also investigate randomness requirements in selftesting tasks and propose a random number generation scheme that is independent of the randomness source.

By investigating the interplay between quantumness and the two information tasks, we have investigated the essence of quantumness and its fundamental role in quantum information processing. Apart from the theoretical significance, the results can be experimentally tested and applied in practice.}

\textbf{Key words}: Coherence; entanglement; randomness; selftesting; quantum cryptography

\end{abstractpage}

% acknowledgement page
\cleardoublepage
\begin{center}
{\LARGE{\textbf{Acknowledgements}}}
\end{center}

The research presented in this Doctor of Philosophy thesis is carried out under the the supervision of Professor Xiongfeng Ma at the Institute for Interdisciplinary Information Sciences at Tsinghua University, China. Being a wonderful mentor in research and an intimate friend in life, it is Xiongfeng who, for the first time, makes me feel the pleasure of doing research.  I acknowledge him for the inspiring instruction, discussion, and encouragement and for sharing his extensive knowledge. In the mean time, I would like to thank Professor Giulio Chiribella for his guidance during the first year of my Ph.D. study. With his altruistic help, I learned the basics of quantum information and completed my first few researches.  In addition, I acknowledge Professor Mile Gu for sharing his insightful thoughts in quantum correlation and relativistic quantum information. Under his guidance, I broadened my knowledge of quantum science and completed an interesting work of quantum information in the presence of time-traveling. 

During my Ph.D. study, I am lucky to have many chances to visit several wonderful academic institutes. My special thanks go to the hosts for their kindly help and valuable discussions. In chronological order of the visiting places, the hosts include Professor Yu-Ao Chen and Jian-Wei Pan at the University of Science and Technology of China, Professor Hoi Fung Chau at the University of Hong Kong, Professor Peter Zoller at Austria University of Innsbruck, Professor Anton Zeilinger at the University of Vienna, Professor Renato Renner at ETH, Professor Nicolas Gisin at the Universit\'e de Gen\`eve, Professor Romain All\'eaume at Telecom ParisTech, Professor Qiang Zhang at the University of Science and Technology of China, Dr. Graeme Smith, Dr. John Smolin and Dr. Charles H. Bennett at IBM's Thomas J. Watson Research Center, Dr. Qingyu Cai at the Chinese Academy of Sciences Wuhan Physics and Mathematics Institute, and Professor Yeong-Cherng Liang at the National Cheng Kung University, Tainan. Especially, I would like to thank Dr. Charles H. Bennett for enthusiastically showing his `antique' of the first quantum key distribution experiment instrument and introducing the quantum side channel. 

My works have been guided by many brilliant minds, including Assad, Syed M; Chen, Luo-Kan; Chen, Tengyun; Cao, Zhu; Fan, Jinyun; Girolami, Davide; Haw, Jing Yan; Huang, Miao; Jiang, Xiao; Lam, Ping Koy; Li, Li; Liu, Ke; Li, Wei; Li, Zheng-Da; Liu, Nai-Le; Liu, Yang; Lu, Chao-Yang; Lu, He; Lutkenhaus Norbert; Ma, Yuwei; Mei, Quanxin; Pan, Jian-Wei; Peng, Cheng-Zhi; Qi, Bing; Ralph, Timothy C; Thompson, Jayne; Vijay, R; Vedral, Vlatko; Weedbrook, Christian; Wang, Weiting; Yan, Zhaopeng; Yao, Xing-Can; Xu, Yuan; Xu, Ping; Zhang, Fang; Zhang, Yan-Bao; Zhang, Zhen; Zhou, Hongyi; Zhou, Shan. I acknowledge my collaborators for sharing their knowledge and offering their help. 

Last but not least, I want to sincerely thank my parents for their selfless love and thoughtful care in every details of my life. I am also very grateful to the students and professors in our institute. I would like to thank my friends for bring pleasures in my life.

  % -*- Mode:TeX -*-
%% This file simply contains the commands that actually generate the table of
%% contents and lists of figures and tables.  You can omit any or all of
%% these files by simply taking out the appropriate command.  For more
%% information on these files, see appendix C.3.3 of the LaTeX manual. 
\tableofcontents
\newpage
\listoffigures
\newpage
\listoftables

\pagenumbering{arabic}
\pagestyle{plain}
\part{Introduction and Preliminaries}
\chapter{Introduction}
Manipulation of quantum information empowers many tasks such as communication \cite{bb84,tele,Ekert1991}, computation \cite{Shor97, Grover96}, and simulation \cite{Feynman1982,lloyd1996universal}. In a communication task, the quantum key distribution protocol by Charles Bennett and Gilles Brassard in 1984 (BB84)\cite{bb84} makes it possible to extend secret keys between two remote parties by transmitting quantum signals. Such a task has been proven impossible by classical methods. In quantum computing, the Shor algorithm named after its inventor Peter Shor \cite{Shor97} factorizes integers \footnote{given an integer $N$, find its prime factors} at exponentially faster speeds than any existing classical methods. 
In quantum simulation \cite{Georgescu14}, one can efficiently simulate quantum systems that requires exponential resources with a classical computer.

To investigate the origin of the quantum information processing power, we need to find the major difference between the manipulation of quantum and classical information.
Focusing on states of physical systems, the major distinction derives from the quantum features or \emph{quantumness}. In different scenarios, the quantumness manifests differently. For instance, considering the whole system, coherent superpositions on a computational basis inherently differ from classical (stochastic) mixtures of the basis states. Named quantum \emph{coherence}, such coherent superpositions underlies the quantumness of a single quantum system \cite{Baumgratz14}. In bipartite systems, the quantumness can also be defined as the nonlocal correlation between the two systems. Considering local operation and classical communication (LOCC) as free or classical operations, quantum \emph{entanglement} is the major quantumness in bipartite states \cite{Bennett96, Vedral98, Horodecki09}.

One important research direction is the quantumness of states, which aims to analyze quantumness in a systematic way. Quantum coherence and entanglement can be quantified by resource frameworks. In general, a quantumness resource framework relies on identifying classical states and classical operations. The corresponding quantumness of an operational task emerges when a quantum behavior cannot be explained by classical means. A state is called \emph{classical} when it exhibits no quantum behavior. Denote the set of classical states by $\mathcal{C}$, then a state that does not belong to $\mathcal{C}$ is called \emph{quantum}. Based on classical states, classical operations are physically realizable operations that cannot generate quantum states from any classical state.
With classical states and operations, a resource framework of quantumness is completed by defining measures, which is a real-valued functions of states. Generally, a quantumness measure should satisfy the monotonicity requirement: that is, classical operations cannot increase the quantumness of a system.

From a mathematical viewpoint, finding legitimate quantumness measures is important for completing the quantumness resource framework. Moreover, these quantities should have meaningful interpretation in detailed operational tasks. For instance, the \emph{entanglement of formation} measures the amount of maximally entangled states on average that are required to prepare the target state in the asymptotical scenario; the \emph{distillable entanglement} measures the amount of maximally entangled states on average that can be obtained via LOCC operations on the target states in the asymptotical scenario.

This thesis focuses on the operational interpretation of quantumness measures. We  consider two operational tasks: randomness processing and selftesting quantum information processing and investigate the key roles of quantumness in both tasks. We also probe the interplay between randomness and selftesting quantum information tasks.

\subsubsection{Quantumness and randomness}
In classical theory, all physical processes are deterministic due to basic Newton's laws. In contrast, Born's rule \cite{born1926quantentheorie} endows the quantum world with true randomness. Such is the counter-intuitiveness of the result that Einstein was quoted as saying `God is not playing at dice'. Nevertheless, the intrinsically random nature of  measurement outcomes is now considered a key characteristic that distinguishes quantum mechanics from classical theory \cite{bell}.

In measurement theory, decoherence, breaking coherence or superposition, in a specific (classical) computational basis results in random outcomes \cite{Zurek09}. Intuitively, from the resource perspective, the randomness can be generated by consuming coherence of a quantum state. In order to quantitatively establish this connection, one needs to find a proper way to assess the randomness of measurement, which normally contains quantum and classical processes. The superficially random outcomes in classical processes are generally not truly random, although they might appear so if information is ignored. Thus, such classical part of randomness should be precluded when quantifying a quantum feature --- coherence. A quantum process, on the other hand, can generate genuine unpredictable randomness, which we call intrinsic (quantum) randomness. Observing such intrinsic random outcomes of measurements would indicate non-classical (quantum) features of objects.

As an example, we can consider the famous Schr\"odinger's cat gedanken experiment as shown in Fig.~\ref{Fig:cat}. In a classical world, a cat might be either alive or dead before observation, which can be described by the density matrix $\rho_{\mathrm{cat}}^\mathrm{C} = (\ket{\mathrm{alive}}\bra{\mathrm{alive}} + \ket{\mathrm{dead}}\bra{\mathrm{dead}})/2$ for the case of being alive and dead equally likely, see Fig.~\ref{Fig:cat}~(a). The observation result of whether the cat is alive or dead looks random, which is due to the lack of knowledge of the cat system. After considering some hidden variables or an ancillary system $E$ that purifies $\rho_{\mathrm{cat}}^\mathrm{C}$, $\ket{\Psi} = \left(\ket{\mathrm{alive}}\ket{0}_E + \ket{\mathrm{dead}}\ket{1}_E\right)/\sqrt{2}$, we can simply observe the system $E$ to infer whether the cat is alive or dead. In quantum mechanics, the cat can be in a coherent superposition of the states of alive and dead, $\rho_{\mathrm{cat}}^\mathrm{Q} = \ket{\psi}\bra{\psi}$, where $\ket{\psi} = \left(\ket{\mathrm{alive}} + \ket{\mathrm{dead}}\right)/\sqrt{2}$, see Fig.~\ref{Fig:cat}~(b). The observation outcome would be intrinsically random according to Born's rule. That is, without directly accessing the system of the cat and breaking the coherence, we can never predict whether the cat is alive or dead better than blindly guessing.
Therefore, the existence of intrinsic randomness can be regarded as a witness for quantum coherence.

\begin{figure}[hbt]
\centering \resizebox{8cm}{!}{\includegraphics{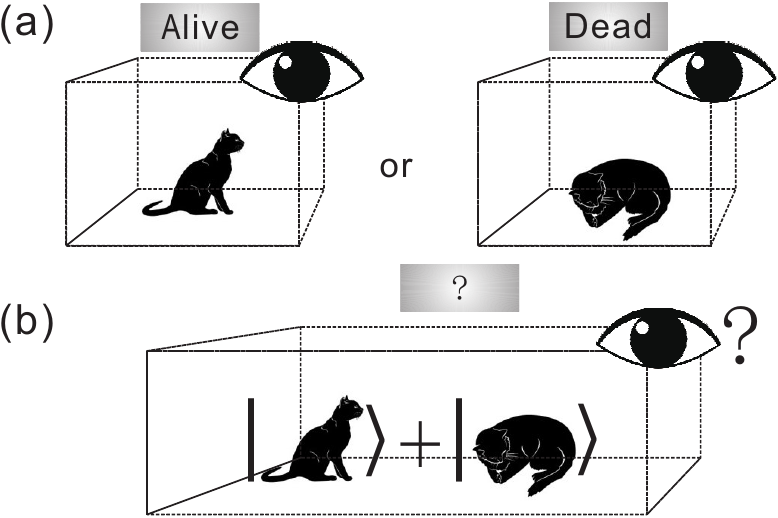}}
\caption{An illustration of Schr\"odinger's cat gedanken experiment.} \label{Fig:cat}
\end{figure}

This thesis investigates the relation between intrinsic randomness and quantum coherence \cite{Yuan15Coherence,yuan2016interplay}. We show that the amount of intrinsic randomness arising from state measurements in a basis indicates the amount of coherence in the same basis. Therefore, we can naturally regard coherence as the resource for generating randomness. In addition, we examine a simple yet interesting randomness processing task, called the Bernoulli factory \cite{Asmussen92, RSA:RSA20333} and investigate how coherence can be used to beat classical method. In the discussed randomness tasks, we reveal that coherence is an important resource for both randomness generation and processing.

\subsubsection{Quantumness and selftesting}
The concept of selftesting is unique in quantum information processing. A selftesting or device independent protocol can maintain its property even with untrusted devices that do not assume the physical implementations \cite{mayers1998quantum, acin06, Brunner14}. Considering the process in Fig.~\ref{Fig:Bell12}(a), a general picture for a single party process involves classical or quantum random input $x$ and output $a$. Without assuming the physical implementation of the box that transforms inputs into outputs, what we observe in practice is the probability distribution of $p(a|x)$. Then, a selftesting protocol is to ensure its property only based on the probability distribution of $p(a|x)$. In the bipartite scenario, Fig.~\ref{Fig:Bell12}(b), we can further impose other requirements on the two parties. For instance, in the Bell test, we generally assume no-signaling between the two parties. Given the inputs $x, y$, outputs $a, b$, and the probability distribution $p(a,b|x,y)$, the Bell test has proven remarkable power in many tasks.

\begin{figure}[hbt]
\centering \resizebox{8cm}{!}{\includegraphics{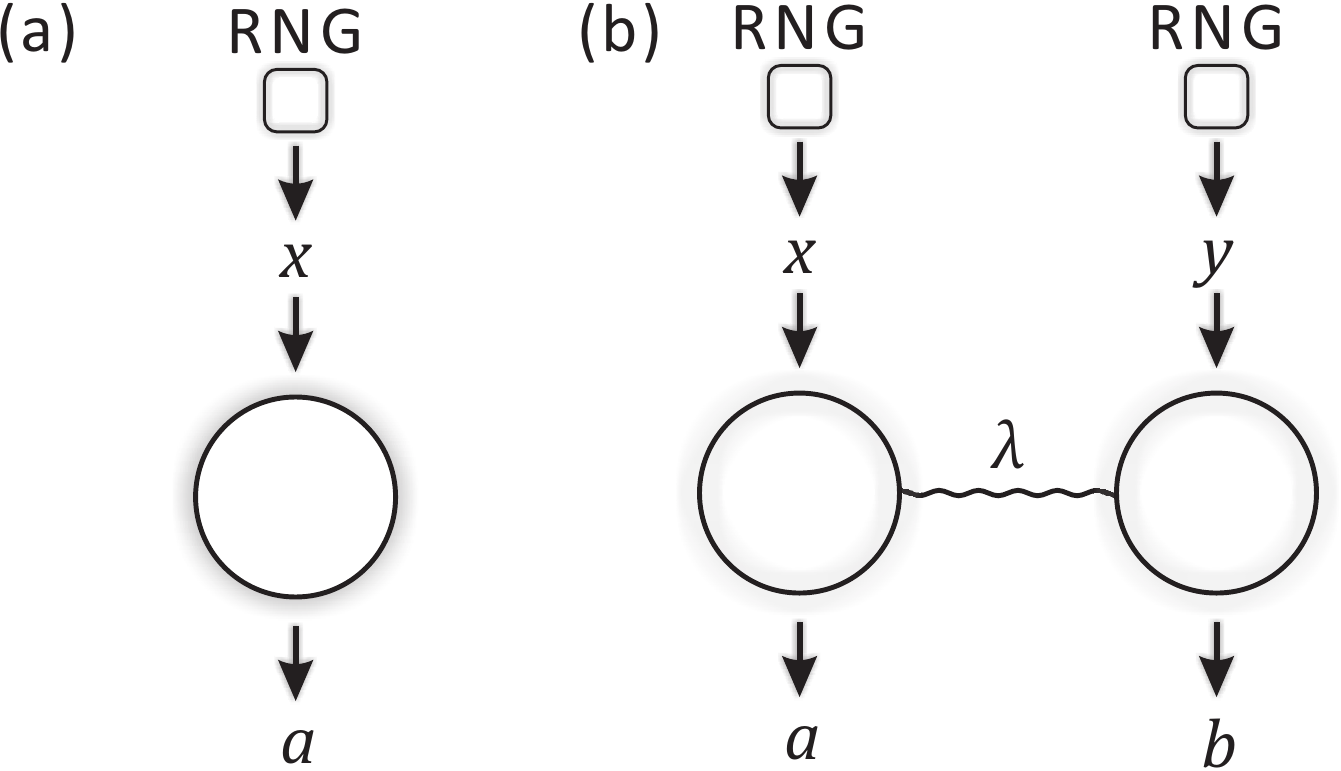}}
\caption{Device independent processing of one (a) and two (b) parties.} \label{Fig:Bell12}
\end{figure}

Here, we take randomness generation as an example. True random numbers can be generated when measuring a coherent state on its basis. Such randomness, however, assumes physical implementation (i.e., the state and its measurement). We instead consider whether random numbers can be generated without the physical implementations. The single device in Fig.~\ref{Fig:Bell12}(a) cannot be used to device independently generate randomness because one can always choose a predetermined sequence that satisfies the probability distribution $p(a|x)$. As the sequence is predetermined, no randomness is generated, thus, it appears that randomness cannot be generated in a selftesting way.

Now, we consider the two-devices case in Fig.~\ref{Fig:Bell12}(b). Following a similar argument, both parties can produce predetermined probability distributions $p(a|x)$ and $p(b|y)$ to simulate the observed probability distribution $p(a,b|x,y)$. Although the two parties cannot communicate, they can still share a predetermined strategy $\lambda$ that is independent of the inputs. The probability distribution of a given $\lambda$ is $p(a|x,\lambda)p(b|y,\lambda)$. On average, the probability distribution the two parties can simulate with predetermined strategy is
\begin{equation}\label{Eq:LHVMIntroduction}
  p(a,b|x,y)_c = \sum_\lambda p(\lambda)p(a|x,\lambda)p(b|y,\lambda),
\end{equation}
where $p(\lambda)$ is the probability distribution of $\lambda$. Such a strategy is called local hidden variable models \cite{EPR35}. We then question whether the probability distribution Eq.~\eqref{Eq:LHVMIntroduction} covers all possible probability distributions $p(a,b|x,y)$. If the answer is yes, then the observed probability distribution $p(a,b|x,y)$ cannot certify randomness; if no, the process generates true random numbers. Remarkably, there exist probability distributions $p(a,b|x,y)$ that emerge from measuring entangled states cannot be simulated as  Eq.~\eqref{Eq:LHVMIntroduction}. In practice, when observing a probability distribution that cannot be simulated, the randomness in the output $x,y$ is assured without assuming any physical realizations.

Although a selftesting protocol does not assume physical realizations, the quantumness remains the crucial ingredient for demonstrating quantum advantages. If there is no quantumness, the process becomes classical and cannot be verified device independently. Conversely, quantumness can be witnessed by selftesting protocols. For instance, entanglement is revealed by violations of Bell inequalities. This thesis investigates the relation between quantumness and selftesting \cite{Branciard13,Yuan14,PhysRevA.93.042317,Zhao16}. Specifically, we investigate the witnessing of general multipartite entanglement via a selftesting protocols with quantum states as inputs.

\subsubsection{Randomness and selftesting}
A selftesting protocol requires genuine random input. If the input is also predetermined, the device can always simulate any probability distribution by a predetermined strategy.
The randomness (freewill) loophole refers to the underlying assumption in Bell tests that different measurement settings can be chosen randomly (freely). Generally, a Bell test requires the input of each party to be fully random in order to avoid information leakage between different parties. If there is a local hidden variable that shares information about the random inputs, where in the worst scenario, the inputs are all predetermined such that each party knows exactly the input of the other party, it is possible to violate Bell inequalities just with local hidden variable model (LHVM) strategies \cite{EPR35}. Since one can always argue that there might exist a powerful creator who determines everything including all the Bell test experiments, this loophole is widely believed to be impossible to close perfectly. In this case, as we cannot prove or disprove the existence of true randomness, the assumption of freewill is indispensable in general Bell tests.

Yet, it is still meaningful to discuss the randomness requirement\footnote{The imperfect input randomness requirement is sometimes called measurement dependence in literature.} of Bell tests in a practical scenario. In this thesis, we suppose that the randomness generation devices are partially controlled by an adversary Eve, who thus possesses certain knowledge of the inputs of Alice and Bob \cite{Yuan2015CHSH,Yuan15CH}. Then Eve can make use of the information about the inputs to fake violations of Bell's inequalities \cite{Hall10} and hence lead to the device independent tasks insecure. Therefore, it is interesting to see how much of randomness needed for a Bell test in order to ensure the correctness of the conclusion. This is especially meaningful when considering a loophole free Bell test \cite{Larsson14, Kofler14} and its applications to practical tasks in the presence of an eavesdropper.

In summary, this thesis investigates three main topics--- quantumness, randomness, and selftesting, and the interplays among these features. 
Following a brief introduction to quantum information theory, we discuss these three features and their interplays from different perspectives. We also investigate two studies on quantum  information in general relativity and axioms and discuss the basic principles of quantum information theory and their extension to a general physical theory.

\chapter{Basics of quantum mechanics}
This chapter briefly covers the basics of quantum mechanics. We focus on the formalism of quantum information formalism, which involves the density matrix and the positive observable valued measures (POVMs). Due to the length limit, we only present the results that are used in this thesis. For a more detailed introduction of quantum information, please see Ref.~\cite{nielsen, wilde2013quantum, watrous2011theory}.

\section{Quantum mechanics formalism---pure states and projective measurements}
In this part, we review the Dirac notation of quantum mechanics and its equivalent form of vectors.
\subsubsection{Pure states}
Unlike classical mechanics, quantum states can be expressed as a superposition of different bases. Following the Dirac bra ket representation, a pure quantum state can be denoted as
\begin{equation}\label{}
  \ket{\psi} = \sum_{i=1}^d\alpha_i\ket{i},
\end{equation}
where the set of $\mathcal{I} = \{\ket{i}\}$ represents a state basis, such as the polarization of the photon or the energy levels of an atom. Suppose the dimension of the $\mathcal{I}$ basis is $d$, then we can regard the state space as a $d$-dimensional Hilbert space $\mathcal{H}_d$ and quantum states as vectors.
Suppose $\mathcal{I} = \{\ket{i}\}$ forms an orthogonal basis, then quantum states can be equivalently denoted as
\begin{equation}\label{}
  {\psi} = \left(\begin{array}{c}
           \alpha_1 \\
           \vdots \\
           \alpha_d \\
         \end{array}\right).
\end{equation}

\subsubsection{Measurements}
In the Dirac representation, a projective measurement can be denoted in bra form as,
\begin{equation}\label{}
  \bra{M} = \sum_{j=1}^d\beta_j\bra{j}.
\end{equation}
The measurement probability is given by
\begin{equation}\label{}
  p = |\bra{M}\psi\rangle|^2 = \left|\sum_{j=1}^d\beta_j\bra{j}\sum_{i=1}^d\alpha_i\ket{i}\right|^2 = \left|\sum_{i, j=1}^d \alpha_i\beta_j\bra{j}i\rangle\right|^2.
\end{equation}
Suppose $\mathcal{I} = \{\ket{i}\}$ forms an orthogonal basis, then the probability is given by
\begin{equation}\label{}
  p = \left|\sum_{i}^d \alpha_i\beta_i\right|^2.
\end{equation}

Similar to the vector representation, a projective measurement can be denoted as a dual vector as
\begin{equation}\label{}
  {M} = (\beta_1, \dots, \beta_d),
\end{equation}
and the probability of measuring $\ket{\psi}$ is given by the square of the inner product
\begin{equation}\label{}
  p =  \left|(\beta_1, \dots, \beta_d)\left(\begin{array}{c}
           \alpha_1 \\
           \vdots \\
           \alpha_d \\
         \end{array}\right)\right|^2 = \left|\sum_{i}^d \alpha_i\beta_i\right|^2.
\end{equation}

\subsubsection{Observables}
In quantum mechanics, an observable $O$ is a Hermitian operator that satisfies $\bra{\psi}O\ket{\phi} = \overline{\bra{\phi}O\ket{\psi}}$ for two arbitrary states $\ket{\psi}$ and $\ket{\phi}$. Here $\overline{\bra{\phi}O\ket{\psi}}$ is the complex conjugate of $\bra{\phi}O\ket{\psi}$. Therefore, the average of $O$ for state $\ket{\psi}$ is given by the real value
\begin{equation}\label{}
  \bar{O} = \bra{\psi}O\ket{\psi}.
\end{equation}
When $\ket{\psi}$ is denoted as a vector and $\bra{\psi}$ as a dual vector, $O$ can be denoted as a Hermitian matrix $O_{ij} = \bra{i}O\ket{j}$. In this case, the average value can be given by matrix multiplication $\bar{O} = \sum_{i,j}\psi_{i}^*O_{ij}\psi_{j}$.

Any Hermitian operator  $O$ has a \emph{spectral decomposition},
\begin{equation}\label{}
  O = \sum_{i}\lambda_i\ket{o_i}\bra{o_i},
\end{equation}
where $\{\ket{o_i}\}$ forms an orthogonal basis, so the average value of $O$ can also be regarded as a projective measurement on the $\{\ket{o_i}\}$ basis. The probability of the $i$th outcome is $p_i = |\bra{o_i}\ket{\psi}|^2$ and the average is
\begin{equation}\label{}
  O = \sum_i \lambda_ip_i.
\end{equation}

\subsubsection{Evolution}
In quantum mechanics, the evolution of a quantum state is determined by the Schr$\mathrm{\ddot{o}}$dinger equation. Given the Hamiltonian ${H}$ of the system, we have
\begin{equation}\label{}
  i\hbar\frac{\partial \ket{\psi(t)}}{\partial t} = {H}\ket{\psi(t)}.
\end{equation}
Considering a closed system where energy is conserved, the Hamiltonian ${H}$ is independent of time and the state can be determined by
\begin{equation}\label{}
  \ket{\psi(t)} = U(t, t_0) \ket{\psi(t_0)},
\end{equation}
where $U(t, t_0) = \int_{t_0}^{t}e^{-i {H}t/\hbar}\mathrm{d}t$ gives the evolution of the state and $\ket{\psi(t_0)}$ is the state at time $t_0$.

In quantum mechanics, the Hamiltonian ${H}$ is Hermitian. In the vector representation, ${H}$ corresponds to a Hermitian matrix that satisfies ${H}^{\dag} = {H}$. Here $\dag$ denotes the hermitian conjugate of ${H}$. Because
\begin{equation}\label{}
  {H} = \sum_{i,j}\bra{i}{H}\ket{j}\ket{i}\bra{j},
\end{equation}
the matrix representation of ${H}$ is given by $H_{i,j}=\bra{i}{H}\ket{j}$.
In this case, the evolution operator $U(t, t_0)$ can be regarded as a unitary operator that satisfies
\begin{equation}\label{}
  U(t, t_0)*U(t_0, t) = I,
\end{equation}
where $U(t_0, t) = \int_{t}^{t_0}e^{-i {H}t/\hbar}\mathrm{d}t$.

To summary, when considering a $d$-dimensional system, pure quantum states, projective measurements, observables and state evolution can be represented as vectors, dual vectors, Hermitian operators, and unitary operators, respectively. Because the vector representation is equivalent to the Dirac bra ket representation, we use them interchangeably.

\section{Composite systems and subsystems}
\subsubsection{Composite system}
We now consider two systems $A$ and $B$ that are defined in Hilbert space $\mathcal{H}_{d_A}$ and $\mathcal{H}_{d_B}$, respectively. Similarly, a pure quantum state on system $AB$ can be represented as
\begin{equation}\label{}
  \ket{\psi}_{AB} = \sum_{i_A,i_B}\alpha_{i_Ai_B}\ket{i_A}_A\ket{i_B}_B,
\end{equation}
where $\mathcal{I}_A = \{\ket{i_A}_A\}$ and $\mathcal{I}_B = \{\ket{i_B}_B\}$ forms orthogonal bases for systems $A$ and $B$, respectively. Equivalently, in vector form, we have
\begin{equation}\label{}
  \ket{\psi}_{AB} =\left(\begin{array}{c}
           \alpha_{11} \\
           \alpha_{12} \\
           \vdots \\
           \alpha_{1d_B} \\
           \vdots \\
           \alpha_{d_A1} \\
           \alpha_{d_A2} \\
           \vdots \\
           \alpha_{d_Ad_B} \\
         \end{array}\right).
\end{equation}

Projective measurements, observables and evolution can be similarly defined. Next, we move to the density matrix description of states for subsystems. Before, we first redefine the pure state formalism with states and measurement denoted as matrices.

\subsubsection{Subsystems}
In the Dirac notation, quantum states and measurements are given by $\ket{\psi}$ and $\bra{M}$ and the measurement probability is given by $|\bra{M}\psi\rangle|^2$, respectively. Equivalently, we can denote quantum state as $\rho = \ket{\psi}\bra{\psi}$ and a measurement as $P = \ket{M}\bra{M}$. Then the measurement probability is
\begin{equation}\label{}
  p = \tr[\rho P] = |\bra{M}\psi\rangle|^2,
\end{equation}
where $\tr$ is the trace operation. Thus, quantum states and measurements can also be denoted as matrices.

Consider now a projective measurement $P_A = \ket{M}_A\bra{M}_A$ on system $A$ of $\ket{\psi}_{AB}$ and the probability of the measurement. Because systems $A$ and $B$ are correlated, we cannot directly calculate the probability of solely measuring system $A$ from the state. In this case, we consider that a projective measurement on the $\mathcal{I}_B$ basis is also applied on system $B$. The probability of projecting onto $\ket{M}_A\bra{M}_A$ and $\ket{i_B}_B\bra{i_B}_B$ is calculated by
\begin{equation}\label{}
  p(M_A, i_B) = \tr[(\ket{\psi}_{AB}\bra{\psi}_{AB})(\ket{M}_A\bra{M}_A\otimes\ket{i_B}_B\bra{i_B}_B)],
\end{equation}
and the probability of projecting system $A$ onto $P_A$ is
\begin{equation}\label{}
\begin{aligned}
  p(M_A) &= \sum_{i_B}p(M_A, i_B) \\
  &= \sum_{i_B}\tr[(\ket{\psi}_{AB}\bra{\psi}_{AB})(\ket{M}_A\bra{M}_A\otimes\ket{i_B}_B\bra{i_B}_B)]\\
  & = \tr\left[\left(\ket{\psi}_{AB}\bra{\psi}_{AB}\right)\left(\ket{M}_A\bra{M}_A\otimes\sum_{i_B}\ket{i_B}_B\bra{i_B}_B\right)\right]\\
  & = \tr[\rho_A P_A],
\end{aligned}
\end{equation}
where
\begin{equation}\label{}
  \rho_A = \tr_{B}[\left(\ket{\psi}_{AB}\bra{\psi}_{AB}\right)],
\end{equation}
with $\tr_{B}$ being the trace over system $B$ only.

Therefore, when measuring a subsystem, one can equivalently describe the state with a \emph{density matrix} by tracing out the other systems. In general, it is easy to verify that a density matrix $\rho$ should have the following properties:
\begin{itemize}
  \item $\rho$ is a Hermitian operator.
  \item $\rho$ is nonnegative. That is, its spectral values are nonnegative.
  \item $\tr[\rho] = 1$.
\end{itemize}

\subsubsection{Qubit systems}
A two-level system is referred to as a quantum bit or qubit system. Its density matrix is a $2 \times 2$ Hermitian matrix. The Pauli matrices
\begin{equation}\label{}
\sigma_x = \left(
           \begin{array}{cc}
             0 & 1 \\
             1 & 0 \\
           \end{array}
         \right),
\sigma_y = \left(
           \begin{array}{cc}
             0 & -i \\
             i & 0 \\
           \end{array}
         \right),
\sigma_z = \left(
           \begin{array}{cc}
             1 & 0 \\
             0 & -1 \\
           \end{array}
         \right),
\end{equation}
together with the identity matrix $I$ form a basis for $2\times2$ Hermitian matrices. That is, given $e_0 = I$, $e_1 = \sigma_x$,  $e_2 = \sigma_y$, $e_3 = \sigma_z$, and the inner product to be the inner product of matrices, we can verify that
\begin{equation}\label{}
  \langle{e_i,e_j}\rangle = \tr[e_i^\dag e_j] = 2\delta_{i,j}, \forall i,j\in\{1,2,3,4\}.
\end{equation}

For any single qubit state $\rho$, its \emph{Bloch sphere} representation is
\begin{equation}\label{}
  \rho = \frac{I + n_x\sigma_x + n_y\sigma_y + n_z\sigma_z}{2}.
\end{equation}
Here, $n_x, n_y, n_z$ are real values and $n_x^2 + n_y^2 + n_z^2 \le 1$. It is easy to verify that
\begin{equation}\label{}
  n_i = \tr[\rho \sigma_i], \forall i = x, y ,z
\end{equation}
Two qubit states $\rho_{AB}$ can also be decomposed in the Pauli matrices basis,
\begin{equation}\label{}
  \rho_{AB} = \sum_{i,j}\lambda_{i,j}e_i\otimes e_j.
\end{equation}
Here, the coefficients are determined by the average value
\begin{equation}\label{}
  \lambda_{i,j} = \tr[\rho_{AB}(e_i\otimes e_j)]/4.
\end{equation}

\subsubsection{Purification and Schmidt decomposition}
For any state $\rho$ with a spectral decomposition of $\rho = \sum_ip_i\ket{\psi_i}\bra{\psi_i}$, one can always find its purification. That is, we can find a pure bipartite state $\ket{\psi}_{AB}$ such that 
\begin{equation}\label{}
  \ket{\psi}_{AB} = \sum_{i}\sqrt{p_i}\ket{\psi_i}\ket{i},
\end{equation}
such that $\rho = \tr_B[\ket{\psi}_{AB}\bra{\psi}_{AB}]$.

For any pure state $\ket{\psi}_{AB}$ of a bipartite system,  orthonormal bases $\{\ket{i}_A\}$ and $\{\ket{i¡®}_B\}$ exist such that:
\begin{equation} \label{Schmidt decomposition}
\begin{aligned}
\ket{\psi}_{AB}=\sum_i\sqrt{p_i}\ket{i}_A\otimes\ket{i'}_B.
\end{aligned}
\end{equation}
The subsystems $A$ and $B$ have the same eigenvalues, $p_i$s and the number of $p_i$s is called the Schmidt number of $\ket{\psi}_{AB}$. The pure state is an entangled state when the Schmidt number is greater than one. It is easy to see that the Bell states are entangled states.

\subsubsection{Positive observable valued measures}
When performing local measurements on a joint state, the state can be denoted as a density matric to simplify the calculation. We now consider performing a joint measurement and want to know the measurement probability of a local system. Suppose a joint measurement $P_{AB} = \ket{M}_{AB}\bra{M}_{AB}$ is performed on $\rho_{A}\otimes\rho_{B}$, then the probability distribution is 
\begin{equation}\label{}
  p = \tr[P_{AB}\rho_{A}\otimes\rho_{B}] = \tr[M_{A}\rho_{A}].
\end{equation}
Here, we first trace out system $B$ to get
\begin{equation}\label{}
  M_A = \tr_B[\ket{M}_{AB}\bra{M}_{AB}(I\otimes\rho_B)].
\end{equation}
In this case, when focusing only on system $A$, the effective measurement performed on system $A$ is $M_A$. Therefore, general measurements are defined by positive observable valued measures (PVOMs). That is, $\{M_i\ge0, \forall i\}$ and $\sum_i M_i = I$.

\subsubsection{Entropy of quantum states}
For any quantum state $\rho$, the definition of function $f$ acting on $\rho$ is given by acting on its spectral decomposition:
\begin{equation}\label{}
  f(\rho) = \sum_if(\lambda_i)\ket{i}\bra{i}
\end{equation}
where $\rho = \sum_i\lambda_i\ket{i}\bra{i}$ and $\{\ket{i}\}$ forms an orthogonal basis.

The entropy of quantum state $\rho$ is defined as
\begin{equation}\label{}
  S(\rho) \equiv -\tr[\rho\log\rho],
\end{equation}
or equivalently
\begin{equation}\label{}
  S(\rho) = -\sum_i\lambda_i\log\lambda_i.
\end{equation}
Here, $0\log0\equiv0$.

The relative entropy of quantum state is 
\begin{equation}\label{}
  S(\rho||\sigma) \equiv \tr[\rho\log\rho] - \tr[\rho\log\sigma].
\end{equation}
We also know that $S(\rho||\sigma)\ge0$, where the equality holds if and only if $\rho = \sigma$.

For a bipartite quantum state $\rho_{AB}$, the conditional entropy and mutual information is defined by
\begin{equation}\label{}
\begin{aligned}
S(A|B) &\equiv S(AB) - S(B)\\
S(B|A) &\equiv S(AB) - S(A)\\
S(A:B) &\equiv S(A) +S(B) - S(AB).
\end{aligned}
\end{equation}
When $\rho_{AB} = \sum_i p_i\ket{i}\bra{i}\otimes \rho_i$, we also have
\begin{equation}\label{}
  S(\rho_{AB}) = H(p_i) + \sum_i p_iS(\rho_i),
\end{equation}
where $H(p_i) = -\sum_ip_i\log p_i$.

\chapter{Quantumness, selftesting, and randomness}
This chapter introduces the basic concepts of quantumness, selftesting, and randomness. In Sec.~\ref{Sec:quantumness}, I introduce the quantumness of states, including the coherence of a single quantum system and the entanglement of multipartite systems. Sec.~\ref{Sec:selftesting} introduces the Bell nonlocality test, which is the basic tool for selftesting protocols. Finally, Sec.~\ref{Sec:randomness} reviews the development of quantum random number generation.

\section{Quantumness}\label{Sec:quantumness}
This section introduces the basics of quantumness of states. For a single quantum system, we focus on its coherence on the computational basis. We refer to Ref.~\cite{2016arXiv160902439S, marvian2016quantify} for recent reviews on this subject. For multipartite systems, we mainly focus on entanglement correlations. Nice reviews on this subject are available in Ref.~\cite{Plenio2007Measures,Horodecki09,guhne2009}.
\subsection{Quantum coherence}
As a key feature of quantum mechanics,  coherence measures the superposition power on the computational basis and is often considered as a basic ingredient for quantum technologies \cite{giovannetti2011advances, lambert2013quantum}. Considerable effort has been undertaken to theoretically formulate the quantum coherence \cite{Glauber63, Sudarshan, luo2005quantum, Aberg06, monras2013witnessing,Baumgratz14,Aberg14,Girolami14}. Recently, a comprehensive framework of coherence quantification has been established \cite{Baumgratz14}, by which coherence is considered to be a resource that can be characterized, quantified and manipulated in a manner similar to that of another important feature--- quantum entanglement \cite{Bennett96, Vedral98, Plenio2007Measures, Horodecki09}.
Here, we focus on the resource framework of coherence.

In a general $d$-dimensional Hilbert space and a computational basis ${I}= \{\ket{i}\}_{i=1,2,\dots,d}$, coherence measures its superposition power on the basis. Note that, any state that can be represented by a diagonal state of ${I}$, that is,
\begin{equation}\label{Eq:sigma}
  \delta = \sum_{i=1}^d p_i\ket{i}\bra{i},
\end{equation}
has no superposition. Thus, such state is called incoherent (classical) state and the set of such state is denoted by $\mathcal{I}$. Conversely, a maximally coherent state is given by the maximal superposition state
\begin{equation}\label{Eq:Psid}
\ket{\Psi_d} = \frac{1}{\sqrt{d}}\sum_{i = 1}^d \ket{i},
\end{equation}
up to arbitrary relative phases between the components $\ket{i}$.

When considering coherence as a resource, incoherent states are thus ``useless'' or ``free'' states. If we make an analogy with the theory of thermodynamics, incoherent state would be similar to thermal states from which no energy can be extracted. In thermodynamics, thermal operations are generally considered as free operations. Applying a thermal operations on thermal state results in a thermal state. In the same spirit, one can define a ``free'' or incoherent operation to be the operation that transforms incoherent state only to incoherent state. That is, incoherent operations are defined by incoherent completely positive trace preserving (ICPTP) maps $\Phi_{\mathrm{ICPTP}}(\rho) = \sum_n K_n\rho K_n^\dag$, where the Kraus operators $\{K_n\}$ satisfy $\sum_n K_n K_n^\dag = I$ and  $K_n \mathcal{I} K_n^\dag \subset \mathcal{I}$. In the case, where post-selections are enabled, the output state corresponding to the $n$th Kraus operation is given by $  \rho_n = {K_n\rho K_n^\dag}/{p_n}$, where $p_n = \mathrm{Tr}\left[ K_n\rho K_n^\dag\right]$ is the probability of obtaining the outcome $n$.

Given the definition of incoherent states and incoherent operations, we can measure the amount of coherence.
Generally, a measure of coherence is a map $C$ from quantum state $\rho$ to a real non-negative number that satisfies the properties listed in Table~\ref{Fig:Properties}.
\begin{table}[htb]
\begin{framed}
\centering
\begin{enumerate}[(C1)]
\item
Coherence vanishes for all incoherent states. That is, $C(\delta) = 0$, for all $\delta\in \mathcal{I}$. A stronger requirement claims that (C1') $C(\delta) = 0$, iff $\delta\in \mathcal{I}$.
\item
\emph{Monotonicity}: Coherence should not increase under incoherent operations. Thus, (C2a) $C(\rho) \geq C(\Phi_{\mathrm{ICPTP}}(\rho))$, and (C2b) $C(\rho) \geq \sum_n p_nC(\rho_n)$, where (C2b) is for the case where post-selection is enabled.
\item
\emph{Convexity}: Coherence cannot increase under mixing states, $\sum_e p_eC(\rho_e) \geq C(\sum_e p_e\rho_e)$.
\end{enumerate}
\end{framed}
\caption{Properties that a coherence measure should satisfy.} \label{Fig:Properties}
\end{table}

There are various measures for coherence. Considering the distance measure of two quantum states, the measure of coherence may be defined as the minimum distance from $\rho$ to all incoherent
states in $\mathcal{I}$. Two examples \cite{Baumgratz14} are now presented.

\emph{Relative entropy:}
Here the relative entropy is used as the distance measure.
\begin{equation}\label{eq:relativeentropy}
C_{rel.ent.,J}(\rho)=\min_{\sigma_J \subset \mathcal{I}} S_J(\rho||\sigma_J)=S_J(\rho_{diag})-S_J(\rho)
\end{equation}
where $\rho_{diag}$ only contains diagonal elements of $\rho$.

\emph{$l_1$ norm:}
Another distance measure is a function of the off-diagonal elements of the quantum state. The simplest form is the $l_1$ norm, which is
given by
\begin{equation}\label{eq:l1norm}
C_{l_1,J}(\rho)=\min_{\sigma_J \subset \mathcal{I}}||\rho-\sigma_J||_{l_1}=\sum_{i,j,i\neq j}|\rho_{ij}|
\end{equation}

Besides the distance measures, coherence can be defined in other ways.

\emph{Convex roof:}
According to \cite{Yuan15Coherence}, the intrinsic randomness is also a measure of coherence, therefore we have
\begin{equation}\label{eq:convexroof}
C_J(\rho)=R_J(\rho)=\min_{p_e,\ket{\psi_e}}\sum_e p_e R_J(\ket{\psi_e})
\end{equation}
where $\rho=\sum_e p_e \ket{\psi_e} \bra{\psi_e}$ and $\sum_e p_e=1$, and the minimization runs over all possible decompositions.

\subsection{Quantum entanglement}
\subsubsection{Entanglement framework}
Quantum entanglement describes the nonlocal correlation between different systems. For instance, considering in the bipartite scenario, any product state
\begin{equation}\label{}
  \sigma_{AB} = \rho_A\otimes\rho_B
\end{equation}
has no nonclassical correlation. In addition, a mixture of classical states should also be classical state. Thus, we say that a state is separable when it can be written as
\begin{equation}\label{}
  \sigma_{AB} = \sum_{i}p_i\rho_A^i\otimes\rho_B^i,
\end{equation}
where $p_i\ge0,\forall i, \sum_{i}p_i=1$.

Similar to the framework of coherence, we also need to define classical operations for entanglement. In the same spirit of incoherent operations that do not create coherence from incoherent states, a separable operation is defined such that no entangled state can be generated from separable state. In practice, the operation of local operation and classical communication (LOCC) draws more attention because it has operational meanings. In this case, as a strict subset of separable operations, LOCC are generally referred as the ``free'' operation for entanglement.

Given the definitions of separable states and LOCC operations, we propose measures for entanglement that have the following properties.
\begin{table}[htb]
\begin{framed}
\centering
\begin{enumerate}[(E1)]
\item
$E(\rho^{AB})$ vanishes when $\rho^{AB}$ is separable.
\item
\emph{Monotonicity}: $E(\rho^{AB})$ cannot increase under an LOCC operation, that is, (E2a) $E[\Lambda^{LOCC}(\rho^{AB})]\leq E(\rho^{AB})$. This condition is often replaced by a stronger condition. (E2b) $E(\rho^{AB})$ should not increase on average under LOCC that maps $\rho^{AB}$ to $\rho^{AB}_k$ with probability $p_k$, then $\sum_k p_k E(\rho^{AB}_k)\leq E(\rho^{AB})$.
\item
\emph{Convexity}: $E(\rho^{AB})$ decreases under mixing, $E(\sum_k p_k \rho^{AB}_k) \leq \sum_k p_k E(\rho^{AB}_k)$.
\item
$E(\rho^{AB})$ is invariant under all local unitary operations, that is,  $E(\rho^{AB})=E(U_A \otimes U_B \rho^{AB} U_A^\dag \otimes U_B^\dag)$.
\end{enumerate}
\end{framed}
\caption{Properties that an entanglement measure should satisfy.} \label{Fig:entanglementProperties}
\end{table}

Two widely adopted measures are the relative entropy of entanglement and the entanglement of formation.

\emph{1. Relative entropy of entanglement:}
\begin{equation}\label{eq:relativeentropy3}
E_{rel.ent.}(\rho^{AB})=\min_{\sigma^{AB}} S(\rho^{AB}||\sigma^{AB})
\end{equation}
where the minimization runs over all separable states $\sigma^{AB}$.

\emph{2. Entanglement of formation:}
\begin{equation}\label{eq:relativeentropy3}
E_{EOF}(\rho^{AB})=\min_{p_e,\ket{\psi_e^{AB}}}\sum_e p_e E_{EOF}(\ket{\psi_e^{AB}})
\end{equation}
where the minimization runs over all possible decomposition of $\rho^{AB} = \sum_{e}p_e\ket{\psi_e^{AB}}\bra{\psi_e^{AB}}$ and $E_{EOF}(\ket{\psi_e^{AB}}) = S(\rho^A)$ with $\rho^A$ being the density matrix of system $A$.

\subsubsection{Entanglement witness}
Quantum entanglement plays an important role in the nonclassical phenomena of quantum mechanics. Being the key resource for many tasks in quantum information processing, such as quantum computation \cite{wiesner}, quantum teleportation \cite{tele} and quantum cryptography \cite{bb84,Ekert1991}, entanglement needs to be verified in many scenarios. There are several proposals to witness entanglement and we refer to Ref. \cite{guhne2009} for a detailed review.

A conventional way to detect entanglement, entanglement witness (EW), gives one of two outcomes: `Yes' or `No', corresponding to conclusive result that the state is entangled or fail to draw a conclusion, respectively.  Mathematically, for a given entangled quantum state $\rho$, an Hermitian operator $W$ is called a witness, if $tr[W\rho] < 0$ (output of `Yes') and $tr[W\sigma]\geq0$ (output of `No') for any separable state $\sigma$. Note that there could also exist entangled state $\rho'$ such that $tr[W\rho']\geq0$ (output of `No'). The EW method is shown schematically in Fig.~\ref{Fig:EW0}.
\begin{figure}[hbt]
\centering
\includegraphics[width=8cm]{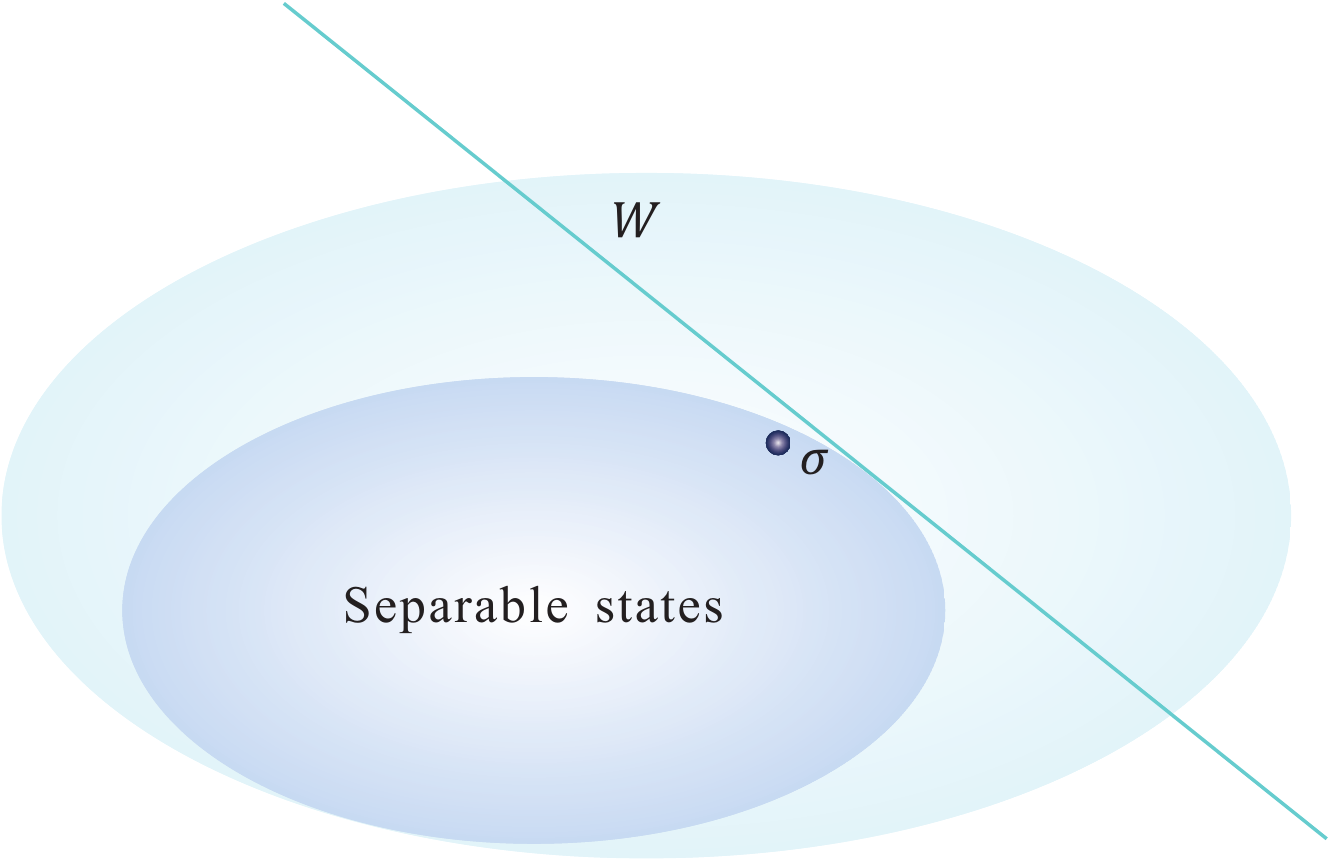}
\caption{Witnessing entanglement via entanglement witness operators.} \label{Fig:EW0}
\end{figure}

In an experimental verification, one can realize the conventional EW with only local measurements by decomposing $W$ into a linear combination of product Hermitian observables \cite{guhne2009}. For example, we can consider a Werner state and the EW
\begin{equation}\label{rhovab}
   \rho^v_{AB}=(1-v)|\Psi^-\rangle\langle\Psi^-|+\frac v4I,
\end{equation}
with $v\in[0,1]$ and $|\Psi^-\rangle=(|01\rangle-|10\rangle)/\sqrt{2}$ and $I$ being the identity matrix. The state is entangled if $v<1/3$, which can be witnessed by the EW,
\begin{equation}\label{wit}
W=\frac{1}{2}I-|\Psi^-\rangle\langle\Psi^-|,
\end{equation}
and its result, $\tr[W\rho^v_{AB}] = (3v-1)/4$. When considering local measurements of Pauli operators, it is easy to verify that
\begin{equation}\label{}
  W = \frac{1}{4}(I\otimes I + \sigma_x\otimes\sigma_x + \sigma_y\otimes\sigma_y + \sigma_z\otimes\sigma_z).
\end{equation}
In experiment, one simply measures local observables of $I\otimes I$, $\sigma_x\otimes\sigma_x$, $\sigma_y\otimes\sigma_y$, $\sigma_z\otimes\sigma_z$ and take the average to get the estimation of $W$.

\section{Selftesting: Bell nonlocality test}\label{Sec:selftesting}
The basic idea of selftesting quantum information processing is to guarantee the quantum advantage with only the observed statistics instead of the implementation device. The key ingredient for fully selftesting is based on the violation of Bell inequalities. Bell test \cite{bell} is motivated to rule out local hidden variable models (LHVMs) \cite{EPR35}. The faithful violation of a Bell inequality assures that the underlying physical process cannot be explained with LHVMs. In quantum information processing, violations of Bell's inequalities are powerful tools that enable device independent tasks, such as quantum key distribution \cite{Mayers98, acin06,masanes2011secure,Vazirani14}, randomness amplification \cite{colbeck2012free, gallego2013full, Dhara14} and generation \cite{Colbeck11, Fehr13, Pironio13, Vazirani12}, entanglement quantification \cite{Moroder13}, and dimension witness \cite{Brunner08}. In this section, we introduce the background of Bell nonlocality test and leave its application to randomness generation in next section. We also leave the discussion of semi-selftesting quantum information in Part III.

\subsection{Clauser-Horne-Shimony-Holt inequality}
One of the best-known Bell inequalities is the Clauser-Horne-Shimony-Holt (CHSH) inequality \cite{CHSH}, which may be expressed in many ways. We study it from a quantum game point of view.

As shown in Fig.~\ref{Fig:BellTest}, two space-like separated parties, Alice and Bob, choose input bit settings $x$ and $y$ at random and output bits $a$ and $b$ based on their inputs and pre-shared quantum ($\rho$) and classical ($\lambda$) resources, respectively. The probability distribution $p(a,b|x,y)$,  obtaining outputs $a$ and $b$ conditioned on inputs $x$ and $y$, is determined by specific strategies of Alice and Bob. By assuming that the input settings $x$ and $y$ are chosen fully randomly and equally likely, the CHSH inequality is defined by a linear combination of the probability distribution $p(a,b|x,y)$ according to
\begin{equation}\label{eq:Bell}
  S = \sum_{a,b,x,y} (-1)^{a\oplus b + x\cdot y}p(a,b|x,y) \leq S_C = 2,
\end{equation}
where the plus operation $\oplus$ is modulo 2, $\cdot$ is numerical multiplication, and $S_C$ is the (classical) bound of the Bell value $S$ for all LHVMs.

\begin{figure}[hbt]
\centering \resizebox{5cm}{!}{\includegraphics{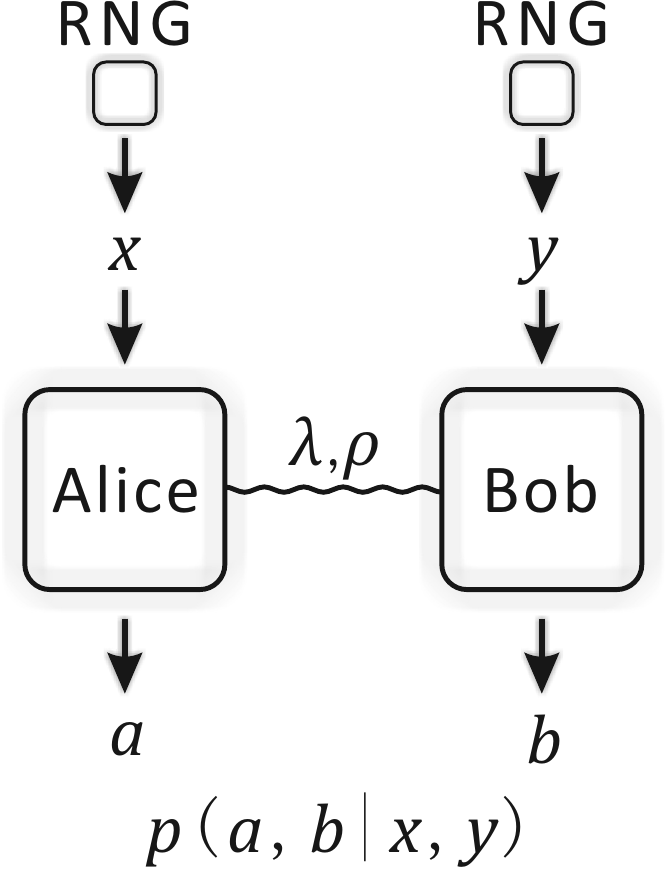}}
\caption{Bipartite Bell inequality. The inputs $x$ and $y$ of Alice and Bob are determined by perfect random number generators (RNGs), which produce uniformly distributed random numbers.} \label{Fig:BellTest}
\end{figure}

An achievable bound for the quantum theory is $S_Q = 2\sqrt{2}$  \cite{cirel1980quantum}. In this case, a violation of the classical bound $S_C$ indicates the need for alternative theories other than LHVMs, such as quantum theory. For general no signalling (NS) theories \cite{prbox}, we denote the corresponding upper bound as $S_{NS} = 4$. It is straightforward to see that $S_{NS} \geq S_Q\geq S_C$.

Different strategies impose different constraints on the probability distribution.
\begin{itemize}
  \item Classical: $p(a,b|x,y) = \sum_\lambda q(\lambda)p(a|x,\lambda)p(b|y,\lambda)$
  \item Quantum: $p(a,b|x,y) = \mathrm{Tr}[\rho_{AB}M_a^x\otimes M_b^y]$
  \item No-signaling: $\sum_a p(a,b|x,y) = \sum_a p(a,b|x',y), \sum_b p(a,b|x,y) = \sum_b p(a,b|x, y')$
\end{itemize}

\subsection{Practical loopholes}
In practice, the conclusion of the violation of a Bell test depends on several assumptions. %\subsection{Experiment problems}
Experimental demonstrations suffer from three major loopholes; a faithful Bell test should close all such loopholes.

\emph{Locality loophole:}
The measurement events of Alice and Bob should be space-like separated. If this condition is not satisfied, Bell's inequality can be violated by signaling even with LHVMs. This loophole can be closed by separating Alice and Bob sufficiently far apart such that the measurement events become space-like separated. In experiment, this loophole is closed in optical systems \cite{Weihs98} and appears nearly closed in atomic systems \cite{hofmann2012heralded}.

\emph{Efficiency loophole:}
The detection efficiency must be greater than a threshold to ensure violation of Bell inequalities without assuming fair sampling. The famous Clauser-Horne (CH) or Eberhard \cite{Eberhard93} test show that the efficiency should be at least 2/3 for each party, which is also proven to be a tight bound \cite{Massar03, Wilms08} for all bipartite Bell tests with two inputs. The efficiency loophole has been closed in different realizations \cite{rowe2001experimental,Christensen13,giustina2013bell}.

\emph{Randomness loophole:}
The inputs $x$ and $y$ should be random and thus cannot be predetermined. Also, we require $x$ and $y$ to be uncorrelated with each other and also come from different runs \cite{Koh12, Pope13}. In experiment, this loophole cannot be closed perfectly, as we can never unconditionally certify the randomness without a faithful Bell test, which in turn requires faithful randomness. Thus, we have to assume the existence of a true random seed. In practice, we can use independent RNGs, such as causally disconnected cosmic photons \cite{Gallicchio14}. Conversely, if we can well characterize the randomness, we can also check whether the input randomness satisfies the requirement \cite{Hall10,Koh12,Koh12,Pope13,Yuan15Coherence,Yuan15CH} that guarantees the conclusion even with imperfectly randomness input.

%\subsection{Technique issues}
In experiment, the conclusion of a Bell test is not faithful unless these three major loopholes are closed.
In addition, we must address several technical issues that may also invalidate the Bell test conclusion or make a violation impossible.

\emph{Coincidence-time problem:} If the local detection time depends on the measurement settings, a coincidence time loophole \cite{larsson2004bell} may exist.
  This loophole can be solved by distinguishing each coincidence detection event such that it does not depend on the measurement settings.

\emph{Imperfect devices:} The experiment devices cannot be perfect, which will affect the result of a Bell test.
  \begin{itemize}
    \item Source: The input photon source will differ from the desired source due to practical imperfections. For instance, the photon source may contain multiple photon pairs, which will affect the fidelity of the prepared state.
    \item Dark count: The measurement of the state will be affected by dark counts from environment. A Bell violation will be observed only if the dark count is below a certain threshold.
    \item Misalignment error: In experiment, the measurement may contain misalignment errors that output opposite result. Misalignment error should also be below a certain threshold to guarantee a Bell violation.
  \end{itemize}

\emph{Finite statistics, memory problem:}
In the most general scenario, the measurement devices of Alice and Bob contain a memory such that the outputs of the current run can be conditioned on the inputs and outputs of previous runs \cite{Barrett02}. In this case, we cannot directly obtain the probability distribution. This loophole can only be closed by considering statistics test of a Bell inequality with correlated strategy. Most previous experiments \cite{Christensen13,giustina2013bell} consider asymptotic condition and assume the data to be independent and identically distributed (i.i.d.).

\emph{Nonuniform random inputs:}
Nonuniform random inputs do not affect the CH inequality, which is defined by a linear combination of probability distributions. However, Eberhard's inequality, which is used in practice, should be normalized when the input random bits are not uniform. In this case, we have to consider finite statistics with nonuniform random inputs.

\section{Randomness generation and quantification}\label{Sec:randomness}
In this section, we review the developments of quantum random number generators \cite{Ma2016QRNG}.
\subsection{Randomness generation}
Random numbers play essential roles in many fields, such as, cryptography \cite{Shannon1949OTP}, scientific simulations \cite{metropolis1949monte}, lotteries, and fundamental physics tests \cite{bell1964einstein}. These tasks rely on the unpredictability of random numbers, which generally cannot be guaranteed in classical processes. In computer science, random number generators (RNGs) are based on pseudo-random number generation algorithms \cite{knuth2014art}, which deterministically expand a random seed. Although the output sequences are usually perfectly balanced between 0s and 1s, a strong long-range correlation exists, which can undermine cryptographic security, cause unexpected errors in scientific simulations, or open loopholes in fundamental physics tests \cite{Kofler06,Hall10,Yuan2015CHSH}.

Many researchers have attempted to certify randomness solely based on the observed random sequences. In the 1950s, Kolmogorov developed the \emph{Kolmogorov complexity} concept to quantify the randomness in a certain string \cite{Kolmogorov1998387}. A RNG output sequence appears random if it has a high Kolmogorov complexity. Later, many other statistical tests \cite{marsaglia1996diehard,rukhin2001statistical,kim2004corrections} were developed to examine randomness in the RNG outputs. However, testing a RNG from its outputs can never prevent a malicious RNG from outputting a predetermined string that passes all of these statistical tests. Therefore, true randomness can only be obtained via processes involving inherent randomness.

In quantum mechanics, a system can be prepared in a superposition of the (measurement) basis states, as shown in Fig.~\ref{Fig:superposition}. According to Born's rule, the measurement outcome of a quantum state can be intrinsically random, i.e. it can never be predicted better than blindly guessing. Therefore, the nature of inherent randomness in quantum measurements can be exploited for generating true random numbers.  Within a resource framework, coherence \cite{Baumgratz14} can be measured similarly to entanglement \cite{Bennett96}. By breaking the coherence or superposition of the measurement basis, it is shown that the obtained intrinsic randomness comes from the consumption of coherence. In turn, quantum coherence can be quantified from intrinsic randomness \cite{Yuan15Coherence}.

\begin{figure}
\centering
\includegraphics[width=6cm]{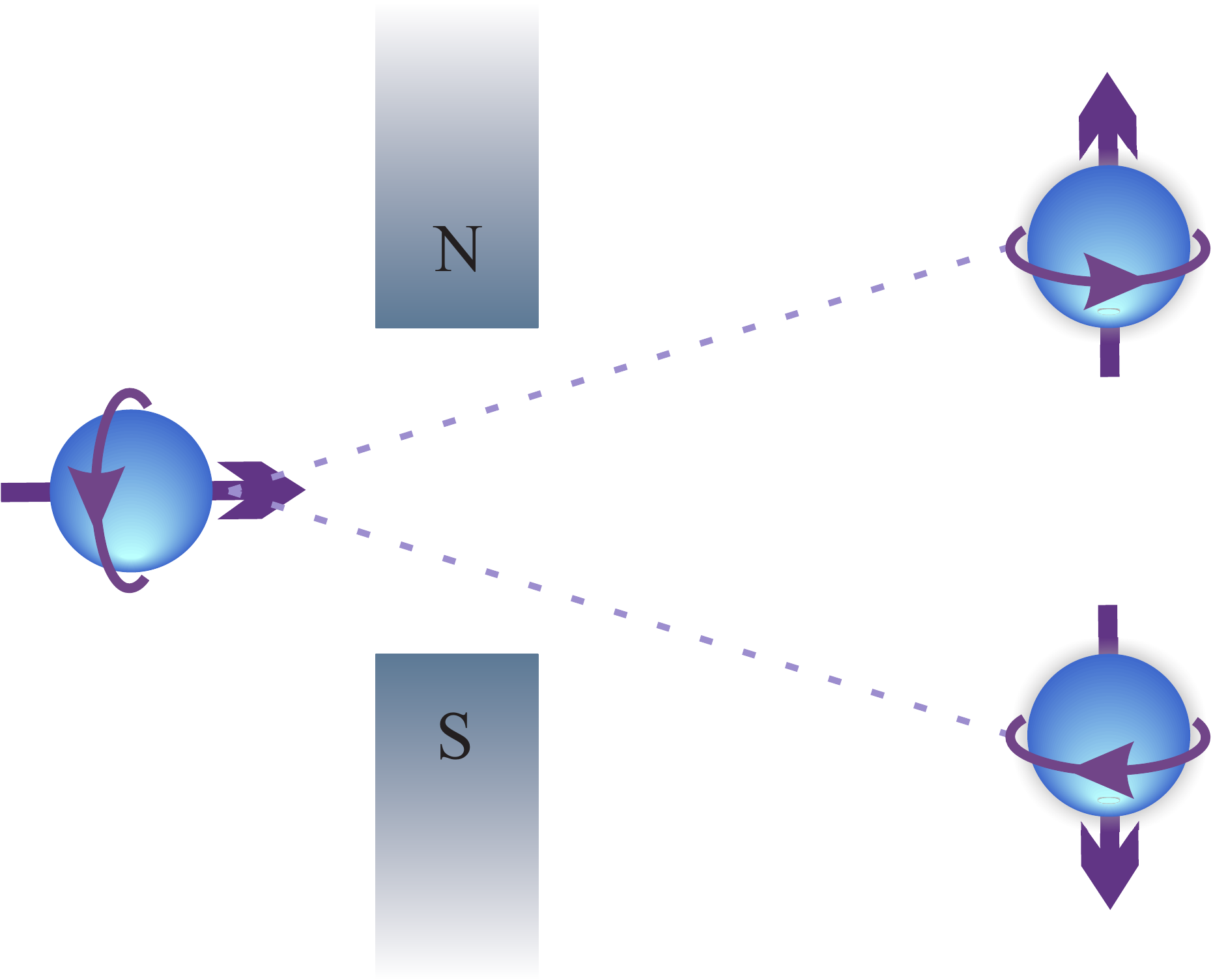}
\caption{Electron spin detection in the Stern-Gerlach experiment. Assume that the spin takes two directions along the vertical axis, denoted by $\ket{\uparrow}$ and $\ket{\downarrow}$. If the electron is initially in a superposition of the two spin directions, $\ket{\rightarrow} = (\ket{\uparrow} + \ket{\downarrow})/\sqrt{2}$, detecting the location of the electron would breaks the coherence and the outcome ($\uparrow$ or $\downarrow$) is intrinsically random.} \label{Fig:superposition}
\end{figure}

A practical QRNG can be developed using the simple process as shown in Fig.~\ref{Fig:superposition}. Based on the different implementations, there exists a variety of practical QRNGs. Generally, these QRNGs are featured for their high generation speed and a relatively low cost. In reality, quantum effects are always mixed with classical noises, which can be subtracted from the quantum randomness after properly modelling the underlying quantum process \cite{ma2013postprocessing}.

The randomness in the practical QRNGs usually suffices for real applications if the model fits the implementation adequately. However, such QRNGs can generate randomness with information-theoretical security only when the model assumptions are fulfilled. In the case that the devices are manipulated by adversaries, the output may not be genuinely random.
For example, when a QRNG is wholly supplied by a malicious manufacturer, who copies a very long random string to a large hard drive and only outputs the numbers from the hard drive in sequence, the manufacturer can always predict the output of the QRNG device.

On the other hand, a QRNG can be designed in a such way that its output randomness does not rely on any physical implementations. True randomness can be generated in a self-testing way even without perfectly characterizing the realisation instruments. The essence of a self-testing QRNG is based on device-independently witnessing quantum entanglement or nonlocality by observing a violation of the Bell inequality \cite{bell1964einstein}. Even if the output randomness is mixed with uncharacterised classical noise, we can still get a lower bound on the amount of genuine randomness based on the amount of nonlocality observed. The advantage of this type of QRNG is the self-testing property of the randomness. However, because the self-testing QRNG must demonstrate nonlocality, its generation speed is usually very low. As the Bell tests require random inputs, it is crucial to start with a short random seed. Therefore, such a randomness generation process is also called randomness expansion.

In general, a QRNG comprises a source of randomness and a readout system. In realistic implementations, some parts may be well characterised while others are not. This motivates the development of an intermediate type of QRNG, between practical and fully self-testing QRNGs, which is called semi-self-testing. Under several reasonable assumptions, randomness can be generated without fully characterising the devices. For instance, faithful randomness can be generated with a trusted readout system and an arbitrary untrusted randomness resource. A semi-self-testing QRNG provides a trade off between practical QRNGs (high performance and low cost) and self-testing QRNGs (high security of certified randomness).

In the last two decades, there have been tremendous development for all the three types of QRNG, trusted-device, self-testing, and semi-self-testing. In fact, there are commercial QRNG products available in the market. A brief summary of representative practical QRNG demonstrations that highlights the broad variety of optical QRNG is presented in Table \ref{Tab:SummaryPractical}. These QRNG schemes will be discussed further in Section \ref{Sec:Practical1} and \ref{Sec:Practical2}. A summary of self-testing and semi-self-testing QRNG demonstrations is presented in Table \ref{Tab:SummaryTheoretical}, which will be reviewed in details in Section \ref{Sec:Self} and \ref{Sec:Semi}.
%\footnotesize
\begin{table}\footnotesize
\centering
\caption{A brief summary of trusted-device QRNG demonstrations. Detailed description of these schemes can be found in Section \ref{Sec:Practical1} and \ref{Sec:Practical2}. Note that the quality/security of random numbers in different demonstrations may be different. Raw: reported raw generation rate, Refined: reported refined rate, Acquisition: data acquisition by dedicated hardware or commercial oscilloscope, SPD: single photon detector, BS: beam splitter, MCP-PCID: micro-channel-plate-based photon counting imaging detector, PNRD: photon-number-resolving detector, CMOS: complementary
metal-oxide-semiconductor, $-$: no related information found.} \label{Tab:SummaryPractical}
\begin{tabular}{ccccccc}
\hline
Year & Entropy source & Detection & Raw & Refined & Acquisition \\
  \hline
  2000 & Spatial mode \cite{jennewein2000fast} & SPD & 1 Mbps & $-$ & dedicated\\
  2000 & Spatial mode \cite{stefanov2000optical} & SPD & 100 Kbps & $-$ & dedicated \\
  2014& Spatial mode \cite{yan2014multi} & MCP-PCID & 8 Mbps& $-$ & dedicated \\
  2008& Temporal mode \cite{dynes2008high} & SPD & 4.01 Mbps & $-$ & dedicated \\
  2009& Temporal mode \cite{WJAK2009} & SPD & 55 Mbps & 40 Mbps & dedicated \\
  2011& Temporal mode \cite{WLB2011} & SPD & 180 Mbps & 152 Mbps & dedicated \\
  2014& Temporal mode \cite{nie2014practical} & SPD & 109 Mbps & 96 Mbps & dedicated \\
  2010& Photon number \cite{FWN2010} & PNRD  & 50 Mbps & $-$ & dedicated \\
  2011& Photon number \cite{Ren2011} & PNRD  & 2.4 Mbps & $-$ & dedicated \\
  2015& Photon number \cite{ATD2015} & PNRD  & $-$ & 143 Mbps & oscilloscope \\
  2010& Vacuum noise \cite{gabriel2010generator} & Homodyne  & 10 Mbps & 6.5 Mbps & dedicated \\
  2010& Vacuum noise \cite{Yong2010} & Homodyne  & $-$ & 12 Mbps & dedicated \\
  2011& Vacuum noise \cite{symul2011real} & Homodyne  & 3 Gbps & 2 Gbps & dedicated \\
  2010& ASE-intensity noise \cite{WSL2010} & Photo detector  & 12.5 Gbps  & $-$ & dedicated \\
  2011& ASE-intensity noise \cite{li2011scalable} & Photo detector  & 20 Gbps  & $-$ & $-$ \\
  2010& ASE-phase noise \cite{qi2010high} & Self-heterodyne  & 1 Gbps & 500 Mbps & oscilloscope \\
  2011& ASE-phase noise \cite{jofre2011true} & Self-heterodyne  & 1.2 Gbps & 1.11 Gbps & oscilloscope \\
  2012& ASE-phase noise \cite{Xu:QRNG:2012} & Self-heterodyne  & 8 Gbps & 6 Gbps & oscilloscope \\
  2014& ASE-phase noise \cite{YLD2014} & Self-heterodyne  & 80 Gbps & $-$ & oscilloscope \\
  2014& ASE-phase noise \cite{AAJ2014} & Self-heterodyne  & 82 Gbps & 43 Gbps & oscilloscope \\
  2015& ASE-phase noise \cite{nie201568} & Self-heterodyne  & 80 Gbps & 68 Gbps & oscilloscope \\
 \hline
\end{tabular}
\end{table}

\begin{table}
\centering
\caption{A summary of self-testing and semi-self-testing QRNG demonstrations. MDI: measurement device independent, SI: source independent, CV: continuous variable.} \label{Tab:SummaryTheoretical}
\begin{tabular}{ccccc}
\hline
Year & Type & Detection & Speed & Acquisition\\
  \hline
 2010 & Self-testing \cite{Pironio10} & ion-trap & very slow & dedicated \\
 2013 & Self-testing \cite{giustina2013bell} & SPD & 0.4 {bps}& dedicated \\
 2015& SI \cite{Cao2016}&SPD&5 Kbps & dedicated \\
 2015& CV-SI \cite{2015arXiv150907390M} & Homodyne & 1 Gbps& oscilloscope \\
 2015& Self-testing with fixed dimension \cite{Lunghi15}&SPD& 23 bps& dedicated \\
 \hline
\end{tabular}
\end{table}

\subsubsection{Trusted-device QRNG I: single-photon detector} \label{Sec:Practical1}
True randomness can be generated from any quantum process that breaks coherent superposition of states. Due to the availability of high quality optical components and the potential of chip-size integration, most of today's practical QRNGs are implemented in photonic systems. In this survey, we focus on various implementations of optical QRNGs.

A typical QRNG includes an entropy source for generating well-defined quantum states and a corresponding detection system. The inherent quantum randomness in the output is generally mixed with classical noises.  Ideally, the extractable quantum randomness should be well quantified and be the dominant source of the randomness. By applying randomness extraction, genuine randomness can be extracted from the mixture of quantum and classical noise. The extraction procedure is detailed in Methods.

\paragraph{Qubit state}
Random bits can be generated naturally by measuring a qubit\footnote{A qubit is a two-level quantum-mechanical system, which, similar to a bit in classical information theory, is the fundamental unit of quantum information.} $\ket{+}=(\ket{0}+\ket{1})/\sqrt2$ in the $Z$ basis, where $\ket{0}$ and $\ket{1}$ are the eigenstates of the measurement $Z$. For example, Fig.~\ref{Fig:SinglePhoton}~(a) shows a polarization based QRNG, where $\ket{0}$ and $\ket{1}$ denote horizontal and vertical polarization, respectively, and $\ket{+}$ denotes $+45^o$ polarization. Fig.~\ref{Fig:SinglePhoton}~(b) presents a path based QRNG, where $\ket{0}$ and $\ket{1}$ denote the photon traveling via path $R$ and $T$, respectively.

\begin{figure}
\centering
\includegraphics[width=12 cm]{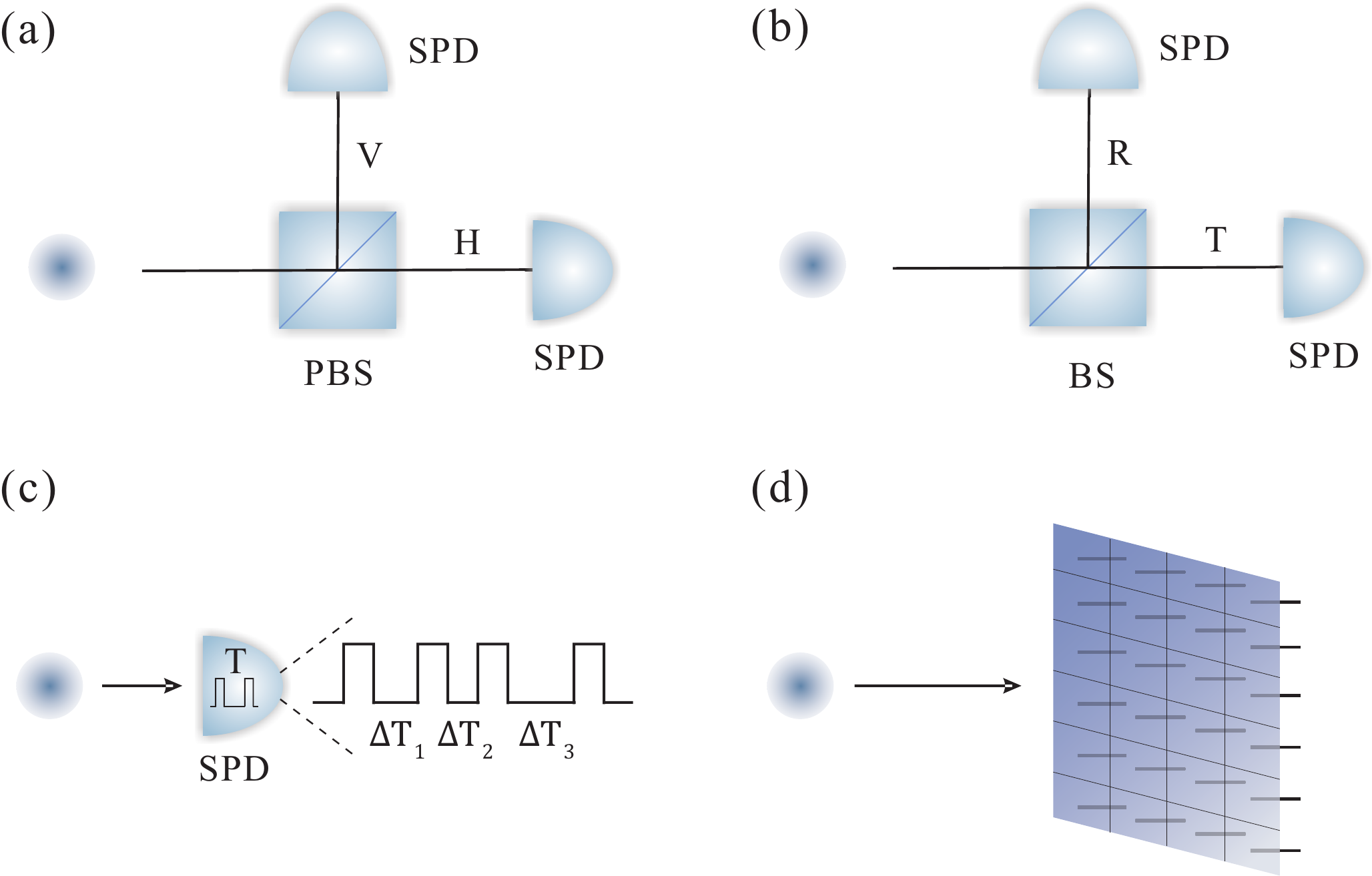}
\caption{Practical QRNGs based on single photon measurement. (a) A
photon is originally prepared in a superposition of horizontal (H) and vertical (V) polarizations, described by $(\ket{H}+\ket{V})/\sqrt{2}$. A polarising beam splitter (PBS) transmits the horizontal and reflects the vertical polarization. For random bit generation, the photon is measured by two single photon detectors (SPDs). (b) After passing through a symmetric beam splitter (BS), a photon exists in a superposition of transmitted (T) and reflected (R) paths, $(\ket{R}+\ket{T})/\sqrt{2}$. A random bit can be generated by measuring the path information of the photon. (c) QRNG based on measurement of photon arrival time. Random bits can be generated, for example, by measuring the time interval, $\Delta t$, between two detection events. (d) QRNG based on measurements of photon spatial mode. The generated random number depends on spatial position of the detected photon, which can be read out by an SPD array.} \label{Fig:SinglePhoton}
\end{figure}

The most appealing property of this type of QRNGs lies on their simplicity in theory that the generated randomness has a clear quantum origin. This scheme was widely adopted in the early development of QRNGs \cite{rarity1994quantum,stefanov2000optical,jennewein2000fast}. Since at most one random bit can be generated from each detected photon, the random number generation rate is limited by the detector's performance, such as dead time and efficiency. For example, the dead time of a typical silicon SPD based on an avalanche diode is tens of ns \cite{Eisaman2011PD}. Therefore, the random number generation rate is limited to tens of Mbps, which is too low for certain applications such as high-speed quantum key distribution (QKD), which can be operated at GHz clock rates \cite{takesue2007quantum,PhysRevX.2.041010}. Various schemes have been developed to improve the performance of QRNG based on SPD.

\paragraph{Temporal mode}
One way to increase the random number generation rate is to perform measurement on a high-dimensional quantum space, such as measuring the temporal or spatial mode of a photon. Temporal QRNGs measure the arrival time of a photon, as shown in Fig.~\ref{Fig:SinglePhoton}~(c). In this example, the output of a continuous-wave laser is detected by a time-resolving SPD. The laser intensity can be carefully controlled such that within a chosen time period $T$, there is roughly one detection event. The detection time is randomly distributed within the time period $T$ and digitized with a time resolution of $\delta_t$. The time of each detection event is recorded as raw data. Thus for each detection, the QRNG generates about $\log_2(T/\delta_t)$ bits of raw random numbers. Essentially, $\delta_t$ is limited by the time jitter of the detector (typically in the order of 100 ps), which is normally much smaller than the detector deadtime (typically in the order of 100 ns) \cite{Eisaman2011PD}.

One important advantage of temporal QRNGs is that more than one bit of random number can be extracted from a single-photon detection, thus improving the random number generation rate. The time period $T$ is normally set to be comparable to the detector deadtime. Comparing to the qubit QRNG, the temporal-mode QRNG alleviates the impact of detection deadtime. For example, if the time resolution and the dead time of an SPD are 100 ps and 100 ns respectively, the generation rate of temporal QRNG is around $\log_2(1000)\times$10 Mbps, which is higher than that of the qubit scheme (limited to 10 Mbps). The temporal QRNGs have been well studied recently \cite{ma2005random,dynes2008high,WJAK2009,WLB2011,nie2014practical}.

\paragraph{Spatial mode}
Similar to the case of temporal QRNG, multiple random bits can be generated by measuring the spatial mode of a photon with a space-resolving detection system. One illustrative example is to send a photon through a $1\times N$ beam splitter and to detect the position of the output photon. Spatial QRNG has been experimentally demonstrated by using a multi-pixel single-photon detector array \cite{yan2014multi}, as shown in Fig.~\ref{Fig:SinglePhoton}~(d).  The distribution of the random numbers depends on both the spatial distribution of light intensity and the efficiency uniformity of the SPD arrays.

The spatial QRNG offers similar properties as the temporal QRNG, but requires multiple detectors. Also, correlation may be introduced between the random bits because of cross talk between different pixels in the closely-packed detector array.

\paragraph{Multiple photon number states}
Randomness can be generated not only from measuring a single photon, but also from quantum states containing multiple photons. For instance, a coherent state
\begin{equation}\label{}
\ket{\alpha} = e^{-\frac{|\alpha|^2}{2}}\sum_{n=0}^\infty \frac{\alpha^n}{\sqrt{n!}}\ket{n},
\end{equation}
is a superposition of different photon-number (Fock) states $\{\ket{n}\}$, where $n$ is the photon number and $|\alpha|^2$ is the mean photon number of the coherent state. Thus, by measuring the photon number of a coherent laser pulse with a photon-number resolving SPD, we can obtain random numbers that follow a Poisson distribution.  QRNGs based on measuring photon number have been successfully demonstrated in experiments \cite{FWN2010,Ren2011,ATD2015}. Interestingly, random numbers can be generated by resolving photon number distribution of a light-emitting diode (LED) with a consumer-grade camera inside a mobile phone, as shown in a recent study \cite{Mobile2014}.

Note that, the above scheme is sensitive to both the photon number distribution of the source and the detection efficiency of the detector. In the case of a coherent state source, if the loss can be modeled as a beam splitter, the low detection efficiency of the detector can be easily compensated by using a relatively strong laser pulse.

\subsubsection{Trusted-device QRNG II: macroscopic photodetector} \label{Sec:Practical2}
The performance of an optical QRNG largely depends on the employed detection device. Beside SPD, high-performance macroscopic photodetectors have also been applied in various QRNG schemes. This is similar to the case of QKD, where protocols based on optical homodyne detection \cite{grosshans2003quantum} have been developed, with the hope to achieve a higher key rate over a low-loss channel. In the following discussion, we review two examples of QRNG implemented with macroscopic photodetector.

\paragraph{Vacuum noise}
In quantum optics, the amplitude and phase quadratures of the vacuum state are represented by a pair of non-commuting operators ($X$ and $P$ with $[X,P]=i/2$), which cannot be determined simultaneously with an arbitrarily high precision \cite{Braunstein05}, i.e. $\langle(\Delta X)^2\rangle\times\langle(\Delta P)^2\rangle\ge1/16$, with $\Delta O$ defined by $O - \langle O\rangle$ and $\langle O\rangle$ denoting the average of $O$. This can be easily visualised in the phase space, where the vacuum state is represented by a two-dimensional Gaussian distribution centered at the origin with an uncertainty of $1/4$ (the shot-noise variance) along any directions, as shown in Fig.~\ref{Fig:Vac}~(a). In principle, Gaussian distributed random numbers can be generated by measuring any field quadrature repeatedly. This scheme has been implemented by sending a strong laser pulse
through a symmetric beam splitter and detecting the differential signal of the two output beams with a balanced receiver \cite{gabriel2010generator,Yong2010,symul2011real}.

\begin{figure}[!hbt]
\centering
\includegraphics[width=11 cm]{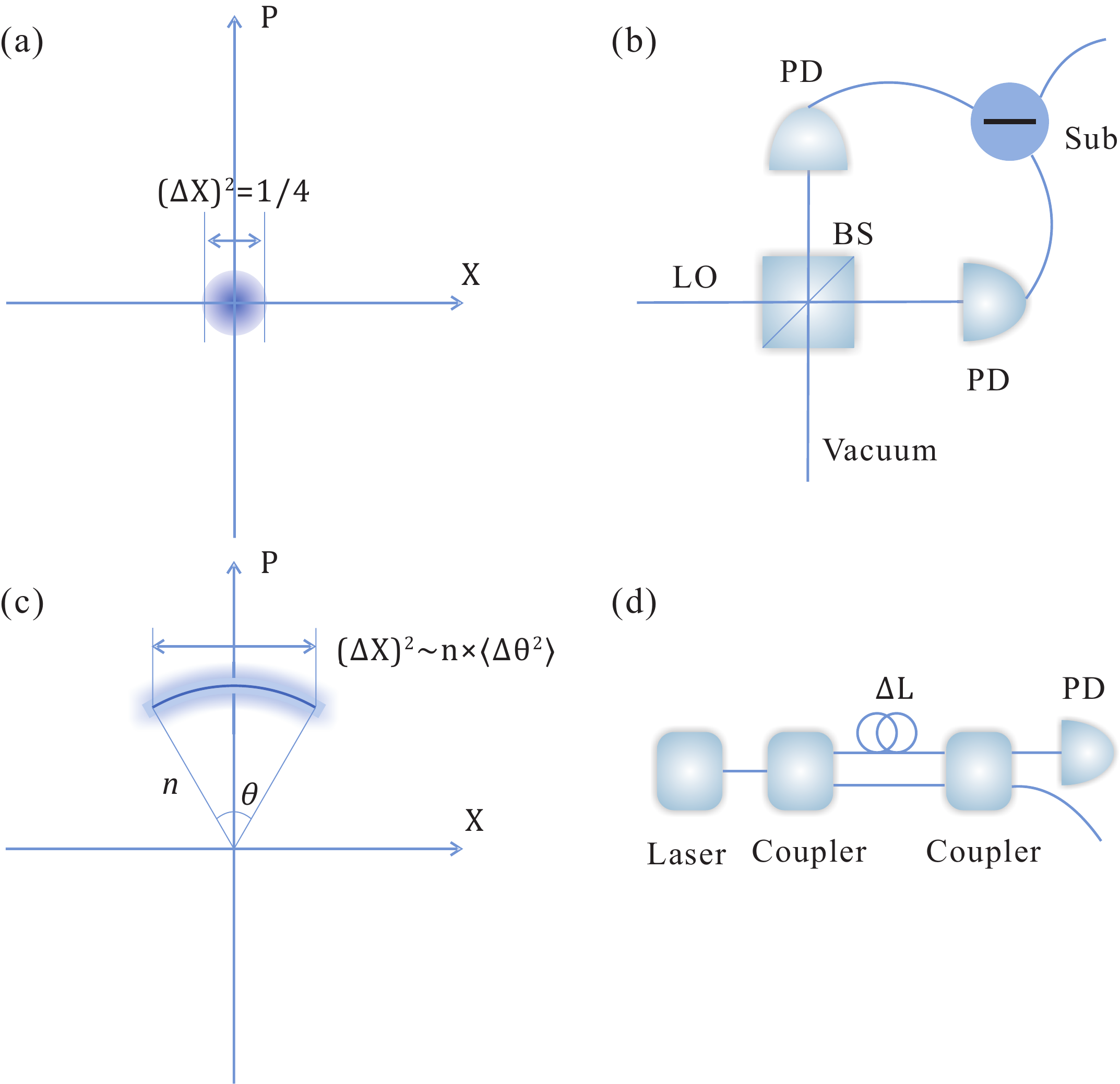}
\caption{QRNGs using macroscopic photodetector. (a) Phase-space representation of the vacuum state. The variance of the $X$-quadrature is 1/4. (b) QRNG based on vacuum noise measurements. The system comprises a strong local oscillator (LO), a symmetric beam splitter (BS), a pair of photon detector (PD), and an electrical subtracter (Sub).  (c) Phase-space representation of a partially phase-randomised coherent state. The variance of the $X$-quadrature is in the order of $n\times\langle\Delta\theta^2\rangle$, where $n$ is the average photon number and $\langle\Delta\theta^2\rangle$ is the phase noise variance. (d) QRNGs based on measurements of laser phase noise. The first coupler splits the original laser beam into two beams, which propagate through two optical fibres of different lengths, thereafter interfering at the second coupler. The output signal is recorded by a photon detector. The extra length $\Delta L$ in one fibre introduces a time delay $T_d$ between the two paths, which in turn determines the variance of the output signal.}
\label{Fig:Vac}
\end{figure}

Given that the local oscillator (LO) is a single-mode coherent state and the detector is shot-noise limited, the random numbers generated in this scheme follow a Gaussian distribution, which is on demand in certain applications, such as Gaussian-Modulated Coherent States (GMCS) QKD \cite{grosshans2003quantum}. There are several distinct advantages of this approach. First, the resource of quantum randomness, the vacuum state, can be easily prepared with a high fidelity. Second, the performance of the QRNG is insensitive to detector loss, which can be simply compensated by increasing the LO power. Third, the field quadrature of vacuum is a continuous variable, suggesting that more than one random bit can be generated from one measurement. For example, 3.25 bits of random numbers are generated from each measurement \cite{gabriel2010generator}.

In practice, an optical homodyne detector itself contributes additional technical noise, which may be observed or even controlled by a potential adversary. A randomness extractor is commonly required to generate secure random numbers. To extract quantum randomness effectively, the detector should be operated in the shot-noise limited region, in which the overall observed noise is dominated by vacuum noise. We remark that building a broadband shot-noise limited homodyne detector operating above a few hundred MHz is technically challenging \cite{okubo2008pulse,chi2011balanced,kumar2012versatile}. This may in turn limit the ultimate operating speed of this type of QRNG.

\paragraph{Amplified spontaneous emission}
To overcome the bandwidth limitation of shot-noise limited homodyne detection, researchers have developed QRNGs based on measuring phase \cite{qi2010high,jofre2011true,Xu:QRNG:2012,AAJ2014,YLD2014,nie201568} or intensity noise \cite{WSL2010,li2011scalable} of amplified spontaneous emission(ASE), which is quantum mechanical by nature \cite{henry1982theory,ma2013postprocessing,zhou2015randomness}.

In the phase-noise based QRNG scheme, random numbers are generated by measuring a field quadrature of \emph{phase-randomized} weak coherent states (signal states). Figure \ref{Fig:Vac}~(c) shows the phase-space representation of a signal state with an average photon number of $n$ and a phase variance of $\langle(\Delta\theta)^2\rangle$. If the average phase of the signal state is around $\pi/2$, the uncertainty of the $X$-quadrature is of the order of  $n\langle(\Delta\theta)^2\rangle$. When $n$ is large, this uncertainty can be significantly larger than the vacuum noise. Therefore, phase noise based QRNG is more robust against detector noise. In fact, this scheme can be implemented with commercial photo-detectors operated above GHz rates.

QRNG based on laser phase noise was first developed using a cw laser source and a delayed self-heterodyning detection system \cite{qi2010high}, as shown in Fig.~\ref{Fig:Vac}~(d). Random numbers are generated by measuring the phase difference of a single-mode laser at times $t$ and $t+T_d$. Intuitively, if the time delay $T_d$ is much larger than the coherence time of the laser, the two laser beams interfering at the second beam splitter can be treated as generated by independent laser sources. In this case, the phase difference is a random variable uniformly distributed in $[-\pi, \pi)$, regardless of the classical phase noise introduced by the unbalanced interferometer itself. This suggests that a robust QRNG can be implemented without phase-stabilizing the interferometer. On the other hand, by phase-stabilizing the interferometer, the time delay $T_d$ can be made much shorter than the coherent time of the laser \cite{qi2010high}, enabling a much higher sampling rate. This phase stabilization scheme has been adopted in a $\ge6$ Gbps QRNG \cite{Xu:QRNG:2012} and a 68 Gbps QRNG demonstration \cite{nie201568}.

Phase noise based QRNG has also been implemented using pulsed laser source, where the phase difference between adjacent pulses is automatically randomized \cite{jofre2011true,AAJ2014,YLD2014}. A speed of 80 Gbps (raw rate as shown in Table1) has been demonstrated \cite{YLD2014}. It also played a crucial role in a recent loophole-free Bell experiment \cite{AAM2015}. Here, we want to emphasize that strictly speaking, none of these generation speeds are real-time, due to the speed limitation of the randomness extraction \cite{ma2013postprocessing}. Although such limitation is rather technical, in practice, it is important to develop extraction schemes and hardware that can match the fast random bit generation speed in the future.

\subsubsection{Self-testing QRNG} \label{Sec:Self}
Realistic devices inevitably introduce classical noise that affects the output randomness, thus causing the generated random numbers depending on certain classical variables, which might open up security issues. To remove this bias, one must properly model the devices and quantify their contributions. In the QRNG schemes described in Section \ref{Sec:Practical1} and Section \ref{Sec:Practical2}, the output randomness relies on the device models \cite{ma2013postprocessing,zhou2015randomness}. When the implementation devices deviate from the theoretical models, the randomness can be compromised. In this section, we discuss \emph{self-testing} QRNGs, whose output randomness is certified independent of device implementations.

\paragraph{Self-testing randomness expansion}
In QKD, secure keys can be generated even when the experimental devices are not fully trusted or characterised \cite{Mayers98,acin06}. Such self-testing processing of quantum information  also occur in randomness generation (expansion). The output randomness can be certified by observing violations of the Bell inequalities \cite{bell1964einstein}, see Fig.~\ref{Fig:Bell}. Under the no-signalling condition \cite{prbox} in the Bell tests, it is impossible to violate Bell inequalities if the output is not random, or, predetermined by local hidden variables.

\begin{figure*}[hbt]
\centering
\resizebox{6cm}{!}{\includegraphics{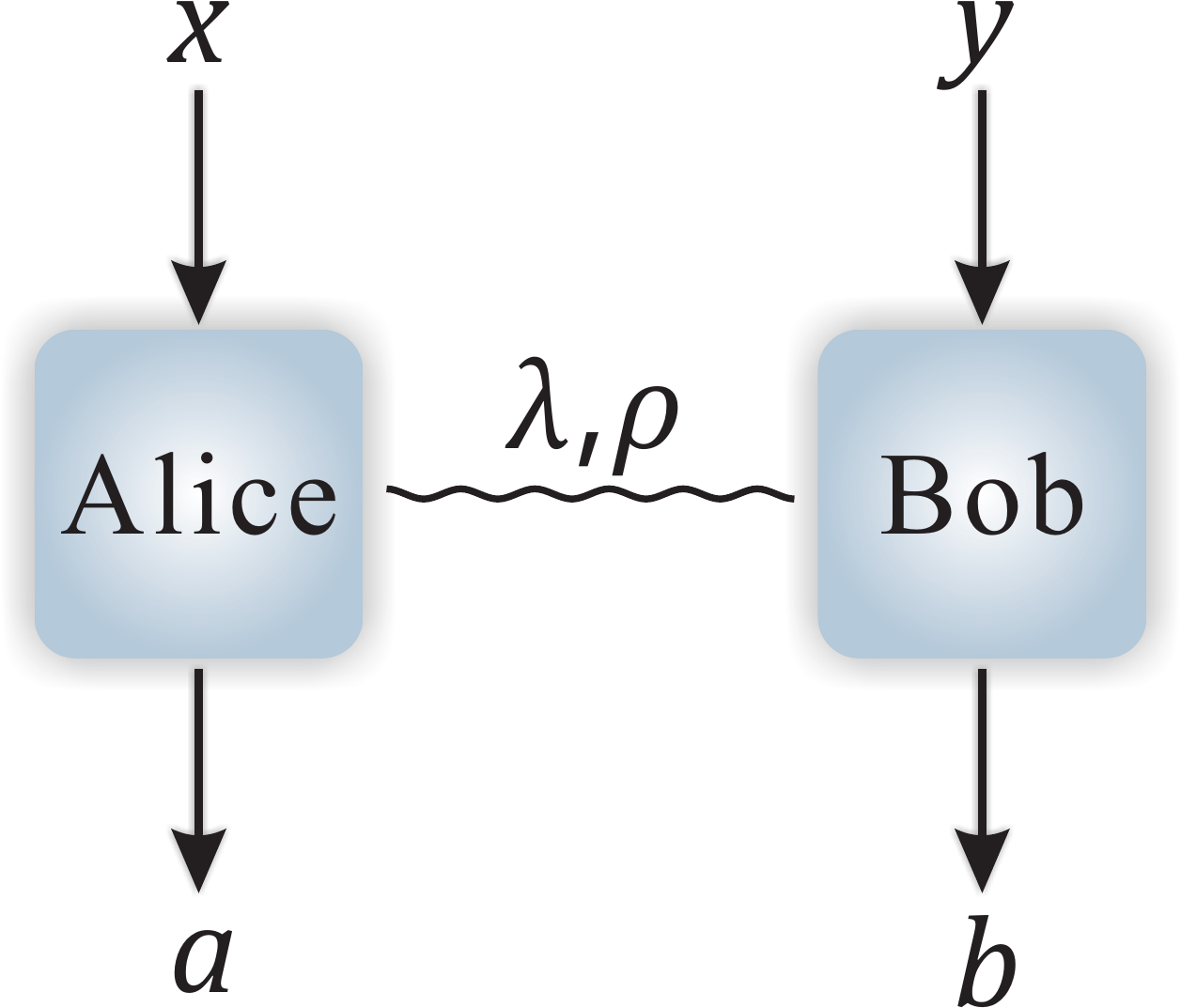}}
\caption{Illustration of a bipartite Bell test. Alice and Bob are two spacelikely separated parties, that output $a$ and $b$ from random inputs $x$ and $y$, respectively. A Bell inequality is defined as a linear combination of the probabilities $p(a,b|x,y)$. For instance, the Clauser-Horne-Shimony-Holt (CHSH) inequality \cite{CHSH} is defined by $S = \sum_{a,b,x,y} (-1)^{a + b + xy}p(a,b|x,y) \leq S_C = 2$, where all of the inputs and outputs are bit values, and $S_C$ is the classical bound for all local hidden-variable models. With quantum settings, that is, performing measurements $M_x^a\otimes M_y^b$ on quantum state $\rho_{AB}$, $p(a,b|x,y) = \mathrm{Tr}[\rho_{AB}M_x^a\otimes M_y^b]$, the CHSH inequality can be violated up to $S_Q = 2\sqrt{2}$. Quantum features (such as intrinsic randomness) manifest as violations of the CHSH inequality.} \label{Fig:Bell}
\end{figure*}

Since Colbeck \cite{Colbeck09, Colbeck11} suggested that randomness can be expanded by untrusted devices, several protocols based on different assumptions have been proposed. For instance, in a non-malicious device scenario, we can consider that the devices are honestly designed but get easily corrupt by unexpected classical noises. In this case, instead of a powerful adversary that may entangle with the experiment devices, we can consider a classical adversary who possesses only classical knowledge of the quantum system and analyzes the average randomness output conditioned by the classical information. Based on the Clauser-Horne-Shimony-Holt (CHSH) inequality \cite{CHSH}, Fehr et al. \cite{Fehr13} and Pironio et al. \cite{Pironio13} proposed self-testing randomness expansion protocols against classical adversaries. The protocols quadratically expands the input seed, implying that the length of the input seed is $O(\sqrt{n}\log_2\sqrt{n})$, where $n$ denotes the experimental iteration number.

A more sophisticated exponential randomness expansion protocol based on the CHSH inequality was proposed by Vidick and Vazirani \cite{Vazirani12}, in which the lengths of the input seed is $O(\log_2n)$. In the same work, they also presented an exponential expansion protocol against quantum adversaries, where quantum memories in the devices may entangle with the adversary. The Vidick-Vazirani protocol against quantum adversaries places strict requirements on the experimental realisation. Miller and Shi \cite{Miller14} partially solved this problem by introducing a more robust protocol. Combined with the work by Chung, Shi, and Wu \cite{Chung14}, they also presented an unbounded randomness expansion scheme. By adopting a more general security proof, Miller and Shi \cite{miller2014universal} recently showed that genuinely randomness can be obtained as long as the CHSH inequality is violated. Their protocol greatly improves the noise tolerance, indicating that an experimental realisation of a fully self-testing randomness expansion protocol is feasible.

The self-testing randomness expansion protocol relies on a faithful realisation of Bell test excluding the experimental loopholes, such as locality and efficiency loopholes. The randomness expansion protocol against classical adversaries is firstly experimentally demonstrated by Pironio et al. \cite{Pironio10} in an ion-trap system, which closes the efficiency loophole but not the locality loophole. To experimentally close the locality loophole, a photonic system is more preferable when quantum memories are unavailable. As the CHSH inequality is minimally violated in an optically realised system \cite{giustina2013bell,Christensen13}, the randomness output is also very small (with min-entropy of $H_{\mathrm{min}} = 7.2\times10^{-5}$ in each run), and the randomness generation rate is $0.4$ {bits/s}. To maximise the output randomness, the implementation settings are designed to maximally violate the CHSH inequality. Due to experimental imperfections, the chosen Bell inequality might be sub-optimal for the observed data. In this case, the output randomness can be optimised over all possible Bell inequalities \cite{Silleras14,Bancal14b}.

Although nonlocality or entanglement certifies the randomness, the three quantities, nonlocality, entanglement, and randomness are not equivalent \cite{acin12}. Maximum randomness generation does not require maximum nonlocal correlation or a maximum entangled state. In the protocols based on the CHSH inequality, maximal violation (nonlocality and entanglement) generates 1.23 bits of randomness. It is shown that 2 bits of randomness can be certified with little involvement of nonlocality and entanglement \cite{acin12}. Furthermore, as discussed in a more generic scenario involving nonlocality and randomness, it is shown that maximally nonlocal theories cannot be maximally random \cite{Torre15}.

\paragraph{Randomness amplification}
In self-testing QRNG protocols based on the assumption of perfectly random inputs, the output randomness is guaranteed by the violations of Bell tests. Conversely, when all the inputs are predetermined, any Bell inequality can be violated to an arbitrary feasible value without invoking a quantum resource. Under these conditions, all self-testing QRNG protocols cease to work any more. Nevertheless, randomness generation in the presence of partial randomness is still an interesting problem. Here, an adversary can use the additional knowledge of the inputs to fake violations of Bell inequalities. The task of generating arbitrarily free randomness from partially free randomness is also called randomness amplification, which is impossible to achieve in classical processes.

The first randomness amplification protocol was proposed by Colbeck and Renner \cite{colbeck2012free}. Using a two-party chained Bell inequality \cite{Pearle70,Braunstein90}, they showed that any Santha-Vazirani weak sources \cite{santha1986generating} (defined in next section), with $\epsilon<0.058$, can be amplified into arbitrarily free random bits in a self-testing way by requiring only no-signaling. A basic question of randomness amplification is whether free random bits can be obtained from arbitrary weak randomness. This question was answered by Gallego et al. \cite{gallego2013full}, who demonstrated that perfectly random bits can be generated using a five-party Mermin inequality \cite{Mermin90} with arbitrarily imperfect random bits under the no-signaling assumption.

Randomness amplification is related to the freewill assumption \cite{Kofler06, Hall10,Barrett11,Koh12,Pope13,putz14,Yuan2015CHSH} in Bell tests. In experiments, the freewill assumption requires the inputs to be random enough such that violations of Bell inequalities are induced from quantum effects rather than predetermined classical processes. This is extremely meaningful in fundamental Bell tests, which aim to rule out local realism. Such fundamental tests are the foundations of self-testing tasks, such as device-independent QKD and self-testing QRNG. Interestingly, self-testing tasks require a faithful violation of a Bell inequality, in which intrinsic random numbers are needed. However, to generate faithful random numbers, we in turn need to witness nonlocality which requires additional true randomness. Therefore, the realisations of genuine loophole-free Bell tests and, hence, fully
self-testing tasks are impossible. Self-testing protocols with securities independent of the untrusted part can be designed only by placing reasonable assumptions on the
trusted part.

\subsubsection{Semi-self-testing QRNGs} \label{Sec:Semi}
Traditional QRNGs based on specific models pose security risks in fast random number generation. On the other hand, the randomness generated by self-testing QRNGs is information-theoretically secure even without characterising the devices, but the processes are impractically slow. As a compromise, intermediate QRNGs might offer a good tradeoff between trusted and self-testing schemes --- realising both reasonably fast and secure random number generation.

As shown in Fig.~\ref{Fig:Semi}, a typical QRNG comprises two main modules, a source that emits quantum states and a measurement device that detects the states and outputs random bits.  In trusted-device QRNGs, both source and measurement devices \cite{ma2013postprocessing,zhou2015randomness} must be modeled properly; while the output randomness in the fully self-testing QRNGs does not depend on the implementation devices.

\begin{figure}[hbt]
\centering
\resizebox{6cm}{!}{\includegraphics{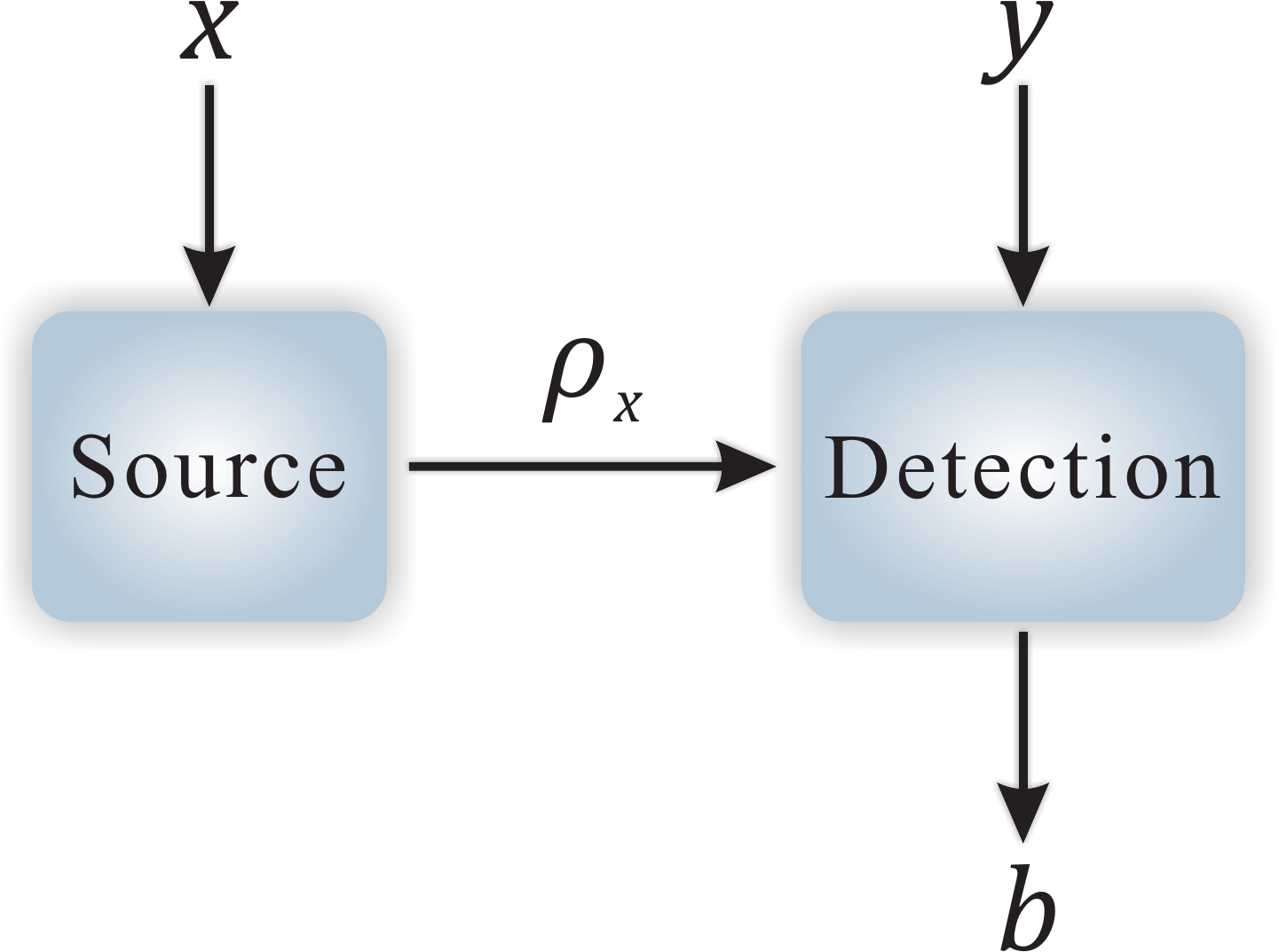}}
\caption{A semi-self-testing QRNG. Conditional on the input setting $x$, the source emits a quantum state $\rho_x$. Conditional on the input $y$, the detection device measures $\rho_x$ and outputs $b$. } \label{Fig:Semi}
\end{figure}

In practice, there exist scenarios that the source (respectively, measurement device) is well characterised, while the measurement device (respectively, source) not. Here, we review the semi-self-testing QRNGs, where parts of the devices are trusted.

\paragraph{Source-independent QRNG}
In source-independent QRNG, the randomness source is assumed to be untrusted, while the measurement devices are trusted. The essential idea for this type of scheme is to use the measurement to monitor the source in real time. In this case, normally one needs to randomly switch among different (typically, complement) measurement settings, so that the source (assumed to be under control of an adversary) cannot predict the measurement ahead. Thus, a short seed is required for the measurement choices.

In the illustration of semi-self-testing QRNG, Fig.~\ref{Fig:Semi}, the source-independent scheme is represented by a unique $x$ (corresponding to a state $\rho_x$) and multiple choices of the measurement settings $y$. In Section \ref{Sec:Practical1}, we present that randomness can be obtained by measuring $\ket{+}$ in the $Z$ basis. However, in a source-independent scenario, we cannot assume that the source emits the state $\ket{+}$. In fact, we cannot even assume the dimension of the state $\rho_x$. This is the major challenge facing for this type of scheme.

In order to faithfully quantify the randomness in the $Z$ basis measurement, first a squashing model is applied so that the to-be-measured state is equivalent to a qubit \cite{BML_Squash_08}. Note that this squashing model puts a strong restriction on measurement devices. Then, the measurement device should occasionally project the input state onto the $X$ basis states, $\ket{+}$ and $\ket{-}$, and check whether the input is $\ket{+}$ \cite{Cao2016}. The technique used in the protocol shares strong similarity with the one used in QKD \cite{Shor2000Simple}. The $X$ basis measurement can be understood as the \emph{phase error estimation}, from which we can estimate the amount of classical noise. Similar to privacy amplification, randomness extraction is performed to subtract the classical noise and output true random values.

The source-independent QRNG is advantageous when the source is complicated, such as in the aforementioned QRNG schemes based on measuring single photon sources \cite{stefanov2000optical,jennewein2000fast,rarity1994quantum}, LED lights \cite{Mobile2014}, and phase fluctuation of lasers \cite{Xu:QRNG:2012}. In these cases, the sources are quantified by complicated or hypothetical physical models. Without a well-characterized source, randomness can still be generated. The disadvantage of this kind of QRNGs compared to fully self-testing QRNGs is that they need a good characterization of the measurement devices. For example, the upper and the lower bounds on the detector efficiencies need to be known to avoid potential attacks induced from detector efficiency mismatch. Also the intensity of light inputs into the measurement device needs to be carefully controlled to avoid attacks on the detectors.

Recently,  a continuous-variable version of the source-independent QRNG is experimentally demonstrated \cite{2015arXiv150907390M}
and achieves a randomness generation rate over 1 Gbps. Moreover, with state-of-the-art devices, it can potentially reach the speed in the order of tens of Gbps, which is similar to the trusted-device QRNGs. Hence, semi-self-testing QRNG is approaching practical regime.

\paragraph{Measurement-device-independent QRNGs}
Alternatively, we can consider the scenario that the input source is well characterised while the measurement device is untrusted. In Fig.~\ref{Fig:Semi}, different inputs $\rho_x$ (hence multiple $x$) are needed to calibrate the measurement device with a unique setting $y$. Similar to the source-independent scenario, the randomness is originated by measuring the input state $\ket{+}$ in the $Z$ basis. The difference is that here the trusted source sends occasionally auxiliary quantum states $\rho_x$, such as $\ket{0}$, to check whether the measurement is in the $Z$ basis \cite{MDIQRNG15}.  The analysis combines measurement tomography with randomness quantification of positive-operator valued measure, and does not assume to know the dimension of the measurement device, i.e., the auxiliary ancilla may have an arbitrary dimension.

The advantage of such QRNGs is that they remove all detector side channels, but the disadvantage is that they may be subject to imperfections in the modeling of the source.  This kind of QRNG is complementary to the source-independent QRNG, and one should choose the proper QRNG protocol based on the experimental devices.

We now turn to two variations of measurement-device-independent QRNGs. First, the measurement tomography step may be replaced by a certain witness, which could simplify the scheme at the expense of a slightly worse performance. Second, similar to the source-independent case, a continuous-variable version of measurement-device-independent QRNG might significantly increase the bit rate. The challenge lies on continuous-variable entanglement witness and measurement tomography.

\paragraph{Other semi-self-testing QRNGs}
Apart from the above two types of QRNGs, there are also some other QRNGs that achieve self-testing except under some mild assumptions. For example, the source and measurement devices can be assumed to occupy independent two-dimensional quantum subspaces \cite{Lunghi15}. In this scenario, the QRNG should use both different input states and different measurement settings. The randomness can be estimated by adopting a \emph{dimension witness} \cite{PRL.112.140407}. A positive value of this dimension witness could certify randomness in this scenario, similar to the fact that a violation of the Bell inequality could certify randomness of self-testing QRNG in Section \ref{Sec:Self}.

\subsubsection{Outlook}
The needs of ``perfect'' random numbers in quantum communication and fundamental physics experiments have stimulated the development of various QRNG schemes, from highly efficient systems based on trusted devices, to the more theoretically interesting self-testing protocols. On the practical side, the ultimate goal is to achieve fast random number generation at low cost, while maintaining high-level of randomness. With the recent development on waveguide fabrication technique \cite{barak2003true}, we expect that chip-size, high-performance QRNGs could be available in the near future. In order to guarantee the output randomness, the underlying physical models for these QRNGs need to be accurate and both the quantum noise and classical noise should be well quantified. Meanwhile, by developing a semi-self-testing protocol, a QRNG becomes more robust against classical noises and device imperfections. In the future, it is interesting to investigate the potential technologies required to make the self-testing QRNG practical. With the new development on single-photon detection, the readout part of the self-testing QRNG can be ready for practical application in the near future. The entanglement source, on the other hand, is still away from the practical regime (Gbps).

On the theoretical side, the study of self-testing QRNG has not only provided means of generating robust randomness, but also greatly enriched our understanding on the fundamental questions in physics. In fact, even in the most recent loophole-free Bell experiment \cite{hensen2015experimental, PhysRevLett.115.250402, PhysRevLett.115.250401,ballance2015hybrid} where high-speed QRNG has played a crucial role, it is still arguable whether it is appropriate to use randomness generated based on quantum theory to test quantum physics itself. Other random resources have also been proposed for loophole-free Bell's inequality tests, such as independent comic photons \cite{Gallicchio14}. It is an open question whether we can go beyond QRNG and generate randomness from a more general theory.

\subsection{Randomness quantification}
Here, we breifly review the quantification of randomness.
\subsubsection{Min-entropy source}
Given the underlying probability distribution, the randomness of a random sequence $X$ on $\{0,1\}^n$ can be quantified by its \emph{min-entropy}
\begin{equation}\label{}
H_{\mathrm{min}} = -\log_2\left(\max_{v\in\{0,1\}^n}\mathrm{Prob}[X=v]\right).
\end{equation}

For example, for a uniform random sequence $X$ on $\{0,1\}^n$, when $\mathrm{Prob}[X=v] = 1/2^n, \forall v\in\{0,1\}^n$, the min-entropy is given $H_{\mathrm{min}} = n$.  As another example, we consider the same random sequence $X$ except that $\mathrm{Prob}[X=0] = 1/2$ and $\mathrm{Prob}[X=v] = 1/2^{n+1}, \forall v\in\{0,1\}^n/{0}$. Although the two sequences looks very similar, the min-entropy for the latter example is much more smaller, $H_{\mathrm{min}} = 1$.
\subsubsection{Santha-Vazirani weak sources \cite{santha1986generating}}
We assume that random bit numbers are produced in the time sequence $x_1, x_2, ..., x_j, ...$. Then, for $0<\epsilon\le 1/2$, the source is called $\epsilon$-free if
\begin{equation}\label{}
  \epsilon \le P(x_j|x_1, x_2, \dots, x_{j-1}, e) \le 1-\epsilon,
\end{equation}
for all values of $j$. Here $e$ represents all classical variables generated outside the future light-cone of the Santha-Vazirani weak sources.

\subsubsection{Randomness extractor}
A RNG typically consists of two components, an entropy source and a randomness extractor \cite{barak2003true}. In a QRNG, the entropy source could be a physical device whose output is fundamentally unpredictable, while the randomness extractor could be an algorithm that generates nearly perfect random numbers from the output of the above preceding entropy source, which can be imperfectly random. The two components of QRNG are connected by quantifying the randomness with min-entropy. The min-entropy of the entropy source is first estimated and then fed into the randomness extractor as an input parameter.

The imperfect randomness of the entropy source can already be seen in the SPD based schemes, such as the photon number detection scheme. By denoting $N$ as the discrimination upper bound of a photon number resolving detector, at most $log_2(N)$ raw random bits can be generated per detection event. However, as the photon numbers of a coherent state source follows a Poisson distribution, the raw random bits follow a non-uniform distribution; consequently, we cannot obtain $log_2(N)$ bits of random numbers. To extract perfectly random numbers, we require a postprocessing procedure (i.e. randomness extractor).

In the coherent detection based QRNG, the quantum randomness is inevitably mixed with classical noises introduced by the detector and other system imperfections. Moreover, any measurement system has a finite bandwidth, implying unavoidable correlations between adjacent samples. Once quantified, these unwanted side-effects can be eliminated through an appropriate randomness extractor \cite{ma2013postprocessing}.

The composable extractor was first introduced in classical cryptography \cite{Canetti2001,canetti2002universally}, and was later extended to quantum cryptography \cite{BenOr:Security:05,Renner:Security:05}. To generate information-theoretically provable random numbers, two typical extractor, the Trevisan's extractor or the
Toeplitz-hashing extractor, are generally employed in practice.

Trevisan's extractor \cite{trevisan2001extractors,raz1999extracting} has been proven secure against quantum adversaries \cite{de2012trevisan}. Moreover, it is a strong extractor (its seed can be reused) and its seed length is polylogarithmic function of the input. Tevisan's extractor comprises two main parts, a one-bit extractor and a combinatorial design. The Toeplitz-hashing extractor was well developed in the privacy amplification procedure of the QKD system \cite{uchida2008fast}. This kind of extractor is also a strong extractor \cite{wegman1981new}. By applying the fast Fourier transformation technique, the runtime of the Toeplitz-hashing extractor can be improved to $O(n\log n)$.

On account of their strong extractor property, both of these extractors generate random numbers even when the random seed is longer than the output length of each run. Both extractors have been implemented \cite{ma2013postprocessing} and the speed of both extractors have been increased in follow-up studies \cite{DBLP:journals/corr/abs-1212-0520,ma2011explicit}, but remain far below the operating speed of the QRNG based on laser-phase fluctuation (68 Gbps \cite{nie201568}). Therefore, the speed of the extractor is the main limitation of a practical QRNG.

\part{Quantumness and randomness}
\chapter{Coherence and randomness}
This chapter introduces the basic quantification and witness methods for quantum coherence. We relate coherence measures with the quantum randomness measured on the computational basis. We refer to Ref.~\cite{Yuan15Coherence, yuan2016interplay} for references of this chapter.
\section{Quantifying quantum randomness}\label{qrand}
\subsection{Quantum randomness against quantum information}
Let us consider a $d$-dimensional Hilbert space and a reference  basis  $I:=\{\ket{i}\} = \left\{\ket{1},\ket{2},\dots,\ket{d}\right\}$. Suppose a projective measurement $\{\ket{i}\bra{i}\}$ is performed on a given quantum state $\rho_A$ accessed by an experimentalist Alice. The measurement outcome has a probability distribution $\{p_i\}, \sum_{i=1}^d p_i=1,  , p_i= \text{Tr}[\rho \ket{i}\bra{i}]\geq 0,  \forall i$. In quantum information theory, a practical quantifier of the total randomness associated to the measurement is given by the Shannon entropy  $H(\{p_i\})_{\rho} =-\sum_i p_i\log(p_i)$. However, the randomness of the measurement is intrinsically twofold: a classical uncertainty due to Alice's ignorance about the system state; and a quantum one due to the coherence of the state in the reference basis. For a mixture of incoherent states $\rho_{{\cal I}}=\sum_i q_i \ket{i}\bra{i},$ the measurement randomness is given by the state mixedness, i.e., a classical source of uncertainty, quantified by the state von Neumann entropy: $H(\{p_i\})_{\rho_{{\cal I}}}=H(\{q_i\})=S(\rho_{{\cal I}})$. On the other hand, for pure states, $\rho_p=\ket{\psi}\bra{\psi}$, the randomness is due to the genuinely quantum overlap between the state and the basis elements: $H(\{p_i\})_{\rho_p}=H(\{|\langle i|\psi\rangle|^2\})$. Here we present an operational  characterization of the quantum randomness for arbitrary coherent mixed states. To be a good measure of quantum uncertainty, a quantity should satisfy the following properties:
\begin{enumerate}
\item
Being nonnegative;
\item
Vanishing if and only if the measurement uncertainty is only due to the state mixedness;
\item
Representing the total uncertainty for pure states;
\item
Being convex \cite{luo2005quantum,luo12,luo03,herbut2005quantum,Baumgratz14}.
\end{enumerate}

We consider the worse case scenario depicted in Fig.\ref{qfigure}, where Alice and Eve share a bipartite system in state $\rho_{AE}$. Alice makes a measurement and obtains outcomes following a probability distribution $\{p_i\}, p_i=\text{Tr}[\rho_A \ket{i}\bra{i}]$. The total randomness associated to the measurement is  $H(\{p_i\})$. The uncertainty of Eve about Alice's measurement outcome is quantified by the conditional entropy $H(\{p_i\}|E)_{\rho_{AE}}$.

\begin{figure}[bht]
\centering
\resizebox{6cm}{!}{\includegraphics[scale=1]{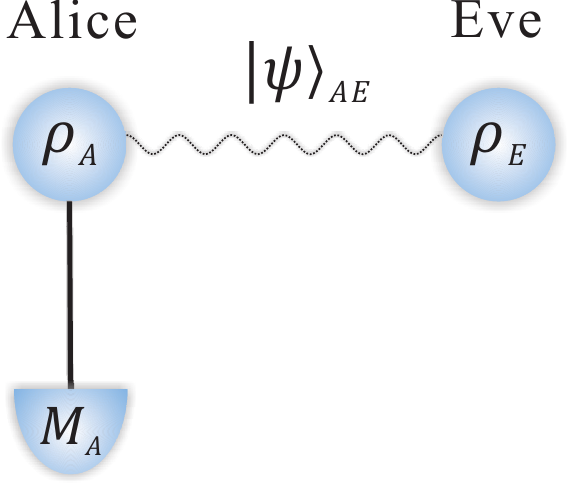}}
\caption{Quantum randomness.  In a bipartite Alice-Eve system described by a pure state $\psi_{AE}$, the quantum randomenss of a measurement performed by Alice on the system in the mixed state $\rho_A$ is given by the amount of uncertainty Eve has on the measurement outcome. Such quantum uncertainty is quantified by the relative entropy of coherence $R_I^Q(\rho_A)$.}\label{qfigure}
\end{figure}

After the Alice's measurement, define the global state by  $\rho_{AE}'$ and Alice's resulted state becomes $\rho_A^{\mathrm{diag}}:=\sum_i p_i \ket{i}\bra{i}$.  Hence, in the best case scenario for Eve, her uncertainty is given by the von Neumann conditional entropy
\begin{eqnarray} \label{Eq:quantumrandomness}
R^Q_I(\rho_A) := \min_{\rho_E} H(\{p_i\}|E)_{\rho_{AE}}=\min_{\rho_E} S(A|E)_{\rho_{AE}'},
\end{eqnarray}
where the optimization runs over all the possible Eve's states such that $\text{Tr}_E(\rho_{AE})=\rho_A$, and the conditional entropy is given by $S(A|E)_{\rho_{AE}'}= S(\rho_{AE}')- S(\rho_E)$.
{It is not hard to see that the best case scenario for Eve is to hold a purification of Alice, $\ket{\psi}_{AE}$. In fact, one can always extend Eve's part to hold a purification of a mixed state $\rho_{AE}$, which will not increase her uncertainty about Alice's measurement outcome.}

When $\rho_A$ is a pure state, then $\ket{\psi}_{AE}$ and hence $\rho_{AE}'=\rho_A^{\mathrm{diag}}\otimes\rho_E$ are both product states. It is easy to verify that Eve's uncertainty corresponds to the total randomness of Alice's measurement:
\begin{equation}\label{Eq:}
R^Q_I(\rho_A) = S(\rho_A^{\mathrm{diag}})=H(\{p_i\})_{\rho_A}.
\end{equation}
When $\rho_A$ is not a pure state, after Alice's measurement, the state is changed to $ \rho_{AE}' = \sum_i p_i\ket{i}_A\bra{i}\otimes\rho_i^E$, where  $\rho_E  = \sum_i p_i\rho_i^E$. In fact, $\rho_i^E={}_A\bra{i}(\ket{\psi}_{AE}\bra{\psi}_{AE})\ket{i}_A/p_i$ is a pure state. The conditional entropy of the post measurement state is given by $S(A|E)_{\rho_{AE}'} = S(\rho_{AE}')- S(\rho_E)$. Using the  equality $S\left(\sum_ip_i\ket{i}\bra{i}\otimes\rho_i\right) = H(p_i) + \sum_i p_i S(\rho_i)$, the conditional entropy is then $S(A|E)_{\rho_{AE}'} = H(p_i) + \sum_ip_iS(\rho_i^E)  - S(\rho_E)$.
Since $H(p_i) = S(\rho_A^{\mathrm{diag}})$, $S(\rho_E) = S(\rho_A)$, and $S(\rho_i^E) = 0, \forall i,$ we have
\begin{equation}\label{Eq:rq}
  R^Q_I(\rho_A) = S(\rho_A^{\mathrm{diag}}) - S(\rho_A).
\end{equation}
%Obviously, any additional system manipulated by Eve cannot further reduce the uncertainty on Alice's measurement.
It is immediate to observe that the Eve's uncertainty is equal to the relative entropy of coherence
\begin{eqnarray}
R^Q_I(\rho_A)=S(\rho_A||\rho_A^{\mathrm{diag}})=C_R(\rho_A),
\end{eqnarray}
thus satisfying all the requirements for a consistent measure of quantum randomness as well as being a measure of BCP coherence \cite{Baumgratz14}.

{Note that, when considering a tripartite pure state $\ket{\psi_{ABE}}$ and a projective measurement $\{\ket{i}\bra{i}\}$ on system $A$, it is shown \cite{Coles12} that the quantum randomness of the measurement outcome conditioned on system $E$ corresponds the distance between state $\rho_{AB} = \mathrm{tr}_E[\ket{\psi_{ABE}}\bra{\psi_{ABE}}]$ and state $\rho_{AB}'$ after the measurement.  Furthermore, by regarding system $B$ as a trivial system, the analysis in Ref.~\cite{Coles12} also applies to our scenario.}

\subsection{Quantum randomness against classical information}\label{crand}
In the last part, we showed that the quantum randomness of a local measurement can be quantified by the best case uncertainty of a correlated party Eve. Such uncertainty has been quantified by the quantum conditional entropy.  We compare the result with an alternative measure of quantum randomness reported in Ref.~\cite{Yuan15Coherence}.  The setting is for the sake of clarity depicted in Fig.~\ref{cfigure}. The difference is that Eve performs a measurement with probability distribution $\{q^E_i\}, q^E_i=\text{Tr}[\rho_E\ket{e_i'}_E\bra{e_i'}]$ on her own system to predict Alice's measurement outcome. The best case uncertainty is then given by the {\it classical} conditional entropy:
\begin{equation}\label{Eq:classicalrandomness}
  R^C_I(\rho_A) = \min_{\rho_E,\{q^E_i\}} H(\{p_i\}|\{q^E_i\})_{\psi_{AE}},
\end{equation}
where the minimization runs over all the possible Eve's states and measurements.
   When Alice's system is in a pure state $\ket{\psi}_A= \sum_i \sqrt{p_i}\ket{i}$, the probability distributions of $A$ and $E$ are uncorrelated as the the global system is in a tensor product state. Hence, we have $
  R^C_I(\rho_A)=H(\{p_i\}|\{
q^E_i\})_{\psi_{AE}} =  H(\{p_i\})$ for any Eve's strategy. The quantity corresponds to the total randomness as expected.
For an arbitrary mixed state $\rho_A$, it turns out that the Eve's uncertainty on Alice's measurement is given by
\begin{equation}\label{Eq:classicalrandomness2}
  R^C_I(\rho_A) = \min_{\left\{p_i,\ket{\psi_i}_A\right\}} \sum p_i R^C_I\left(\ket{\psi_i}_A\right),
\end{equation}
where the minimization is over all possible decompositions of $\rho_A$. We briefly review the proof here. Given the spectral decomposition  $\rho_A = \sum_i\lambda_i\ket{a_i}\bra{a_i}$, then a purification of $\rho_A$ is $\ket{\psi}_{AE} = \sum_i\sqrt{\lambda_i}\ket{a_i}_A\otimes\ket{e_i}_E$. Here $\{\ket{e_i}_E\}$ is an orthogonal basis of Eve's system. Eve performs a projective measurement $\{\ket{e'_i}_E\}$ on her local system, then based on her measurement outcome $\ket{e'_i}_E$, the Alice's state is
\begin{equation}\label{}
  \ket{\psi_i}_A= \frac{1}{\sqrt{p_i}}\sum_{j}\sqrt{\lambda_{j}}\left|\langle e'_i\ket{e_j}\right|\ket{a_i}_A,
\end{equation}
where $p_i = \sum_{j}{\lambda_{j}}\left|\langle e'_i\ket{e_j}\right|^2$. As the state of Alice is pure for each outcome of Eve, the averaged quantum randomness is $\sum p_i R^C_I\left(\ket{\psi_i}_A\right)$. On the other hand, Eve can choose an arbitrary measurement basis, which determines a decomposition of $\rho_A$, to maximize his prediction success probability. Therefore, the quantum randomness measure should be optimized  over all the possible decompositions of $\rho_A$. When Eve performs a general measurement (POVM), we can always enlarge the system of Eve and  consider a projective measurement, then the proof follows accordingly q.e.d.\\

\begin{figure}[bht]
\centering
\resizebox{6cm}{!}{\includegraphics[scale=1]{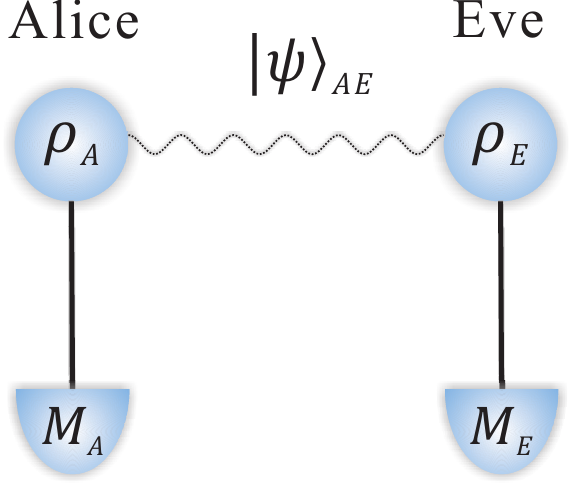}}
\caption{Alternative definition of Quantum randomness.  In a bipartite Alice-Eve system described by a pure state $\psi_{AE}$, the quantum randomness of a measurement performed by Alice on the system in the mixed state $\rho_A$ is given by the minimum amount of uncertainty Eve has on the measurement outcome {\it after performing a measurement on her own systems}. Such quantum uncertainty is quantified by the convex roof measure $R^C_I(\rho_A)$.}\label{cfigure}
\end{figure}

The quantum randomness measure obtained by convex roof extension of the pure state randomness is a measure of BCP coherence as well \cite{Yuan15Coherence}.

\subsubsection{Verifying the properties of $R_z^C$.}
Now we show that the intrinsic randomness $R_Z^C$, defined in Eq.~\eqref{Eq:classicalrandomness2}, satisfies the properties of coherence measures listed in Table.~\ref{Fig:Properties}. That is, the requirements of the measures for quantum coherence and intrinsic randomness are equivalent.

\paragraph{Proof of  (C1)}
In the language of generating randomness, the requirement (C1) in Table.~\ref{Fig:Properties} can be interpreted as that classical states generate no randomness. This is because that an incoherent state $\delta$, defined in Eq.~\eqref{Eq:sigma}, can be understood as a statistical mixture of classical states. We can easily verify that $R_Z^C(\delta) = 0$, since $R_Z^C(\delta) \leq  \sum_{i=1}^d p_iR_Z^C(\ket{i}\bra{i}) = 0$ from Eq.~\eqref{Eq:sigma} and $R_Z^C(\rho) \geq 0$ by definition. The stronger requirement (C1') implies that any non-classical states, which cannot be represented in the form of Eq.~\eqref{Eq:sigma}, could always be used to generate intrinsic randomness. Thus, this result answers why nonzero intrinsic randomness always indicates `quantumness' as discussed above. To prove that $R_Z^C(\rho)$ satisfies (C1'), consider a state $\rho\notin \mathcal{I}$ that has $R_Z^C\left(\rho\right) = 0$. From the definition of $R_Z^C$, there exists a decomposition $\rho = \sum_e p_e \ket{\psi_e}\bra{\psi_e}$ such that $R_Z^C(\ket{\psi_e}\bra{\psi_e}) = 0$ for all $e$. As any pure state with zero randomness is in the basis $I$, we have $\ket{\psi_e} = \ket{i_e} \in I$, and  $\rho = \sum_e p_e \ket{i_e}\bra{i_e}$, which belongs to the set $\mathcal{I}$, which is a contradiction. We can also show that the upper bound of its intrinsic randomness is given by $R_Z^C\left(\rho\right) \leq \log_2d$. The maximally coherent state $\ket{\Psi_d}$, defined in Eq.~\eqref{Eq:Psid}, has the largest intrinsic randomness.

\paragraph{Proof of (C2)}
The requirement (C2) implies a monotonicity property of incoherent operations. In the corresponding randomness picture, incoherent operations can be understood as classical operations that map one zero intrinsic randomness (classical) state to another one. An interpretation of (C2a) is that such classical operations should not increase the randomness of a given state. While (C2b) requires that the randomness cannot increase on average when probabilistic strategies are considered. Let us quickly check why (C2b) is true for the pure state case. For a pure state $\rho$, the randomness measure $R_Z^C(\rho)$ equals the relative entropy of coherence $C_{\mathrm{rel, ent}}(\rho)$, whose monotonicity has been proved \cite{Baumgratz14}. That is, we have
\begin{equation}\label{Eq:Monotonicitypure}
  R_Z^C\left(\ket{\psi}\right) \geq \sum_n p_nR_Z^C\left(\ket{\psi_n}\right),
\end{equation}
where $\ket{\psi_n} = K_n\ket{\psi}/\sqrt{p_n}$, and $p_n = \mathrm{Tr}\left[K_n\ket{\psi}\bra{\psi}\right]$. This is because for   a pure state $\rho$,  the intrinsic randomness $R_Z^C(\rho)$ equals the relative entropy coherence measure $C_{\mathrm{rel, ent}}(\rho)$ \cite{Baumgratz14}, whose monotonicity has already been proved.

For a general mixed state $\rho$, suppose that the optimal decomposition that achieves the minimum in Eq.~\eqref{Eq:classicalrandomness2} is given by $\rho = \sum_e p_e \ket{\psi_e}\bra{\psi_e}$. Then, we have
\begin{equation}\label{}
  R_Z^C\left(\rho\right) = \sum_e p_e R_Z^C(\ket{\psi_e})
\end{equation}
Now suppose that the incoherent operation defined in the main text is acted on $\rho$. What we need to prove is that
\begin{equation}\label{}
\begin{aligned}
  R_Z^C\left(\rho\right) \geq \sum_n p_nR_Z^C(\rho_n).
\end{aligned}
\end{equation}
where $  \rho_n = {K_n\rho K_n^\dag}/{p_n}$ and $p_n = \mathrm{Tr}\left[ K_n\rho K_n^\dag\right]$.
As $\rho = \sum_e p_e \ket{\psi_e}\bra{\psi_e}$, we have
\begin{equation}\label{}
\begin{aligned}
  \rho_n &= \frac{K_n\rho K_n^\dag}{p_n} \\
  &= \sum_e \frac{p_e}{p_n} {K_n\ket{\psi_e}\bra{\psi_e} K_n^\dag} \\
  &= \sum_e \frac{p_e}{p_n}p_{en}\rho_{en}
\end{aligned}
\end{equation}
where, we denote $p_{en} = \mathrm{Tr}[K_n\ket{\psi_e}\bra{\psi_e} K_n^\dag]$, $\rho_{en} = {K_n\ket{\psi_e}\bra{\psi_e} K_n^\dag}/{p_{en}}$, and we have $p_n = \sum_e p_ep_{en}$. Then, we can finish the proof
\begin{equation}\label{}
\begin{aligned}
  R_Z^C\left(\rho\right) &= \sum_e p_e R_Z^C(\ket{\psi_e}) \\
  &\geq \sum_{e} p_e \sum_np_{en}R_Z^C(\rho_{en})\\
   &= \sum_{n} p_n\sum_e\frac{p_e p_{xn}}{p_n}R_Z^C(\rho_{en})\\
  &\geq \sum_{n} p_nR_Z^C\left(\sum_e\frac{p_e p_{en}}{p_n}\rho_{en}\right)\\
   &= \sum_n p_nR_Z^C(\rho_n),
    \end{aligned}
\end{equation}
where the first inequality is based on the conclusion for pure states in Eq.~\eqref{Eq:Monotonicitypure} and the last inequality is due to the convexity of $R_Z^C$.

\paragraph{Proof of (C3)}
The convexity property (C3) can be understood as a requirement on the randomness generation process. In other words, the randomness cannot increase on average by statistically mixing several states. With the convex roof definition of $R_Z^C(\rho)$, given in Eq.~\eqref{Eq:classicalrandomness2}, we can easily verify the convexity property (C3). The proof follows directly by considering a specific decomposition of $\rho = \sum_n p_n\rho_n$ in (C3).
Note that, the property (C2a) can be derived when (C2b) and (C3) are fulfilled, thus we also prove (C2a) for $R_Z^C(\rho)$.

In summary, we prove that the intrinsic randomness $R_Z^C(\rho)$ indeed measures the strength of coherence. A state with stronger coherence would therefore indicate larger randomness in measurement outcomes, and vice versa.

\subsection{Qubit example}
Here, we derive the intrinsic randomness formula of qubit state. We denote the Pauli matrices by $\sigma_i, \sigma_x, \sigma_y, \sigma_z$. When measured in the $\sigma_z$ basis, the intrinsic randomness for pure a qubit state $\ket{\psi} = \alpha \ket{0}+ \beta \ket{1}$ is given by
\begin{equation}\label{}
  R_z^C(\ket{\psi}) = H(|\alpha|^2) = H(|\beta|^2),
\end{equation}
where $H(p) = p\log p + (1-p)\log(1-p)$. If we define $n_x = \bra{\psi}\sigma_x\ket{\psi} = \alpha^*\beta + \alpha\beta^*$ and $n_y = \bra{\psi}\sigma_y\ket{\psi} = -i\alpha^*\beta + i\alpha\beta^*$, then it is easy to check that
\begin{equation}\label{Eq:RZ}
R_{z}^C(\ket{\psi})= H\left(\frac{1+\sqrt{1-n_x^2 - n_y^2}}{2}\right).
\end{equation}

For a general mixed state $\rho$, we can follow the  method for  deriving the entanglement of formation \cite{Wootters98}. In this case, we need to first define $\ket{\tilde{\psi}} = \sigma_x\ket{\psi^*} = \beta^*\ket{0}+  \alpha^*\ket{1}$, and the coherent concurrence by
\begin{equation}\label{}
  C_z(\ket{\psi}) = |\langle\psi|\tilde{\psi}\rangle| = 2|\alpha\beta|.
\end{equation}
Then it is easy to check that
\begin{equation}\label{Eq:RZ}
R_{z}^C(\ket{\psi})= H\left(\frac{1+\sqrt{1-C_z^2}}{2}\right).
\end{equation}
The randomness $R_{I}^C\left(\rho\right)$ can be obtained according to Eq.~\eqref{Eq:RZ} by first calculating the coherent concurrence. Follow the method of deriving the entanglement of formation, the  $C_z$ value can be obtained by $ C_z =  |\sqrt{\eta_1}-\sqrt{\eta_2}|$, where $\eta_1$ and $\eta_2$ are the eigenvalues of the matrix $M = \rho \sigma_x \rho^* \sigma_x$. In the Bloch sphere representation, the value of $C_z$ of a quantum state $\rho = (\sigma_i + n_x \sigma_x + n_y\sigma_y +n_z \sigma_z )/2$ can be calculated by
\begin{equation}\label{Eq:concurrence}
  C_z = \sqrt{n_x^2 + n_y^2}.
\end{equation}

Compared to the $l_1$ norm coherence measure $C_{l_1}$ \cite{Baumgratz14}, which is defined by the sum of the off-diagonal elements
\begin{equation}\label{}
C_{l_1}(\rho) = \sum_{i\neq j}|\rho_{ij}|,
\end{equation}
one can easily check that $C_{l_1}(\rho)$ equals the concurrence $C_z$ for the qubit case. This is because
\begin{equation}\label{}
\begin{aligned}
C_{l_1}(\rho) &= |\bra{0}\rho\ket{1}| + |\bra{1}\rho\ket{0}| \\
&= |\frac{1}{2}(n_x - in_y)| + |\frac{1}{2}(n_x + in_y)|\\
&= \sqrt{n_x^2 + n_y^2}.
\end{aligned}
\end{equation}
We conjecture that the coherence concurrence can be generalized to an arbitrary high dimensional space by following a similar method to that used for the entanglement concurrence \cite{Rungta01, Audenaert01, badziag2002concurrence}.

\subsection{Comparison between the two randomness measures}
Let us compare  the two quantities $R^C_I(\rho_A) , R^Q_I(\rho_A)$ in a simple example about  a qubit system.  In the Bloch sphere representation, $\rho_A = (I+\vec{n}\cdot\vec{\sigma})/2$, where $\vec{n} = (n_x,n_y,n_z)$ and $\vec{\sigma} = (\sigma_x,\sigma_y, \sigma_z)$ are the Pauli matrices. Supposing that the measurement basis is the $\sigma_Z$ eigenbasis, which is denoted by $\{\ket{0},\ket{1}\}$, then we obtain
\begin{eqnarray}\label{Eq:ef}
R_{z}^C(\rho_A)&=& H\left(\frac{1+\sqrt{1-n_x^2 - n_y^2}}{2}\right)\\
R_{z}^Q(\rho_A)&=& H\left(\frac{n_z + 1}{2}\right) - H\left(\frac{|n| + 1}{2}\right),\nonumber
\end{eqnarray}
where $|n| = \sqrt{n_x^2+n_y^2+n_z^2}$ and $H$ is the binary entropy.
Specifically, for the state $\rho_A(v) = v\ket{+}\bra{+} + \frac{1 - v}{2}\mathbb{I}$, where $\ket{+} = (\ket{0}+\ket{1})/2, v\in[0,1], \vec{n}(v) = (v, 0, 0)$, we have
\begin{equation}\label{Eq:ef}
\begin{aligned}
R_{z}^C(\rho_A)&= H\left(\frac{1+\sqrt{1-v^2}}{2}\right),\\
R_{z}^Q(\rho_A)&= 1 - H\left(\frac{v + 1}{2}\right).
\end{aligned}
\end{equation}
In Fig.~\ref{fig:RQC}, we plot the quantum randomness versus the mixing parameter $v$. As expected, the quantum randomness measure $R_z^Q$ obtained through the a fully quantum picture is  smaller than  $R_z^C$, which is  derived by the measurement-based method, while they both vanish when the state is incoherent, and are equal to the Shannon entropy in the pure state case.

\begin{figure}[bht]
\centering
\resizebox{12cm}{!}{\includegraphics[scale=1]{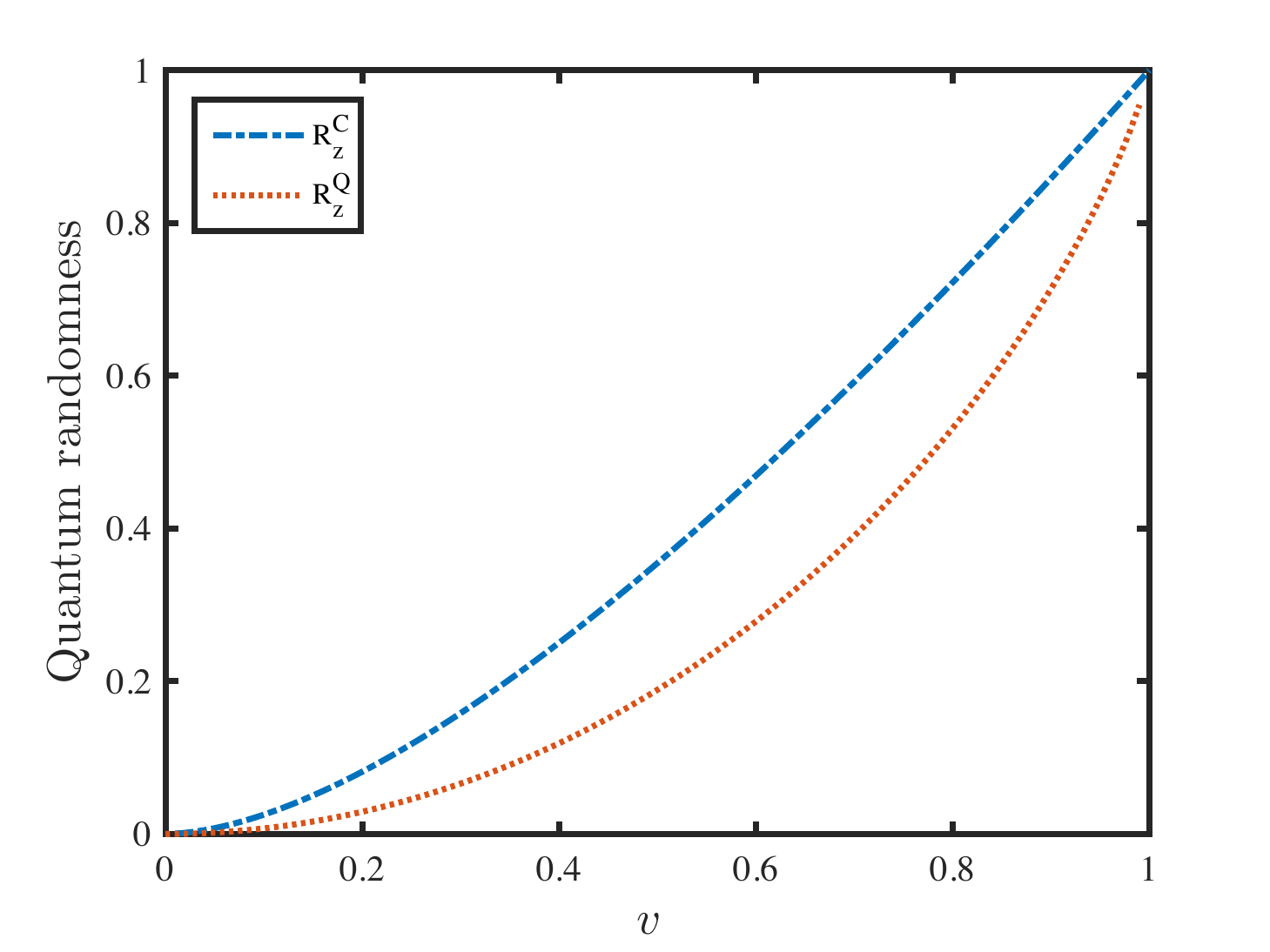}}
\caption{Comparison of the measures of quantum randomness $R_z^Q$ (red dotted line) and $R_z^C$ (blue dot-dashed line) in the qubit state $\rho_A(v)=v\ket{+}\bra{+} + \frac{1 - v}{2}\mathbb{I}$ versus the mixing parameter $v$.}\label{fig:RQC}
\end{figure}

\section{Coherence or randomness distillation}
When Alice performs a projective measurement $P_I$ on $N$ identical pure states $\ket{\psi} = \sum_i a_i\ket{i}$, she will obtain $N$ i.i.d.~random variables $A_1, A_2, \dots, A_N$. For the state $\ket{\psi}$ that is not maximally coherent, the randomness of the measurement outcomes is biased. Then, as shown in Fig.~\ref{Fig:Extractor}(a), Alice can perform a randomness extraction process to transform the $N$ biased random numbers to $l \approx NR_Z^C(\ket{\psi})$ almost uniformly distributed random bits.

\begin{figure}[hbt]
\centering \resizebox{10cm}{!}{\includegraphics{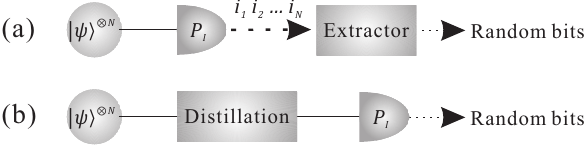}}
\caption{Random number extraction and coherence distillation. The randomness extraction process can be replicated by first distilling the coherence of the quantum state. Measurement outcomes will directly produce uniformly random bits.} \label{Fig:Extractor}
\end{figure}

We show in Fig.~\ref{Fig:Extractor}(b) that the extraction can be equivalently performed before measurement. Now, the extraction becomes a quantum procedure, which we call \emph{quantum extraction}. Considering the equivalence between intrinsic randomness and quantum coherence, quantum extraction can be regarded as a procedure of \emph{coherence distillation}. This concept resembles the distillation procedure of another (more popular) quantumness measure---entanglement \cite{Bennett96}.

With quantum extraction, we can first distil the input state $\ket{\psi} = \sum_i a_i\ket{i}$ into the maximally coherent state $\ket{\Psi_2} = (\ket{0} + \ket{1})/\sqrt{2}$. Then, we can directly obtain uniformly distributed random bits by measuring the maximally coherent state. For $N$ copies of $\ket{\psi}$, it is shown in Supplementary Materials that we can asymptotically obtain $l$ copies of $\ket{\Psi_2}$, where $l$ and $N$ satisfy the following condition,
\begin{equation}\label{Eq:CohDistill}
{l}/{N} \approx R_Z^C\left(\ket{\psi}\right).
\end{equation}
Taking a pure qubit input state as an example, the distillation procedure is summarized as follows.
\begin{enumerate}
\item
Prepare $N$ copies of qubit state $\ket{\psi}^{\otimes N} = \left(\alpha \ket{0} + \beta \ket{1}\right)^{\otimes N}$, which can be binomially expanded on the computational basis. There are $N+1$ distinct coefficients, $\beta^N, \alpha^{1}\beta^{N-1}, \dots, \alpha^N$, corresponding to different subspaces that have the same number of $\ket{0}$ or $\ket{1}$.

\item
Perform a projection measurement to distinguish between those subspaces. For the $k$th subspace, which has coefficient $\alpha^{N-k}\beta^{k}$, the measurement probability is given by $p_k = {N\choose k}|\alpha|^{2(N-k)}|\beta|^{2k}$. The resulting quantum state of the $k$th outcome corresponds to a maximally coherent state $\ket{\Psi_{D_k}}$  of dimension $D_k = {N\choose k}$.

\item
Suppose that $2^r\leq D_k< 2^{r+1}$, then we can directly project onto the $2^r$ subspace and convert to $r$ copies of $\ket{\Psi_2}$ as desired.
\end{enumerate}
%We refer to Supplementary Materials for detailed discussions.

To see why $r/N$ equals the randomness of $\ket{\psi}$ on average, we only need to take account of the operations that cause a loss of coherence. As shown in Supplementary Materials, the only two projection measurements lose negligible amount of coherence, thus we asymptotically have $NR_Z^C(\ket{\psi}) \approx r$.

{In Appendix~\ref{App:coherence}, we further extend the definition of distillable coherence to mixed quantum states. Compared to the definition of the regulated entanglement of formation \cite{hayden2001asymptotic}, we also define coherence of formation and conjecture that it equals the regulated intrinsic randomness measure,
\begin{equation}\label{Eq:RNrho}
  {R_Z^C}^\infty(\rho) = \lim_{N\rightarrow\infty}\frac{ R_Z^C\left(\rho^{\otimes N}\right)}{N}.
\end{equation} }

\subsection{Comparison with entanglement}
As shown in Table~\ref{coherenceEntanglement}, there exist strong similarities between the frameworks of coherence and entanglement (see also Ref.~\cite{Streltsov15}), our study can be regarded as an extension of the convex roof measure from entanglement to coherence. Similar to the case of EOF, as a convex roof measure for coherence, we expect our proposed measure to play an important role in the research of coherence.

\begin{table*}[hbt]\footnotesize
\centering
\caption{Comparing  the frameworks of coherence and entanglement. DI: device-independent; MDI: measurement-device-independent; QKD: quantum key distribution; QRNG: quantum random number generation.}\label{ps}
\begin{tabular}{ccc}
\hline
Properties & Coherence & Entanglement \\
\hline
Classical operation & Inherent operation \cite{Baumgratz14} & LOCC \cite{Bennett96}\\
Classical state & Incoherent state, Eq.~\eqref{Eq:sigma} & Separable state\\
Distance measure & $C_{\mathrm{rel, ent}}(\rho)$, Eq.~\eqref{eq:relativeentropy} & Relative entropy distance \cite{Vedral98}\\
Convex roof measure & $R_Z^C(\rho)$, Eq.~\eqref{Eq:classicalrandomness2} & EOF \cite{Bennett96, Hill97, Wootters98}\\
Distillation & Coherence distillation (Methods) & Entanglement distillation \cite{Rains98, Horodecki09}\\
Formation (cost) & Coherence formation & Entanglement cost \cite{hayden2001asymptotic, Horodecki09}\\
Foundation tests & Further research direction & Nonlocality tests \cite{bell, CHSH}  \\
Interconvertibility &  \cite{Du15a, Du15b} & Deterministic \cite{Nielsen99}, stochastic \cite{Dur00}\\
Catalysis effect & Further research direction & Entanglement catalysis \cite{Jonathan99, Eisert00}\\
Witness & Further research direction & Entanglement witness (EW) \\
DI applications & Further research direction & DIQKD \cite{Mayers98, acin2007device}, DIQRNG \cite{Vazirani12}\\
MDI applications & Further research direction & MDIQKD \cite{Lo2012MDI, Braunstein12}, MDIEW \cite{Branciard13, Yuan14}\\
\hline
\end{tabular}\label{coherenceEntanglement}
\end{table*}

For further research directions, it is interesting to extend the framework of entanglement to coherence. An incomplete list of comparison between the two are shown in Table~\ref{coherenceEntanglement}. For instance, it is interesting to see whether $C_{\mathrm{rel, ent}}(\rho)$ and $R_Z^C(\rho)$ are the unique lower and upper bounds of all coherence measures after regularization, and whether they can coincide.
Another interesting and related question is that of quantifying the coherence for an unknown quantum state, similar to the task of using an entanglement witness for quantification. The coherence measure $R_{I}(\rho)$ given in Eq.~\eqref{Eq:classicalrandomness2} ensures the true randomness when measuring a state $\rho$ in the $I$ basis. Such a technique can be utilized to construct a semi self-testing quantum random number generator. A straightforward way to do this is to first perform tomography on the to-be-measured state $\rho$ and then estimate the randomness of the $I$ basis measurement outcomes according to Eq.~\eqref{Eq:classicalrandomness2}. As the coherence measure $R_{I}(\rho)$ quantifies the output randomness in a measurement, our result can also be applied in other randomness generation scenarios \cite{colbeck2009quantum, gabriel2010generator, Xu:QRNG:2012, Vazirani12}.

 {The definition of coherence and intrinsic randomness is based on a specific computational basis. In this perspective, the quantum feature can be quantified by the superposition strength on the measurement basis. Alternatively, we can define similar quantumness as the ability of measurements. For an arbitrary pure quantum state, if we can choose the measurement basis that is complementary to the state, quantum feature similar to coherence can also be maximally revealed. The definitions of coherence based on the property of quantum state with a given measurement basis and the property of measurement is similar to the relationship between the pictures of Schrodinger and Heisenberg. The current definition of coherence thus follows from the routine of the Schrodinger¡¯s picture.}

 {We also investigate intrinsic randomness without specifying a measurement basis. In this case, we consider the scenario that an optimal measurement basis is chosen to maximize the output randomness. Here, we do not assume that the choice of measurement basis is secret from Eve¡¯s point of view. Thus, we still have to consider the minimization of intrinsic randomness on the chosen measurement basis. In this case, this basis-independent intrinsic randomness can be defined as $R(\rho) = \max_I R_Z^C\left(\rho\right)$. In the qubit example, we show that the basis-independent intrinsic randomness is related to the purity of a quantum state. We thus demonstrate that intrinsic randomness can be used to quantify other quantum features.
}

\section{Basis independent randomness and coherence}
Here, we quantify the intrinsic randomness under a different scenario. When the measurement basis has not been specified, we can still consider the intrinsic randomness. In this case, Alice can choose an optimal measurement basis to maximize the output randomness. Here, we do not assume that the choice of measurement basis is secret from Eve¡¯s point of view. Thus, we still have to consider the minimization of intrinsic randomness on the chosen measurement basis. In this case, this basis-independent intrinsic  randomness can be defined as
\begin{equation}\label{Eq:arbirary}
  R(\rho) = \max_I R_I\left(\rho\right).
\end{equation}
Here, the maximization is over all possible projective measurement basis $I$ and the randomness measure could be either $R_I^C$ or $R_I^Q$. We do not consider general POVM measurement for Alice as for it generally requires ancillary quantum states which might introduce coherence and hence randomness. The new definition $R(\rho)$ still represents the existence of quantum effects. That is, nonzero $R(\rho)$ will always indicate the existence of quantumness, although $R(\rho)$ does not quantify the coherence of quantum states in this instance, since the coherence is defined on a specific basis.

%For the randomness measure $R_I^Q$, we have that $R_I^Q(\rho) = S(\rho_A^{\mathrm{diag}}) - S(\rho_A)$ and the basis independent randomness is
%\begin{equation}\label{}
%  R^Q(\rho) = \max_IR_I^Q(\rho) = \max_IS(\rho_A^{\mathrm{diag}}) - S(\rho_A).
%\end{equation}
%Note that, only the first term depends on the basis and it is maximized when $I$ is the basis that $\rho_A$ has a spectral decomposition.

As an example, we consider the basis independent randomness with randomness measure $R_I^C$ for qubit state $\rho$.
In this example, we also give a direct expression of $R(\rho)$ for a qubit state $\rho$. As the randomness measure $R_I^C$ is a function of $C_I$ according to Eq.~\eqref{Eq:RZ}, we can thus similarly define $C = \max_IC_I$. In the following, we will focus on calculating $C$ and the basis independent randomness $R(\rho)$ is a direct function of it.

An arbitrary measurement basis $I$ can be considered as a unitary transformation of the measurement basis on the original $\delta_z$ basis.
Equivalently, we can suppose that the measurement basis is unchanged, while a unitary transformation acts upon the to-be-measured quantum state. Suppose that the original state $\rho$ has a spectral decomposition given by
\begin{equation}\label{}
  \rho = \lambda_1 \ket{\Psi_1}\bra{\Psi_1} + \lambda_2 \ket{\Psi_2}\bra{\Psi_2}.
\end{equation}
Thus, a unitary transformation on the state $\rho$ would only transform the eigenstates $\ket{\Psi_1}$ and $\ket{\Psi_2}$ to another basis and leave the eigenvalues $\lambda_1$ and $\lambda_2$ unchanged. In this case, we can still work on the $\delta_z$ basis, and obtain
\begin{equation}\label{}
  R(\rho) = \max_{\ket{\Psi_1}, \ket{\Psi_2}} R(\rho = \lambda_1 \ket{\Psi_1}\bra{\Psi_1} + \lambda_2 \ket{\Psi_2}\bra{\Psi_2}).
\end{equation}
Now, the maximization over all possible measurement basis is equivalently achieved over all possible eigenstates $\ket{\Psi_1}$ and $\ket{\Psi_2}$ with given eigenvalues $\lambda_1$ and $\lambda_2$.

We assume that $\left| \Psi_1 \right \rangle=a\left|0 \right \rangle+b\left| 1\right \rangle$ and $\left| \Psi_2 \right \rangle=b^*\left|0 \right \rangle-a^*\left| 1\right \rangle$, where the normalized coefficients $a,b$ are complex numbers yielding $a^*a+b^*b=1$ and $\left|0 \right \rangle$ and $\left|1 \right \rangle$ are eigenstates of $\delta_z$.
Then a general density matrix $\rho$ can be represented by
\begin{equation}
\rho  = \left(
\begin{array}{cc}
\lambda_1 \left|a\right|^2+\lambda_2 \left|b\right|^2 & (\lambda_1-\lambda_2)ab^* \\
(\lambda_1-\lambda_2)a^*b & \lambda_1 \left|b\right|^2+\lambda_2 \left|a\right|^2 \\
\end{array}
\right).
\end{equation}

To get $C$, we need to calculate $M$, which can be written as
\begin{equation}
\begin{aligned}
M =& \left(
\begin{array}{cc}
(\lambda_1-\lambda_2)ab^* & \lambda_1 \left|a\right|^2+\lambda_2 \left|b\right|^2 \\
\lambda_1 \left|b\right|^2+\lambda_2 \left|a\right|^2 & (\lambda_1-\lambda_2)a^*b \\
\end{array}
\right)\cdot
\\
&
\left(
\begin{array}{cc}
(\lambda_1-\lambda_2)a^*b & \lambda_1 \left|a\right|^2+\lambda_2 \left|b\right|^2 \\
\lambda_1 \left|b\right|^2+\lambda_2 \left|a\right|^2 & (\lambda_1-\lambda_2)ab^* \\
\end{array}
\right)
\end{aligned}
\end{equation}
By denoting
\begin{equation}
\begin{aligned}
 A&=(\lambda_1-\lambda_2)\left|a\right|^2 \left|b\right|^2 \\
 B&=(\lambda_1 \left|a\right|^2+\lambda_2 \left|b\right|^2)(\lambda_1 \left|b\right|^2+\lambda_2 \left|a\right|^2) \\
 C&=2(\lambda_1-\lambda_2)ab^*(\lambda_1 \left|a\right|^2+\lambda_2 \left|b\right|^2) \\
 D&=2(\lambda_1-\lambda_2)a^*b(\lambda_1 \left|b\right|^2+\lambda_2 \left|a\right|^2)
\end{aligned}
\end{equation}
The matrix $M$ can be denoted by
\begin{equation}
M = \left(
\begin{array}{cc}
A+B & C \\
D & A+B \\

\end{array}
\right)
\end{equation}
And $\eta_1$ and $\eta_2$ will be the two roots of the equation
\begin{equation}
\eta^2-2(A+B)\eta+(A+B)^2-CD=0
\end{equation}

Now we can calculate the square of $C(\rho)$ by
\begin{equation}
\begin{aligned}
C(\rho)^2 &=(\sqrt{\eta_1}-\sqrt{\eta_2})^2\\
 &=\eta_1+\eta_2-2\sqrt{\eta_1\eta_2} \\
 &=2(A+B)-2\sqrt{(A+B)^2-CD}
\end{aligned}
\end{equation}
Here we notice that $CD=4AB$, hence
\begin{equation}
C(\rho)^2= 2(A+B)-2\left|A-B\right|
\end{equation}
where $A-B=-\lambda_1\lambda_2(\left|a\right|^2+\left|b\right|^2)^2 \leq 0$, thus
\begin{equation}
\begin{aligned}
C(\rho)&=\sqrt{4A}=\left|\lambda_1-\lambda_2\right|(2\left|a\right|\left|b\right|)\\
&\leq \left|\lambda_1-\lambda_2\right|(\left|a\right|^2+\left|b\right|^2)\\
&=\left|\lambda_1-\lambda_2\right|
\end{aligned}
\end{equation}

Notice that, the equal sign is taken for a measurement basis complementary to the eigenbasis of $\rho$. For $\rho = (I + n_x \delta_x + n_y\delta_y +n_z \delta_z )/2$, we have that $\lambda_1 = (1+n)/2$ and $\lambda_2 = (1-n)/2$, where $n = \sqrt{n_x^2 + n_y^2 + n_z^2}$. Therefore, we have
\begin{equation}
C(\rho)= n.
\end{equation}

Thus this value can be physically related to the degree of how mixed the state $\rho$ is.

\chapter{Quantum Bernoulli Factory}
Bernoulli factory \cite{Asmussen92, RSA:RSA20333} is a simple yet interesting task in classical randomness processing. This chapter discusses quantum Bernoulli factory \cite{dale2015provable} and show its fundamental difference from classical Bernoulli factory. The key difference lie on the coherent superposition of states. To demonstrate this difference, we also present a theoretical protocol and an experiment verification with superconducting qubits \cite{Yuan2016Bernoulli}.
\section{Theoretical protocol}
Coherent superposition of different states, coherence, is a peculiar feature of quantum mechanics that distinguishes itself from Newtonian theory. In different scenarios, coherence exhibits as various quantum resources, such as entanglement \cite{Horodecki09}, discord \cite{Modi12}, and single-party coherence \cite{Baumgratz14}. In many quantum information tasks, the common resource leading to quantum advantage is multipartite quantum correlations. For instance, entanglement plays a crucial role in quantum key distribution \cite{bb84,Ekert1991}, teleportation \cite{Teleport1993}, and computation \cite{Shor97,Grover96}. While the essence of multipartite correlation originates from coherent superposition, it is natural to expect the essence of quantum advantage to also originate from coherence. This raises a fundamental question: Can quantum advantage be obtained without using multipartite correlations?

In randomness generation, it has been shown that coherence is the essential resource for generating true random numbers \cite{Yuan15Coherence}. It is thus natural to expect coherence to be a
resource for displaying quantum advantages in certain randomness related tasks. Remarkably, in a recent work by Dale \emph{et al.}, a rather simple task of randomness processing is proposed to show that coherence yields a provable quantum advantage over classical stochastic physics \cite{dale2015provable}. In this randomness processing task, a classical coin, see Fig.~\ref{fig:coin}(a), corresponds to a classical machine that produces independent and identically distributed random variables where each one has the binary values, head (0) and tail (1). A coin is called $p$-coin if the probability of producing a head is $p$, where $p\in[0,1]$. Given an unknown $p$-coin, an interesting question is whether one can construct an $f(p)$-coin, where $f(p)$ is a function of $p$ and $f(p)\in[0,1]$. Such construction processing is called a Bernoulli factory \cite{Asmussen92, RSA:RSA20333}.

\begin{figure}[bht]
\centering
\resizebox{8cm}{!}{\includegraphics[scale=1]{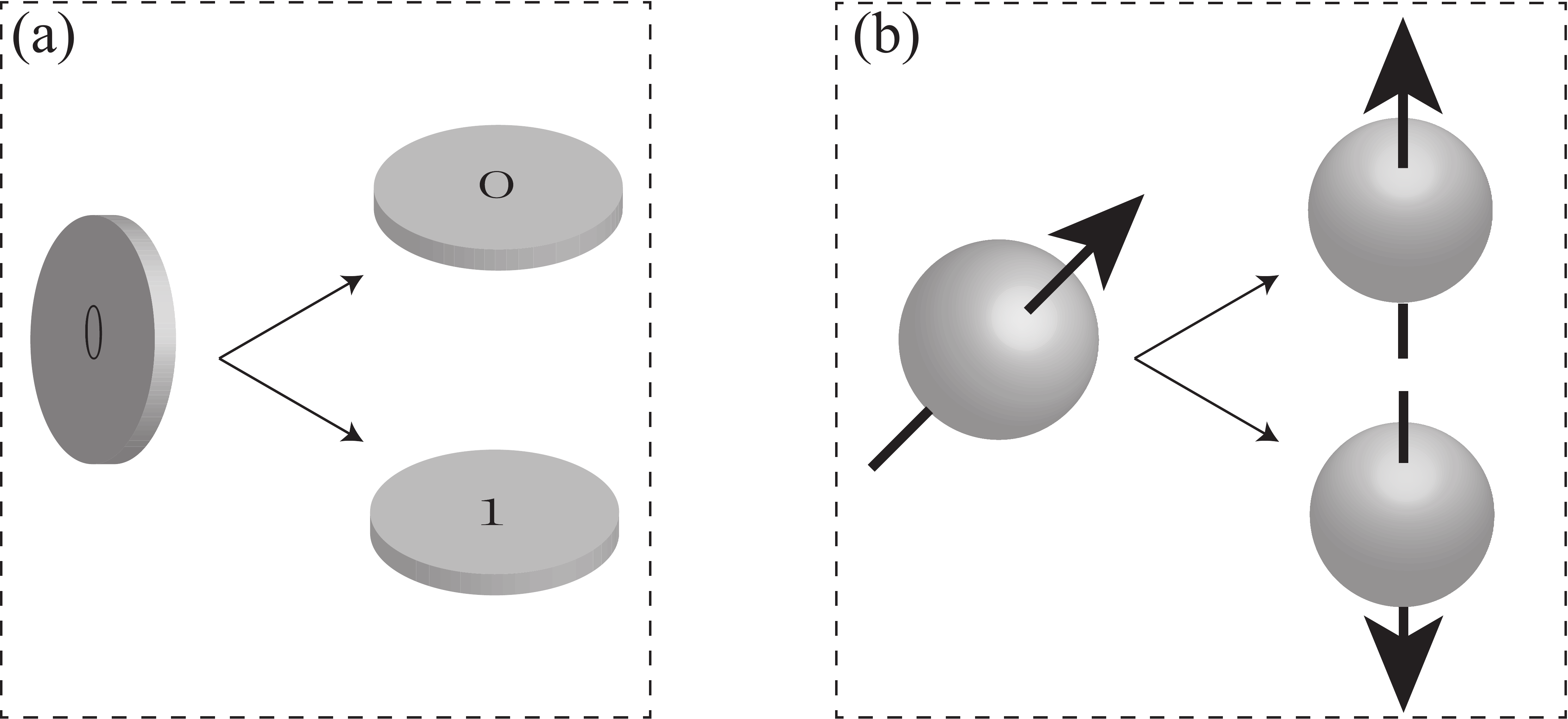}}
\caption{Classical and quantum coin. For a given $p$ value, (a) classical and (b) quantum $p$-coin corresponds to two different ways of encoding $p$, see Eqs.~\eqref{Eq:classicalcoin} and \eqref{Eq:quantumcoin}, respectively. The key difference lies in whether there is coherence in the computational basis.}\label{fig:coin}
\end{figure}

Let us take $f(p)=1/2$ for example, which was solved by von Neumann with a rather simple but heuristic strategy \cite{von195113}. Flip the $p$-coin ($p\ne0$) twice. If the outcomes are the same start over; otherwise, output the second coin value as the $1/2$-coin output. Therefore, the function of $f(p)=1/2$ can be constructed from an arbitrary unknown $p$-coin. As a generalization, a natural question involves which kind of function $f(p)$ can be constructed from an unknown $p$-coin. This classical Bernoulli factory problem was solved by Keane and O'Brien \cite{Keane94}. Generally speaking, a necessary condition for $f(p)$ being constructible is that $f(p)\neq 0$ or $1$ when $p\in(0,1)$. The function $f(p)=1/2$ satisfies this condition, while there are many other examples that violate it. For instance, surprisingly, the simple ``probability amplification'' function $f(p)=2p$ \footnote{A complete definition is: $f(p)=2p$ when $p\in[0,1/2]$ and $f(p)=2(1-p)$ when $p\in(1/2,1]$.} does not satisfy the constructible condition, where we have $f(1/2)=1$. Therefore, there is no classical method to construct an $f(p)=2p$-coin.

In the language of quantum mechanics, a $p$-coin corresponds to a machine that outputs identically mixed qubit states,
\begin{equation}\label{Eq:classicalcoin}
  \rho_C = p\ket{0}\bra{0} + (1-p)\ket{1}\bra{1},
\end{equation}
where $p\in[0,1]$, and $Z = \{\ket{0},\ket{1}\}$ is the computational basis denoting head and tail, respectively. As $p$ is generally unknown, we can regard $\rho_C$ as a classical way of encoding an unknown parameter $p$. A measurement in the $\{\ket{0},\ket{1}\}$ basis would output a head or a tail with a probability according to $p$ and $1-p$, respectively. On the other hand, a quantum way of encoding $p$, see Fig.~\ref{fig:coin}(b), can be a coherent superposition of $\ket{0}$ and $\ket{1}$, i.e., $\rho_Q = \ket{p}\bra{p}$ with
\begin{equation}\label{Eq:quantumcoin}
  \ket{p} = \sqrt{p}\ket{0}+\sqrt{1-p}\ket{1}.
\end{equation}
Following the nomenclature in Ref.~\cite{dale2015provable}, we call such a quantum coin a \emph{quoin}. It is straightforward to see that a $p$-coin can always be constructed from a $p$-quoin by measuring it in the $Z$ (computational) basis. Thus, classically constructible (via coins) $f(p)$ functions are also quantum mechanically constructible (via quoins), while a really interesting question is whether the set of quantum constructible functions (via quantum a Bernoulli factory) is strictly larger than the classical set.

%\RM{This is similar to the case of quantum computing, where classical information is encoded on classical bits $0$ and $1$, while quantum information can be encoded on a superposition of $\ket{0}$ and $\ket{1}$. By performing projective measurement in the computational basis, one can always recover classical information.}

In Ref.~\cite{dale2015provable}, Dale \emph{et al.} have theoretically proved the necessary and sufficient conditions for $f(p)$ being quantum constructible. Specifically, they show that there are functions, for instance $f(p)=2p$, which are impossible to construct classically, but can be efficiently realized in the presence of $p$-quoins. Therefore, they provide a positive answer to this problem where quantum resources are strictly superior to classical ones. The protocol for generating the $f(p)=2p$ function relies on Bell state measurement on two quoins, which essentially establish entanglement between the two quoins.

Now, we are interested in seeing whether such a quantum advantage persists even when multipartite correlations, such as entanglement, are absent. Thus, we only allow single qubit operations. Without two-qubit operations, it turns out that constructing the $f(p)=2p$ function will require many copies of qubits defined in Eq.~\eqref{Eq:quantumcoin} and the convergence could be poor. In this chapter, we propose another function that is impossible with classical means but feasible with only limited number of single-qubit operations.

%While, constructing the $f(p)=2p$ function will require many copies of qubits defined in Eq.~\eqref{Eq:quantumcoin} and the convergence could be poor, details can be found in Appendix.

\paragraph{The protocol for quantum Bernoulli factory}
Here, we analyze a classically impossible $f(p)$ function defined by
\begin{equation}\label{Eq:fp}
  f(p) = 4p(1-p).
\end{equation}
For $p=1/2$, we have $f(p)=1$ which means that this function is classically unachievable. On the other hand, it is straightforward to check that the $f(p)$-function satisfies the requirements for beign quantum constructible \cite{dale2015provable}. Given a $p$-quoin, we explicitly present an efficient protocol for generating an $f(p)$-coin as follows.

\begin{enumerate}[Step 1]
\item
\textbf{Generate a $p$-coin}:

When measuring a $p$-quoin, as given by Eq.~\eqref{Eq:quantumcoin}, in the $Z$ basis, the probabilities of obtaining $0$ and $1$ are $p$ and $1-p$, respectively.

\item
\textbf{Generate a $q$-coin}, where $q = \left[1+2\sqrt{p(1-p)}\right]/2$:

When measuring a $p$-quoin in the $X=\{(\ket{0}+\ket{1})/\sqrt{2},(\ket{0}-\ket{1})/\sqrt{2}\}$ basis, the probabilities of obtaining $(\ket{0}+\ket{1})/\sqrt{2}$ and $(\ket{0}-\ket{1})/\sqrt{2}$ are $\left[1+2\sqrt{p(1-p)}\right]/2$ and $\left[1-2\sqrt{p(1-p)}\right]/2$, respectively.

\item
\textbf{Construct an $m$-coin from a $p$-coin}, where $m=2p(1-p)$: toss the $p$-coin twice, output head if the two tosses are different and tail otherwise.

The probability of output two different tossing result is
\begin{equation}\label{}
m=\Pro{head}\Pro{tail} + \Pro{tail}\Pro{head} = 2p(1-p).
\end{equation}
Similarly, one can construct an $n$-coin from a $q$-coin, where $n=2q(1-q)=1/2-2p(1-p)$.

\item
\textbf{Construct an $s$-coin from an $m$-coin, where $s=m/(m+1)$}: toss the $m$-coin twice, if the first toss is tail then output tail; otherwise if the second toss is tail, output head; otherwise, repeat this step.

Denote the probability of outputting head and tail by $\Pro{H}$ and $\Pro{T}$, respectively, then,
\begin{equation}\label{}
\begin{aligned}
\Pro{H} &= \Pro{head}\Pro{tail} + \Pro{head}(1-\Pro{tail})\Pro{H}\\
& = m(1-m) + m^2\Pro{H}.
\end{aligned}
\end{equation}
Solving this equation, we have
\begin{equation}\label{}
s=\Pro{H} = \frac{m}{m+1}.
\end{equation}
Similarly, one can construct a $t$-coin from an $n$-coin, where $t=n/(n+1)$.

\item
\textbf{Construct an $f(p)=4p(1-p)$-coin}: first toss the $s$-coin and then the $t$-coin. If the first toss is head and the second toss is tail, then output head; if the first toss is tail and the second toss is head, then output tail; otherwise repeat this step.

Denote the probability of outputting head and tail by $\Pro{H}$ and $\Pro{T}$, respectively, then,
\begin{equation}\label{}
\begin{aligned}
  \Pro{H} =& \Pro{s-head}\Pro{t-tail} + (1 - \Pro{s-head}\Pro{t-tail}
  \\& - \Pro{s-tail}\Pro{t-head})\Pro{H}\\
  = &s(1-t) + \left[1 - s(1-t) - (1-s)t\right]\Pro{H}\\
  = &\frac{m}{m+1}\left(1-\frac{n}{n+1}\right)  \\
  &+\Pro{H}\left[1 - \frac{m}{m+1}\left(1-\frac{n}{n+1}\right) - \left(1-\frac{m}{m+1}\right)\frac{n}{n+1}\right]\\
    = &\frac{m}{(m+1)(n+1)} + \frac{mn + 1}{(m+1)(n+1)} \Pro{H}
\end{aligned}
\end{equation}
Solving this equation, we have
\begin{equation}\label{}
\begin{aligned}
  \Pro{H} &= \frac{m}{m+n} \\
  & = \frac{2p(1-p)}{2p(1-p)+1/2-2p(1-p)}\\
  &= 4p(1-p).
\end{aligned}
\end{equation}
\end{enumerate}

In our protocol, generating the $q$-coin, where
\begin{equation}\label{eq:qcoins}
   q= \frac{1}{2}\left[1+2\sqrt{p(1-p)}\right],
\end{equation}
is an essential nonclassical step. In fact, the only additionally required coin for constructing all quantum constructible $f(p)$-coins is the $h_a(p)$-coin,
\begin{equation}\label{}
  h_a(p) = \left(\sqrt{p(1-a)} + \sqrt{a(1-p)}\right)^2,
\end{equation}
which can be obtained by measuring the quoin in the $\{\sqrt{1-a}\ket{0}+\sqrt{a}\ket{1}, \sqrt{a}\ket{0}-\sqrt{1-a}\ket{1}\}$ basis. In our case, we set $a=1/2$. Here, one can see that entanglement is not necessary to quantum Bernoulli factory.

In the protocol, the first two steps involve quantum devices, where quoins are measured in the $Z$ and $X=\{(\ket{0}+\ket{1})/\sqrt{2},(\ket{0}-\ket{1})/\sqrt{2}\}$ bases, respectively, to obtain the $p$- and $q$-coins. The following steps (step 3-5) are classical processing of the $p$- and $q$-coins. The rigourous derivation of the classical steps can be found in the Appendix. Comparing to the $f(p)=2p$ function, our protocol converges much faster, which results in a higher fidelity for the realization.

In practice, owing to experimental imperfections, we cannot realize perfect $p$-quoins and perform ideal measurements to get perfect $p$- and $q$-coins. Thus, in reality, we cannot realize exact $f(p)$-coins, especially, we cannot get $f(p) = 1$ when $p=1/2$. Following previous studies \cite{nacu2005fast, Mossel05, thomas2011practical}, we employ a truncated function
\begin{equation}\label{Eq:truncated}
  f_{t} = \min \{f, 1-\epsilon\},
\end{equation}
with $\epsilon$ describing the imperfections. When $\epsilon$ is nonzero, the truncated function of  $f=4p(1-p)$ falls in the classical Bernoulli factory and hence can be constructed via $p$-coins. However, the number of classical coins $N$ required to construct $f(p)$ scales poorly with $\epsilon$, see Appendix for more details. In the experiment, we need to implement high fidelity state preparation and measurement to reduce $\epsilon$ as small as possible in order to faithfully demonstrate the quantum advantage.

In the following, we focus on the preparation and measurement of the $p$-quoin, and how to construct an $f(p)=4p(1-p)$ coin via necessary classical processing. Here, we emphasize that the quantum circuit to realize the operations should be independent of $p$. In demonstration, we fix the measurement setting and prepare $p$-quoins for various $p$ values.

%[XM: why we need this] In this case, we can essentially regard the preparation of $p$-quoin as an independent process of the measurements. We refer to Ref.~\cite{dale2015provable} for the discussion about when we are in a position of obtaining $p$-quoins with unknown $p$ values.

\section{Experimental realization}
We choose a superconducting qubit system to prepare $p$-quoins. Superconducting quantum systems have made tremendous progress in the last decade, including realizing long coherence times, showing great stability with fast and precise qubit manipulations, and demonstrating high fidelity quantum non-demolition (QND) qubit measurement. Thus, it makes a perfect candidate for our test.

\subsection{Experiment setup}
In our experiment, we employ the so-called `circuit quantum electrodynamics architecture' \cite{Wallraff}. A superconducting transmon qubit (our quoin) is located in a waveguide trench and dispersively couples to two 3D cavities~\cite{Kirchmair,Vlastakis,SunNature} as shown in Fig.~\ref{fig:device}. The transmon qubit has a transition frequency of $\omega_q/2\pi=5.577$ GHz, an anharmonicity $\alpha_q/2\pi=-246$ MHz, an energy relaxation time $T_1=9~\mu$s, and a Ramsey time $T_2^*=7~\mu$s. The larger cavity has a resonant frequency of $\omega_c/2\pi=7.292$ GHz and a decay rate of $\kappa/2\pi=3.62$ MHz, which provides a fast way of reading out the qubit state through their strong dispersive interaction with a dispersive shift $\chi/2\pi=-4.71$~MHz. As we focus on exhibiting quantum advantage solely with a single quantum system, the smaller cavity with a higher resonant frequency is not used and remains in a vacuum state. This higher frequency cavity can potentially be used as another $p$-quoin in future experiments \cite{Leghtas2}. In this case, joint measurement can be performed on two $p$-quoins, which may save the resource. For now, we focus on single-qubit operations.

\begin{figure}[b]
\centering
\resizebox{10cm}{!}{\includegraphics[scale=1]{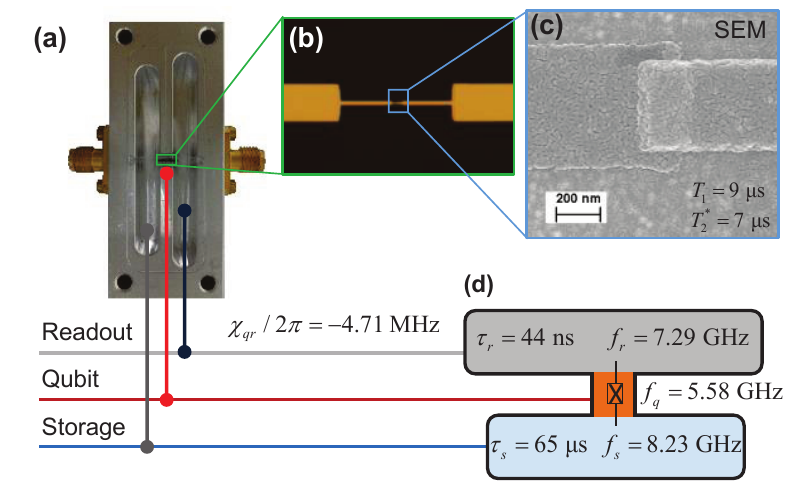}}
\caption{Experimental setup. (a) Optical image of a transmon qubit located in a trench, which dispersively couples to two 3D Al cavities. (b) Optical image of the single-junction transmon qubit. (c) Scanning electron microscope image of the Josephson junction. (d) Schematic of the device with the main parameters. In our experiment, the higher frequency cavity is not used and always remains in vacuum, which can be used as another $p$-quoin in future experiments \cite{Leghtas2}. Note that the highlighted boxes in (a) and (b) are not to scale and are intended for illustrative purposes only.}
\label{fig:device}
\end{figure}

The output of the readout cavity is connected to a Josephson parametric amplifier (JPA)~\cite{Hatridge,Roy}, operating in a double-pumped mode~\cite{Kamal,Murch} as the first stage of amplification between the readout cavity, at a base temperature of 10~mK, and the high electron mobility transistor, at 4~K. To minimize pump leakage into the readout cavity and achieve a longer $T_2^*$ dephasing time, we operate the JPA in a pulsed mode. The readout pulse width has been optimized to 180~ns with a few photons in order to have a high signal-to-noise ratio. This JPA allows a high-fidelity single-shot readout of the qubit state. The overall readout fidelity of the qubit measured for the ground state $\ket{0}$ when initially prepared at $\ket{0}$ by a post-selection is 0.996, demonstrating the high QND nature of the readout, while the fidelity for the excited state $\ket{1}$ is slightly lower, 0.943 (see Appendix). The loss of both fidelities is predominantly limited due to the $T_1$ process during both the waiting time of the initialization measurement (300~ns) and the qubit readout time (180~ns).

Due to stray infrared photons and other background noise, our qubit has an excited state population of about $8.5\%$ in the steady state. The high QND qubit measurement allows us to eliminate these imperfections by performing an initialization measurement to purify the qubit by only selecting the ground state for the following experiments \cite{Riste}. The measurement pulse sequences for preparing quoins can be found in the Appendix. It is worth mentioning that our superconducting system always yields a detection result once the measurement is performed, which is very challenging for other implementations, such as lossy photonic systems.

We apply an on-resonant microwave pulse to rotate the qubit to an arbitrary angle $\theta$ along the $Y$-axis, $R_\theta^Y = \exp(-i\sigma_y\theta/2)$, where $\sigma_y$ is the Pauli matrix, for a preparation of any $p = \cos^2(\theta/2)$-quoins. We  choose a gaussian envelope pulse truncated to $4\sigma=24$~ns for the rotation operations. We also use the so-called ``derivative removal by adiabatic gate" \cite{Motzoi} technique to minimize qubit leakage to higher levels outside the computational space. A randomized benchmark calibration~\cite{Knill2008,Ryan2009,Magesan2012,BarendsNature} shows that the $R_{\pi/2}^{Y}$ gate fidelity itself is about 0.998, mainly limited by the qubit decoherence. The final measurement for the quoins is along either the $Z$-axis or the $X$-axis. The measurement along the $X$-axis is realized by applying an extra $R_{\pi/2}^{-Y}$ rotation (Hadamard transformation) followed by a $Z$-basis measurement.

The readout property of the qubit is first characterized as shown in Fig.~\ref{fig:ReadoutProperty}. The smaller cavity has a resonant frequency of $\omega_s/2\pi=8.229$ GHz and remains in vacuum all the time. Because we always purify our qubit initial state to the ground state $\ket{0}$ and use pulses with DRAG~\cite{Motzoi} to minimize the leakage to levels higher than the first excited state $\ket{1}$, we do not distinguish the levels higher than $\ket{1}$ in the readout. We thus adjust the phase between the JPA readout signal and the pump such that $\ket{0}$ and $\ket{1}$ states can be distinguished with optimal contrast. Figure~\ref{fig:ReadoutProperty}a shows the histogram of the qubit readout. The histogram is clearly bimodal and well-separated. A threshold $V_{th}=0$ is chosen to digitize the readout signal.

\begin{figure}[bht]
\centering
\resizebox{10cm}{!}{\includegraphics[scale=1]{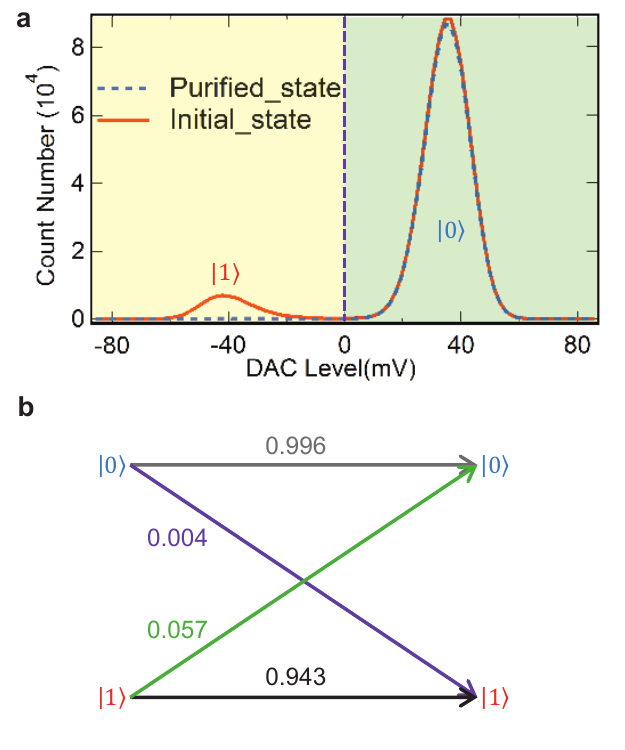}}
\caption{Readout properties of the qubit. The phase between the JPA readout signal and the pump has been adjusted such that $\ket{0}$ and $\ket{1}$ states can be distinguished with optimal contrast. a) Bimodal and well-separated histogram of the qubit readout. A threshold $V_{th}=0$ has been chosen to digitize the readout signal. Solid line is for an initial measurement showing about 8.5\% $\ket{1}$ state, while dashed line is for a second measurement after initially selecting $\ket{0}$ state. The disappearance of $\ket{1}$ state demonstrates both a high purification and high quantum non-demolition measurement of the qubit. b) Basic qubit readout matrix. The loss of fidelity predominantly comes from the $T_1$ process during both the waiting time after the initialization measurement and the qubit readout time.}
\label{fig:ReadoutProperty}
\end{figure}

Due to stray infrared photons or other background noises, our qubit has an excited state population of about $8.5\%$ in the steady state (solid histogram in Fig.~\ref{fig:ReadoutProperty}a). In order to eliminate these excited states for the quoin experiments, a high quantum non-demolition qubit measurement M1 is performed to allow a qubit purification by only selecting $\ket{0}$ state (see Fig.~\ref{fig:quantum circuit})~\cite{Riste}. We wait 300~ns for the readout photons to leak out before the preparation of the qubit to arbitrary superposition states through an on-resonant microwave pulse with various amplitudes. After a purification to $\ket{0}$ state, the following measurement gives a probability of 0.996 of $\ket{0}$ state (dashed histogram in Fig.~\ref{fig:ReadoutProperty}a), demonstrating the high quantum non-demolition nature of the qubit measurement. Figure~\ref{fig:ReadoutProperty}b shows the basic qubit readout properties. The readout fidelity of the qubit measured at $\ket{1}$ state while initially prepared at $\ket{1}$ state by a measurement is 0.943. The loss of both fidelities is predominantly limited due to the $T_1$ process during both the waiting time of the initialization measurement (300~ns) and the qubit readout time (180~ns).

\begin{figure}[bht]
\centering
\resizebox{10cm}{!}{\includegraphics[scale=1]{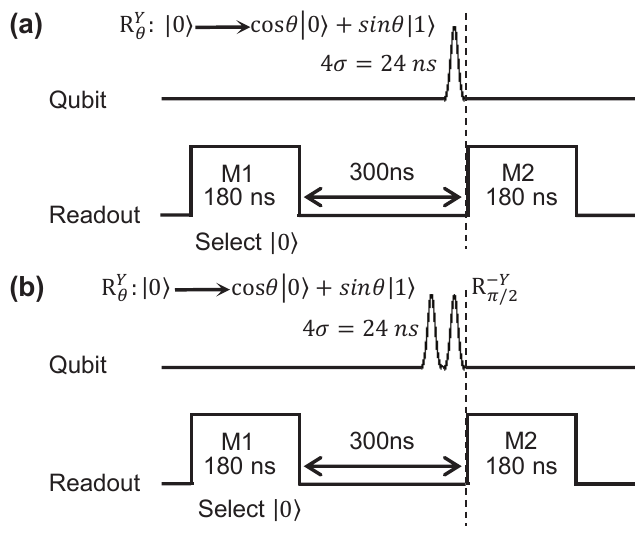}}
\caption{Experimental pulse sequences for the preparation of quoins and the measurements in the $Z$ (a) and $X$ (b) bases. An initial measurement M1 is firstly performed to purify the qubit to the ground state $\ket{0}$. The rotation of the qubit is realized by applying an on-resonance microwave pulse with various amplitudes. The measurement is always performed in the $Z$-basis. The measurement in the $X$-basis is realized by performing an extra $R_{\pi/2}^{-Y}$ pre-rotation. The phase of this extra pre-rotation is chosen to minimize the effect from qubit decoherence during the measurement for the case of $p=0.5$, which is most sensitive to the final qubit readout accuracy.}
\label{fig:quantum circuit}
\end{figure}

%The most critical point for the $p$-coin is at $p=0.5$ corresponding to a preparation pulse of $R_{\pi/2}^Y$. In order to minimize the qubit decoherence during measurements, for the $X$-basis measurement we have rotated the qubit back to the ground state for a better fidelity. %As we can generate any $p$-quoins and measure the quoins in the $Z$ and $X$ bases, so the experiment for the protocol of generating a $f(p)=4p(1-p)$-coin is accomplished.

\subsection{Results}
The experimental pulse sequences for the quoins with state preparations are shown in Fig.~\ref{fig:quantum circuit}. The measurement is always performed in the $Z$ basis. The $X$-basis measurement is realized by performing an extra $R_{\pi/2}^{-Y}$ rotation before the Z-basis measurement. The phase of this extra pre-rotation is chosen to minimize the effect from qubit decoherence during the measurement.
In our experiment, the $q$-coins as defined in Eq.~\eqref{eq:qcoins} are implemented, which are also classically impossible \footnote{The function $q(p)$ is defined in Eq.~\eqref{eq:qcoins}. It is straightforward to check to see that $q(1/2)=1$, indicating that it is classically impossible} when regarded as a function of $q$.
The $q$-coin corresponds to the qubits of $\ket{\psi} = \cos(\theta/2)\ket{0} +(\sin\theta/2)\ket{1}$. For different $\theta$, our experiment results are listed in Table~\ref{table:results}.

\begin{table}[hbt]\centering
\caption{Experiment results. $\theta$ is the angle of the quoin state; $N_p$ is the total number of $p$-quoins prepared. About half of the prepared $p$-coins are measured in the $X$ basis to prepare the $q$-coin. $q_{\mathrm{th}}$ is the theoretically estimated value based on the estimation of $p$; $q_{\mathrm{exp}}$ is the experimentally estimated value from the obtained $q$-coins; $f_{\mathrm{th}}(p)$ is the theoretically estimated value from the estimation of $p$; $f_{\mathrm{exp}}(p)$ is the experimentally estimated value from the obtained $f(p)$-coins; $N_{f(p)}$ is the number of $f(p)$-coins obtained.}
\begin{tabular}{cccccccc}
  \hline
  % after \\: \hline or \cline{col1-col2} \cline{col3-col4} ...
$\theta$ & $p$ &$N_{p}$& $q_{\mathrm{th}}$&$q_{\mathrm{exp}}$  & $f_{\mathrm{th}}(p)$ & $f_{\mathrm{exp}}(p)$ &$N_{f(p)}$\\
\hline
$0^\circ$&$0.996$&$2.06*10^{7}$&$0.563$&$0.504$&$0.016$&$0.014$&$8.66*{10^5}$\\
$15^\circ$&$0.979$&$1.87*10^{6}$&$0.644$&$0.628$&$0.083$&$0.081$&$8.29*{10^5}$\\
$30^\circ$&$0.929$&$2.07*10^{7}$&$0.756$&$0.746$&$0.262$&$0.257$&$1.05*{10^6}$\\
$45^\circ$&$0.850$&$2.08*10^{7}$&$0.857$&$0.847$&$0.509$&$0.495$&$1.25*{10^6}$\\
$60^\circ$&$0.748$&$2.08*10^{7}$&$0.934$&$0.924$&$0.754$&$0.731$&$1.07*{10^6}$\\
$75^\circ$&$0.630$&$1.99*10^{6}$&$0.983$&$0.974$&$0.933$&$0.901$&$8.90*{10^5}$\\
$90^\circ$&$0.502$&$2.09*10^{7}$&$1.000$&$0.990$&$1.000$&$0.965$&$8.85*{10^5}$\\
$105^\circ$&$0.375$&$2.18*10^{7}$&$0.984$&$0.974$&$0.938$&$0.905$&$9.74*{10^5}$\\
$120^\circ$&$0.258$&$2.08*10^{7}$&$0.938$&$0.926$&$0.766$&$0.737$&$1.06*{10^6}$\\
$135^\circ$&$0.157$&$2.19*10^{7}$&$0.864$&$0.849$&$0.530$&$0.508$&$1.32*{10^6}$\\
$150^\circ$&$0.080$&$2.09*10^{7}$&$0.772$&$0.749$&$0.296$&$0.283$&$1.08*{10^6}$\\
$165^\circ$&$0.033$&$2.08*10^{7}$&$0.678$&$0.633$&$0.126$&$0.120$&$9.45*{10^5}$\\
  \hline
\end{tabular}\label{table:results}
\end{table}

We also plot the experiment result of the $q$-coins in Fig.~\ref{fig:result}(a) and the result of the $f(p) = 4p(1-p)$-coins by following the protocol in Fig.~\ref{fig:result}(b). The experimentally realized values of $q_{\mathrm{exp}}$ and $f_{\mathrm{exp}}(p)$ are sampled from the observed coins, which match well with the theoretical predictions. By implementing state preparation, operation and measurement with high fidelities, we are able to achieve $q_{\mathrm{exp}}(1/2) = 0.990$ and $f_{\mathrm{exp}}(1/2)=0.965$, which can be well modeled by the truncated function defined in Eq.~\eqref{Eq:truncated} with $\epsilon = 0.010$ and $\epsilon = 0.035$, respectively.

\begin{figure}[bht]
\centering
{\resizebox{8cm}{!}{\includegraphics[scale=1]{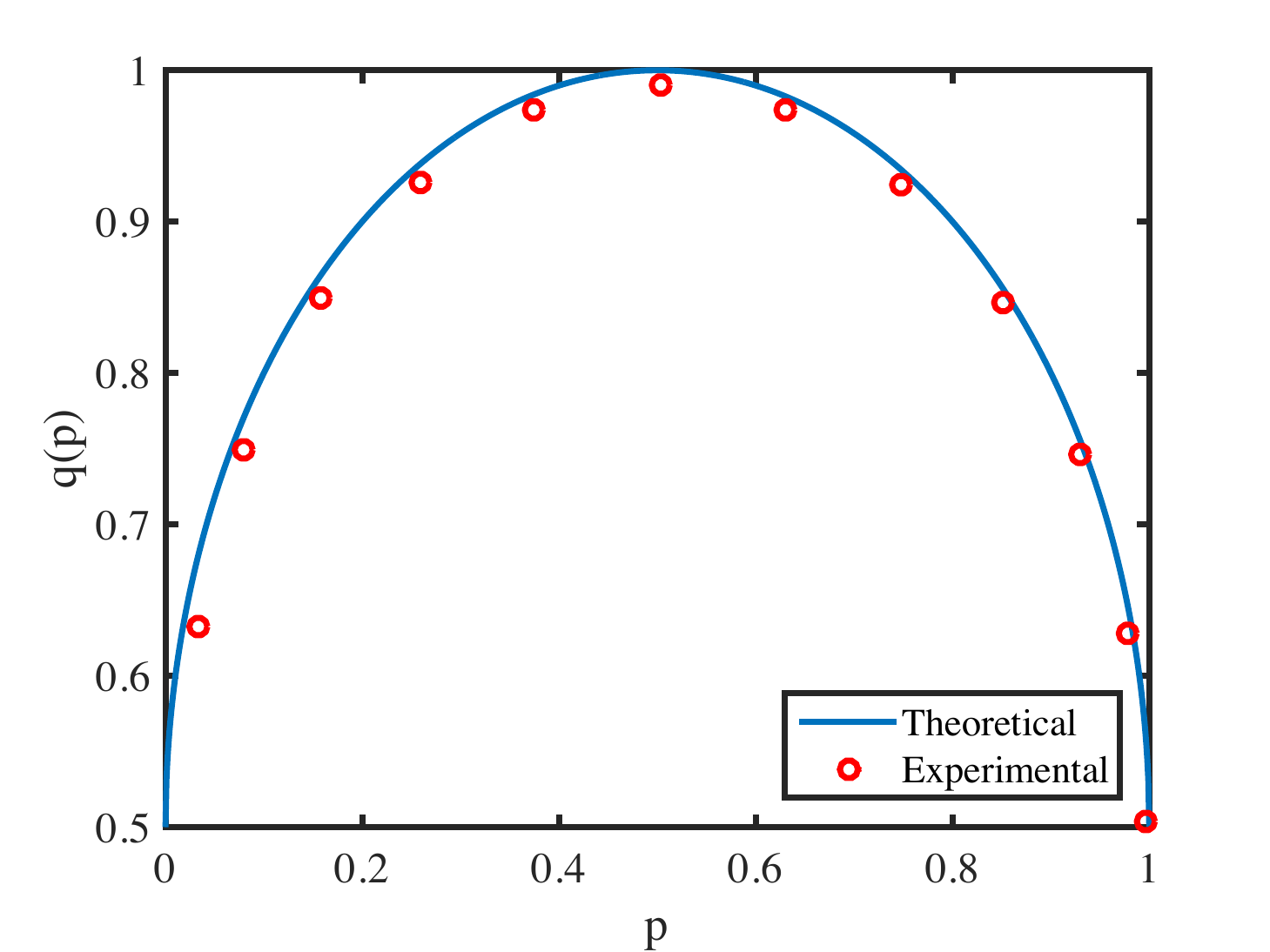}}}
{\resizebox{8cm}{!}{\includegraphics[scale=1]{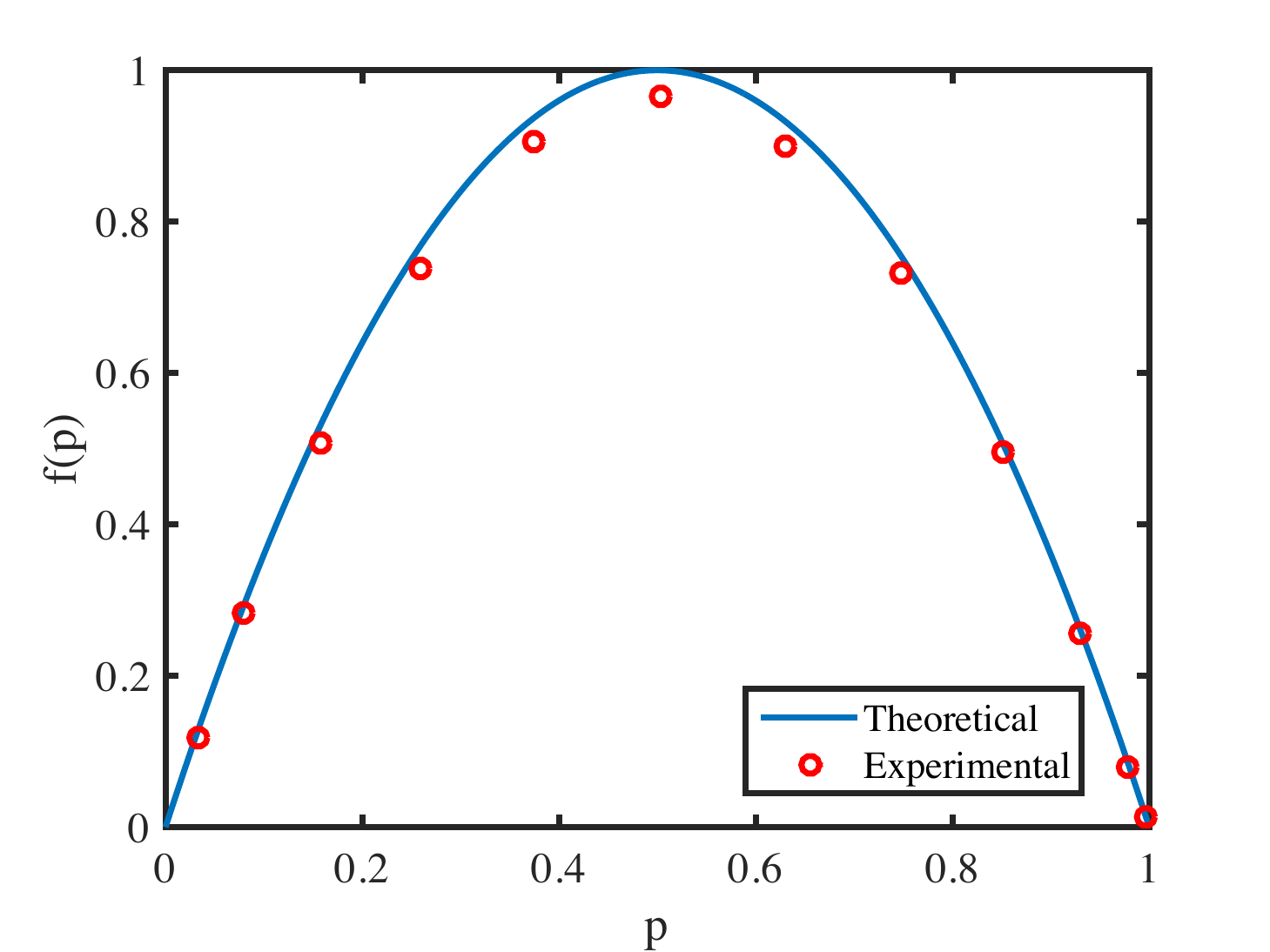}}}
\caption{Theoretical and experimental results for the (a) $q$-coin and (b) $f(p)=4p(1-p)$-coin. %The solid line is the theoretical prediction and the circular points are the experimental data.
Here, the number of experiment data for the $p$-quoins is in the order of $10^7$ and the number for the $f(p)$-coin is in the order of $10^6$. On average, we need about 20 $p$-quoins to construct a $f(p)$-coin. The standard deviations of $p$, $q$, and $f(p)$ are in the order of $10^{-4}$, thus are not plotted in the figure.}\label{fig:result}
\end{figure}

A randomized benchmarking experiment~\cite{Knill2008,Ryan2009,Magesan2012,BarendsNature} is performed to determine the fidelity of the $\pi/2$ gate around the $Y$ axis, $R_{\pi/2}^{Y}$, which is the most critical gate for the quoin measurement. The randomized gates used in this experiment are chosen from the single-qubit Clifford group. This group contains 24 rotation gates which are composed from rotations around the $X$ and $Y$ axes using the generators: $\{I, +X, +Y, \pm X/2, \pm Y/2\}$. The reference curve is measured after applying sequences of $m$ random Clifford gates, while the $Y/2$ curve is realized after applying sequences that interleave $R_{\pi/2}^{Y}$ with $m$ random Clifford gates. Each sequence is followed by a recovery Clifford gate in the end right before the final measurement. The number of random sequences of length $m$ in our experiment is chosen to be $k=100$. Both curves are fitted to $F=Ap^m+B$ with different sequence decay $p$. The reference decay indicates the average error of the single-qubit gates, while the ratio of the interleaved and reference decay gives the specific gate fidelity. The experiment results are displayed in Fig.~\ref{fig:RB}. The data point is the average of the sequence fidelities of the $k=100$ sample sequences, and the error bar shows the standard deviation of the sample. Each random sequence is measured over 10,000 times to get the sequence fidelity whose error could be neglected. As a result, the average single-qubit gate error $r_{s}=r_{ref}/1.875=(1-p_{ref})/2/1.875=0.0014$, and the $R_{\pi/2}^{Y}$ gate error $r_{Y/2}=(1-p_{int}/p_{ref})/2=0.0013$. The dashed lines indicate a gate fidelity of 0.998 and 0.997 respectively. Therefore, the $R_{\pi/2}^{Y}$ gate fidelity in our experiment is greater than 0.998, and the uncertainty in the gate fidelity is typically 7e-5, determined by bootstrapping.

%Each random sequence is measured over 10000 times to get the fidelity whose error could be neglected, thus the error bar is purposed to be caused by the distribution in number of single pulse gates in the different sequences. Actually, the confidence interval of each point is the standard error of the mean, which is less than 0.3\% in our experiment.

%\begin{equation}\label{Eq:theta}
%  \ket{\psi} = \cos\theta\ket{0} +\sin\theta\ket{1}.
%\end{equation}

\begin{figure}[bht]
\centering
\resizebox{10cm}{!}{\includegraphics[scale=1]{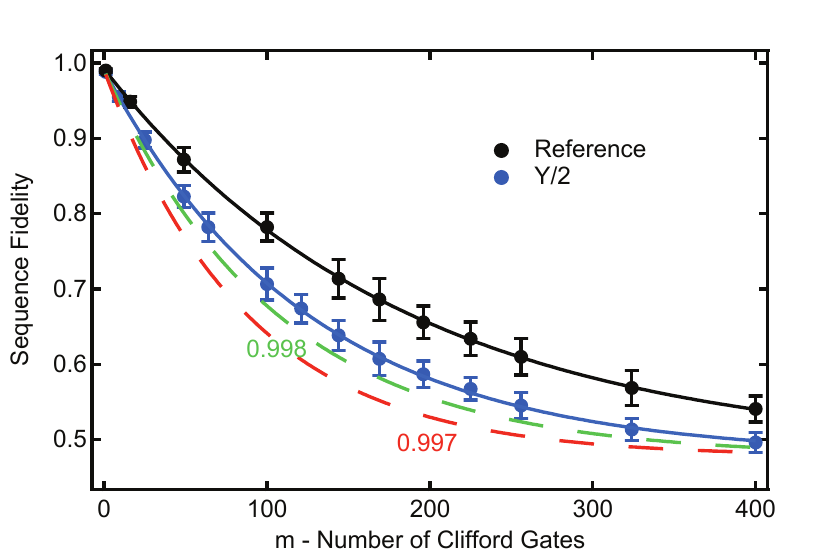}}
\caption{Randomized benchmarking measurement for $R_{\pi/2}^{Y}$ gate fidelity. The reference curve is measured after applying sequences of $m$ random Clifford gates, while the $Y/2$ curve is realized after applying sequences that interleave $R_{\pi/2}^{Y}$ with $m$ random Clifford gates. Each sequence is followed by a recovery Clifford gate in the end right before the final measurement. The number of random sequences of length $m$ in our experiment is $k=100$. Both curves are fitted to $F=Ap^m+B$ with different sequence decay $p$. The data point is the average of the sequence fidelities of the $k=100$ sample sequences, and the error bar shows the standard deviation of the sample. The average single-qubit gate error $r_{s}=r_{ref}/1.875=(1-p_{ref})/2/1.875=0.0014$, and the $R_{\pi/2}^{Y}$ gate error $r_{Y/2}=(1-p_{int}/p_{ref})/2=0.0013$. The dashed lines indicate a gate fidelity of 0.998 and 0.997 respectively.}
\label{fig:RB}
\end{figure}

\section{Simulation of Experiment data}
\subsection{The truncated function}\label{Sec:trun}
Here, we show how to construct the truncated function
\begin{equation}\label{Eq:truncated}
  f_t(p) = \min\{4p(1-p), 1-\epsilon_1\}, \epsilon_1=0.035
\end{equation}
from $p$-coin by classical means. The protocol works as follows.
\begin{enumerate}[(i)]
\item
Toss the $p$-coin twice, if the outputs are different, then output head otherwise output tail. This achieves $g(p)=2p(1-p)$-coin with two $p$-coins.
\item
Apply Theorem 1 in Ref.~\cite{nacu2005fast}, which gives $h(p)= \min\{2p, 1- 2\epsilon_1'\}$, and perform the composition $h(g(p))=\min\{4p(1-p), 1- 2\epsilon_1'\}$. Let $\epsilon_1'= 0.0175$, the desired function is obtained.
\end{enumerate}
Now, we calculate the number of $p$-coins needed in step (ii). By Theorem 1 in Ref.~\cite{nacu2005fast}, the probability that more than $n$ $p$-coins are needed is bounded by
\begin{equation}\label{Eq:pn}
 \begin{aligned}
 P(N>n) \le & \frac{\sqrt{2}}{\epsilon_1' (\sqrt{2}-1)}\sqrt{\frac{2}{n}} e^{-2\epsilon_1'^2 n}\\
 & + 72(1- e^{-2 \epsilon_1'^2})^{-1} e^{-2\epsilon_1'^2 n}+ 4\cdot  2^{-n/9}.
 \end{aligned}
\end{equation}
With large $n$ and small $\epsilon_1'=0.0175$, we can approximate Eq.~\eqref{Eq:pn} by
\begin{equation}\label{Eq:}
\begin{aligned}
P(N>n) \lessapprox &\frac{4.8}{\epsilon_1' \sqrt{n}}e^{-2\epsilon_1'^2 n} + \frac{36}{\epsilon_1'^2}e^{-2\epsilon_1'^2 n}\\
\approx & \frac{36}{\epsilon_1'^2}e^{-2\epsilon_1'^2 n}.
\end{aligned}
\end{equation}
This bound is nontrivial only if the right hand side is less and equal to 1, that is,
\begin{equation}\label{}
  n \approx \frac{-1}{2\epsilon_1'^2}\ln\left(\frac{\epsilon_1'^2}{36}\right) \approx 1.9 \times 10^4.
\end{equation}

Thus combining (i) and (ii), the number of $p$-coins needed to simulate the $f_t(p)$ function is more than $2\times 1.9\times 10^4= 3.8\times 10^4$. Note that, Eq.~\eqref{Eq:pn} provides only an upper bound to the probability distribution, there may exists more efficient protocols that requires less number usages of $p$-coins. %Note that the protocol may not be optimal.

%\subsection{Another function}
%Besides, the experiment data can also be fitted by
%\begin{equation}\label{}
%  f_s(p) = 4(1-\epsilon_2)p(1-p),
%\end{equation}
%with $\epsilon = 0.032$. To simulate this function, we can consider that
%\begin{equation}\label{}
%  f_s(p) = 4(1-\epsilon_2)p(1-p)[p + (1 - p)]^{n-2} = \sum_{k=1}^{n-1}4(1-\epsilon_2){{n-2}\choose{k-1}}p^k(1-p)^{n-k}.
%\end{equation}
%Define
%\begin{equation}\label{}
%  c_k = \frac{4(1-\epsilon_2){{n-2}\choose{k-1}}}{{n\choose k}},
%\end{equation}
%then we have
%\begin{equation}\label{}
%  f_s(p) = \sum_{k=1}^{n-1}c_k{{n}\choose{k}}p^k(1-p)^{n-k}.
%\end{equation}
%If $c_k\le1$, then $f_s(p)$ can be simulated by first flipping $n$ $p$ coins, if the number of heads is $k\in\{1, 2,\dots,n-1\}$, then output head with probability $c_k$; otherwise, output tail.
%
%To have $c_k\le1$, we thus need
%\begin{equation}\label{}
%  {4(1-\epsilon_2){{n-2}\choose{k-1}}} \le {n\choose k},
%\end{equation}
%that is,
%\begin{equation}\label{}
%  4(1-\epsilon_2) \le \frac{n(n-1)}{k(n-k)} \le \frac{n(n-1)}{(n/2)^2} = \frac{4(n-1)}{n}.
%\end{equation}
%Hence, we need $n \ge 1/\epsilon_2 \approx 32$. Therefore, there needs $32$ classical $p$-coins for the simulation.

\subsection{Simulation of the $q$-coin}
Here, we show how to simulate the truncated function of $q$ coin,
\begin{equation}\label{}
  q_t(p)=\min  \left\{\frac{1}{2}\left[1+2\sqrt{p(1-p)}\right], 1-\epsilon_3\right\},
\end{equation}
with $\epsilon_3=0.01$. To do so, we first construct the truncated coin $f_t(p)$ defined in Eq.~\eqref{Eq:truncated}.  Then we can simulate $q_t(p)$ with the $f_t(p)$-coin by applying the following protocol,
\begin{enumerate}
  \item Apply a square root function of $f_t(p)$, which gives a $\sqrt{f_t(p)}$-coin.
  \item Toss the 1/2-coin and the $\sqrt{f_t(p)}$-coin, output tail if both tosses are tail.
\end{enumerate}
Then, it is straightforward to check that the following coin is prepared
\begin{equation}\label{}
\begin{aligned}
  Q_t(p)&=\frac{1}{2}\left[1+\sqrt{f_t(p)}\right]\\
  & = \min  \left\{\frac{1}{2}\left[1+2\sqrt{p(1-p)}\right], \frac{1}{2}\left[1+\sqrt{1-\epsilon_1}\right]\right\},
\end{aligned}
\end{equation}
which coincides with the $q_t(p)$-coin if we let
\begin{equation}\label{}
  1-\epsilon_3 = \frac{1}{2}\left[1+\sqrt{1-\epsilon_1}\right].
\end{equation}
In this case, we have $\epsilon_1 = 0.04$. To simulate the $f_t(p)$-coin, we can follow the protocol in Sec.~\ref{Sec:trun}, which costs more than $4\times10^4$ number of $p$-coins on average for each $f_t(p)$-coin. The square root function of $f_t(p)$ can be constructed by following the method from Ref.~\cite{Mossel05} or the one presented in Ref.~\cite{dale2015provable}. On average, more than $10$ coins are needed for constructing the square root function. Therefore, more than $4\times10^5$ number of $p$-coins are necessary for the construction of the truncated function $q_t(p)$.

The classical Bernoulli factory cannot produce exact $q$- and $f(p) = 4p(1-p)$-coins with finite number of usages of $p$-coins. In practice, the implemented function may deviate from the desired one due to device imperfections. In this case, the practically realized coins may be constructible with classical means, though the number of classical coins required may increase drastically with decreasing deviation. Focusing on the truncated function defined in Eq.~\eqref{Eq:truncated}, we present a classical protocol for simulating the experiment data $f_{\mathrm{exp}}(p)$ with $\epsilon = 0.035$. It is shown that, on average, more than $10^4$ classical $p$-coins are required for constructing the truncated function, which is much larger than the average number of quoins (about $20$) used in our protocol \footnote{Strictly speaking, the quantum advantage demonstrated here is a weaker version of the one mentioned in the beginning of the Letter, where the function is classically impossible to construct.}. For the $q$-coin, as the deviation is smaller, the classical simulation is even harder. In the Appendix, we show that more than $10^5$ classical coins are needed for the truncated function, while our quantum protocol only requires one quoin.

From the experimental perspective, the small deviation $f_{exp}(1/2)$ from unity in the ideal case is dominated by qubit decoherence. With better qubit coherence times of $T_1, T_2\sim 100~\mu$s achieved recently~\cite{Rigetti}, we expect the deviation of $f_{exp}(p)$ from $f_{th}(p)$ to be an order of magnitude lower. In future, a more accurate quantum Bernoulli factory can be realized and the classical simulation will eventually become intractable.

It is noteworthy that entanglement can be exploited to save resource in the quantum Bernoulli factory, which provides an extra advantage for randomness processing \cite{dale2015provable}. Extending our implementation to multi-qubit systems can verify this extra quantum advantage. When considering practical imperfections, multiple qubit operation generally has a lower fidelity of measurement. Balancing between the saving of resource and decoherence due to multiple-qubit interactions, it is interesting to see whether multipartite correlation can have extra advantage in practice. As we are focusing on proving the advantage only with coherence, we leave such extension and discussion to future works.

\part{Quantumness and selftesting}
\chapter{Measurement-device-independent entanglement witness}
%\section{The MDIEW scheme and experimental realization}
A conventional way to detect entanglement is via entanglement witness (EW). Practical imperfections can affect the correctness of the witness conclusion. This chapter introduces the measurement device independent entanglement witness (MDIEW) method \cite{Branciard13}. We show a time-shift attack to conventional EW and how the MDIEW scheme be immune to such attacks. We also show an experimental realization of the MDIEW scheme \cite{Yuan14}.

%Quantum entanglement plays an important role in the nonclassical phenomenons of quantum mechanics. Being the key resource for many tasks in quantum information processing, such as quantum computation \cite{wiesner}, quantum teleportation \cite{tele} and quantum cryptography \cite{bb84,Ekert91}, entanglement needs to be verified in many scenarios. A conventional way to detect entanglement, entanglement witness (EW), gives one of two outcomes: `Yes' or `No', corresponding to conclusive result that the state is entangled or fail to draw a conclusion, respectively. In practice,

%In this chapter

% Mathematically, for a given entangled quantum state $\rho$, an Hermitian operator $W$ is called a witness, if $\tr[W\rho] < 0$ (output of `Yes') and $\tr[W\sigma]\geq0$ (output of `No') for any separable state $\sigma$. Note that there could also exist entangled state $\rho'$ such that $\tr[W\rho']\geq0$ (output of `No'). In the experimental verification, one can realize the conventional EW with only local measurements by decomposing $W$ into a linear combination of product Hermitian observables \cite{guhne2009}.

\section{Time-shift attack}
In this section, we show the time-shift attack to conventional entanglement witness. By controlling the arriving time of the photon, we show that the measurement efficiencies mismatches can be exploited to attack conventional EW.
\subsection{EW and device imperfections}
Mathematically, for a given entangled quantum state $\rho$, an Hermitian operator $W$ is called a witness, if $\tr[W\rho] < 0$ (output of `Yes') and $\tr[W\sigma]\geq0$ (output of `No') for any separable state $\sigma$.
%In the experimental verification, one can realize the conventional EW with only local measurements by decomposing $W$ into a linear combination of product Hermitian observables \cite{guhne2009}.
Focusing on the bipartite scenario, a general illustration of the conventional EW is shown in Fig.~\ref{fig:EWAMDIEW}(a), where two parties, Alice and Bob, each receives one component of a bipartite state $\rho_{AB}$ from an untrusted third party Eve. They want to verify whether $\rho_{AB}$ is entangled or not, by performing local operations and measurements on $\rho_{A}=\tr_B[\rho_{AB}]$ and $\rho_{B}=\tr_A[\rho_{AB}]$. The correctness of such witness relies on implementation details of $W$. An unfaithful implementation of $W$, say, due to device imperfections, would render the witness results unreliable. For example, the measurement devices used by Alice and Bob might possibly be manufactured by another untrusted party, who could collaborate with Eve and deliberately fabricate devices to make the real implementation $W' = W+\delta W$ be deviated from $W$, such that $W'$ is not a witness any more,
\begin{equation}\label{MDIEW:Def:SuccessAttack}
  \tr[W'\sigma]<0<\tr[W\sigma].
\end{equation}
That is, with the deviated witness $W'$, a separable state $\sigma$ could be identified as an entangled one, which is more likely to happen when $\tr[W\sigma]$ is near zero.

\begin{figure}[bht]
\centering
\resizebox{8cm}{!}{\includegraphics[scale=1]{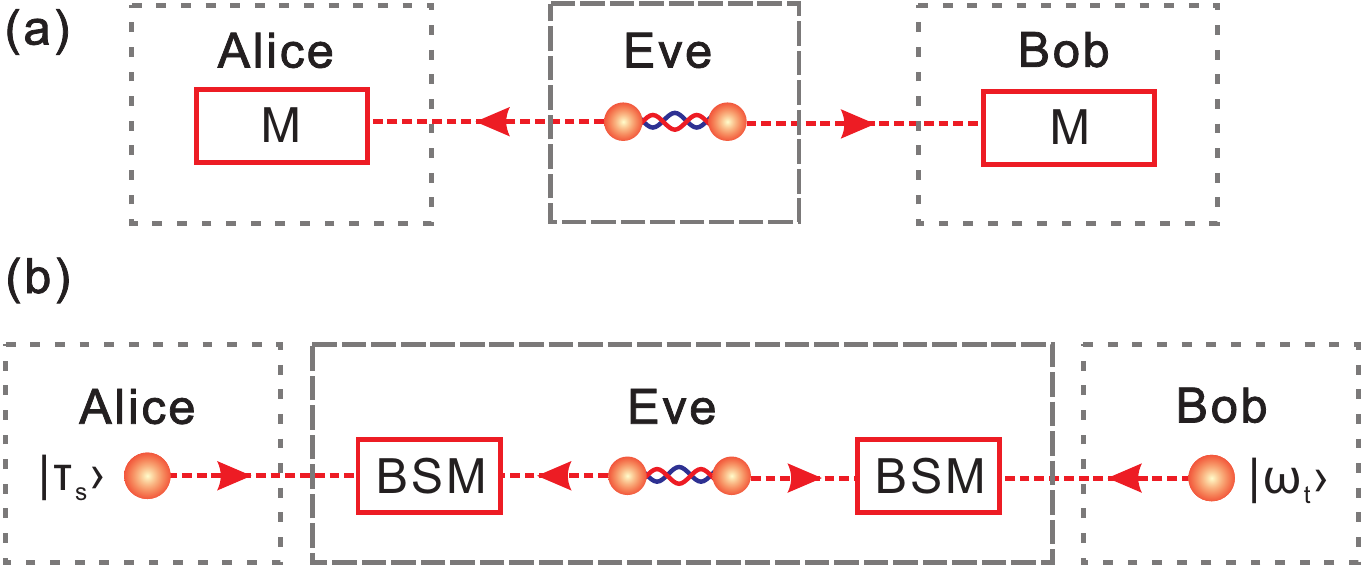}}
\caption{(a) Conventional EW setup, where Alice and Bob perform local measurements separately and collect information to decide whether the input state is entangled or not. (b) Measurement-device-independent (MDI) EW setup, where Alice and Bob each prepares an ancillary state and a third party Eve performs Bell state measurements (BSMs) on the ancillary states and the to-be-witnessed bipartite state. Based on the choices of Alice and Bob's ancillary states and the BSM results, they can judge whether the input state is entangled or not.}\label{fig:EWAMDIEW}
\end{figure}

There is a strong similarity between EW and quantum key distribution (QKD) where an entanglement-breaking channel would cause insecurity \cite{Lutkenhaus04}. Roughly speaking, it is crucial for Alice and Bob to prove that entanglement can be preserved in a secure QKD channel. From this point of view, there exists correlation between the security of QKD and the success of EW. For the varieties of attacks in QKD, such as time-shift attack \cite{qi07} and fake-state attack \cite{Makarov06}, one may also find similar detection loopholes in the conventional EW process. Originated from this analogy, we construct a time-shift attack that manipulates the efficiency mismatch between detectors used in an EW process. Under this attack, any state could be witnessed to be entangled, even if the input state is separable. By this example, we demonstrate that there do exist loopholes in the conventional EW procedure.

%Alice and Bob send some signals to a willing participant who can even be an eavesdropper, Eve. Eve performs a Bell-state measurement (BSM) and announces the result to Alice and Bob who will use this information to distill a secret key. The security is based on the idea of entanglement swapping using a BSM and the reverse EPR QKD scheme. The scheme is secure even if Eve intentionally makes the wrong measurement and/or announces the wrong information.

%Although this more complicated setup is harder to realize, it is self-testing such that tolerating imperfection of detectors as proved in \cite{Branciard13}.

%In the experiment, we first show an example of the time-shift attack to the conventional EW process and demonstrate how a separable state can be falsely identified to be entangled when large efficiency mismatch happens. Then we design and experimentally realize an MDIEW scheme to close such detection loopholes. The MDIEW is used to testify the entanglement of various bipartite states starting from maximally entangled to separable ones.  Note that we use heralded single-photon sources to prepare the two ancillary states, thus our demonstration is realized by a six-photon interferometry.

\subsection{Time-shift attack}
Originated from quantum cryptography \cite{qi07}, takes advantage of efficiency mismatch of the measurement devices. As shown in Fig.~\ref{TSAS}(a), typically two detectors are used on each side of Alice and Bob. By controlling the single-photon-counting modules (SPCMs) and coincidence gate, Eve is able to enlarge the efficiency mismatch and hence manipulate the EW result.

\begin{figure}[bht]
\centering
\resizebox{10cm}{!}{\includegraphics[scale=1]{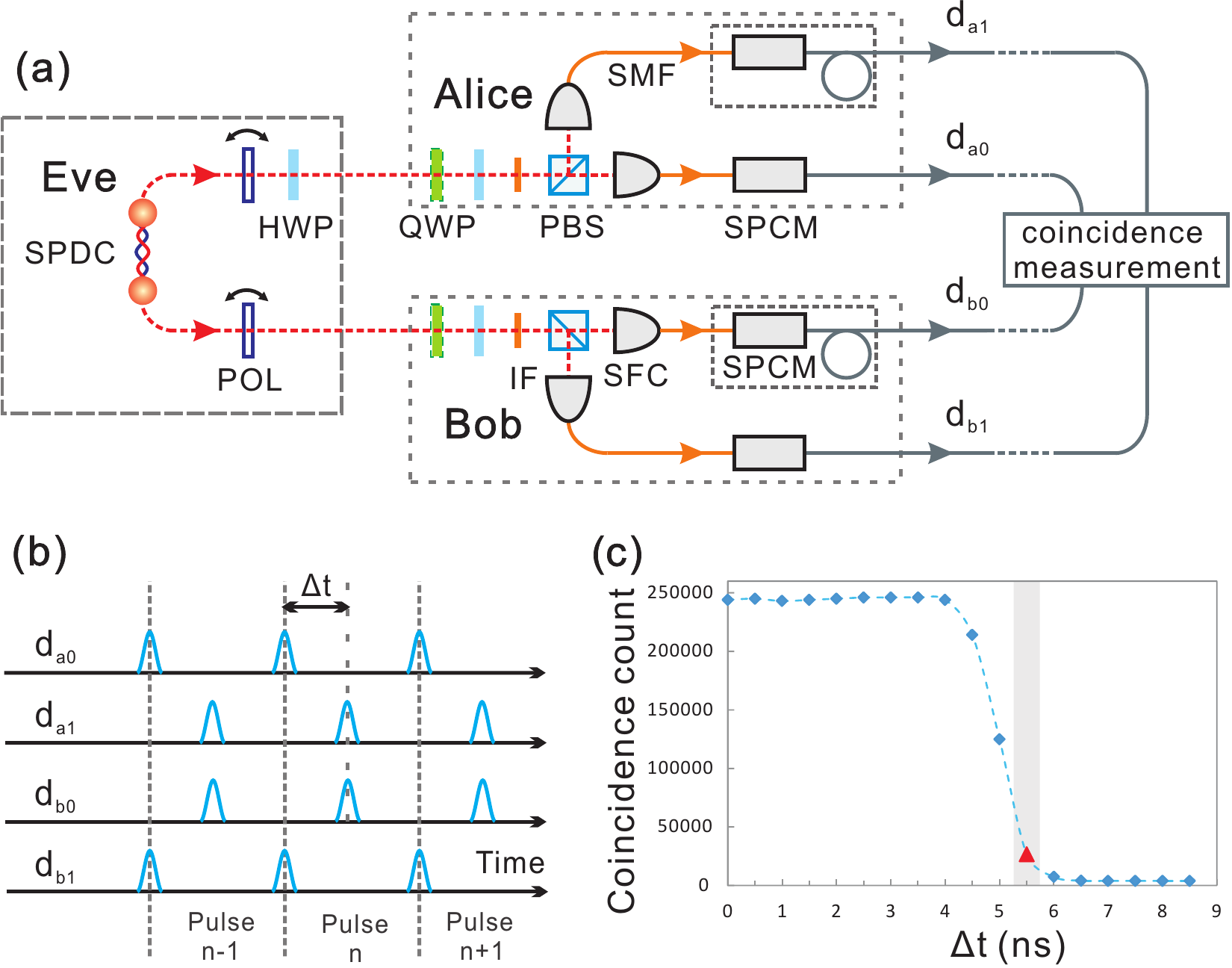}}
\caption{Time shift attack to the  conventional EW. (a) Experimental setup of the time-shift attack. Photon pairs are generated by SPDC using a femtosecond pump laser with a central wavelength of 390 nm and a repetition frequency of 80 MHz. POL: polarizer, HWP: half-wave plate, QWP: quarter-wave plate, IF: interference filter with 780 nm central wavelength, PBS: polarizing beam splitter, SFC: single-mode fiber coupler, SMF: single-mode fiber, SPCM: single-photon-counting module, some with extra internal delay lines.
(b) Synchronization between SPCMs.  Build-in delay lines enable Eve to shift the output signals $d_{a1}$ and $d_{b0}$ by $\Delta t$.
(c) Coincidence count versus time delay, where the time window is set to 4 ns. All data points are measured for 2 seconds, and time-shift attack is implemented with $\Delta t=5.50 \pm0.24$ ns, which corresponds to the grey area.}\label{TSAS}
\end{figure}

To implement this attack, we choose a conventional witness,
\begin{equation*}\label{eq:w}
  W = \frac{1}{2}I-|\Psi^-\rangle\langle\Psi^-|,
\end{equation*}
for bipartite states in the form of
\begin{equation}\label{eq:rho}
\rho^v_{AB}=(1-v)|\Psi^-\rangle\langle\Psi^-|+\frac{v}{2}(|HH\rangle\langle HH| + |VV\rangle\langle VV|),
\end{equation}
where H (V) denotes the horizontal (vertical) polarization of the single photons and $|\Psi^{-}\rangle = (|HV\rangle - |VH\rangle)/\sqrt{2}$ is a Bell state. By decomposing $W$ into a linear combination of product Pauli matrices,
\begin{equation}\label{}
  W = \frac{1}{4}\left(\sum_{i = 0}^3\sigma_i\otimes\sigma_i\right),
\end{equation}
the EW can be realized by local measurements,
we can decompose $W$ to
\begin{equation*}\label{eq:expEW}
Tr[W\rho_{AB}]=\frac{1}{4}\left( {1 + \left\langle {{\sigma _x}{\sigma _x}} \right\rangle  + \left\langle {{\sigma _y}{\sigma _y}} \right\rangle  + \left\langle {{\sigma _z}{\sigma _z}} \right\rangle } \right).
\end{equation*}
That is, to identify the entanglement, Alice and Bob just have to each analyze the qubit state in three bases separately. When the bipartite state is projected to the positive (negative) eigenstates of $\sigma_x\sigma_x$, $\sigma_y\sigma_y$, and $\sigma_z\sigma_z$, it will contribute positively (negatively) to the witness result $Tr[W\rho_{AB}]$. For example, when measuring ${\sigma _x}{\sigma_x}$, Alice and Bob will both project the input state to the eigenstates of $\sigma_x$, $\sigma_x^+$ or $\sigma_x^-$, with corresponded eigenvalues of $+1$ or $-1$, respectively, and obtain probabilities $\left\langle {{\sigma _x^\pm}{\sigma _x^\pm}} \right\rangle$. Then the value of $\left\langle {{\sigma _x}{\sigma _x}} \right\rangle$ is defined as $\left\langle {{\sigma _x^+}{\sigma _x^+}} \right\rangle + \left\langle {{\sigma _x^-}{\sigma _x^-}} \right\rangle - \left\langle {{\sigma _x^+}{\sigma _x^-}} \right\rangle - \left\langle {{\sigma _x^-}{\sigma _x^+}} \right\rangle$.  From Eve's point of view, she wants to convince Alice and Bob that the bipartite state is entangled, that is, $Tr[W\rho_{AB}]<0$. Thus, her objective is to suppress the positive contributions of $Tr[W\rho_{AB}]$, such as $\left\langle {{\sigma _x^+}{\sigma _x^+}} \right\rangle$ and $\left\langle {{\sigma _x^-}{\sigma _x^-}} \right\rangle$ for ${\sigma _x}{\sigma_x}$ measurement, by manipulating the coincidence rate between SPCMs, equivalently enlarging the detector efficiency mismatch. In this case, from Alice and Bob's point of view, the real implemented witness $W'$ is deviated from the desired one $W$, and satisfies Eq.~\eqref{MDIEW:Def:SuccessAttack}.

To realize the attack, we exploit the time mismatch of the two single-photon-counting modules (SPCMs) such that one detector is more efficient than the other. In this case, the real implementation ($W'$) is deviated from the original design witness $W$. In the attack Eve can suppress the positive contributes of the witness result $Tr[W\rho_{AB}]$ to let the witness result $Tr[W'\rho_{AB}]$ be negative by adjusting the time mismatch. For example, when measuring ${\sigma _x}{\sigma_x}$, Alice and Bob will project the input state to the eigenstates of $\sigma_x$, that is $\sigma_x^+$ and $\sigma_x^-$, corresponding to positive and negative eigenvalue respectively, and obtain probabilities $\left\langle {{\sigma _x^\pm}{\sigma _x^\pm}} \right\rangle$. Then the value of $\left\langle {{\sigma _x}{\sigma _x}} \right\rangle$ is defined as
\begin{equation}\label{}
  \left\langle {{\sigma _x}{\sigma _x}} \right\rangle = \left\langle {{\sigma _x^+}{\sigma _x^+}} \right\rangle + \left\langle {{\sigma _x^-}{\sigma _x^-}} \right\rangle - \left\langle {{\sigma _x^+}{\sigma _x^-}} \right\rangle - \left\langle {{\sigma _x^-}{\sigma _x^+}} \right\rangle.
\end{equation}
The probabilities $\left\langle {{\sigma _x^\pm}{\sigma _x^\pm}} \right\rangle$ is measured from coincidence counts $N_A^\pm N_B^\pm$ of detectors, that is
\begin{equation}\label{}
  \left\langle {{\sigma _x^\pm}{\sigma _x^\pm}} \right\rangle = \frac{N_A^\pm N_B^\pm}{\sum N_A^\pm N_B^\pm}.
\end{equation}
If the positive coincidence counts are all suppressed, that is $N_A^+ N_B^+ = N_A^- N_B^- = 0$, then the outcome of $\left\langle {{\sigma _x}{\sigma _x}} \right\rangle$ is
\begin{equation}\label{}
  \left\langle {{\sigma _x}{\sigma _x}} \right\rangle = - \left\langle {{\sigma _x^+}{\sigma _x^-}} \right\rangle - \left\langle {{\sigma _x^-}{\sigma _x^+}} \right\rangle = -\frac{N_A^+ N_B^-}{\sum N_A^\pm N_B^\pm} - \frac{N_A^- N_B^+}{\sum N_A^\pm N_B^\pm} = -1.
\end{equation}

Similarly, the all the other local measurements $\left\langle {{\sigma _y}{\sigma _y}} \right\rangle$ and  $\left\langle {{\sigma _z}{\sigma _z}} \right\rangle $ become $-1$ by suppressing positive coincidence counts, which gives a witness result of
\begin{equation}\label{}
  Tr[W'\rho_{AB}] = -\frac{1}{2}
\end{equation}
for any state $\rho_{AB}$.

In our experiment demonstration, we only suppress the positive coincidence counts to $10.9(1)\%$ instead of neglecting all of them to make a wrong witness result of a separable state to be entangled.
 %as $W  = (I + \sigma_x\sigma_x + \sigma_y\sigma_y + \sigma_z\sigma_z)/4$,one can prove that all 2-qubit states $\rho_{AB}$ could be wrongly identified as entangled states under the time-shift attack. By shifting the timing of quantum signals with respect to the detector gating, Eve can manipulate the efficiency mismatch of the two SPDs on Alice's side such that one detector is more efficient than the other. She performs the same attack on Bob's detection system. Then, from Alice and Bob's point of view, the real implemented witness $W'$ is deviated from the desired one $W$. In doing this, Eve could suppress some positive contributes of the witness result $Tr[W\rho_{AB}]$, corresponding to the cases that Alice and Bob choose to measure $\sigma_x\sigma_x$, $\sigma_y\sigma_y$ or $\sigma_z\sigma_z$ and obtain the same outcomes, thus let the implemented witness result $Tr[W'\rho_{AB}]$ be negative.

In our experiment, as shown in Fig.~\ref{TSAS}(a), by encoding qubits in the polarization of photons, the bipartite state $(\vert HH\rangle_{ab}+\vert VV\rangle_{ab})/\sqrt 2$ is generated via spontaneous parametric down conversion (SPDC). Two adjustable POLs are used to disentangle the initial state and project it to $\vert HH \rangle_{ab}$ and $\vert VV \rangle_{ab}$ with equal probabilities, corresponding to the separable state with $v = 1$ in Eq.~\eqref{eq:rho}. After a $45^\circ$ HWP, the to-be-witnessed two-qubit system is prepared in the state of $\rho_{AB} = \left( {{{\left| {HV} \right\rangle }}\left\langle {HV} \right| + {{\left| {VH} \right\rangle }}\left\langle {VH} \right|} \right)/2$.
Then Alice and Bob each performs polarization analysis on a qubit from the bipartite state using waveplates, PBSs and SPCMs, and guides the electronic signals from the SPCMs into a coincidence gate.

As shown in Fig.~\ref{TSAS}b, in the time-shift attack, Eve controls the delay lines in the detection systems and the time window of the coincidence gate, and hence manipulates the time-dependent coincidence counting rates between detectors $d_{a0}$ and $d_{b0}$, $d_{a1}$ and $d_{b1}$. Hence, she can suppress the positive contributions of measurements $\left\langle {{\sigma _x}{\sigma _x}} \right\rangle , \left\langle {{\sigma _y}{\sigma _y}} \right\rangle $ and $\left\langle {{\sigma _z}{\sigma _z}} \right\rangle $. In our demonstration, by setting proper parameters, we let the positive contributions drop to 10.9(1) $\%$ of their original values.
%As detection signals here are time mismatched due to the build-in delay lines that controlled by Eve, the time-dependent coincidence counting rates between $d_{a0}$ and $d_{b0}$, $d_{a1}$ and $d_{b1}$, corresponding to positive contributes of measurements $\left\langle {{\sigma _x}{\sigma _x}} \right\rangle , \left\langle {{\sigma _y}{\sigma _y}} \right\rangle $ and $\left\langle {{\sigma _z}{\sigma _z}} \right\rangle $, drops to $10.9(1)\%$ of its original value.
Since this attack would not affect the negative contributions of $Tr[W\rho_{AB}]$, the experimental outcomes for $\left\langle {{\sigma _x}{\sigma _x}} \right\rangle , \left\langle {{\sigma _y}{\sigma _y}} \right\rangle $ and $\left\langle {{\sigma _z}{\sigma _z}} \right\rangle $ become negative as expected.  Finally, Alice and Bob obtain a witness of $\rho_{AB} $ be $ \tr\left[ {W'\rho_{AB} } \right] =  - 0.379\left( 4 \right)$,  although the input state $\rho_{AB} $  is, in fact, separable. By changing the $\Delta t$ to a larger value, one can even obtain a fake result as that from a maximal entangled state. Thus, a separable bipartite state could be wrongly witnessed to be entangled when Eve is able to manipulate the detection system. It is not hard to see that for any state $\rho$, Eve can perform a similar attack and trick Alice and Bob that it is entangled.

%Here, we demonstrate that a separable bipartite state could be wrongly witnessed to be entangled when Eve is able to manipulate the detection system. It is not hard to see that for any state $\rho$, Eve can perform a similar attack and convince Alice and Bob that it is entangled. Therefore, we conclude that detection loopholes would make a conventional EW to draw false conclusions.

%\section{MDIEW}
\section{The MDIEW scheme}
Recently, Lo et al. \cite{Lo2012MDI} proposed an measurement-device-independent (MDI) QKD method, which is immune to all hacking strategies on detection. Due to the similarity between QKD and EW, one would also expect that there exist EW schemes without detection loopholes. Meanwhile, a nonlocal game is proposed to distinguish any entangled state from all separable states \cite{Buscemi12}. Inspired by this game, Branciard et al. \cite{Branciard13} proposed an MDIEW method, where they proved that there always exists an MDIEW for any entangled state with untrusted measurement apparatuses.

As shown in Fig.~\ref{fig:EWAMDIEW}(b), Alice and Bob want to identify whether a given bipartite state, prepared by an untrusted party Eve, is entangled or not without trusting measurement devices. To do so, Alice (Bob) prepares an ancillary state $\tau_s$ ($\omega_t$) and sends it along with the to-be-witnessed bipartite state to a willing participant, who can be assumed to be Eve again in the worst case scenario. Eve performs two Bell-state measurements (BSMs) on the two ancillary states and the bipartite state. Then, she announces to Alice and Bob the BSMs results, based on which they will witness the entanglement of the bipartite state. In MDIEW, it is guaranteed that a separable state will never be wrongly identified as an entangled one, even if Eve maliciously makes wrong measurements and/or announces unfaithful information.

Measurement-device-independent entanglement witness (MDIEW) provides means to witness entanglement of a quantum state without trusting measurement devices \cite{Branciard13}. The idea of MDIEW is inspired from the MDI quantum key distribution (MDIQKD) \cite{Lo2012MDI}. As proved in Ref.~\cite{Branciard13}, there always exists an MDIEW for any quantum state $\rho$, as one can always construct MDIEW based on the conventional witness $W$ which exists for any quantum state (we refer to \cite{guhne2009} for details of conventional entanglement witness). In the following, we will design a MDIEW scheme and apply it to a type of bipartite quantum states in the form of
\begin{equation}\label{rhovab}
   \rho^v_{AB}=(1-v)|\Psi^-\rangle\langle\Psi^-|+\frac v2(|00\rangle\langle00| + |11\rangle\langle11|),
\end{equation}
with $v\in[0,1]$ and $|\Psi^-\rangle=(|01\rangle-|10\rangle)/\sqrt{2}$. The state is entangled if $v<1/2$, which can be witnessed by a conventional EW,
\begin{equation}\label{wit}
W=\frac{1}{2}I-|\Psi^-\rangle\langle\Psi^-|,
\end{equation}
and its result, $\tr[W\rho^v_{AB}] = (2v-1)/2$.

Practically, the conventional EW can be realized with only local measurements by decomposing $W$ into a linear combination of product Hermitian observables. In the bipartite scenario of Alice and Bob, they only need to perform local measurements to decide the entanglement of quantum states. In contrast, MDIEW requires Alice (Bob) to prepare another ancillary state ${\tau_s}$ (${\omega_t}$) and perform Bell-state measurements (BSMs) on the to be witnessed state and the ancillary state. Based on the choice of the ancillary states, labeled by $s$ and $t$, and the measurement outcomes, labeled by $a$ and $b$, MDIEW is defined as
\begin{equation}\label{Supp:MDIEWJ}
 J(\rho_{AB}) = \sum_{s,t}\beta^{a,b}_{s,t}p(a, b|\tau_s, \omega_t).
\end{equation}
That is, $\rho_{AB}$ is entangled while $J(\rho_{AB})<0$ and for any separable state $\sigma_{AB}$, we have $J(\sigma_{AB})\geq0$.
Here the probabilities $p(a, b|\tau_s, \omega_t)$ are obtained from performing two BSMs on the to be witnessed state $\rho_{AB}$ and the ancillary states ${\tau_s}$ and ${\omega_t}$. That is,
\begin{equation}\label{eq:P}
  p(a, b|\tau_s, \omega_t) = Tr[(M_a\otimes M_b)(\tau_s\otimes\rho_{AB}\otimes\omega_s)],
\end{equation}
where $M_a$ and $M_b$ represent BSMs performed by Alice and Bob with outcome $a$ and $b$, respectively. In Eq.~\eqref{Supp:MDIEWJ}, the coefficient $\beta^{a,b}_{s,t}$ is determined by the choice of ancillary states, measurement outcomes and the conventional witness $W$. In the experiment, as only two $|\Phi^+\rangle=(|00\rangle + |11\rangle)/{\sqrt{2}}$ and $|\Phi^-\rangle=(|00\rangle - |11\rangle)/{\sqrt{2}}$ out of four BSM outcomes are recorded, we consider the outcomes of $a$ and $b$ to be $+$ and $-$, which refer to  $|\Phi^-\rangle$ and $|\Phi^-\rangle$, respectively. There are four kind of $\beta^{a,b}_{s,t}$, depending on different values of $a$ and $b$. In the following, we will design $\beta^{a,b}_{s,t}$ for our MDIEW.

The case of $a =+$ and $b = +$ is considered in Ref.~\cite{Branciard13}. Decompose a conventional EW as a linear combination of product Hermitian operators, $\{\tau_s\otimes\omega_t$\},
%Follow the steps of deriving MDIEW, we obatin the coefficient $\beta^{++}_{s,t}$ by decomposing the Hermitian operator $W$ as a linear combination of product Hermitian operators, $\{\tau_s\otimes\omega_t$\} as
\begin{equation}\label{decom1}
  W^{}=\sum_{s,t}\beta^{++}_{s,t}\tau_s^T\otimes\omega_t^T,
\end{equation}
where the superscript $T$ means matrix transpose. In the corresponding MDIEW, Alice and Bob prepare their ancillary states into $\{\tau_s\}$ and $\{\omega_t\}$, respectively. According to Eq.~\eqref{eq:P}, $p(+, +|\tau_s, \omega_t)$ is obtained by projecting the joint states $\tr_B[\rho_{AB}]\otimes\tau_s$ and $\tr_A[\rho_{AB}]\otimes\omega_t$ to the maximally entangled states $|\Phi^+_{AA}\rangle=(|00\rangle + |11\rangle)/{\sqrt{2}}$ and $|\Phi^+_{BB}\rangle=(|00\rangle + |11\rangle)/\sqrt{2}$, respectively. Then it is easy to show that the relation between MDIEW and the conventional EW is
\begin{equation}\label{}
  J(\rho_{AB}) = \tr[W^{}\rho_{AB}]/4,
\end{equation}
which equals $({2v-1})/{8}$ using Eq.~\eqref{rhovab} and \eqref{wit}.

\paragraph{MDIEW with two measurement outcomes}
In our work, we also consider other BSM outcomes. For example, if Alice and Bob get outcomes $a = -$ and $b = -$, then $\beta^{--}_{s,t}$ is calculated similarly as Eq.~\eqref{decom1} by decomposing $W$,
\begin{equation}\label{}
W^{}=\sum_{s,t}\beta^{--}_{s,t}\tilde{\tau}_s^T\otimes\tilde{\omega}_t^T,
\end{equation}
where $\langle j|\tilde{\tau}|i\rangle = (-)^{i+j}\langle j|{\tau}|i\rangle$ and $\langle j|\tilde{\omega}|i\rangle = (-)^{i+j}\langle j|{\omega}|i\rangle$. By redefining the basis that $W$ is decomposed, $\{\tilde{\tau}\otimes\tilde{\omega}\}$, the ancillary states prepared by Alice and Bob are still $\{\tau_s\}$ and $\{\omega_t\}$. In this case, $p(-, -|\tau_s, \omega_t)$ is obtained by projecting the joint states $\tr_B[\rho_{AB}]\otimes\tau_s$ and $\tr_A[\rho_{AB}]\otimes\omega_t$ to the maximally entangled states $|\Phi^-_{AA}\rangle=(|00\rangle - |11\rangle)/{\sqrt{2}}$ and $|\Phi^-_{BB}\rangle=(|00\rangle - |11\rangle)/\sqrt{2}$, respectively.

With a similar manner, one can also decompose $W$ for the cases of $a =+$ and $b = -$, $a=-$ and $b = +$. All the four cases of $a$ and $b$ are summarized in Table \ref{Tab:Wexpansion}.

\begin{table}[hbt]
\centering
\caption{Decomposition of $W$ based on different measurement outcomes.} \label{Tab:Wexpansion}
  \begin{tabular}{c c c}
    \hline
    $M_{AA}$ & $M_{BB}$ & $W$\\
    \hline
    $|\Phi^+_{AA}\rangle=\frac{|0\rangle\otimes|0\rangle + |1\rangle\otimes|1\rangle}{\sqrt{2}}$ & $|\Phi^+_{BB}\rangle=\frac{|0\rangle\otimes|0\rangle + |1\rangle\otimes|1\rangle}{\sqrt{2}}$ & $W^{}=\sum_{s,t}\beta^{++}_{s,t}{\tau}_s^T\otimes{\omega}_t^T$\\
    $|\Phi^-_{AA}\rangle=\frac{|0\rangle\otimes|0\rangle - |1\rangle\otimes|1\rangle}{\sqrt{2}}$ & $|\Phi^-_{BB}\rangle=\frac{|0\rangle\otimes|0\rangle - |1\rangle\otimes|1\rangle}{\sqrt{2}}$ & $W^{}=\sum_{s,t}\beta^{--}_{s,t}\tilde{\tau}_s^T\otimes\tilde{\omega}_t^T$\\
    $|\Phi^+_{AA}\rangle=\frac{|0\rangle\otimes|0\rangle + |1\rangle\otimes|1\rangle}{\sqrt{2}}$ & $|\Phi^-_{BB}\rangle=\frac{|0\rangle\otimes|0\rangle - |1\rangle\otimes|1\rangle}{\sqrt{2}}$ & $W^{}=\sum_{s,t}\beta^{+-}_{s,t}{\tau}_s^T\otimes\tilde{\omega}_t^T$\\
    $|\Phi^-_{AA}\rangle=\frac{|0\rangle\otimes|0\rangle - |1\rangle\otimes|1\rangle}{\sqrt{2}}$ & $|\Phi^+_{BB}\rangle=\frac{|0\rangle\otimes|0\rangle + |1\rangle\otimes|1\rangle}{\sqrt{2}}$ & $W^{}=\sum_{s,t}\beta^{-+}_{s,t}\tilde{\tau}_s^T\otimes{\omega}_t^T$\\
    \hline
  \end{tabular}
\end{table}

Next, we need to calculate the coefficients $\beta_{s,t}^{\pm\pm}$ and the corresponding probabilities $p(\pm,\pm|\tau_s,\omega_t)$ for given ancillary quantum states $\{\tau_s\}$ and $\{\omega_t\}$. Define $\sigma_0 = I$ and $\sigma_1, \sigma_2, \sigma_3$ to be the Pauli matrices. Then let $\tau_s$ and $\omega_s$ both be the eigenstates of $\sigma_s$ with eigenvalues of 1. That is, $\tau_0 = \omega_0 = I/2$, $\tau_s = \omega_s = (I + \sigma_s)/{2}$ for $s = 1, 2, 3$. By decomposing $W$ into $\{\tau_s^T\otimes\omega_t^T\}$ and $\{\widetilde{\tau}_s^T\otimes\widetilde{\omega}_t^T\}$, we find that the coefficients $\beta^{ab}_{st}$ and the probabilities $p(a,b|\tau_s,\omega_t)$ of the two cases $++$ and $--$ are the same, and those of $+-$ and $+-$ are the same.

%we find the results, including coefficients $\beta^{ab}_{st}$ and probabilities $p(a,b|\tau_s,\omega_t)$, of outcomes $++$ and $--$ are the same, while the results of outcomes $+-$ and $+-$ are the same.

In the cases of $++$ and $--$, the coefficients are given by
\begin{equation}
[\beta^{++}_{st}] = [\beta^{--}_{st}] = \left[
  \begin{array}{cccc}
    4 & -1 & -1 & -1 \\
    -1 & 1 & 0 & 0 \\
    -1 & 0 & 1 & 0 \\
    -1 & 0 & 0 & 1 \\
  \end{array}
\right],
\end{equation}
with corresponding probabilities of
\begin{equation}\label{}
  p(+,+|\tau_s,\omega_t) = p(-,-|\tau_s,\omega_t)=
  \left[
    \begin{array}{cccc}
      1/16 & 1/16 & 1/16 & 1/16 \\
      1/16 & v/16 & 1/16 & 1/16 \\
      1/16 & 1/16 & v/16 & 1/16 \\
      1/16 & 1/16 & 1/16 & v/8 \\
    \end{array}
  \right].
\end{equation}
There are ten nonzero terms in the coefficient matrix, so ten different ancillary inputs ($\tau_s, \omega_t$) are required. In practice, it is possible to reduce the number of inputs by introducing two other states $\tau_4 = \frac{I+(\sigma_x+\sigma_y+\sigma_z)/\sqrt{3}}{2}$ and $\omega_4 = \frac{I+(\sigma_x+\sigma_y+\sigma_z)/\sqrt{3}}{2}$. In this case, we have another decomposition of $W$ with coefficients of
\begin{equation}
[\beta^{++}_{st}] = [\beta^{--}_{st}] = \left[
  \begin{array}{ccccc}
    2\sqrt3-2 & 0 & 0 & 0 & -\sqrt3 \\
    0 & 1 & 0 & 0 & 0\\
    0 & 0 & 1 & 0 & 0 \\
    0 & 0 & 0 & 1  & 0\\
    -\sqrt3 & 0 & 0 & 0  & 0\\
  \end{array}
\right].
\end{equation}
In this setting, only six ancillary sets are required (comparing to ten in the original construction). As a result, we derive the coefficients and probabilities in Eq.~\eqref{Supp:MDIEWJ} for outcomes $++$ and $--$, as shown in Table \ref{Tab:Coeffppmm}.

\begin{table}[hbt]\tiny
\centering
\caption{Coefficients and probabilities for MDIEW with outcomes $++$ and $--$. Note that when $\beta=0$, the corresponding probability $p$ is irrelevant.} \label{Tab:Coeffppmm}
\begin{tabular}{c | c c c c c}
  \hline
  % after \\: \hline or \cline{col1-col2} \cline{col3-col4} ...
   & $\tau_0= I/2$ & $\tau_1 = \frac{I+\sigma_x}{2}$ & $\tau_2 = \frac{I+\sigma_y}{2}$ & $\tau_3 = \frac{I+\sigma_z}{2}$ & $\tau_4 = \frac{I+(\sigma_x+\sigma_y+\sigma_z)/\sqrt{3}}{2}$  \\
   \hline
  $\omega_0 = I/2$ & $\beta = 2\sqrt3-2, p = \frac{1}{16}$ & $\beta = 0$ & $\beta = 0$ & $\beta = 0$ & $\beta = -\sqrt3, p = \frac{1}{16}$ \\
  $\omega_1 = \frac{I+\sigma_x}{2}$ & $\beta = 0$ & $\beta = 1, p = \frac{v}{16}$ & $\beta = 0$ & $\beta = 0$ & $\beta = 0$ \\
  $\omega_2 = \frac{I+\sigma_y}{2}$ & $\beta = 0$ & $\beta = 0$ & $\beta = 1, p = \frac{v}{16}$ & $\beta = 0$ & $\beta = 0$ \\
  $\omega_3 = \frac{I+\sigma_z}{2}$ & $\beta = 0$ & $\beta = 0$ & $\beta = 0$ & $\beta = 1, p = \frac{v}{8}$ & $\beta = 0$ \\
  $\omega_4 = \frac{I+(\sigma_x+\sigma_y+\sigma_z)/\sqrt{3}}{2}$ & $\beta = -\sqrt3, p = \frac{1}{16}$ & $\beta = 0$ & $\beta = 0$ & $\beta = 0$ & $\beta = 0$ \\
  \hline
\end{tabular}
\end{table}

Similarly, for the other two cases of outcomes $+-$ and $-+$, the coefficients are
\begin{equation}
[\beta^{-+}_{st}] = [\beta^{+-}_{st}] = \left[
  \begin{array}{cccc}
    0 & 1 & 1 & -1 \\
    1 & -1 & 0 & 0 \\
    1 & 0 & -1 & 0 \\
    -1 & 0 & 0 & 1 \\
  \end{array}
\right]
\end{equation}
with corresponding probabilities of
\begin{equation}\label{}
  p(+,-|\tau_s,\omega_t) = p(-,+|\tau_s,\omega_t)=
  \left[
    \begin{array}{cccc}
      1/16 & 1/16 & 1/16 & 1/16 \\
      1/16 & (2-v)/16 & 1/16 & 1/16 \\
      1/16 & 1/16 & (2-v)/16 & 1/16 \\
      1/16 & 1/16 & 1/16 & v/8 \\
    \end{array}
  \right].
\end{equation}
when using the ancillary states $\tau_0 = \omega_0 = I/2$, $\tau_s = \omega_s = (I + \sigma_s)/{2}$ for $s = 1, 2, 3$. Similarly, we can define $\tau'_4 = \frac{I+(-\sigma_x-\sigma_y+\sigma_z)/\sqrt{3}}{2},\, \omega'_4 = \frac{I+(-\sigma_x-\sigma_y+\sigma_z)/\sqrt{3}}{2}$ so that another decomposition of $W$ is derived,
\begin{equation}
[\beta^{+-}_{st}] = [\beta^{+-}_{st}] = \left[
  \begin{array}{ccccc}
    2\sqrt3+2 & 0 & 0 & 0 & -\sqrt3 \\
    0 & -1 & 0 & 0 & 0\\
    0 & 0 & -1 & 0 & 0 \\
    0 & 0 & 0 & 1  & 0\\
    -\sqrt3 & 0 & 0 & 0  & 0\\
  \end{array}
\right]
\end{equation}
Again, in this setting, only six measurements are required. The coefficients and probabilities of outcomes $+-$ and $-+$ are shown in Table \ref{Tab:Coeffpmpm}.

\begin{table}[hbt]\tiny
\centering
\caption{Coefficients and probabilities for MDIEW with outcomes $+-$ and $-+$. Note that when $\beta=0$, the corresponding probability $p$ is irrelevant.} \label{Tab:Coeffpmpm}
\begin{tabular}{c | c c c c c}
  \hline
  % after \\: \hline or \cline{col1-col2} \cline{col3-col4} ...
   & $\tau_0= I/2$ & $\tau_1 = \frac{I+\sigma_x}{2}$ & $\tau_2 = \frac{I+\sigma_y}{2}$ & $\tau_3 = \frac{I+\sigma_z}{2}$ & $\tau'_4 = \frac{I+(-\sigma_x-\sigma_y+\sigma_z)/\sqrt{3}}{2}$  \\
   \hline
  $\omega_0 = I/2$ & $\beta = 2\sqrt3+2, p = \frac{1}{16}$ & $\beta = 0$ & $\beta = 0$ & $\beta = 0$ & $\beta = -\sqrt3, p = \frac{1}{16}$ \\
  $\omega_1 = \frac{I+\sigma_x}{2}$ & $\beta = 0$ & $\beta = -1, p = \frac{2-v}{16}$ & $\beta = 0$ & $\beta = 0$ & $\beta = 0$ \\
  $\omega_2 = \frac{I+\sigma_y}{2}$ & $\beta = 0$ & $\beta = 0$ & $\beta = -1, p = \frac{2-v}{16}$ & $\beta = 0$ & $\beta = 0$ \\
  $\omega_3 = \frac{I+\sigma_z}{2}$ & $\beta = 0$ & $\beta = 0$ & $\beta = 0$ & $\beta = 1, p = \frac{v}{8}$ & $\beta = 0$ \\
  $\omega'_4 = \frac{I+(-\sigma_x-\sigma_y+\sigma_z)/\sqrt{3}}{2}$ & $\beta = -\sqrt3, p = \frac{1}{16}$ & $\beta = 0$ & $\beta = 0$ & $\beta = 0$ & $\beta = 0$ \\
  \hline
\end{tabular}
\end{table}

Although each of the four cases above defines an MDIEW, we can combine four of them as one to enhance the successful probability of MDIEW,
\begin{equation} \label{eq:newJ}
\begin{aligned}
J &= \frac{1}{4}\sum_{s,t}^{a, b}\beta^{a, b}_{s,t}p(a, b|\tau_s, \omega_t)\\
&= \frac{1}{4}\sum_{s,t}( \beta^{++}_{s,t}p(+, +|\tau_s, \omega_t) + \beta^{+-}_{s,t}p(+, -|\tau_s, \omega_t) + \beta^{-+}_{s,t}p(-, +|\tau_s, \omega_t) +\beta^{--}_{s,t}p(-, -|\tau_s, \omega_t))
\end{aligned}
\end{equation}
By doing this, we improve the efficiency of experiments by four times comparing to the original proposal \cite{Branciard13}.

To witness entanglement for the bipartite states defined in Eq.~\eqref{rhovab} with MDIEW defined in Eq.~\eqref{eq:newJ}, in total eight different ancillary state pairs should be prepared, and the results are summarized in Table \ref{decom}.
\begin{table}[hbt]
\centering
  \caption{Our MDIEW in the form of Eq.~\eqref{eq:newJ} for the bipartite states defined in Eq.~\eqref{rhovab}.}\label{decom}
  \begin{tabular}{cccccc}
    \hline
    % after \\: \hline or \cline{col1-col2} \cline{col3-col4} ...
    $\tau_s$ & $\omega_t$ & $\beta^{++}_{st}$ & $p(+,+|\tau_s, \omega_t)$ & $\beta^{+-}_{st}$ & $p(+,-|\tau_s, \omega_t)$\\
    \hline
    $I/2$ & $I/2$& $2\sqrt{3}-2$& $1/16$ & $2\sqrt{3}+2$& $1/16$\\
    $\frac{I+\sigma_x}{2}$ & $\frac{I+\sigma_x}{2}$& $1$& $v/16$& $-1$& $(2-v)/16$\\
    $\frac{I+\sigma_y}{2}$ & $\frac{I+\sigma_y}{2}$& $1$& $v/16$& $-1$& $(2-v)/16$\\
    $\frac{I+\sigma_z}{2}$ & $\frac{I+\sigma_z}{2}$& $1$& $v/8$& $1$& $v/8$\\
    $I/2$ & $\frac{I+(\sigma_x+\sigma_y+\sigma_z)/\sqrt{3}}{2}$& $-\sqrt{3}$& $1/16$ & 0 &- \\
    $\frac{I+(\sigma_x+\sigma_y+\sigma_z)/\sqrt{3}}{2}$ & $I/2$& $-\sqrt{3}$& $1/16$ &0&- \\
    $I/2$ & $\frac{I+(-\sigma_x-\sigma_y+\sigma_z)/\sqrt{3}}{2}$&0&- & $-\sqrt{3}$& $1/16$\\
    $\frac{I+(-\sigma_x-\sigma_y+\sigma_z)/\sqrt{3}}{2}$ & $I/2$&0&- & $-\sqrt{3}$& $1/16$\\
    \hline
  \end{tabular}
\end{table}

\section{Experimental realization}
\subsection{Experiment setup}
Our experimental setup for MDIEW is shown in Fig.~\ref{expsetup}, where a six-photon interferometry is utilized. The to-be-witnessed bipartite state $\rho^v_{34}$, defined in Eq.~\eqref{eq:rho}, is encoded in the photon pair 3 and 4. Photon pairs 1, 2 and 5, 6 are used to prepare the ancillary input states $|\tau_s\rangle_2$ and $|\omega_t\rangle_5$, respectively. In our work, various bipartite states $\{\rho^v_{34}\}$, from maximally entangled to separable, are prepared and tested with the MDIEW. The bipartite state $\rho^v_{34}$ is firstly prepared in the Bell state $\left| {{\Phi ^ - }} \right\rangle _{34} = \left( {{\left| {HH} \right\rangle } - \left| {VV} \right\rangle } \right)/\sqrt 2$ via a Bell-state synthesizer \cite{yao12}. As the coherence length of photons is limited by the interference filtering, two 2-mm BBO crystals in each arm result in a relative phase delay between horizontal and vertical polarization components and cause polarization decoherence. Different $v$ can be selected by the ``state selector" \cite{White01}. They satisfy the relation of
\begin{equation}\label{eq:vtheta}
v  = {\cos ^2}\left( {2\theta } \right),
\end{equation}
where $\theta$ is the angle of the fast axis of the selector HWP.

\begin{figure}[hbt]
\centering
\resizebox{12cm}{!}{\includegraphics[scale=1]{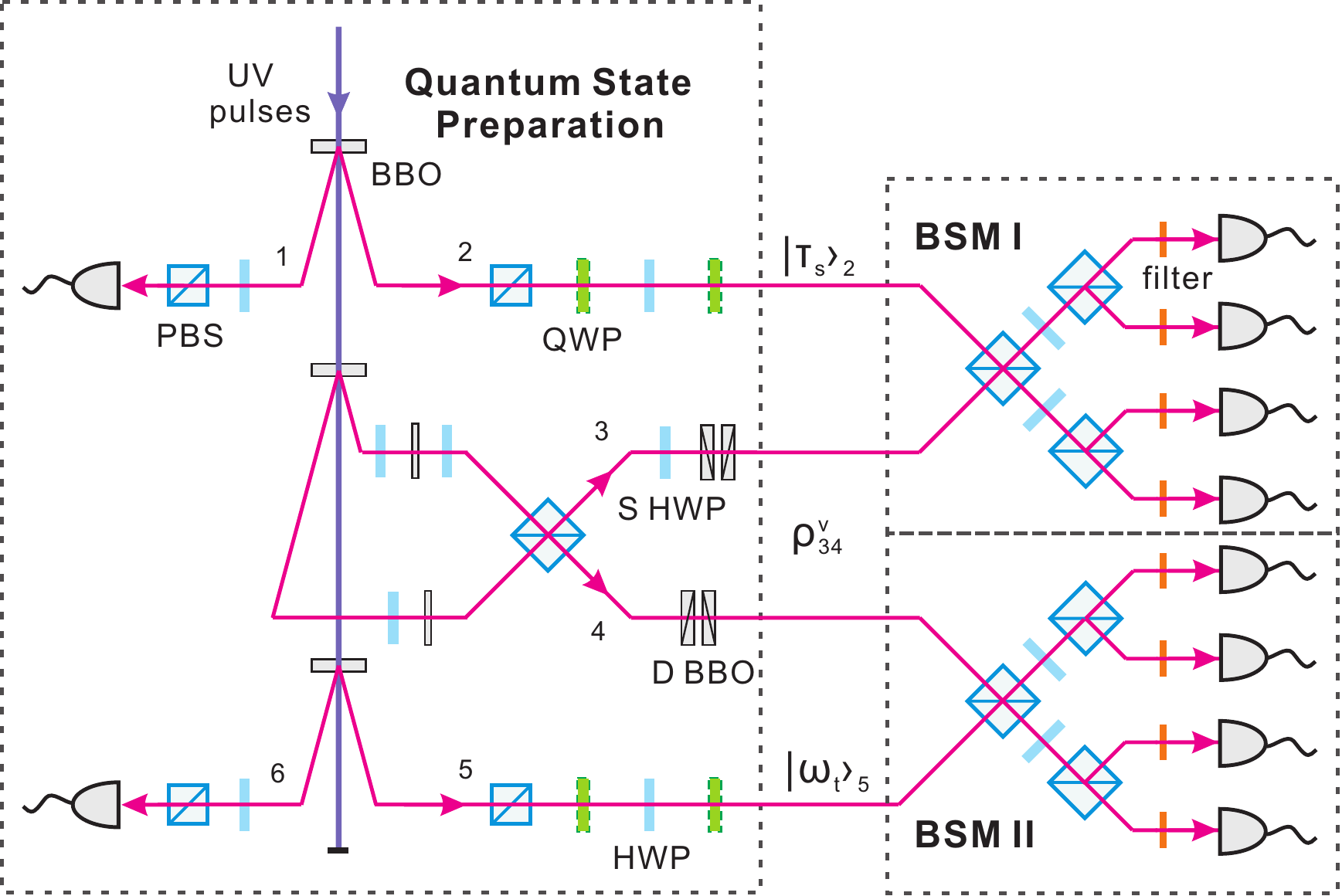}}
\caption{Experimental setup for the MDIEW. The photon pairs are generated by type-II SPDC in 2-mm $\beta$-barium-borate (BBO) crystals. The pulsed pump laser has a central wavelength of 390 nm and a repetition rate of 76 MHz. To prepare the desired state~\eqref{eq:rho}, two 2-mm decoherer BBOs (D BBO) on each side with fast axis setting at $0^\circ $ (up) and $180^\circ$ (down) to reduce the spatial walk-off effect. By changing the angle $\theta$ of the selector HWP (S HWP), the desired state~\eqref{eq:rho} is prepared with $ v=cos^2(2 \theta)$. Heralded photons 2 and 5 are triggered by the detections of photon 1 and 6, respectively. Waveplates are used to rotate the polarizations to encode photons 2 and 5 to the desired states, $\left| {{\tau _s}}\right\rangle _2$ and $\left| {{\omega _t}} \right\rangle _5$. The BSM module is composed of three PBSs and two HWPs at $22.5^\circ $. All photons are filtered by narrow-band filters (with $\lambda_{FWHM}$  = 2.8 nm for BSM I and $\lambda _{FWHM}$  = 8.0 nm for BSM II) and then coupled into single-mode fibers which connect to SPCMs.
}\label{expsetup}
\end{figure}

In the experiment, eight ancillary state pairs $\{\tau_s, \omega_t\}$ are prepared. The states are encoded by tunable waveplates (one HWP sandwiched by two QWPs), which can realize arbitrary single-qubit unitary transformation. Different from directly polarization measurement in the conventional EW, the analysis of MDIEW is completed by BSMs on $\rho_{3}^v\otimes|\tau_s\rangle\langle\tau_s|_2$ and $\rho_{4}^v\otimes|\omega_t\rangle\langle\omega_t|_5$, with two, $|\Phi^\pm\rangle=(|HH\rangle \pm |VV\rangle)/{\sqrt{2}}$,  out of four outcomes been collected.

\begin{figure}[hbt]
\centering
\resizebox{12cm}{!}{\includegraphics[scale=1]{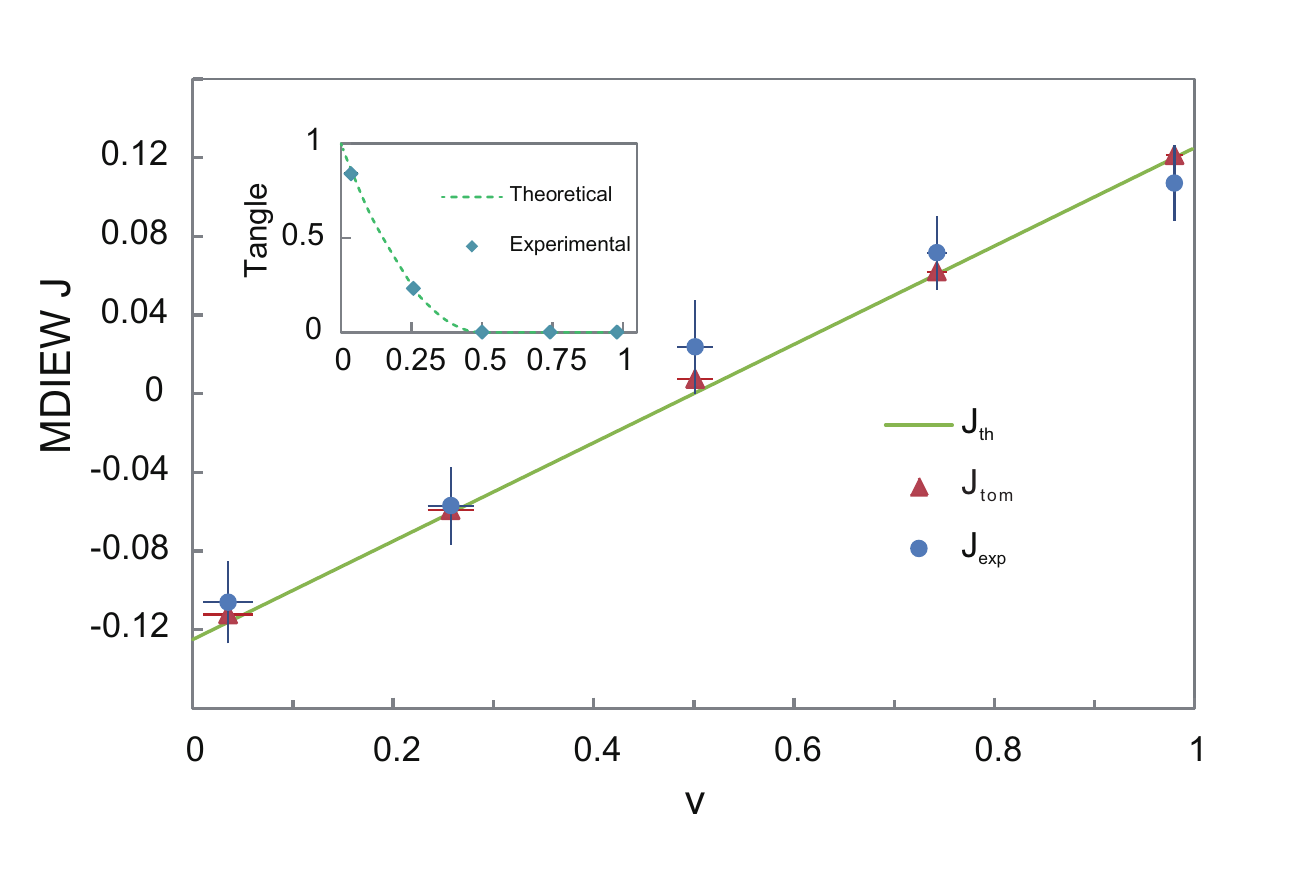}}
\caption{MDIEW values are compared for three cases. The theoretical results ($J_{th}$, solid line) are calculated for the states $\rho_{AB}^v$ with different values of $v$ in Eq.~\eqref{eq:rho}. The tomography results ($J_{tom}$, triangle points) are evaluated for the states $\rho_{34}^v$ after performing tomography on the to-be-witnessed bipartite state. Each point of the experimental results ($J_{exp}$, circular points) is measured from a 16-hour experiment. Vertical error bars indicate one standard deviation and horizontal error bars of the fitting values $v$ from state tomography are described in Supplemental Materials. The inset shows theoretical and experimental values of tangle for input states $\rho_{34}^v$.} \label{resmdi}
\end{figure}

As defined in Eq.~\eqref{eq:newJ}, we obtain the experimental results $J_{exp}^v$ as shown in Fig.~\ref{resmdi}. In comparison, we also plot $J_{th}(\rho_{AB}^v)$ for all values of $v$. Recall that in the aforementioned time-shift attack demonstration, the conclusion from the conventional witness is entangled for $v=1$, whereas here we show that our MDIEW result is 0.107 $\pm$ 0.019 and does not conclude an entangled state. One can see that our MDIEW is immune to this attack.

Furthermore, we perform tomography on the to-be-witnessed bipartite states $\{\rho_{34}^{v}\}$. The results of the density matrices are shown in Fig.~\ref{denvt}. The corresponding $v$ are set by the angle $\theta$ of the selector HWP given in Eq.~\eqref{eq:vtheta}, which is consistent with our fitting values as shown in Supplemental Materials. We evaluate the MDIEW results, Eq.~\eqref{eq:newJ}, from the results of the state tomography $J_{tom}$ as shown in Fig.~\ref{resmdi}. Meanwhile, to quantify the entanglement of the bipartite states $\{\rho_{34}^{v}\}$, we adopt the measure of tangle \cite{Wootters98}, which can be directly calculated from tomography results. When the tangle goes to zero, the bipartite state becomes a separable state. As shown in the insert of Fig.~\ref{resmdi}, no entanglement exists when $v$ grows beyond $1/2$. Such phenomenon is related to the ``sudden death of entanglement" \cite{PhysRevLett.93.140404}.

\begin{figure*} [hbt]
\centering
\resizebox{16cm}{!}{\includegraphics[scale=1]{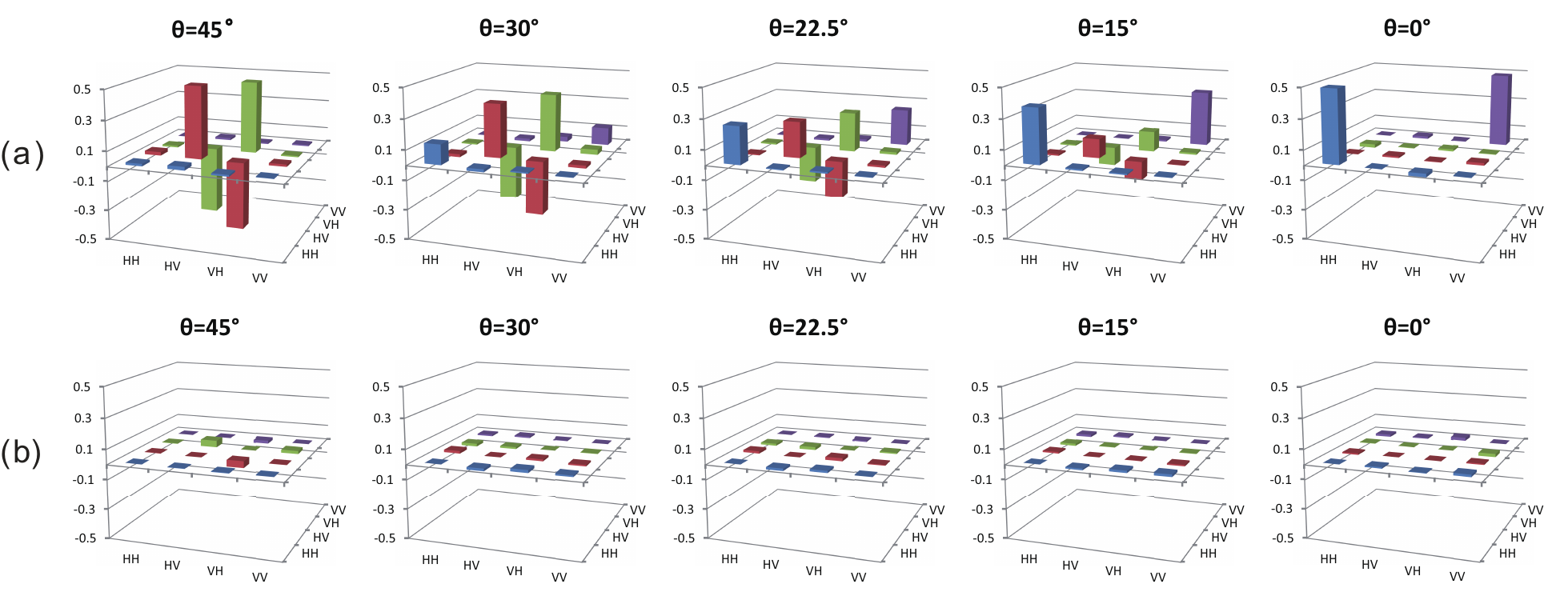}}
%\resizebox{8cm}{!}{\includegraphics[scale=1]{resv.pdf}}
\caption{Tomography of the bipartite state $\rho^v_{34}$. Density matrices are constructed through tomography and over 250,000 coincidence detection events are obtained for each plot. Depending on the angle $\theta$ of the state selector defined in Eq.~\eqref{eq:vtheta}, various states ${\rho _{34}^v}$ are prepared. (a) Real part of the density matrices $\rho^v_{34}$. (b) Imaginary part of the density matrices $\rho^v_{34}$.
}\label{denvt}
\end{figure*}

\subsection{Experiment result}\label{appTomography}
\subsubsection{Tomography}
In the experiment, we prepare the to-be-witnessed bipartite states $\rho_{AB}^v$ in the form of Eq.~\eqref{rhovab} with different values $v$. To verify whether the prepared states $\rho^v_{34}$ is close to the desired ones $\rho_{AB}^v$, their density matrices are reconstructed via quantum tomography with $v$ controlled by the angle $\theta$ of the selector HWP, as shown in Eq.~(4) in Main Text. Then we fit the value $v$ by the measured density matrixes $\rho^v_{34}$ to the desired states $\rho_{AB}^v$. As shown in Eq.~\eqref{rhovab}, $\rho_{AB}^v$ contains only real numbers, we can infer $v$ from the real part of $\rho^v_{34}$, and the imaginary parts are supposed to be near zero.

The parameter $v$ can be derived from the real-part of matrix $\rho^v_{34}$. For each matrix elements of $\rho^v_{34}$, $\rho_{11}, \rho_{22}, \rho_{33}, \rho_{44}$, and $\rho_{23}$ ($\rho_{32}$ is identical to $\rho_{23}$), one can estimate $v$, as shown in Table \ref{vfit1}. Accordingly, the average value of $v$ and its error bar are evaluated. As one can see that the experimental results agree the theoretical results well.

\begin{table}[hbt]
\caption{Error estimation of density matrix, real non-zero parts} \label{vfit1}
\begin{tabular}{cccccccccc}
    \hline
    & &\multicolumn{5}{c}{$v_{experiment}$} \\
    \cline{3-7}
     $\theta$ & $v_{theory}$ & $v_{\rho_{11}}$ & $v_{\rho_{22}}$ & $v_{\rho_{33}}$ & $v_{\rho_{44}}$ & $v_{\rho_{23}}$ & $\overline{v_{exp}}$ & $\delta\overline{v_{exp}}$ & $\delta v_{exp}$\\
    \hline
    45&    0&	0.0196 &	0.0228 &	0.0064 &	0.0258 	&0.0290 	&0.0207 	&0.0039 	&0.0087\\
    30& 0.25&	0.2580 &	0.2538 	&0.2426 	&0.2686 	&0.2644 	&0.2575 	&0.0045 	&0.0101\\
    22.5&  0.5	&0.4944 &	0.4820& 	0.4824 	&0.5230 	&0.5108 	&0.4985 	&0.0081 	&0.0180\\
    15& 0.75&	0.7298 	&0.7198 	&0.7280 	&0.7718 	&0.7620 	&0.7423 	&0.0103 	&0.0231\\
    0&    1&	0.9680 &	0.9818 &	0.9222 &	0.9684 	&0.9822 	&0.9645 	&0.0110 	&0.0246\\
    \hline
  \end{tabular}
\end{table}

%For the cases of $v=$ 0.75, 0.50, and 0.25, the experimental results agree with the theoretical values of $v$ very well. For $v=$ 1 and 0, the experimental results deviate about 3 and 5 sigmas, respectively. This is because $v =$ 1 and 0 are the extreme cases and there must be systematic errors in non-perfect experiments.

\subsubsection{Tangle}
To quantify the entanglement of quantum states, we adopt the measure of tangle \cite{Wootters98}. For a 2-qubit state, $\rho_{AB}$, one can evaluate its tangle by the following steps.
\begin{enumerate}
\item
Define a non-Hermitian matrix
  \begin{equation}\label{}
    R = \rho_{AB}\Sigma\rho_{AB}^T\Sigma,
  \end{equation}
  where $\rho^T_{AB}$ is the transpose of $\rho_{AB}$, and the ``spin flip matrix $\Sigma$'' is defined as
  \begin{equation}\label{}
     \Sigma= \left[
  \begin{array}{ccccc}
0 &   0 &    0 &  -1\\		
0 &   0      &   1& 0	\\	
0  & 1  &   0        &  0	\\
-1&   0  & 0  & 0\\
  \end{array}
\right];
  \end{equation}

\item
Calculate the eigenvalues of $R$, and arrange them in decreasing order, $\lambda_1 \geq \lambda_2 \geq \lambda_3 \geq \lambda_4$;

\item
The concurrence of $\rho_{AB}$ is defined as
\begin{equation}\label{}
    C = Max\{0, \sqrt{\lambda_1}- \sqrt{\lambda_2}-\sqrt{\lambda_3}-\sqrt{\lambda_4}\};
\end{equation}

\item
The tangle is defined as
\begin{equation}\label{}
   tangle = C^2.
\end{equation}
\end{enumerate}

The tangle of a bipartite state is a measure of entanglement. If the tangle is zero, then the bipartite state $\rho_{AB}$ must be a separable state. For states defined in Eq.~\eqref{rhovab}, we can calculate the corresponding tangle. By following the aforementioned steps, we first calculate the four eigenvalues, $0, (1-v)^2, v^2/4, v^2/4$. For $v>2/3$, we have $v^2/4 > (1-v)^2$ and hence $tangle = C^2 = 0$. For $2/3\geq v$, we have $v^2/4 \leq (1-v)^2$ and hence $\sqrt{(1-v)^2}- 2\sqrt{v^2/4} = 1-2v$. Therefore, $C = 0$ for $v \geq 1/2$ and $C = 1-2v$ for $v<1/2$,
\begin{equation}\label{}
       tangle(\rho^v_{AB})= \left\{
  \begin{array}{cc}
(1-2v)^2 &  v<1/2 \\		
0 &   v\geq1/2. \\	
  \end{array}
  \right.
\end{equation}
The fitting value of $v$ from state tomography and the tangles are shown in Table \ref{ps}.

%and For the prepared states $\rho^v_{34}$, Table \ref{ps} concludes the fitted value of $v$ and tangle

\begin{table}[hbt]\centering
  \caption{The tangle values of the input states by tomography.}\label{ps}
  %\begin{ruledtabular}
  \begin{tabular}{cccccc}
  \hline
    $\theta_{exp}$& $v_{theory}$& $v_{exp}$ &  $v_{error}$& tangle($\rho^v_{34}(\theta)$) & $\textrm{tangle}_{\textrm{error}}$\\
    \hline
    $45^\circ$& 0 & 0.021 &	0.009 &	0.840 & 0.001\\
    $30^\circ$& 0.25 & 0.257 & 0.010 & 0.233 & 0.001\\
    $22.5^\circ$& 0.5 & 0.499 &	0.018 &	0.000 & 0\\
    $15^\circ$& 0.75 & 0.742 &	0.023 &	0.000 & 0\\
    $0^\circ$& 1 & 0.965 &	0.025 &	0.000 & 0\\
    \hline
  \end{tabular}\label{vtangle}
  %\end{ruledtabular}
\end{table}

\chapter{Reliable and robust entanglement witness}
This chapter introduces the reliable and robust problem in entanglement witness. The reliable problem can be overcome by the MDIEW scheme. While we show in this chapter how the robust problem can also be resolved \cite{PhysRevA.93.042317}.

\section{Reliable and robust problem in EW}
%In theory, as shown in Fig.~\ref{fig:Witnesscw}, entanglement can be witnessed by measuring a Hermitian observable $W$, whose output expectation for any separable state $\sigma$ is non-negative,
%\begin{equation}\label{OMDIEW:EWsep}
%\mathrm{Tr}(W\sigma)\ge0,
%\end{equation}
%but can be negative for certain entangled state $\rho$,
%\begin{equation}\label{OMDIEW:EWent}
%\mathrm{Tr}(W\rho)<0.
%\end{equation}
%In this case, we call $W$ an entanglement witness (EW) for $\rho$ \cite{Terhal200161}. In general, $W$ can be obtained by a linear combination of product observables, which can be measured locally on the subsystems. For a review of entanglement witness, one can refer to Ref.~\cite{guhne2009, Dariusz14}.

%[XM: shall we mention that we are working on bipartite systems.]

%[XM: let us unify the writing, $Tr(W\rho)$ vs.~$Tr(\rho W)$]

%[XM: in the following paragraph, mention that W' problem is demonstrated in our \cite{Yuan14}]

In reality, EW implementation may suffer from two problems. The first one is \emph{reliability}. That is, one might conclude unreliable results due to imperfect experimental devices. In this case, the validity of the EW result depends on how faithful one can implement the measurements according to the witness $W$. If the realization devices are not well calibrated, the practically implemented observable $W'$ may deviate from the original theoretical design $W$, see Fig.~\ref{fig:Witnesscw} as an example, which can even be not a witness. That is, there may exist some separable states $\sigma$, such that $\mathrm{Tr}[\sigma W'] < 0 \le \mathrm{Tr}[\sigma W]$. Practically, by exploiting device imperfections, an attack has been experimentally implemented for an entanglement witness procedure \cite{Yuan14}. In cryptographic applications, such problem is regarded as a loophole, where one mistakes separable states to be entangled ones. For instance, in QKD, this would indicate that an adversary successfully convinces the users Alice and Bob to share keys which they think are secure but are eavesdropped.  Such problem is solved by the measurement-device-independent QKD scheme \cite{Lo2012MDI}, inspired by the time-reversed entanglement-based scheme \cite{Biham:1996:Quantum,Inamori:TimeReverseEPR:2002,Braunstein12}. Branciard et al.~applied a similar idea to EW and proposed the measurement-device-independent entanglement witness (MDIEW) scheme \cite{Branciard13}, in which entanglement can be witnessed without assuming the realization devices. The MDIEW scheme is based on an important discovery that any entangled state can be witnessed in a nonlocal game with quantum inputs \cite{Buscemi12}. In the MDIEW scheme, it is shown that an arbitrary conventional EW can be converted to be an MDIEW, which has been experimentally tested \cite{Yuan14}.

\begin{figure}[bht]
\centering
\resizebox{8cm}{!}{\includegraphics[scale=1]{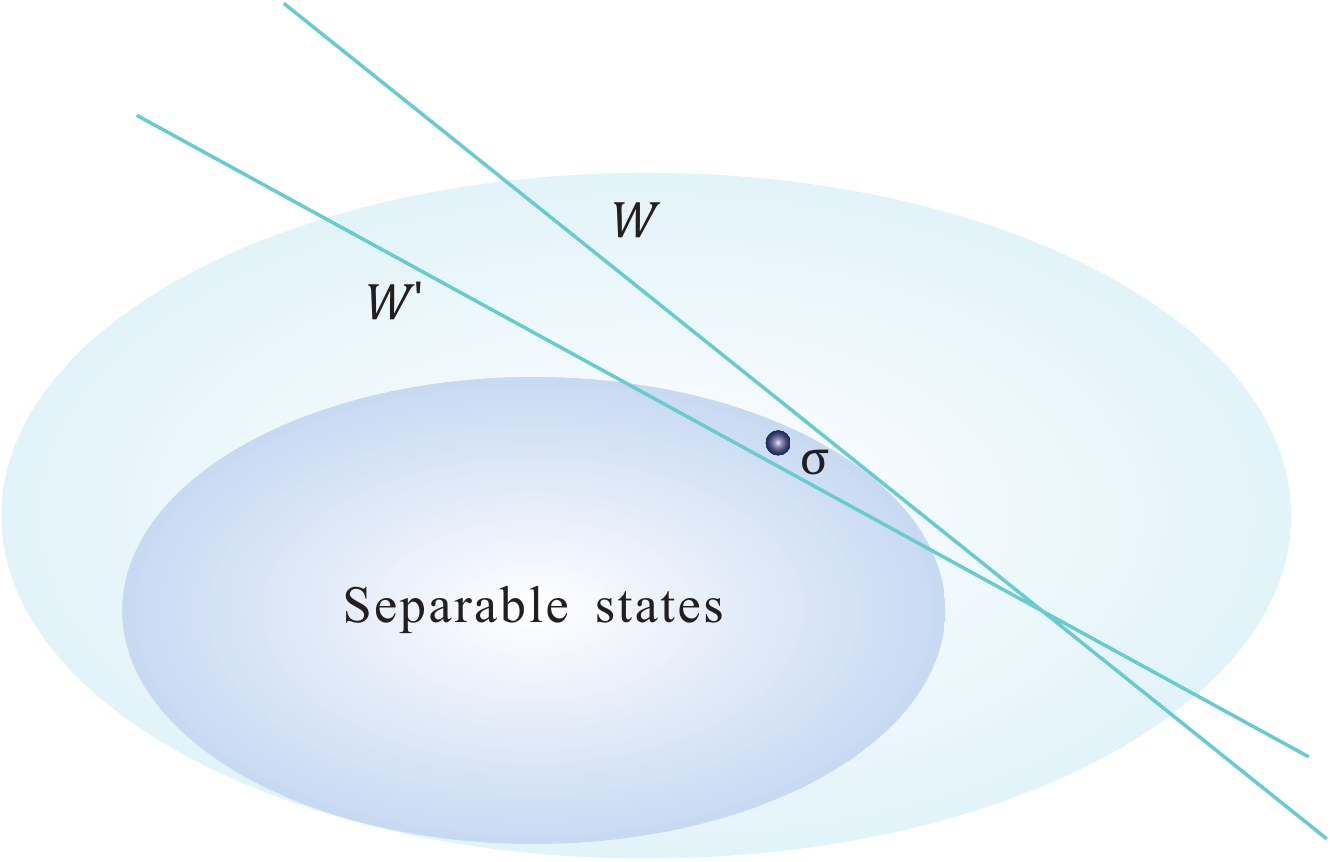}}
\caption{Entanglement witness and the reliability problem. }\label{fig:Witnesscw}
\end{figure}
%[XM: in the following paragraph, some of the explanations are not too clear. e.g., once a tomography is done, there is no need for EW. of course, tomography is efficient]

%imperfection in practice
The second problem lies on the \emph{robustness} of EW implementation. Since each (linear) EW can only identify certain regime of entangled states, a given EW is likely to be ineffective to detect entanglement existing in an unknown quantum state. While a failure of detecting entanglement is theoretically acceptable, in practice, such failure may cause experiment to be highly inefficient. In fact, a conventional EW can only be designed optimal when the quantum state has been well calibrated, which, on the other hand, generally requires to run quantum state tomography.
Practically, when the prepared state can be well modeled, one can indeed choose the optimal EW to detect its entanglement. Since a full tomography requires exponential resources regarding to the number of parties, EW plays as an important role for detecting well modeled entanglement, which would generally fail for an arbitrary unknown state.
%Since a full tomography can already give full information about the state, including whether it is entangled or not, there is no need for implementing EW.
%This is the key reason why EWs are rarely employed in real experiments.
In a way, this problem becomes more serious in the MDIEW scenario, where the measurement devices are assumed to be uncharacterized and even untrusted. In this case, the implemented witness, which may although be designed optimal at the first place, can become a bad one which merely detects no entanglement. However, the observed experimental data may still have enough information for detecting entanglement.
Therefore, the key problem we are facing here is that given a set of observed experimental data, what is the best entanglement detection capability one can achieve. %That is, we want to maximize the detectable entangled states with a fixed experimental setup.

In detecting quantum nonlocality, a similar problem is to find the optimal Bell inequality for the observed correlation, which can be solved efficiently with linear programming \cite{boyd2004convex}.
Regarding to our problem, we essentially need to optimize over all entanglement witness to draw the best conclusion of entanglement with the same experiment data, as shown in Fig.~\ref{fig:witness2}(a). As the set of separable states is not a polytope, this problem cannot be solved by linear programming. Generally speaking, it is proved that the problem of accurately finding such an optimal witness is NP-hard \cite{ben1998robust}. However, if certain failure probability is tolerable, we show in this work that this problem can be efficiently solved. That is, if we admit a probability less than $\epsilon$ to detect a separable state to be entangled, we show that the optimal entanglement witness can be efficiently found. As the optimization step can be effectively conducted as post-processing, our scheme does not pose extra burdens to experiments compared to the original MDIEW scheme. In this case, our result can be directly applied in practice.

%a natural question is whether we can optimize over all entanglement witness to draw a better conclusion of entanglement with the same experiment data.

%In this work, we provide a solution to such a problem. We proposed an efficient way for optimizing MDIEW which enables us to draw better conclusions.

%[XM: if possible make all the letters in the figures the same font]

\begin{figure*}[bht]
\centering
\resizebox{14cm}{!}{\includegraphics[scale=1]{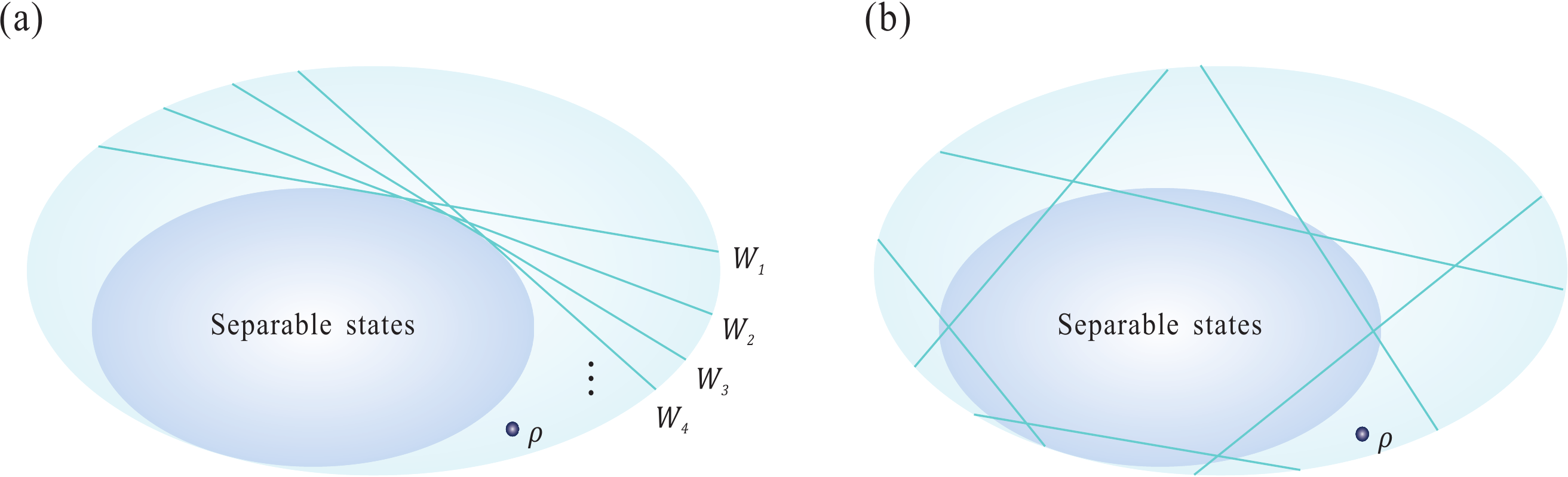}}
\caption{Optimization of entanglement witnesses. (a) To get the optimal witness of an unknown entangled state $\rho$, one has to run over all possible witnesses. Intuitively, this is done by scanning over all witnesses that are \emph{tangent} to the set of separable states. (b) The optimization can be efficiently done if certain failure probability can be tolerated.}\label{fig:witness2}
\end{figure*}

%The rest of the paper is organized as follows. In Section \ref{Sec:MDIEW}, we review the MDIEW scheme, which solves the reliability problem. Then, we introduce our robust MDIEW scheme in Section \ref{Sec:OMDIEW} and give an explicit example in Section \ref{Sec:Example}. In Section \ref{Sec:Discussion}, we conclude our result and discuss practical applications. We mainly focus our discussion on the bipartite scenario. While, our result can be naturally generalized to multipartite cases.

\section{Reliable entanglement witness}\label{Sec:MDIEW}
The reliability problem can be overcome by the MDIEW scheme. For self-consistency, we will breifly review the MDIEW scheme.
\subsection{Nonlocal game}
Before, we first discuss about nonlocal games with classical and quantum inputs as shown in Fig.~\ref{fig:nonlocal}. In a classical nonlocal game, classical random inputs $x$ and $y$ are given to two spacelikely separated users Alice and Bob, who perform measurement on pre-shared entangled state $\rho_{AB}$ and output $a$ and $b$, respectively. According to the probability distribution $p(a,b|x,y)$, a Bell inequality can be defined by
\begin{equation}\label{eq:Bell}
I = \sum_{a,b,x,y} \beta_{a,b}^{x,y}{p}(a,b|x,y) \geq I_C,
\end{equation}
where $I_C$ is a bound for all separable state $\sigma_{AB}$. A violation of the inequality can be considered as a witness for entanglement. As the Bell test does not assume measurement detail, witnessing entanglement by Bell test is device independent. However, as the conclusion is so strong such that the implementation is self-testing, not all entangled states can be witnessed in such a way \cite{Werner89, Barrett02}. Furthermore, the requirement of a faithful Bell test is very high, which makes such a witnesses impractical. For instance, the minimum efficiency required is $2/3$ for all Bell tests with binary inputs and outputs \cite{Massar03, Wilms08}.
On the other hand, if we can trust the measurement, a Bell test essentially becomes an EW. Although such method is able to detect all entangled state and is easy to realize, this scheme is not measurement-device-imperfection-tolerant.

%[XM: why citation of \cite{acin06,Almeida07}]

\begin{figure}[bht]
\centering
\resizebox{10cm}{!}{\includegraphics[scale=1]{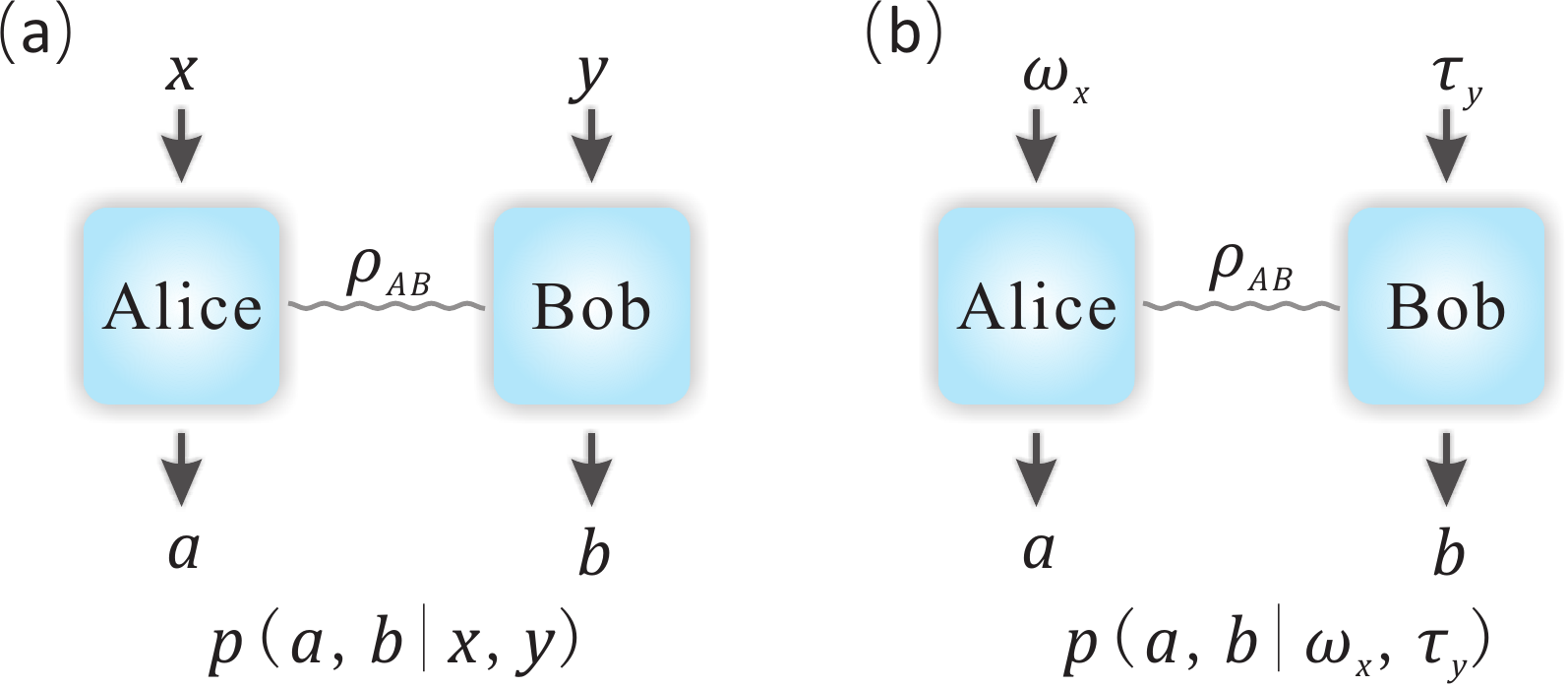}}
\caption{Bipartite nonlocal game with classical and quantum inputs. (a) Nonlocal game with classical inputs. Based on the classical inputs $x$ and $y$, Alice and Bob perform local measurement on the pre-shared entangled state $\rho_{AB}$, and get classical outputs $a$ and $b$, respectively. A linear combination of the probability distribution $p(a,b|x,y)$ defines a Bell inequality as shown in Eq.~\eqref{eq:Bell}. %In this setup, if the measurement devices are perfect, then it corresponds to the case of conventional entanglement witness.
(b) Nonlocal game with quantum inputs. The quantum inputs of Alice and Bob are respectively $\omega_x$ and $\tau_y$. It is shown \cite{Buscemi12} that any entangled quantum states can be witnessed with a certain nonlocal game with quantum inputs. Equivalently, if we consider that Alice and Bob each prepares an ancillary state and a third party Eve performs the measurement, this setup also corresponds to the case of MDIEW. }\label{fig:nonlocal}
\end{figure}

In the seminal work \cite{Buscemi12}, Buscemi introduces the concepts of nonlocal games with quantum inputs. Denote the inputs of Alice and Bob by $\omega_x$ and $\tau_y$, then an inequality similar to Bell inequality can be defined by
\begin{equation}\label{eq:QuantumBell}
J = \sum_{a,b,x,y} \beta_{a,b}^{x,y}{p}(a,b|\omega_x,\tau_y) \geq J_C,
\end{equation}
where $J_C$ is also the bound for all separable state $\rho_{AB}$. As the quantum inputs can be indistinguishable, it is proved that all entangled states can violate a certain inequality \cite{Buscemi12}. If we consider the input states are faithfully prepared by Alice and Bob, then such nonlocal game with quantum inputs can be considered as an MDIEW \cite{Branciard13}. Moreover, as shown below, there is no detection efficiency limit for such a test.

\subsection{MDIEW}
%[XM: some of the following statements have been appeared above. revise either]
The nonlocal game presented in Ref.~\cite{Buscemi12} can be considered as a reliable entanglement witness method, which does not witness separable state as entangled with arbitrary implemented measurement. This nonlocal game is thus an MDIEW, i.e., $J\ge 0$ for all separable states and $J$ can be negative if Alice and Bob share entangled state. Furthermore, the statement that $J\ge 0$ for all separable states is independent of the implementation of the measurement. In Ref.~\cite{Branciard13}, the authors put this statement into more concrete and practical framework. They show that, for an arbitrary conventional EW, there is a corresponded MDIEW. Below, we will quickly show how to derive MDIEWs from conventional EWs.

%As can be proved that, $J<0$ if Alice and Bob share entanglement and $J\ge 0$ for all separable states they share. Concerning $J\ge 0$, it must be true if they do not share entanglement, and actually does not depend on the implementation of the measurements they perform. Therefore, when Alice and Bob obtain a value $J<0$, it means that they share entanglement in a MDI-EW manner, for those, the inequality from Eq. (\ref{eq:QuantumBell}) defines a MDI-EW. Furthermore, there comes some crucial assumptions that the inputs are reliable here replacing the perfect measurements which is necessary for standard EWs; there is no priori to the classical labels $x$ and $y$ for the inputs.

%\emph{MDI-EWs derived from standard EWs.} Now, we will review how to derive MDIEWs from conventional EWs.

Focus on the bipartite scenario with Hilbert space $\mathcal{H}_A\otimes \mathcal{H}_B$, with dimensions $\mathrm{dim}\mathcal{H}_A=d_A$ and $\mathrm{dim}\mathcal{H}_B=d_B$. For a bipartite entangled state $\rho_{AB}$ defined on $\mathcal{H}_A\otimes \mathcal{H}_B$, we can always find a conventional entanglement witness $W$ such that $\mathrm{Tr}[W\rho_{AB}] < 0$ and $\mathrm{Tr}[W\sigma_{AB}]\ge 0$ for any separable state $\sigma_{AB}$. Suppose $\{\omega_x^{\mathrm{T}}\}$ and $\{\tau_y^{\mathrm{T}}\}$ to be two bases for Hermitian operators on $\mathcal{H}_A$ and  $\mathcal{H}_B$, respectively. Thus, we can decompose $W$ on the basis $\{\omega_x^{\mathrm{T}}\otimes\tau_y^{\mathrm{T}}\}$ by
\begin{equation}\label{eq: decomposition}
	W = \sum_{x,y} \beta^{x,y}\omega_x^{\mathrm{T}} \otimes \tau_y^{\mathrm{T}},
\end{equation}
where $\beta^{x,y}$ are real coefficients and the transpose is for later convenience. Notice that, owing to the completeness of the set of density matrices, we further require $\{\omega_x\}$ and $\{\tau_y\}$ to be density matrices. In addition, the decomposition of Hermitian operators is not unique which varies with different $\{\omega_x\}$ and $\{\tau_y\}$.

With a conventional EW decomposed in Eq.~\eqref{eq: decomposition}, an MDIEW can be obtained by
\begin{equation}\label{eq: simplifiedBell}
	J = \sum_{x,y} \beta^{x,y}_{1,1} {p}(1,1|\omega_x,\tau_y)
\end{equation}
where $\beta^{x,y}_{1,1} =  \beta^{x,y}$ and ${p}(1,1|\omega_x,\tau_y)$ is the probability of outputting $(a=1,b=1)$ with input states $(\omega_x,\tau_y)$. In the MDIEW design, Alice (Bob) performs Bell state measurement on $\rho_A$ ($\rho_B$) and $\omega_x$ ($\tau_y$). The probability distribution ${p}(1,1|\omega_x,\tau_y)$ is thus obtained by the probability of projecting onto the maximally entangled state $\ket{\Phi_{AA}^+} =1/\sqrt{d_A}\sum_i \ket{ii}$ and $\ket{\Phi_{BB}^+} = 1/\sqrt{d_B} \sum_j \ket{jj}$.

As shown in Ref.~\cite{Branciard13}, $J$ is linearly proportional to the conventional witness with ideal measurement,
\begin{equation}\label{}
  J = \mathrm{Tr}[W\rho_{AB}] / d_A d_B.
\end{equation}
Thus, $J$ defined in Eq.~\eqref{eq: simplifiedBell} witnesses entanglement. Furthermore, it can be proved that such a witness is independent of the measurement devices. That is, even if the measurement devices are imperfect, $J$ is always non-negative for all separable states and hence no separable state will be mistakenly witnessed to be entangled. We refer to Ref.~\cite{Branciard13} for a rigorous proof.

Theoretically, the MDIEW scheme prevents identifying separable states to be entangled. Such a reliable MDIEW has been experimentally demonstrated lately \cite{Yuan14}. In practice, however, such a scheme can be inefficient, meaning that it witnesses very few entangled states despite that the observed data could actually provide more information. This is because, in the MDIEW procedure, one first chooses a conventional EW and realize in an MDI way. The conventional EW is chosen based on an empirical estimation of the to-be-witnessed state, thus it may not be able to witness the state for an ill estimation. Furthermore, even if the conventional EW is optimal at the first place, the measurement imperfection will make it sub-optimal in practice. Especially, when the input states $\{\omega_x\otimes\tau_y\}$ is complete, a specific witness may not be able to detect entanglement. With complete information, a natural question is whether we can obtain maximal information about entanglement, i.e., get the optimal estimation of MDIEW.

\section{Robust MDIEW}\label{Sec:OMDIEW}
Now, we present a method to optimize the MDIEW given a fixed observed experiment data $p(1,1|\omega_x,\tau_y)$. Before digging into the details, we compare the problem to a similar one in nonlocality. In the nonlocality scenario, a Bell inequality is used as a witness for quantumness, see Eq.~\eqref{eq:Bell}. In practice, the Bell inequality may not be optimal for the observed probability distribution $p(a,b|x,y)$. As the probability distribution of classical correlation forms a polytope, one can run a linear programming to get an optimal Bell inequality for $p(a,b|x,y)$. While, in our case, the probability distribution $p(1,1|\omega_x,\tau_y)$ with separable states is only a convex set but no-longer a polytope. Thus, our problem cannot be solved directly with linear programming.

\subsection{Problem formulation}
Let us start with formulating the optimization problem. Informally, our problem can be described as follows,
\begin{flushleft}
\emph{Problem (informal): find an optimal witness for the observed probability distribution $p(1,1|\omega_x,\tau_y)$.}
\end{flushleft}
%To rigourously formulate the problem, we first define what the word `optimal' means.
According to Eq.\eqref{eq: simplifiedBell}, the witness value is defined by a linear combination of $p(1,1|\omega_x,\tau_y)$ with coefficient $\beta^{x,y}$. To witness entanglement, the coefficient $\beta^{x,y}$ must lead to a witness as defined in Eq.~\eqref{eq: decomposition}. In addition, as we can always assign $2\beta^{x,y}$ to double a violation, we require a trace normalization of the witness $W$ by
\begin{equation}\label{}
  \mathrm{Tr}[W] = 1.
\end{equation}
Under this normalization, the optimal entanglement witness $W$ \cite{Lewenstein00} for a given state $\rho$ is defined by the solution to the minimization
\begin{equation}\label{}
  \min \mathrm{Tr}[W \rho].
\end{equation}
Generally speaking, the minimum value, i.e., maximum violation, of the entanglement witness makes the result more robust to experimental errors and statistical fluctuations. Furthermore, a larger violation of entanglement witness can also help for a larger estimation of entanglement measures \cite{eisert2007quantitative}.

Therefore, the problem can be expressed as
\begin{flushleft}
\emph{Problem (formal): For a given probability distribution ${p}(1,1|\omega_x,\tau_y)$, minimize
\begin{equation}\label{eq:Jqminimum}
J(\beta^{x,y}) = \sum_{x,y} \beta^{x,y}{p}(1,1|\omega_x,\tau_y)
\end{equation}
over all $\beta^{x,y}$ satisfying
\begin{equation}\label{eq:constraintsss}
\sum_{x,y}\beta^{x,y}\mathrm{Tr}\left[\sigma_{AB}(\omega_x^{\mathrm{T}}\otimes\tau_y^{\mathrm{T}})\right]\ge0,
\end{equation}
for any separable state $\sigma_{AB}$ and
\begin{equation}\label{}
  \mathrm{Tr}\left[\sum_{x,y} \beta^{x,y}\omega_x^{\mathrm{T}} \otimes \tau_y^{\mathrm{T}}\right] = 1.
\end{equation}
}
\end{flushleft}

%[XM: which figure are you trying to refer below?]

Contrary to the optimization of Bell inequality, we can see that this problem is much more complex. When the measurements are implemented faithfully, it is easy to verify that $p(1,1|\omega_x,\tau_y) = \mathrm{Tr}[(\omega_x\otimes\tau_y)\rho_{AB}]/\sqrt{d_Ad_B}$, where $\rho_{AB}$ is the state measured. Therefore, finding the optimal $\beta^{x,y}$ is equivalent to find the optimal entanglement witness $W = \sum_{x,y}\beta^{x, y}\omega_x^{\mathrm{T}}\otimes\tau_y^{\mathrm{T}}$ for state $\rho_{AB}$. A possible solution to this problem is to try all entanglement witnesses to find the optimal one, see Fig.~\ref{fig:witness2}. However, it is proved that the problem of accurately finding such an optimal witness is NP-hard \cite{ben1998robust}. Thus, our problem is also intractable for the most general case.

\subsection{$\epsilon$-level optimal EW}
The key for the problem being intractable is that there is no efficient way to characterize an arbitrary entanglement witness. In the bipartite case, an operator is an witness if and only if
\begin{equation}\label{eq:}
\mathrm{Tr}[\sigma_{AB}W]\ge0
\end{equation}
for any separable state $\sigma_{AB}$. As $\sigma_{AB}$ can always be decomposed as a convex combination of separable states as $\ket{\psi}_A\ket{\phi}_B$, the condition can be equivalently expressed as
\begin{equation}\label{eq:witnessconstraints}
\bra{\psi}_A\bra{\phi}_BW\ket{\psi}_A\ket{\phi}_B\ge0,
\end{equation}
for any pure states $\ket{\psi}_A$ and $\ket{\phi}_B$. The constraints for a witness $W$ are very difficult to describe in the most general case, which makes our problem hard.

While, this problem can be resolved if we allow certain failure errors. A Hermitian operator $W_\epsilon$ is defined as an $\epsilon$-level entanglement witness, when
\begin{equation}\label{eq:epsilonwitness}
\mathrm{Prob}\left\{\mathrm{Tr}[\sigma W_\epsilon]<0|\sigma \in S\right\} \le \epsilon,
\end{equation}
where $S$ is the set of separable states. That is, the operator $W_\epsilon$ has a probability less than $\epsilon$ to detect a randomly selected separable quantum state to be entangled. Intuitively, $\epsilon$ can be regarded as a failure error probability. We refer to Ref.~\cite{Brandao04} for a rigorous definition.  It is shown that the $\epsilon$-level optimal EW can be found efficiently for any given entangled state $\rho$. In particular, constrained on $\mathrm{Tr}[W_\epsilon] = 1$ and $W_\epsilon$ to be an $\epsilon$-level EW, one can run a semi-definite programming (SDP) to minimize $\mathrm{Tr}[W_\epsilon\rho]$.

\subsection{Solution}
%[XM: the link between $\epsilon$-level optimal EW and $N$ random pure state is missing.]

Following the method proposed in Ref.~\cite{Brandao04}, we can solve the minimization problem given in Eq.~\eqref{eq:Jqminimum} by allowing a certain failure probability $\epsilon$. First, we relax the constraint given in Eq.~\eqref{eq:constraintsss}. Instead of requiring being non-negative for all separable states, we randomly generate $N$ separable states $\{\ket{\psi}_A^i\ket{\phi}_B^i\}$ and require that
\begin{equation}\label{eq:witnessconstraints2}
\sum_{x,y}\beta^{x,y}\langle\omega_x^{\mathrm{T}}\otimes \tau_y^{\mathrm{T}}\rangle^i\ge0, \forall i\in\{1,2,\dots,N\},
\end{equation}
where $\langle\omega_x^{\mathrm{T}}\otimes \tau_y^{\mathrm{T}}\rangle^i = \bra{\psi}_A^i\bra{\phi}_B^i \omega_x^{\mathrm{T}}\otimes \tau_y^{\mathrm{T}}\ket{\psi}_A^i\ket{\phi}_B^i$. Then the problem can be expressed as

\begin{flushleft}
\emph{Problem ($\epsilon$-level): given a probability distribution $p(1,1|\omega_x,\tau_y)$, minimize
\begin{equation}\label{eq:minimum}
J(\beta^{x,y}) = \sum_{x,y} \beta^{x,y}{p}(1,1|\omega_x,\tau_y)
\end{equation}
over all $\beta^{x,y}$ satisfying
\begin{equation}\label{eq:witnessconstraints3}
\sum_{x,y}\beta^{x,y}\langle\omega_x^{\mathrm{T}}\otimes \tau_y^{\mathrm{T}}\rangle^i\ge0, \forall i\in\{1,2,\dots,N\},
\end{equation}
for $N$ randomly generated separable states $\{\ket{\psi}_A^i\ket{\phi}_B^i\}$ and
\begin{equation}\label{eq:}
\sum_{x,y} \beta^{x,y}\mathrm{Tr}\left[\omega_x^{\mathrm{T}} \otimes \tau_y^{\mathrm{T}}\right] = 1.
\end{equation}
}
\end{flushleft}

This problem can be converted to an SDP solvable problem when we re-express the inequality of numbers in Eq.~\eqref{eq:witnessconstraints3} by an inequality of matrices. To do so, we only need to notice that Eq.~\eqref{eq:witnessconstraints} is equivalent to require that
\begin{equation}\label{eq:witnessconstraints2}
W_B = \bra{\psi}_AW_\epsilon\ket{\psi}_A\ge0, \forall \ket{\psi}_A,
\end{equation}
where $W_B\ge0$ indicates that $W_B$ has non-negative eigenvalues.
Therefore, we only need to generate $N$ states $\ket{\psi}_A^i$, for $i = 1, 2, \dots, N$, and the problem is
\begin{flushleft}
\emph{Problem ($\epsilon$-level, SDP): given a probability distribution ${p}(1,1|\omega_x,\tau_y)$, minimize
\begin{equation}\label{eq:minimum}
J(\beta^{x,y}) = \sum_{x,y} \beta^{x,y}{p}(1,1|\omega_x,\tau_y)
\end{equation}
over all $\beta^{x,y}$ satisfying
\begin{equation}\label{eq:witnessconstraints4}
\sum_{x,y}\beta^{x,y}\bra{\psi}_A^i\omega_x^{\mathrm{T}}\ket{\psi}_A^i\tau_y^{\mathrm{T}}\ge0, \forall i\in\{1,2,\dots,N\},
\end{equation}
for $N$ randomly generated states $\{\ket{\psi}_A^i\}$ and
\begin{equation}\label{eq:}
\sum_{x,y} \beta^{x,y}\mathrm{Tr}\left[\omega_x^{\mathrm{T}} \otimes \tau_y^{\mathrm{T}}\right] = 1.
\end{equation}
}
\end{flushleft}
In practice, we can run an SDP to solve this problem. According to Ref.~\cite{Brandao04, calafiore2005uncertain}, to get the $\epsilon$-level witness with probability at least $1-\beta$, the number of random states $N$ should be at least $r/(\epsilon\beta) - 1$. Here $r$ is the number of optimization variables, i.e., coefficients $\beta$, and $\beta$ can be understood as the failure probability of the minimization program. It is worth to remark that the problem can be similarly solved in the multipartite case.

%[XM: what's $\beta$ here?]

\subsection{Example}\label{Sec:Example}
In this section, we show explicit examples about how the witness becomes non-optimal in the MDI scenario and how this problem can be resolved by running the optimizing program.

Suppose the to-be-witnessed state is a two-qubit Werner state \cite{Werner89}:
\begin{equation}\label{eq:werner}
\rho^v_{AB} = v\ket{\Psi^-}\bra{\Psi^-} + (1-v)I/4,
\end{equation}
where $\ket{\Psi^-} = 1/\sqrt{2}(\ket{01}-\ket{10})$ and $I$ is the identity matrix. The designed entanglement witness for the Werner states is
\begin{equation}\label{eq:}
W = \frac{1}{2}I - \ket{\Psi^-}\bra{\Psi^-}.
\end{equation}
As $\mathrm{Tr}[W\rho^v_{AB}] = (1-3v)/4$, $\rho^v_{AB}$ is entangled for $v>1/3$ and separable otherwise.

As shown in Ref.~\cite{Branciard13}, we can choose the input set by
\begin{equation}\label{eq:}
\omega_x = \sigma_x\frac{I+\vec{n}\cdot\vec{\sigma}}{2}\sigma_x, \tau_y = \sigma_y\frac{I+\vec{n}\cdot\vec{\sigma}}{2}\sigma_y, x,y = 0,\dots, 3
\end{equation}
where $\vec{n} = (1,1,1)/\sqrt{3}$, $\vec{\sigma} =
(\sigma_1,\sigma_2,\sigma_3)$ is the Pauli matrices, and $\sigma_0 = I$. According to Eq.~\eqref{eq: decomposition}, the witness can be decomposed on the basis of $\{\omega_x\otimes\tau_y\}$ with coefficient $\beta^{x,y}$ given by
\begin{equation}\label{eq:}
\beta^{x,y} =\left\{\begin{array}{cc}
               \frac{5}{8},& \mathrm{if} \, x = y, \\
               -\frac{1}{8},& \mathrm{if} \, x \ne y.
             \end{array}\right.
\end{equation}
And the MDIEW value is given by
\begin{equation}\label{eq:}
J = \frac{5}{8}\sum_{x=y}p(1,1|\omega_x,\tau_y) - \frac{3}{8}\sum_{x\ne y}p(1,1|\omega_x,\tau_y).
\end{equation}

%[XM: do we need this subsection title? maybe we should add a couple of sentences for transition.]
%\subsection{Optimized MDIEW}
In the ideal case, the probability distribution $p(1,1|\omega_x,\tau_y)$ is obtained by projecting onto maximally entangled states, that is,
\begin{equation}\label{eq:}
p(1,1|\omega_x,\tau_y) = \mathrm{Tr}[\left(M_A \otimes M_B\right)\times (\omega_x \otimes \rho_{AB} \otimes \tau_y)]
\end{equation}
where $M_A = \ket{\Phi_{AA}^+}\bra{\Phi_{AA}^+}$ and $M_B = \ket{\Phi_{BB}^+}\bra{\Phi_{BB}^+}$. While, in practice, there may exist imperfection in measurement. For instance, we consider that Alice's measurement is perfect while Bob's measurement is instead
\begin{equation}\label{eq:}
M_B' = \ket{\Phi_{BB}^-}\bra{\Phi_{BB}^-},
\end{equation}
where $\ket{\Phi_{BB}^-} = 1/\sqrt{2}(\ket{00} - \ket{11})$. In the case of quantum key distribution, projecting onto $\ket{\Phi_{BB}^-}$ can be regarded as a phase error.

As shown in Fig.~\ref{fig:OMDIEW}, we plot the MDIEW  and the optimized MDIEW results. For the original MDIEW result, as Bob's measurement is incorrect, no Werner state given in Eq.~\eqref{eq:werner} can be witnessed to be entangled. Although, by optimizing over all possible entanglement witness, we show that $\rho^v_{AB}$ is entangled as long as $v>1/3$. In this case, the optimized MDIEW can detect all entangled Werner states. In our program, we set $N = 1000$ and we can see from Fig.~\ref{fig:OMDIEW} that no separable state is falsely identified as entangled.

\begin{figure}[bht]
\centering
\resizebox{10cm}{!}{\includegraphics[scale=1]{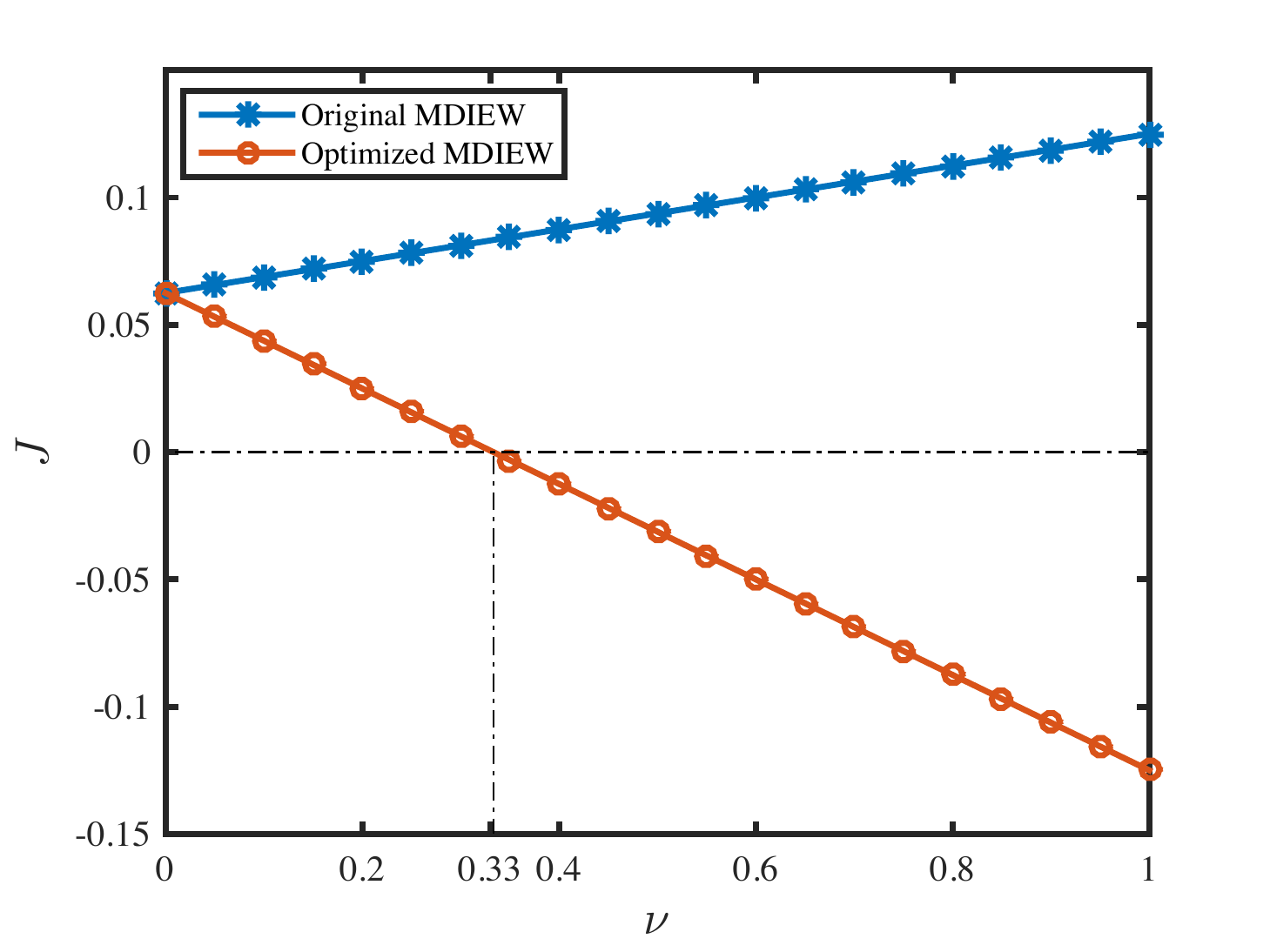}}
\caption{Simulation results of the original and optimized MDIEW protocol. The to be witness state is the two-qubit Werner state defined in Eq.~\eqref{eq:werner}. Here, we consider that Alice projects onto $\ket{\Phi_{AA}^+}$ and Bob projects onto $\ket{\Phi_{BB}^-}$. In this case, the original MDIEW cannot detect entanglement, while the optimized MDIEW protocol detects all entangle Werner states.}\label{fig:OMDIEW}
\end{figure}

%\textbf{Task1. plot $J(v)$ with perfect and imperfect measurement. You can set $w = 0.01, 0.05, 0.1,0.5,1$, and plot it with 6 lines.}

%\textbf{Task2. Choose a $w$ and run the optimization.}

%\section{Discussion}\label{Sec:Discussion}
%[XM: the first two sentences here are redundant. we can directly go to conclusion and discussions.]

%In quantum information tasks, entanglement witness is an important task. Conventional EW suffers from the reliability problem due to imperfect implementation, which can be overcome by the MDIEW scheme. However, due to measurement imperfection, the chosen MDIEW may not be able to detect entanglement, while the practically obtained experiment data can, which we refer to the robust problem.
%In this work, we propose an optimized MDIEW scheme to solve the reliability and robust problem at the same time in entanglement detection, which maximally exploits the measurement data to investigate the entanglement property without trusting the measurement. By adopting $\epsilon$-level EW, we present an efficient way for the optimization procedure. As an explicit example, we show that the original MDIEW may not detect entanglement while our optimized MDIEW can. As our optimization can be regarded as a post-processing of experiment data, our scheme can be easily applied in practice.

The optimization program finds the optimal $\epsilon$-level optimal EW $W_\epsilon$, which as its name indicates, has a probability less than $\epsilon$ to detect an separable state to be entangled. To get a smaller $\epsilon$, one can use a larger number $N$ of random states. In this case, the $\epsilon$ can be regarded as the statistical fluctuation which is inversely related to the number of trials $N$.
On the one hand, to efficiently get the optimal witness $W_\epsilon$, one has to introduce a nonzero failure error $\epsilon$; On the other hand, we can always add an extra term to the EW to eliminate $\epsilon$, i.e.,
\begin{equation}\label{}
  W = W_\epsilon + \alpha I,
\end{equation}
where $\alpha$ is chosen to be the minimum value such that $W$ is an entanglement witness. To efficiently find $\alpha$, one can make use of the technique similar to Ref.~\cite{Sperling13}, in which, EW can be systematically constructed.

\part{Randomness and selftesting}
\chapter{Randomness Requirement on CHSH Bell Test in the Multiple Run Scenario}
This chapter investigates the randomness assumption in Bell test. Specifically, we discuss the randomness requirement such that quantum mechanics can have a violation of the Clauser-Horne-Shimony-Holt (CHSH) inequality \cite{Yuan15CHSH}.

\section{Randomness Requirement}\label{Sec:Randomness}
\subsection{Randomness loophole}
Historically, Bell tests \cite{bell} are proposed for distinguishing quantum theory from local hidden variable models (LHVMs)  \cite{EPR35}. In a general picture, a Bell  test involves multiple parties who randomly choose inputs and generate outputs with pre-shared physical resources. Based on the probability distributions of inputs and outputs, an inequality, called Bell's inequality, is defined. A Bell test is meaningful when all LHVMs satisfy the Bell's inequality; while in quantum mechanics, such inequality can be violated via certain quantum settings. Experimentally observing a violation of any Bell's inequality would show that LHVMs are not sufficient to describe the world, and other theories, such as the quantum mechanics, are demanded.

Here, we focus on the bipartite scenario and investigate one of the most well-known Bell tests, the CHSH inequality \cite{CHSH}. As shown in Fig.~\ref{Fig:BellTest}(a), two space-like separated parties, Alice and Bob, randomly choose input bit settings $x$ and $y$ and generate outputs bits $a$ and $b$ based on their inputs and pre-shared quantum ($\rho$) and classical ($\lambda$) resources, respectively. The probability distribution $p(a,b|x,y)$,  obtaining outputs $a$ and $b$ conditioned on inputs $x$ and $y$, are determined by specific strategies of Alice and Bob. By assuming that the input settings $x$ and $y$ are chosen fully randomly and equally likely, the CHSH inequality is defined by a linear combination of the probability distribution $p(a,b|x,y)$ according to
\begin{equation}\label{eq:Bell}
  S = \sum_{a,b,x,y} (-1)^{a\oplus b + x\cdot y}p(a,b|x,y) \leq S_C = 2,
\end{equation}
where the plus operation $\oplus$ is modulo 2, $\cdot$ is numerical multiplication, and $S_C$ is the (classical) bound of Bell value $S$ for all LHVMs. Similarly, there is an achievable bound $S_Q = 2\sqrt{2}$ for the quantum theory \cite{cirel1980quantum}. In this case, a violation of the classical bound $S_C$ indicates the need for alternative theories other than LHVMs, such as quantum theory. For general no signalling (NS) theories \cite{prbox}, denote the corresponded upper bound as $S_{NS} = 4$. It is straightforward to see that $S_{NS} \geq S_Q\geq S_C$.

\begin{figure*}[thb]
\centering
% Requires \usepackage{graphicx}
\resizebox{14cm}{!}{\includegraphics[scale=1]{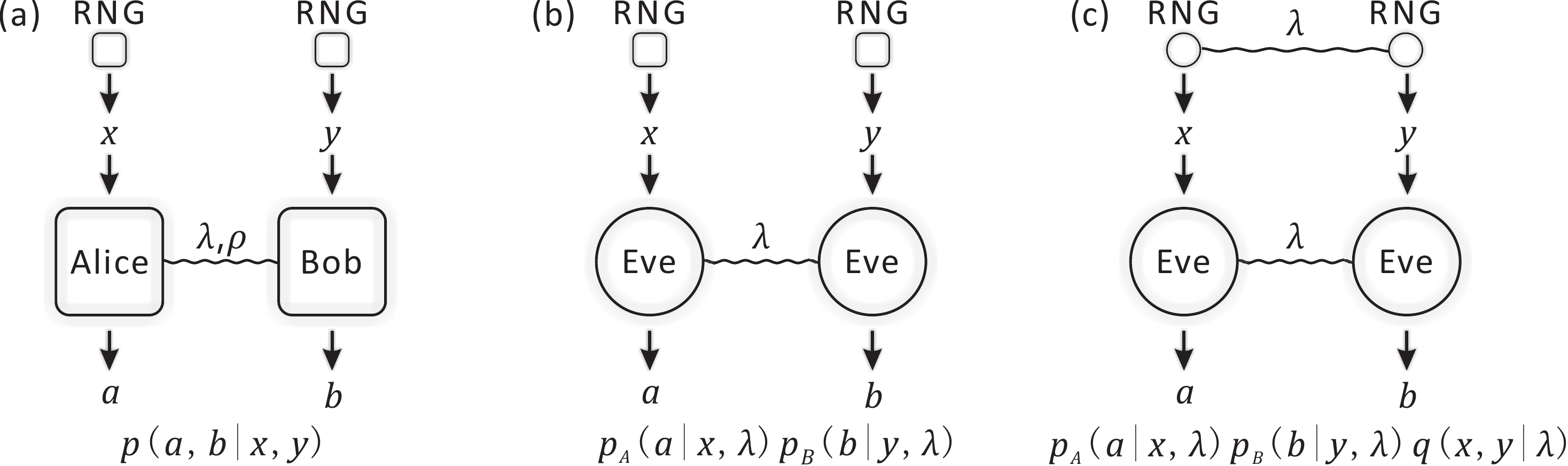}}
\caption{Bell tests in the bipartite scenario. (a) The inputs of Alice and Bob, $x$ and $y$,  are decided by perfect random number generators (RNGs), which produce uniformly distributed random numbers; (b) The measurement devices are controlled by an adversary Eve through local hidden variables $\lambda$; (c) The input random numbers are also controlled by the same local hidden variable $\lambda$, which is accessible to Eve.}\label{Fig:BellTest}
\end{figure*}

The violation of Bell's inequality not only acts as a test for fundamental laws of physics, but has varieties of  applications in modern quantum information tasks. For instance, observing violations of Bell's inequalities can be applied in device independent tasks, such as quantum key distribution \cite{Mayers98, acin06,masanes2011secure,Vazirani14}, randomness amplification \cite{colbeck2012free, gallego2013full, Dhara14} and generation \cite{Colbeck11, Fehr13, Pironio13, Vazirani12}, entanglement quantification \cite{Moroder13}, and dimension witness \cite{Brunner08}.
Security proofs of these tasks are generally independent of the realization devices or correctness of quantum theory, but relies on violating a Bell's inequality. For instance, consider the devices of Alice and Bob as black boxes. In this case, assume, in the worse scenario, that an adversary Eve, instead of Alice and Bob, performs measurements as shown in Fig.~\ref{Fig:BellTest}(b). Because the two parties are space-like separated, the probability distribution generated in this way is always within the scope of LHVMs, that is, $p(a,b|x,y) = p(a|x,\lambda)p(b|y,\lambda)$, where $\lambda$ is a hidden variable that is controlled by Eve. Therefore, Eve cannot fake a violation of any Bell tests, which intuitively explains the security of the device independent tasks.

Since the first experiment in the early 1980s \cite{Aspect1982PhysRevLett.49.91}, lots of lab demonstrations of the CHSH inequality have been presented. These experiment results show explicit violations of the LHVMs bound $S_C$, and meanwhile, suffer from a few technical and inherent loopholes, which might invalidate the conclusions. Two well-known technical obstacles are due to the locality loophole and the detection efficiency loophole, which can be closed with more delicately designed experiments and developed instruments
in varieties of experiment systems, including optic systems \cite{Christensen13,giustina2013bell},  superconducting systems \cite{ansmann2009violation}, ionic systems \cite{rowe2001experimental}, and atomic systems \cite{hofmann2012heralded}.
In contrast to the technical loopholes, there also exists an inherent loophole that cannot be closed completely in any Bell test --- the input settings may not be chosen randomly. In the worst case, the inputs can be all predetermined, which makes it possible to violate the Bell inequalities even with LHVMs. In this case, witnessing a violation of a Bell's inequality does not imply the demand for non-LHVM theories and such Bell test cannot be used for the device independent tasks either. On the other hand, without the quantum theory or violation of Bell's inequalities, one cannot get provable randomness. Therefore, the assumption of true input randomness are indispensable in Bell tests because one cannot prove or disprove its existence.

Practically, the case of not fully random input settings corresponds to the scenario where the input settings are partially controlled by an adversary Eve, who wants to convince Alice and Bob a violation of Bell's inequality with classical settings. In this case, Eve is able to \emph{simultaneously} control the input settings and measurement devices, as shown in Fig.~\ref{Fig:BellTest}(c). We model the imperfect randomness by assuming that the input settings $x$ and $y$ are chosen according to some probability distribution $q(x,y|\lambda)$, conditioned on the same local variable $\lambda$ which is available to the adversary Eve. Now, the probability distribution $p(a,b|x,y)$ of LHVMs are defined by
\begin{equation}\label{eq:randomness}
p(a,b|x,y) = \frac{\sum_{\lambda} p_A(a|x,\lambda)p_B(b|y,\lambda)q(x,y|\lambda)q(\lambda)}{q(x,y)},
\end{equation}
where $q(\lambda)$ is the prior  probability distribution of $\lambda$, and $q(x,y)= \sum_\lambda q(x,y|\lambda)q(\lambda)$ is the observed average probability of the input settings $x$ and $y$. Notice that $q(\lambda)$ is normalized by restricting $\sum_\lambda q(\lambda) = 1$.
Now, the CHSH $S$ value  under the classical strategy given in Fig.~\ref{Fig:BellTest}(c)  can be rewritten according to
\begin{equation}\label{Eq:BellFree}
  S = 4\sum_{\lambda}\sum_{a,b,x,y} (-1)^{a\oplus b + x\cdot y}p_A(a|x,\lambda)p_B(b|y,\lambda)q(x,y|\lambda)q(\lambda),
\end{equation}
where we additionally require the observed probability of choosing $x$ and $y$ to be uniform, that is, $q(x,y)=1/4, \forall x, y$.

Notice that, in the extreme (deterministic) case where $q(x,y|\lambda) = 0$ or $1$ for all $x$, $y$, the local hidden variable $\lambda$ deterministically controls the input settings. Then Eve is able to violate Bell tests to an arbitrary value with LHVMs. On the other hand, if Eve has no control of the input settings where $q(x,y|\lambda) = 1/4$ for all $x$, $y$, she cannot fake a violation at all.
Therefore, a meaningful question to ask is how one can assure that a violation of the CHSH inequality is not caused by Eve's attack on imperfect input randomness. That is, we want to know what the requirement of the input randomness is to guarantee that an observed violation truly stems from quantum effects. %In the following, we first introduce the quantification of input randomness and review previous works on this question in Section \ref{Sec:Randomness}. Then we study a simplified case to gain the intuition behind Eve's optimal strategy in Section \ref{Sec:single}. Finally, we investigate the randomness requirement of the CHSH test and conclude our result in Section \ref{Sec:Result}. %By comparing all previous and our results as listed in Table~\ref{table:Violation}, we conclude that the most powerful resource to fake a CHSH violation is the correlations between multiple runs.

%Therefore, here in this paper, we investigate the randomness requirement to certify a correct CHSH violation.

\subsection{Randomness requirement in Bell test}
Let us start with quantifying the input randomness. Here, we make use of the randomness parameter $P$ adopted in Ref.~\cite{Koh12} to fulfill such an attempt, other tools such as the Santha-Vazirani source \cite{santha1986generating} may work similarly. The parameter $P$ is defined to be the maximum probability of choosing the inputs conditioned on the hidden variable $\lambda$,
\begin{equation}\label{eq:randomness}
P = \max_{x,y,\lambda}q(x,y|\lambda).
\end{equation}
With this definition, the larger $P$ is, the less input randomness, the more information about the inputs Eve has, and the easier for her to fake a quantum violation with LHVMs. In the CHSH test, $P$ takes values in the regime of $[1/4,1]$. When $P=1$, it represents the case that Eve has whole information of Alice and Bob's inputs, that is, Eve can always correctly infer the values of $x$ and $y$ by accessing the local hidden variable $\lambda$. When $P = 1/4$, it corresponds to the case of complete randomness, where the adversary have no additional information on the inputs compared to naive guess. Note that the definition of $P$ essentially follows the min-entropy, which is widely used to quantify randomness of a random variable $X$ in information theory, $H_{min}=-\log\left[\max_x prob(X=x)\right]$.

Intuitively, given complete randomness where $P = 1/4$, the value $S$ with LHVMs are bounded by $S_C$ as shown in Eq.~\eqref{eq:Bell}; while given the most dependent (on $\lambda$) randomness where $P=1$, the value $S$ with LHVMs could reach the mathematical maximum, $S_{NS}$ in the CHSH test. Then it is interesting to check the maximal $S$ value for $P\in(1/4,1)$ with LHVMs. In this work, we are interested in when the adversary can fake a quantum violation given certain randomness $P$. We thus exam the lower bound $P_Q$ of $P$ such that the Bell test result can reach the quantum bound $S_{Q}$ with an optimal LHVM. This lower bound $P_Q$ puts a minimal randomness requirement in a Bell test experiment. Only if the freedom of choosing inputs satisfies $P<P_Q$, can one claim that the Bell test is free of the randomness loophole.

Recently, lots of efforts have been spent on investigating such requirement of randomness needed to guarantee the correctness of Bell tests  \cite{Hall10,Barrett11,Hall11,Koh12,Pope13,Thinh13,putz14}. These works analyze under different conditions. One condition is about whether the input settings of the same party are dependent or not  in different runs. We call it \emph{single run}, referring to the case that the input settings of Alice (Bob) are correlated for different runs, and \emph{multiple run} referring to otherwise. The other condition is about whether the random inputs of Alice and Bob are correlated or not. Conditioned on these different assumptions of the input randomness, the lower bound $P_Q$ that allows LHVMs to saturate the quantum bound $S_Q$ in the CHSH Bell test is summarized in Table~\ref{table:Violation}.

\begin{table}[hbt]
\centering
\caption{The lower bound for randomness parameter $P$ defined in Eq~\eqref{eq:randomness} that allows the CHSH value $S$, defined in Eq.~\eqref{Eq:BellFree}, to reach the quantum bound $S_Q$ by LHVMs in the CHSH test under different conditions.}
\begin{tabular}{ccc}
  \hline
  % after \\: \hline or \cline{col1-col2} \cline{col3-col4} \dots
  &Correlated inputs& Uncorrelated inputs\\
   \hline
  Single Run&0.285 \cite{Hall10,Koh12}&0.354 \cite{Koh12}\\
  Multiple Run&0.258 \cite{Pope13}&$\leq0.264$ (Our Work)\\
  \hline
\end{tabular}
\label{table:Violation}
\end{table}

In the single run scenario, the optimal strategies for Eve reach $S = 24P-4$ and $S = 8P$ in the case that Alice's and Bob's input settings are correlated and uncorrelated, respectively \cite{Hall10,Koh12}. To achieve the maximum quantum violation $S_Q = 2\sqrt{2}$, the critical randomness requirement is shown in Table~\ref{table:Violation}. It is worth mentioning that if one has randomness $P \geq P_{NS} = 1/3$ and $P\geq P_{NS} = 1/2$ for the case of correlated and uncorrelated inputs, respectively, Eve is able to recover arbitrary NS correlations.

In a more realistic scenario, the multiple run case, the input settings of Alice (Bob) are dependent in different runs. Now, suppose the inputs may correlate for each  $N$ sequent runs, where $N = 1$ stands for the single run case, and $N>1$ for the multiple run case. For each unit of $N$ runs, denote $x_j$ ($y_j$) and $a_j$ ($b_j$) to be the input and output of Alice (Bob) for the $j$th run, where $j = 1, 2, \dots, N$, respectively. In the multiple run scenario,  correlations of the inputs of each $N$ runs can be represented by
\begin{equation}\label{}
q(x_1, x_2, \dots, x_N, y_1, y_2, \dots, y_N|\lambda)¡£
\end{equation}
Therefore, similar to the definition of Eq.~\eqref{Eq:BellFree}, the $S$ value with LHVMs in the multiple run case can be defined by
\begin{equation}\label{Eq:multipleRun}
\begin{aligned}
S = \frac{4}{N}\sum_{j=1}^{N}\sum_\lambda\sum_{a_j, b_j, x_j, y_j} (-1)^{a_j\oplus b_j + x_j\cdot y_j}
p_A(a_j|x_j,\lambda)p_B(b_j|y_j,\lambda)q(\mathbf{x},\mathbf{y}|\lambda)q(\lambda),
\end{aligned}
\end{equation}
where the index $j$ denotes the $j$th run, and $\mathbf{x} = (x_1, x_2, \dots, x_N)$, $\mathbf{y} = (y_1, y_2, \dots, y_N)$. Notice that we only consider the  correlations of inputs in the unit of $N$ runs, which is not the total number of runs in experiment. To get an accurate estimation of the $S$  value defined in Eq.~\eqref{Eq:multipleRun}, one also need to perform the $N$ runs multiple times similar to the single run case.

In the multiple run scenario, as an extension of Eq.~\eqref{eq:randomness}, the input randomness parameter is defined according to
\begin{equation}\label{eq:randomnessM}
P = \left(\max_{\mathbf{x},\mathbf{y},\lambda}q(\mathbf{x},\mathbf{y}|\lambda)\right)^{1/N}.
\end{equation}
It is quite straightforward that the adversary is  easier to fake a violation of a Bell test with LHVMs with increasing number of correlation $N$ of the inputs. This is because the adversary can take advantage of additional dependence of the inputs in different runs. It has been shown that with randomness $P \geq P_{Q} \approx 0.258$, Eve is able to fake the maximum quantum violation $S_Q$ \cite{Pope13} with the number of input correlation $N$ goes to infinity. This result \cite{Pope13} lower bounds $P_Q$ for all finite $N$, and thus puts a very strict requirement on the input randomness  to guarantee a faithful CHSH test.

A meaningful remaining question is thus to consider the multiple run but uncorrelated scenario. As all Bell experiments must run many times to sample the probability distribution, it is reasonable and also practical to consider a joint attack by Eve. On the other hand, the uncorrelated assumption is also reasonable when the inputs of Alice and Bob are independent even conditioned on $\lambda$, that is, $q_A(x|\lambda,y) = q_A(x|\lambda)$ and $q_B(y|\lambda,x) = q_B(y|\lambda)$. Equivalently, the probability of the inputs are required to be factorizable,
\begin{equation}\label{Eq:Uncorrelated}
  q(x,y|\lambda) = q_A(x|\lambda)q_B(y|\lambda).
\end{equation}
This factorizable (uncorrelated) condition constrains the power of Eve in controlling or inferring the inputs of Alice and Bob. A general distribution $q(x,y|\lambda)$  requires Eve to jointly control the instruments that Alice and Bob use to generate random inputs. In the case when the experiment instruments of Alice and Bob are manufactured independently or the inputs are determined by sources causally disconnected from each other, such as cosmic photons \cite{Gallicchio14}, the inputs $x$ and $y$ can be assumed to be independent to each other conditioned on the hidden variable $\lambda$. That is, Eve can only control each of the input settings independently according to Eq.~\eqref{Eq:Uncorrelated}.

In the multiple run and uncorrelated scenario, the $S$ value with LHVMs is defined by
\begin{equation}\label{Eq:BellFinal}
  S = \frac{4}{N}\sum_{j=1}^{N}\sum_\lambda\sum_{a_j, b_j, x_j, y_j} (-1)^{a_j\oplus b_j + x_j\cdot y_j}p_A(a_j|x_j,\lambda)p_B(b_j|y_j,\lambda)q_A(\mathbf{x}|\lambda)q_B(\mathbf{y}|\lambda)q(\lambda).
\end{equation}
Our purpose is to investigate the optimal attack of CHSH test with restricted randomness input $P$.
Therefore we want to maximize Eq.~\eqref{Eq:BellFinal} with the constraint of Eq.~\eqref{eq:randomnessM}. In particular, we are interested to see when this maximal value can reach $S_Q=2\sqrt2$.

\section{Single run case}\label{Sec:single}
We first review the optimal strategy in the single run scenario \cite{Koh12} to get an intuition behind the optimal attack of the adversary. Hereafter, we mainly focus on the scenario that Alice and Bob's inputs are uncorrrelated as defined in Eq.~\eqref{Eq:Uncorrelated}. Thus, what we want is  to maximize  the $S$ value,
\begin{equation}\label{Eq:}
  S = \sum_\lambda q(\lambda) S_\lambda,
\end{equation}
where
\begin{equation}\label{Eq:Slambda}
  S_\lambda = 4\sum_{a,b,x,y} (-1)^{a\oplus b + x\cdot y}p_A(a|x,\lambda)p_B(b|y,\lambda)q_A(x|\lambda)q_B(y|\lambda),
\end{equation}
with restricted randomness $P$, given in Eq.~\eqref{eq:randomness}.

Since any probabilistic LHVM, that is, $p_A(a|x,\lambda)p_B(b|y,\lambda)$, could be realized by a convex combination of deterministic ones \cite{Fine82}, it is therefore sufficient to only consider deterministic LHVMs. Due to the symmetric definition of the CHSH inequality, we only need to consider a specific strategy of $p_A(0|x, \lambda) = p_B(0|y, \lambda) = 1$, and $p_A(1|x, \lambda) = p_B(1|y, \lambda) = 0$ for some given $\lambda$, and all the other ones works similarly. By substituting the special strategy into Eq.~\eqref{Eq:Slambda}, we get
\begin{equation}\label{}
  S_\lambda = 4\left[q_A(0)q_B(0) + q_A(0)q_B(1) + q_A(1)q_B(0)- q_A(1)q_B(1)\right].
\end{equation}
Suppose $P_A = \max_{x,\lambda}\{q_A(x|\lambda)\}$, $P_B = \max_{y,\lambda}\{q_B(x|\lambda)\}$, and hence $P = P_AP_B$, $S_\lambda$ can be maximized to
\begin{equation}\label{}
  S_\lambda \leq  4\left[1 - 2(1-P_A)(1-P_B)\right] = 8(P_A + P_B - P) - 4.
\end{equation}
Given $P$, $S_\lambda$ is upper bounded by
\begin{equation}\label{}
  S_\lambda \leq 8P,
\end{equation}
where the equality holds when $P_B = 1/2$ and $P_A = 2P$. Thus, the optimal strategy with LHVMs is $S= 8P$.
Note that, when the input settings are fully random, $P=1/4$, the optimal strategy of LHVMs  is $S = 2$, which recovers the original LHVMs bound $S_C$. It is easy to see that, to saturate the quantum bound $S_Q = 2\sqrt{2}$, the randomness should be at least $P_Q = S_Q/8 = \sqrt{2}/4\approx0.354$, as shown in Table \ref{table:Violation}.

In the single run case, we only need to consider one specific deterministic strategy of $p(a,b|x,y)$ due to the symmetric definition of the CHSH inequality. We also take advantage of this property in the derivation of the multiple run case. In addition, we can see that the optimal strategy of LHVMs is to choose $x$ or $y$ fully randomly and the other one as biased as possible. This biased optimal strategy is counter-intuitive since the adversary do not need to control the inputs of both parties, but only those of one party.
We show that this counter-intuitive feature does not hold in the optimal strategy in the multiple run case.

\section{Multiple run case}\label{Sec:Result}
Now we consider the multiple run scenario with uncorrelated input randomness. That is, optimizing Eve's LHVM strategy Eq.~\eqref{Eq:BellFinal} with constraints defined in Eq.~\eqref{eq:randomnessM}. Similar to the single run case, from the symmetric argument, we can also solely consider one specific deterministic strategy, that is, $p_A(0|x, \lambda) = p_B(0|y, \lambda) = 1$, and $p_A(1|x, \lambda) = p_B(1|y, \lambda) = 0$.  Given the probabilities of Alice's and Bob's inputs, $q_A(\mathbf{x}|\lambda)$, $q_B(\mathbf{y}|\lambda)$, the $S$ value, defined in Eq.~\eqref{Eq:BellFinal}, for this specific strategy labeled with $\lambda$ is given by
\begin{equation}\label{Eq:CHSHmultiple}
  S_\lambda = 4\left(1 - \frac{2}{N}\sum_{\mathbf{x},\mathbf{y}\in\{0,1\}^N}\mathbf{x}\cdot\mathbf{y}q_A(\mathbf{x}|\lambda)q_B(\mathbf{y}|\lambda)\right),
\end{equation}
where $\cdot$ is vector inner product.
Our attempt is therefore to maximize Eq.~\eqref{Eq:CHSHmultiple} with constraints
\begin{equation}\label{Eq:constraintsx}
q_A(\mathbf{x}|\lambda)q_B(\mathbf{y}|\lambda) \leq P^N,
\end{equation}
for all $q_A(\mathbf{x}|\lambda)$ and $q_B(\mathbf{y}|\lambda)$.

Since in the single run scenario, the optimal strategy requires only one party with biased conditional probability, we first analyze the case with only Alice's inputs biased and Bob's inputs uniformly distributed. Then we investigate the case where the inputs of both parties are biased. We can see that the one party biased strategy is not optimal in the multiple run case, even when $N=2$.

\subsection{One party Biased}
In the case when Eve only (partially) controls one of the inputs, say Alice's, the probability of Alice's input string $q_A(\mathbf{x}|\lambda)$ is biased and Bob's input string is uniformly distributed, that is,
\begin{equation}\label{eq:1biasedqB}
  q_B(\mathbf{y}|\lambda) = \frac{1}{2^N}.
\end{equation}
The randomness is characterized by Eq.~\eqref{eq:randomnessM}, after substituting Eq.~\eqref{eq:1biasedqB}, %$P^N = \max_{\lambda,\mathbf{x}}{q_A(\mathbf{x}|\lambda)}/{2^N}$, that is
\begin{equation}\label{Eq:Pm0}
  P = \frac{P_A}{2},
\end{equation}
where $P_A$ is defined by $P_A = \max_{\lambda,\mathbf{x}}q_A(\mathbf{x}|\lambda)^{1/N}$. Then, the $S$ value, defined in Eq.~\eqref{Eq:CHSHmultiple}, becomes
\begin{equation}\label{}
  S_\lambda = 4\left(1 - \frac{1}{N2^{N-1}}\sum_{\mathbf{x},\mathbf{y}\in\{0,1\}^N}\mathbf{x}\cdot\mathbf{y}q_A(\mathbf{x}|\lambda)\right).
\end{equation}

Denote the number of bit $1$ in an $N$ string $\mathbf{a}$ as $L_1(\mathbf{a})$. Given the number of bit $1$ in $\mathbf{x}$, $k_A = L_1(\mathbf{x})$,  we can sum over $\mathbf{y}$,
\begin{equation}\label{}
  \sum_{\mathbf{y}\in\{0,1\}^N}\mathbf{x}\cdot\mathbf{y} = \sum_{j = 1}^{k_A} 2^{N-k_A}j{k_A\choose j} = 2^{N-1}k_A,
\end{equation}
and group the summation of $\mathbf{x}$ according to $k_A$,
\begin{equation}\label{}
  S_\lambda = 4\left(1 - \frac{1}{N}\sum_{k_A= 0}^{N}\sum_{L_1(\mathbf{x}) = k_A}q_{A}(\mathbf{x}|\lambda)k_A\right),
\end{equation}
One only need to consider the LHVMs whose probabilities of $q_A(\mathbf{x}|\lambda)$ with the same $k_A$ are the same. Otherwise, we can always take an average of $q_A(\mathbf{x}|\lambda)$ with the same $k_A$ without increasing the randomness parameter $P$. Thus we can rewrite $S_\lambda$ as
\begin{equation}\label{Eq:optBiased}
  S_\lambda = 4\left(1 - \frac{1}{N}\sum_{k_A= 0}^{N}q_{k_A}(\mathbf{x}|\lambda){N\choose k_A}k_A\right),
\end{equation}
with normalization requirement
\begin{equation}\label{Eq:conBiased}
  \sum_{k_A= 0}^{N}q_{k_A}(\mathbf{x}|\lambda){N\choose k_A} = 1,
\end{equation}
and constraints defined in Eq.~\eqref{Eq:constraintsx}.

%It is worth mention that, the redistribution of the probabilities $q(x|\lambda)$ according to $k_A$ here do not influence the optimization; while this process in a more general case that both Alice and Bob's inputs are biased will not be valid.

The optimization of Eq.~\eqref{Eq:optBiased} can be solved efficiently via linear programming. Intuitively, to maximize $S_\lambda$ with given $P$ defined in Eq.~\eqref{Eq:Pm0},  we can simply assign $q_{k_A}(\mathbf{x}|\lambda)$ that has large $k_A$ be 0 and that has small $k_A$  be $(2P)^N$. Suppose there exists an integer $l$ such that $P$ can be written as
\begin{equation}\label{Eq:PP}
  P = \frac{1}{2}\left[\sum_{k_A= 0}^{l}{N\choose k_A}\right]^{-1/N},
\end{equation}
then, Eq.~\eqref{Eq:optBiased} can be rewritten as
\begin{equation}\label{PP0}
  S = 4\left[1 - \frac{1}{N}\sum_{k_A= 0}^{N}\frac{1}{2}\left(\sum_{k_A= 0}^{l}{N\choose k_A}\right)^{-1/N}{N\choose k_A}k_A\right].
\end{equation}
For a general case where an integer $l$ cannot be found satisfying Eq.~\eqref{Eq:PP}, we can first find an integer $l$ such that,
\begin{equation}\label{Eq:PP1}
  \frac{1}{2}\left[\sum_{k_A= 0}^{l + 1}{N\choose k_A}\right]^{-1/N}< P \leq \frac{1}{2}\left[\sum_{k_A= 0}^{l}{N\choose k_A}\right]^{-1/N}.
\end{equation}
Then we can assign $q_{k_A}(\mathbf{x}|\lambda)$ to be
\begin{equation}\label{Eq:PP2}
  q_{k_A}(\mathbf{x}|\lambda) = \left\{
  \begin{array}{cc}
    (2P)^N & k_A\leq l \\
    \frac{\left[1 - \sum_{k_A= 0}^{l}(2P)^N{N\choose k_A}\right]^{-1/N}}{{N\choose l+1}} & k_A = l + 1 \\
    0 & k_A > l+1
  \end{array}
  \right.
\end{equation}

For finite $N$, one can numerically solve the problem according to Eq.~\eqref{Eq:PP2}. As shown in Fig.~\ref{Fig:OneBiased}, the optimal strategy for $N = 1, 10, 100$ are calculated. With increasing $N$, the optimal value $S$ increases and hence a valid Bell test requires a smaller $P$ (more randomness).

\begin{figure}[thb]
  \centering
  % Requires \usepackage{graphicx}
  \resizebox{12cm}{!}{\includegraphics{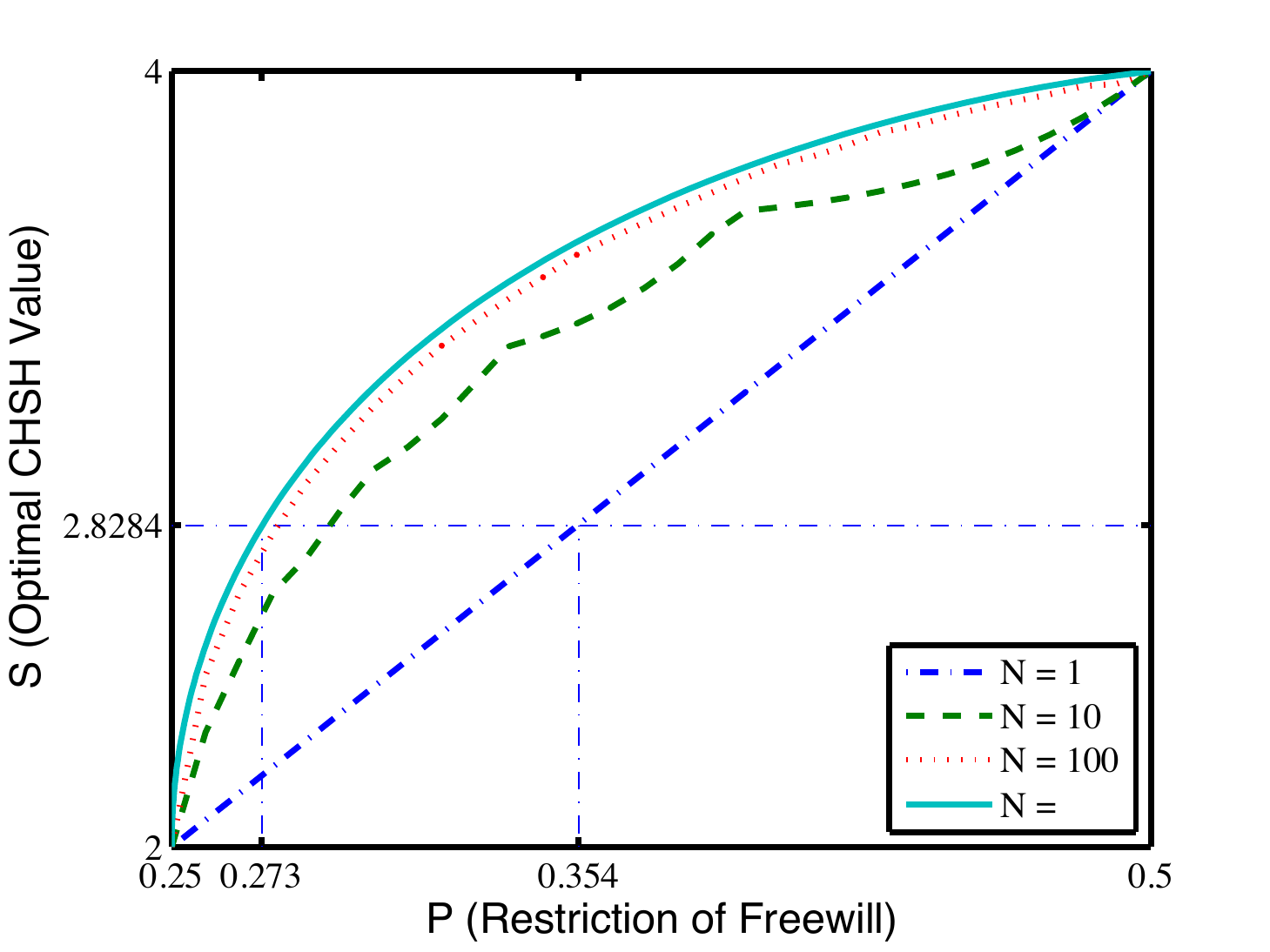}}
  \caption{(Color online) Optimal values of the CHSH test for different randomness $P$ with various rounds $N$ based on only Alice's inputs biased when conditioned on the hidden variable $\lambda$. The solid line is the optimal strategy for $N\rightarrow\infty$, which upper bounds all finite $N$ rounds.  Note that the curve is not smooth for finite runs $N$ because the optimal strategy $q_{k_A}$ defined in Eq.~\eqref{Eq:PP2} jumps on $l$.  With $N$ grows larger, the curve tends to be smoother.  }\label{Fig:OneBiased}
\end{figure}

In the case of $N\rightarrow\infty$, we can derive an analytic bound for all finite $N$ strategies. By following the technique used in Ref.~\cite{Pope13}, we first  estimate $P$ defined in Eq.~\eqref{Eq:PP1} with the limit of $N\rightarrow\infty$ by,
\begin{equation}\label{Eq:PBound}
  \lim_{N\rightarrow\infty}P = \frac{1}{2}\bar{l}^{\bar{l}}(1-\bar{l})^{1-\bar{l}},
\end{equation}
where $\bar{l} = l/N$, and similarly $S$ by,
\begin{equation}\label{Eq:Sinfty}
    \lim_{N\rightarrow\infty}S =  4-4\bar{l}.
\end{equation}
Then we can substitute Eq.~\eqref{Eq:Sinfty} into Eq.~\eqref{Eq:PBound}, and get a relation between optimized  $S$  value and the corresponding randomness parameter $P$,
\begin{equation}\label{Eq:bound}
  P=\frac{1}{2}\left(\frac{4-S}{4}\right)^{(4-S)/4}\left(\frac{S}{4}\right)^{(S/4)}.
\end{equation}

By substituting the quantum bound $S_Q = 2\sqrt{2}$ into Eq.~\eqref{Eq:bound}, we can get the critical randomness requirement to be $P_Q \approx 0.273$. Note that, although Eve only control Alice's input settings, she can still fake a quantum violation with sufficiently low randomness, which is lower than the single run case even when Alice's and Bob's inputs are correlated. Thus we show that the randomness is more demanded for the conditions of multiple/single run compared to the correlation between Alice and Bob.

%The technique of deriving Eq.~\eqref{Eq:bound} is similar to the one used in , we thus leave the details and refer to Ref.~\cite{Pope13} for rigorous derivation.

\subsection{Both parties biased}
Now we consider a general attack, where Eve controls both inputs of Alice and Bob. In this case, we need to optimize Eq.~\eqref{Eq:CHSHmultiple} with constraints defined in Eq.~\eqref{Eq:constraintsx}.
Similarly, we group the summation of $\mathbf{x}$ and $\mathbf{y}$ according to the corresponded number of bit $1$, $k_A = L_1(\mathbf{x})$ and $k_B = L_1(\mathbf{y})$,
\begin{equation}\label{Eq:CHSH2}
  S_\lambda = 4\left(1 - \frac{2}{N}\sum_{k_A,k_B = 0}^{N}\sum_{L_1(x) = k_A}\sum_{L_1(y) = k_B}q_{A}(\mathbf{x}|\lambda)q_{B}(\mathbf{y}|\lambda)\mathbf{x}\cdot\mathbf{y}\right).
\end{equation}
Now, if we assume that $q_{A}(\mathbf{x}|\lambda)$ ($q_{B}(\mathbf{y}|\lambda)$) has the same value for equal $k_A$ ($k_B$), we can sum over $\mathbf{x}$ and $\mathbf{y}$  for given $k_A$ and $k_B$,
\begin{eqnarray}
  \sum_{k_A,k_B}\mathbf{x}\cdot\mathbf{y} &=& {N \choose k_A}\sum_{j = \max\{1, k_A + k_B - N\}}^{\min\{k_A,k_B\}}j{k_A \choose j}{N - k_A \choose k_B - j}\nonumber \\
  &=& {N \choose k_A}k_A{N-1 \choose k_B-1}\nonumber \\
    &=& \frac{k_Ak_B}{N}{N \choose k_A}{N \choose k_B}.
\end{eqnarray}
We can then get the $S$ value,
\begin{equation}\label{Eq:slambda}
  S_\lambda = 4\left(1 - \frac{2}{N^2}\sum_{k_A,k_B = 0}^{N}q_{k_A}(\mathbf{x}|\lambda){N \choose k_A}q_{k_B}(\mathbf{y}|\lambda){N \choose k_B}{k_Ak_B}\right),
\end{equation}
with the constraints of $q_A(\mathbf{x}|\lambda)$ and $q_B(\mathbf{y}|\lambda)$,
\begin{eqnarray}\label{Eq:constrainty}
\sum_{k_A = 1}^{N} q_{k_A}(\mathbf{x}|\lambda){N\choose k_A} &=& 1,\nonumber \\
\sum_{k_B = 1}^{N} q_{k_B}(\mathbf{y}|\lambda){N\choose k_B} &=& 1.
\end{eqnarray}

It is worth mentioning that the assumption that $q_{A}(\mathbf{x}|\lambda)$ ($q_{B}(\mathbf{y}|\lambda)$) takes the same value for equal $k_A$ ($k_B$) is not obviously equivalent to the original optimization problem defined in Eq.~\eqref{Eq:CHSH2}. We thus take this step as an additional assumption, and conjecture it to be true for certain cases.

The problem defined in Eq.~\eqref{Eq:slambda} with constraints of Eq.~\eqref{Eq:constrainty} cannot be solved by linear programming directly, as for the nonlinear terms $q_{k_A}(\mathbf{x}|\lambda)q_{k_B}(\mathbf{y}|\lambda)$.  However, we can still optimize it with similar methods used in the previous section.
 Define the maximum randomness on each side
\begin{eqnarray}\label{}
    P_A &=& [\max_{\lambda,\mathbf{x}}q_{k_A}(\mathbf{x}|\lambda)]^{1/N}, \nonumber\\
    P_B &=& [\max_{\lambda,\mathbf{y}} q_{k_B}(\mathbf{y}|\lambda)]^{1/N}.
\end{eqnarray}
To maximize $S_\lambda$, we can first optimize Alice's side, $q_{k_A}$, and then Bob's side $q_{k_B}$. By doing so, it is not hard to see that $S_\lambda$ is maximized by assigning $q_{k_A}$ that has small number of $k_A$ to be $P_A$ and that has large number of $k_A$ to be 0, and similarly for $q_{k_B}$.
Thus we need to first find $l_A$ and $l_B$ for Alice and Bob, such that,
\begin{eqnarray}\label{Eq:PAB}
  \left[\sum_{k_A= 0}^{l_A + 1}{N\choose k_A}\right]^{-1/N}&< &P_A \leq \left[\sum_{k_A= 0}^{l_A}{N\choose k_A}\right]^{-1/N}\nonumber\\
    \left[\sum_{k_B= 0}^{l_B + 1}{N\choose k_B}\right]^{-1/N}&< &P_B \leq \left[\sum_{k_B= 0}^{l_B}{N\choose k_B}\right]^{-1/N}.
\end{eqnarray}
Then we can assign $q_{k_A}(\mathbf{x}|\lambda)$ and $q_{k_B}(\mathbf{y}|\lambda)$ to be
\begin{eqnarray}\label{Eq:qqqq}
  q_{k_A}(\mathbf{x}|\lambda) &=& \left\{
  \begin{array}{cc}
    (P_A)^N & k_A\leq l_A \\
    \frac{\left[1 - \sum_{k_A= 0}^{l_A}P_A^N{N\choose k_A}\right]^{-1/N}}{{N\choose l_A+1}} & k_A = l_A + 1 \\
    0 & k_A > l_A+1
  \end{array}
  \right.\nonumber\\
      q_{k_B}(\mathbf{y}|\lambda) &=& \left\{
  \begin{array}{cc}
    (P_B)^N & k_B\leq l_B \\
    \frac{\left[1 - \sum_{k_B= 0}^{l_B}P_B^N{N\choose k_B}\right]^{-1/N}}{{N\choose l_B+1}} & k_B = l_B + 1 \\
    0 & k_B > l_B+1
  \end{array}
  \right.
\end{eqnarray}
to optimize $S_\lambda$ defined in Eq.~\eqref{Eq:slambda}.

For finite $N$, we can also numerically solve the optimization problem defined in Eq.~\eqref{Eq:slambda}. As shown in Fig.~\ref{Fig:Biased}. The value $S$ increases with the number of runs $N$, thus the strategy with infinite rounds puts a bound on the strategy with finite rounds.
\begin{figure}[htb]
  \centering
  % Requires \usepackage{graphicx}
  \resizebox{12cm}{!}{\includegraphics{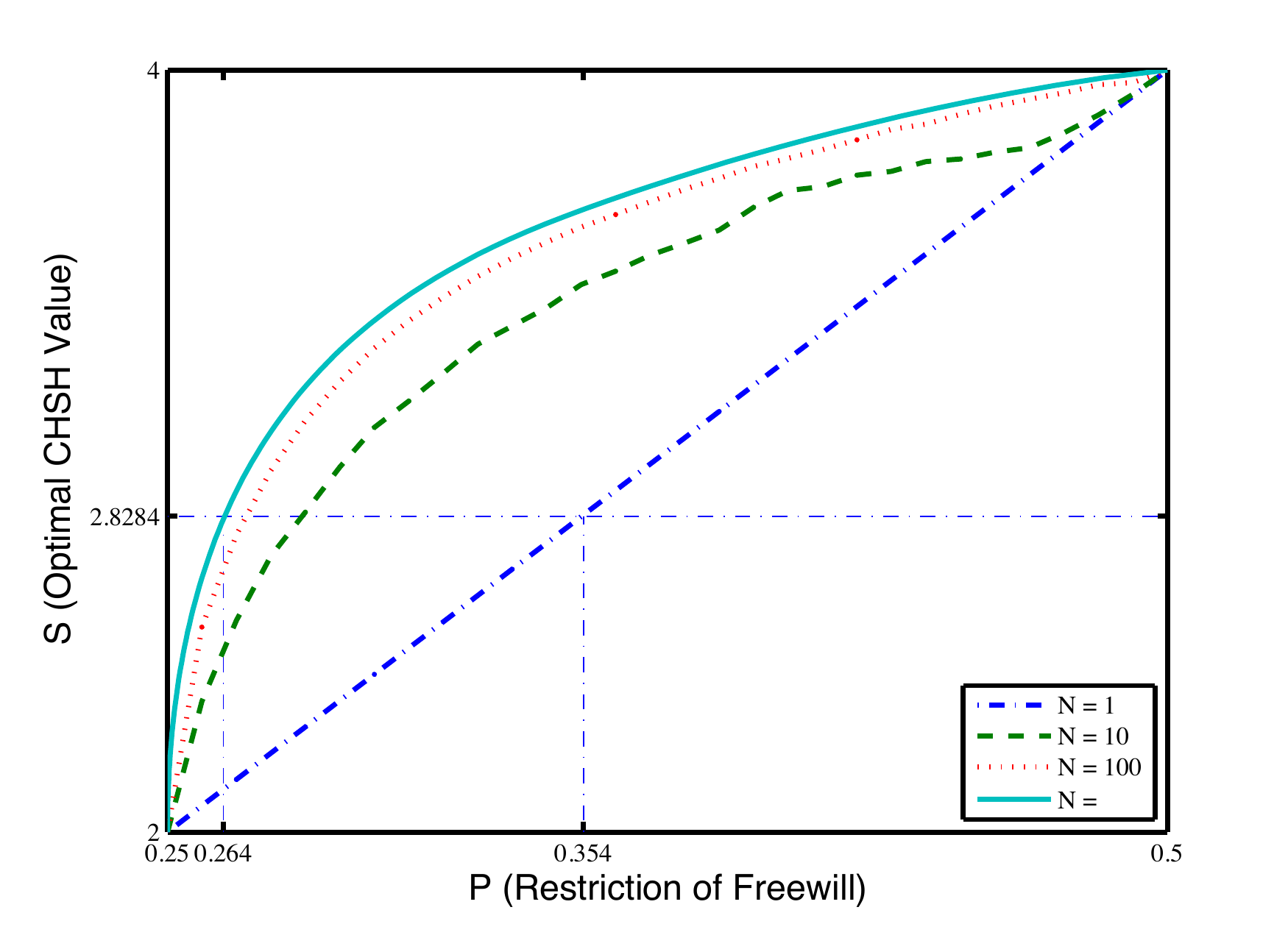}}\\
  \caption{(Color online) Possible optimal values of the CHSH test for different randomness $P$ with various rounds $N$ based on uncorrelated inputs of Alice and Bob. The solid line corresponds the strategy for $N\rightarrow\infty$, which upper bounds all finite $N$ cases. The curves are not smooth for finite $N$ as for similar reasons like in the one party biased case, and it tends to be smooth with $N\rightarrow\infty$.}\label{Fig:Biased}
\end{figure}

In the case of  $N\rightarrow\infty$, we can also find analytical relation between optimized $S$ and the corresponded $P$. Similarly, we first estimate $P_A$ and $P_B$ defined in Eq.~\eqref{Eq:PAB} with the limit of $N\rightarrow\infty$  by
\begin{eqnarray}\label{}
    \lim_{N\rightarrow\infty} P_A &=  & \bar{l}_A^{\bar{l}_A}(1-\bar{l}_A)^{1-\bar{l}_A}, \nonumber\\
    \lim_{N\rightarrow\infty} P_B &=  & \bar{l}_B^{\bar{l}_B}(1-\bar{l}_B)^{1-\bar{l}_B},
\end{eqnarray}
where $\bar{l}_A = l_A/N$ and $\bar{l}_B = l_B/N$, and $S$ according to
\begin{equation}\label{}
    S =  4 - 8\bar{l}_A\bar{l}_B.
\end{equation}

As we still have to optimize over all possible $P_A$ and $P_B$ that satisfies $P_AP_B = P$, we cannot get a direct analytic formula like in Eq.~\eqref{Eq:bound}, while we can still numerically solve  and plot it in Fig.~\ref{Fig:Biased}.
To reach a maximum quantum violation $S_Q = 2\sqrt{2}$ with a LHVM, the randomness is required to be $P\geq P_Q \approx 0.264$, which is larger than the case where Eve only control's Alice's input.

\subsection{Discussion}
We take an additional assumption in the derivation of the both parties biased case, thus the obtained bound $P_Q \approx 0.264$ is still an upper bound of a general optimal attack for the case of $N$ goes to infinity. As we already know, the randomness requirement for the worst case, that is, multiple run with Alice and Bob's inputs correlated, is strictly bounded by $P_Q \approx 0.258$ \cite{Pope13}. Thus, we know that the tight $P_Q$ for the case of multiple run but Alice and Bob uncorrelated should lie in the interval of $[0.258, 0.264]$.

To gain intuition why we take the additional assumption, first notice that what we want is to minimize the average contribution of $\mathbf{x}\cdot \mathbf{y}$ in Eq.~\eqref{Eq:CHSH2}. % by assigning the probabilities, $q_A(\mathbf{x|}\lambda)$ and $q_B(\mathbf{y|}\lambda)$.
In our case, where $P$ is near 1/4, $q_A(\mathbf{x|}\lambda)$ and $q_B(\mathbf{y|}\lambda)$ can be regarded as an approximately flat distribution.
On average, the $\mathbf{x}$ ($\mathbf{y}$) contains less number of 1s will contribute more to $S$, which means we should assign the corresponded probability $q_A(\mathbf{x|}\lambda)$ ($q_A(\mathbf{y|}\lambda)$) bigger in order to maximize $S$. As $q_A(\mathbf{x|}\lambda)$ ($q_A(\mathbf{y|}\lambda)$) is upper bounded by $P_A$ ($P_B$), an intuitive optimal strategy is then to let $q_A(\mathbf{x|}\lambda)$ ($q_A(\mathbf{y|}\lambda)$) be $P_A$ ($P_B$) for $\mathbf{x}$ ($\mathbf{y}$) contains less number of 1s, and be 0 for the ones contains more number of 1s. As $q_A(\mathbf{x|}\lambda)$ ($q_A(\mathbf{y|}\lambda)$) should also satisfy the normalization condition (Eq.~\eqref{Eq:constrainty}), we can simply follow the strategy defined in Eq.~\eqref{Eq:qqqq} to realize the intuition, which on the other hand satisfies the assumption we take.
Follow the above intuition, we conjecture the assumption to be true for certain cases of $N$.  That is, for finite $N$, we conjecture it to be true when equalities are taken in Eq.~\eqref{Eq:PAB} for both $P_A$ and $P_B$.

On the other hand, we want to emphasize that for a finite $N$, the assumption will not generally hold in the optimal strategy if the equalities in Eq.~\eqref{Eq:PAB} are not fulfilled. For example, if the probability of $l_A+1$ and $l_B+1$  in Eq.~\eqref{Eq:qqqq} is not 0 but very small, we should not take all $q_A(\mathbf{x}|\lambda)$ and $q_B(\mathbf{y}|\lambda)$ equally as $q_{k_A}$ and $q_{k_B}$, especially for the case of
$L_1(\mathbf{x}) = l_A+1$ and $L_1(\mathbf{y}) = l_B+1$ , respectively. In fact, there do exists a cleverer  assignment of $q_A(\mathbf{x}|\lambda)$ and $q_B(\mathbf{y}|\lambda)$ such that only $\mathbf{x}$ and $\mathbf{y}$ that gives small $\mathbf{x}\cdot\mathbf{y}$ get probability instead of all of $\mathbf{x}$ and $\mathbf{y}$ that $L_1(\mathbf{x}) = l_A+1$ and $L_1(\mathbf{B}) = l_B+1$.
However, with increasing runs $N$, this kind of clever attack stops working as for the equalities can be more approximately satisfied with larger $N$. Therefore, we also conjecture the assumption to be true for all possible $P$ with $N$ goes to infinity.

As we can see, our obtained $P_Q\approx 0.264$ is already very close to the worst case value that is $0.258$, we can therefore conclude that the multiple run correlation is already a strong resource for the adversary, no matter whether the inputs of Alice and Bob are correlated or not.
In addition, as we know that the bound $P_Q$ for the most loose case, that is, single run and Alice Bob uncorrelated, is given to be $0.354$ \cite{Koh12}, we also suggest that the key loophole of the input randomness is the correlation between multiple runs instead of correlation of Alice and Bob.

%\section{Conclusion}
%In this work, we consider the randomness requirement of CHSH test in the multiple run scenario. By considering an adversary Eve who independently controls the input randomness of Alice and Bob, we investigate the minimum randomness requirement to guarantee that a violation of the CHSH inequality is not due to Eve's attack (LHVM).

Considering that Eve controls only Alice's input but leaves Bob's input uniformly distributed, we found the least randomness Eve need to control to fake a quantum violation is $P_Q \approx 0.273$. And the least randomness  required when controlling both Alice and Bob is $P_Q \le 0.264$. By comparing the results to the ones listed in Table.~\ref{table:Violation}, we conclude that the key randomness loophole is due to the correlation between multiple runs. As the randomness requirement which considers multiple run attack is not easy to fulfill in real experiments, we thus suggest the experiments to rule correlations of the input settings from different runs. To guarantee the securities of the device independent tasks, we also suggest that one should check whether there are correlations between random inputs from different runs.

For further research, we are interested to know whether there exists Bell inequalities that suffers less from the randomness loophole. By assuming different kinds of assumptions, the randomness requirement behaves different. For example, it is interesting to investigate the scenario where the input settings are uncorrelated with the measurement devices by assuming the manufactures are different. That is, there are two uncorrelated hidden variables in Fig.~\ref{Fig:BellTest}(c), controlling the input settings and measurement devices independently. Moreover, recently, by considering a nonzero lower bound for the input random probability $p(x,y|\lambda)$, $\mathrm{P\ddot{u}tz}$ et al. show a Bell inequality which suffers from very little randomness loophole \cite{putz14}. That is, no adversary can fake a quantum violation as long as the lower bound of $p(x,y|\lambda)$ is nonzero regardless of its upper bound $P$ defined in Eq.~\eqref{eq:randomness}. Therefore, it is interesting to investigate the multiple run randomness requirement of the CHSH inequality with additional assumptions.

\chapter{Clauser-Horne Bell test with imperfect random inputs}
This chapter investigates general randomness requirement for the Clauser-Horne (CH) inequality \cite{Yuan15CH}. We consider for different conditions. In addition, our method applies for general Bell inequalities.

\section{General randomness requirement}
In the bipartite scenario, a general Bell test involves two remotely separated parties, Alice and Bob, who receive random inputs $x$ and $y$ and produce outputs $a$ and $b$, respectively. Based on the probability distribution $\tilde{p}_{AB}(a,b|x,y)$ of the outputs conditioned on the inputs, Bell's inequality can be defined by a linear combination of $\tilde{p}_{AB}(a,b|x,y)$ according to
\begin{equation}\label{eq:Bell}
J = \sum_{a,b,x,y} \beta_{a,b}^{x,y}\tilde{p}_{AB}(a,b|x,y) \leq J_C,
\end{equation}
where $J_C$ is a bound for all  local hidden variable models (LHVMs), meaning that, any LHVM cannot violate any Bell's inequality. Now, we consider the case that the inputs are not fully random. That is, the inputs $x$ and $y$ depend on some local hidden variable, denoted as $\lambda$, as shown in Fig.~\ref{Fig:BellTest}.

\begin{figure}[thb]
\centering
\resizebox{4cm}{!}{\includegraphics[scale=1]{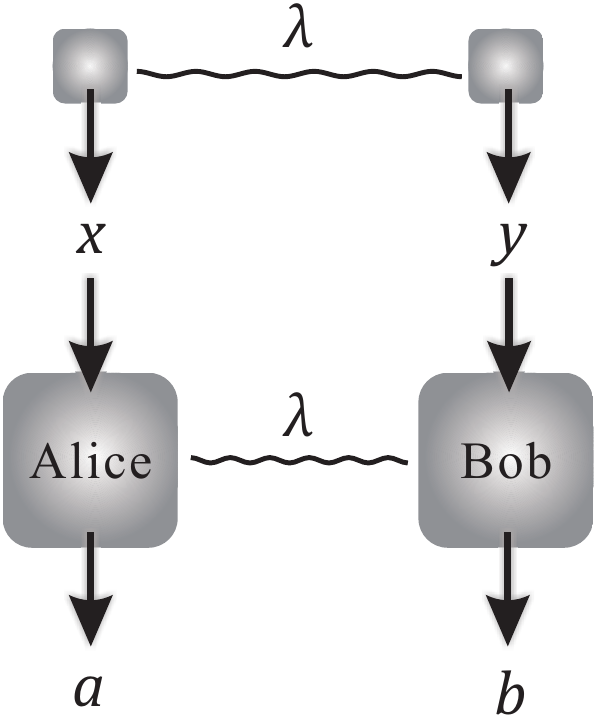}}
\caption{Bell tests in a bipartite scenario. In general, the inputs depend on some local hidden variable $\lambda$. The local hidden variables that control the inputs and the devices may be different. While, we can still denote these two local hidden variables with a single one denoted as $\lambda$.}\label{Fig:BellTest}
\end{figure}

The input randomness can be quantified by the dependence of the inputs conditioned on $\lambda$. Suppose the inputs $x$ and $y$ are chosen according to \emph{a priori} probability $p(x,y|\lambda)$, the input randomness can be measured by its upper and lower bounds,
\begin{equation}\label{eq:randomness}
\begin{aligned}
P = \max_{x,y,\lambda}p(x,y|\lambda),\\
Q = \min_{x,y,\lambda}p(x,y|\lambda).
\end{aligned}
\end{equation}
As an example, for the CH test, where the inputs are binary, the upper and lower bounds are in the range of $[1/4, 1]$ and $[0, 1/4]$, respectively. Focusing on the upper bound $P$, when it equals $1$, it represent the case that the local hidden variable $\lambda$  deterministically decides at least one input. When $P = 1/4$, this corresponds to the case that the inputs are fully random. Similarly, we can see how the lower bound $Q$ characterizes the input randomness. In many of previous works \cite{Hall10,Hall11,Koh12,Pope13, Thinh13, Yuan15CHSH}, only the upper bound $P$ is considered. It is recently noted in Ref.~\cite{putz14} that the lower bound $Q$ also plays an important role in analysis. We thus consider both the upper and lower bounds as quantifications of the input randomness.

%A more general measure for the randomness of a random variable $X$ is min-entropy, $H_{\mathrm{min}}=-\log\left[\max_x \mathrm{prob}(X=x)\right]$. It is straightforward to see that the upper and lower bounds are closely related to the min-entropy. Another popular randomness measure is defined via time-sequence dependence of random variables, as used in randomness amplification \cite{colbeck2012free}. In fact, the input randomness measure in Eq.~\eqref{eq:randomness}, where we assume the input settings from different runs are independent and identically distributed, is a special case of the Santha-Vazirani source \cite{santha1986generating}.

With binary inputs, we can consider a symmetric case where $P = 1/4 + \delta$ and $Q = 1/4 - \delta$. In other words, we can quantify the input randomness by its deviation from  a unform distribution, quantified by $\delta$,
\begin{equation}\label{Eq:deltasource}
  \delta = \max_{x,y,\lambda} \left|p(x,y|\lambda) - \frac{1}{4}\right|.
\end{equation}
Note that all our following results apply for asymmetric cases (with arbitrary $P$ and $Q$) as well.

When the input settings are determined by $p(x,y|\lambda)$, the observed probability $\tilde{p}_{AB}(a,b|x,y)$ of outputs conditioned on inputs is given by
\begin{equation}\label{Eq:pabxy}
   \tilde{p}_{AB}(a,b|x,y) = \frac{\sum_\lambda \tilde{p}_{AB}(a,b|x,y,\lambda)p(x,y|\lambda)q(\lambda)}{p(x,y)},
\end{equation}
where $q(\lambda)$ is the priori probability of $\lambda$, $p(x,y) = \sum_\lambda p(x,y|\lambda)q(\lambda)$ is the averaged probability of choosing $x$ and $y$, and $\tilde{p}_{AB}(a,b|x,y,\lambda)$ is the strategy of Alice and Bob conditioned on $\lambda$. Then, the Bell's inequality defined in Eq.~\eqref{eq:Bell} should be rephrased by
\begin{equation}\label{Eq:BellFree}
\begin{aligned}
J &= \sum_{x,y} \frac{1}{p(x,y)}\sum_\lambda\sum_{a,b} \beta_{a,b}^{x,y}\tilde{p}_{AB}(a,b|x,y,\lambda)p(x,y|\lambda)q(\lambda) \\
  &\leq J_C.
\end{aligned}
\end{equation}

In this work, we are interested in how LHVMs can fake a violation of Bell's inequality with imperfect input randomness. Thus, we can also set the strategy $\tilde{p}_{AB}(a,b|x,y,\lambda)$ of deciding the outputs based on the inputs by $\tilde{p}_A(a|x,\lambda)\tilde{p}_B(b|y,\lambda)$, and the Bell value with a LHVM is given by
\begin{equation}\label{Eq:Bellvalue}
\begin{aligned}
  J^{\mathrm{LHVM}} = \frac{1}{p(x,y)}\sum_\lambda\sum_{a,b,x,y} \beta_{a,b}^{x,y}\tilde{p}_A(a|x,\lambda)\tilde{p}_B(b|y,\lambda)p(x,y|\lambda)q(\lambda).
\end{aligned}
\end{equation}
Here, for simplicity, we assume that $p(x,y)$ is independent of $x$ and $y$.
What we are interested is to maximize $J^{\mathrm{LHVM}}(P, Q)$ with LHVMs. From another point of view, we want to establish the Bell's inequality when imperfectly random inputs are considered. Any breach of these bounds (using quantum settings) would rule out LHVMs and in favor of quantum mechanics. Suppose the quantum bound to Eq.~\eqref{Eq:BellFree} is denoted by $J_Q$, then we are especially interested to see the condition of $P$ and $Q$ such that $J^{\mathrm{LHVM}}(P, Q) < J_Q$. In experiment, such condition is the necessary condition for a valid Bell test. For a specific observed violation $J_{\mathrm{obs}}$ and input randomness characteristics $P$ and $Q$, it witnesses non-local feature only if the Bell value satisfies $J^{\mathrm{LHVM}}(P, Q) < J_{\mathrm{obs}}$.

In varieties of previous works \cite{Hall10,Barrett11,Hall11,Koh12,Pope13,Thinh13, Yuan15CHSH}, randomness requirements for the CHSH inequality are analyzed. In this work, we focus on another inequality --- the CH inequality and consider in general scenarios. For instance, many previous work \cite{Hall10,Hall11,Koh12,Pope13, Yuan15CHSH} assumes the underlying probability distribution $\tilde{p}_{AB}(a,b|x,y)$ to satisfy the no-signaling (NS) \cite{prbox} condition. However, in real experiment, the probability distribution $\tilde{p}_{AB}(a,b|x,y)$ may behave signaling due to statistical fluctuation, devices imperfection, or other possible interventions by the adversary Eve. We thus also consider the general case where $\tilde{p}_{AB}(a,b|x,y)$ can be signaling \footnote{It is worth to remark that even if $\tilde{p}_{AB}(a,b|x,y)$ can be signaling, we still assume that the locality loophole is closed. The possibility of signaling comes from the fact that partial knowledge of the inputs are known. In practice, Alice and Bob cannot transmit signal with such signaling probability distribution.}. In addition, we consider the case that the random inputs of Alice and Bob are factorizable. In this case, the input randomness can be written as
\begin{equation}\label{Eq:Uncorrelated}
  p(x,y|\lambda) = p_A(x|\lambda)p_B(y|\lambda).
\end{equation}
This factorizable assumption is reasonable in some practical scenarios, where the experiment devices that determine the input settings are from independent manufactures or the randomness generation events are also spacelikely separated. For example, if the inputs are determined by cosmic photons that are causally disconnected from each other \cite{Gallicchio14}, the input randomness can be reasonably assumed to be factorizable.

\section{CH inequality}\label{Sec:CH}
In this section, we will investigate the randomness requirement of the CH inequality under different conditions, including whether $\tilde{p}_{AB}(a,b|x,y)$ is signaling or NS, and whether the factorizable condition is satisfied or not.
\subsection{CH inequality with LHVMs}
The CH inequality is defined in the bipartite scenario, where the input settings $x$ and $y$ and the outputs $a$ and $b$ are all bits. Based on the probability distribution that obtains a specific measurement outcome, for instance $00$, the CH inequality is defined according to
\begin{equation}\label{eq:CH}
\begin{aligned}
  J_{\mathrm{CH}}  &= \tilde{p}_{AB}(0,0) + \tilde{p}_{AB}(0,1) +\tilde{p}_{AB}(1,0) \\
  &-\tilde{p}_{AB}(1,1) -\tilde{p}_{A}(0) -\tilde{p}_{B}(0)\leq 0,
\end{aligned}
\end{equation}
where we omit the outputs $a$ and $b$ and define $\tilde{p}_{A}(x)$ ($\tilde{p}_B(y)$) to be the probability of detecting $0$ with input setting $x$ ($y$) by Alice (Bob), and $\tilde{p}_{AB}(x,y)$ the probability of coincidence detection $00$ for both sides with input settings $x$ and $y$ for Alice and Bob, respectively.
To satisfy the general definition of a Bell inequality as shown in Eq.~\eqref{eq:Bell}, the single party probabilities $\tilde{p}_{A}(0)$ and  $\tilde{p}_{B}(0)$ need to be properly defined by coincidence detection probabilities. For instance, we can either define $\tilde{p}_{A}(0)$ by the detection probabilities with input $(x = 0, y = 0)$, or $(x = 0, y = 1)$, or a convex mixture. This arbitrary definition vanishes when the NS condition is satisfied.

In experiment realization, one has to run the CH test multiple times, for instance, $N$, to determine the probabilities in Eq.~\eqref{eq:CH}. Denote the coincidence counts by $C_{AB}$  and single counts by $S_{A(B)}$, we can then write
\begin{equation}\label{eq:CH2}
\begin{aligned}
  J_{\mathrm{CH}}  &= \frac{C_{AB}(0,0)}{N_{AB}(0,0)} + \frac{C_{AB}(0,1)}{N_{AB}(0,1)} +\frac{C_{AB}(1,0)}{N_{AB}(1,0)}\\
   &-\frac{C_{AB}(1,1)}{N_{AB}(1,1)} -\frac{S_{A}(0)}{N_{A}(0)} -\frac{S_{B}(0)}{N_{B}(0)}.
\end{aligned}
\end{equation}
Here, $N_{AB}(x,y)$ denotes the total number of trials with input setting $x$ and $y$, and $N_{A(B)}$ the number of trials with input setting $x$ ($y$) of Alice (Bob).

When the input settings are chosen truly randomly,  the CH Bell value $J^{\mathrm{LHVM}}_{\mathrm{CH}}$ with LHVM is always non-positive. While quantum theory could maximally violate it to be $J_Q= (\sqrt{2} - 1)/2\approx0.207$.
If the measurement settings $x$, $y$ are additionally determined by some hidden variable $\lambda$ by probability distribution $p(x, y|\lambda)$, we show in the following that the CH inequality could be violated even with LHVMs.

With a general LHVM strategy defined in Eq.~\eqref{Eq:Bellvalue}, each term in the CH value in Eq.~\eqref{eq:CH2} can be described by
\begin{equation}\label{}
\begin{aligned}
C_{AB}(x,y)&=N\sum_\lambda \tilde{p}_{A}(x,\lambda)\tilde{p}_{B}(y,\lambda)p(x,y|\lambda)q(\lambda)\\
N_{AB}(x,y)&=N\sum_\lambda p(x,y|\lambda)q(\lambda)\\
S_{A(B)}(0)&=N\sum_\lambda \tilde{p}_{A(B)}(0,\lambda)\left(p(0,0|\lambda)+p(0,1|\lambda)\right)q(\lambda)\\
N_{A(B)}(0)&=N\sum_\lambda \left(p(0,0|\lambda)+p(0,1|\lambda)\right)q(\lambda)\\
\end{aligned}
\end{equation}
Here, we adopt a specific realization of the single counts by taking an average of the observed value. For instance, the single detection probability $p_{A}(0)$ is defined to be a mean of the single detection probabilities with input $(x = 0, y = 0)$ and $(x = 0, y = 1)$.

Besides, in order to convince Alice and Bob that the input settings $x$ and $y$ are chosen freely, Eve has to impose that the averaged probability distributions of the input settings are uniformly random. Then, we can assume $p(x,y)$ to be $1/4$,
\begin{equation}\label{eq:CHReq}
\begin{aligned}
  N_{AB}(x,y) = N\sum_\lambda p(x,y|\lambda)q(\lambda) =  N/4, \forall x, y.
\end{aligned}
\end{equation}
In real experiments, the input probability can be arbitrary, where our result can still apply with certain modifications on normalization. With the normalization condition Eq.~\eqref{eq:CHReq}, the CH value with LHVMs strategies is given by
\begin{equation}\label{Eq:CHLHVM}
  J^{\mathrm{LHVM}}_{\mathrm{CH}} = 4\sum_{\lambda} q(\lambda) J_\lambda
\end{equation}
with $J_\lambda$  defined by
\begin{equation}\label{}
\begin{aligned}\label{}
  J_\lambda &= \tilde{p}_{A}(0,\lambda)\tilde{p}_{B}(0,\lambda)p(0,0|\lambda) + \tilde{p}_{A}(0,\lambda)\tilde{p}_{B}(1,\lambda)p(0,1|\lambda) \\ &+\tilde{p}_{A}(1,\lambda)\tilde{p}_{B}(0,\lambda)p(1,0|\lambda)
-\tilde{p}_{A}(1,\lambda)\tilde{p}_{B}(1,\lambda)p(1,1|\lambda) \\ &-\tilde{p}_{A}(0,\lambda)(p(0,0|\lambda)+p(0,1|\lambda))/2 \\ &-\tilde{p}_{B}(0,\lambda)(p(0,0|\lambda)+p(1,0|\lambda))/2.
\end{aligned}
\end{equation}
With the randomness parameter defined in Eq.~\eqref{eq:randomness}, our target is to maximize $J^{\mathrm{LHVM}}_{\mathrm{CH}}$ defined in Eq.~\eqref{Eq:CHLHVM} for given randomness input $P$ and $Q$ under constraints in Eq.~\eqref{eq:CHReq}.

\subsection{General strategy (attack)}
In this part, we consider a general strategy (attack) where no additional assumption is imposed. It is worth mentioning that with the following method, we can essentially convert the optimization problem over all LHVMs into a clearly defined mathematical problem. In the CH example, we show an explicit solution to this mathematical problem. A general solution to this type of mathematical problems will provide a general solution to the problem of imperfect randomness in Bell test.

Note that the optimization of Eq.~\eqref{Eq:CHLHVM} requires to optimize over the strategy of Alice and Bob, $\tilde{p}_A(x,\lambda)$ and $\tilde{p}_B(y,\lambda)$, and also the strategy of deciding the inputs, $p(x,y|\lambda)$, which also satisfies the constraints defined in Eq.~\eqref{eq:CHReq}. Here, we first analyze how to optimize the strategy of Alice and Bob.

Because all probabilistic LHVM strategies can be realized with a convex combination of deterministic strategies, it is sufficient to just consider deterministic strategies, i.e., $\tilde{p}_A(x),\tilde{p}_B(y)\in\{0,1\}$ for the optimization. Conditioned on different values of $\tilde{p}_A(x)$ and $\tilde{p}_B(y)$, 16 possible values of $J_\lambda$ are listed in Table~\ref{table:CHattack1}, where we omit the $\lambda$ for simple notation hereafter.
\begin{table*}[hbt]\label{table:CHattack1}\tiny
\centering
\caption{The value of $J_\lambda$ with deterministic strategy.}
\begin{tabular}{cccccc}
  \hline
  % after \\: \hline or \cline{col1-col2} \cline{col3-col4} \dots
  &&\multicolumn{4}{c}{$(\tilde{p}_B(0),\tilde{p}_B(1))$}\\
   &&$(0,0)$&$(0,1)$&$(1,0)$&$(1,1)$\\
   \hline
     \multirow{4}{*}{$(\tilde{p}_A(0),\tilde{p}_A(1))$}&$(0,0)$&$0$&$0$&$-(p(0,0)+p(1,0))/2$&$-(p(0,0)+p(1,0))/2$\\
    &$(0,1)$&$0$&$-p(1,1)$&$(p(1,0)-p(0,0))/2$&$(p(1,0)-p(0,0))/2-p(1,1)$\\
    &$(1,0)$&$-(p(0,0)+p(0,1))/2$&$(p(0,1)-p(0,0))/2$&$-(p(0,1)+p(1,0))/2$&$(p(0,1)-p(1,0))/2$\\
    &$(1,1)$&$-(p(0,0)+p(0,1))/2$&$(p(0,1)-p(0,0))/2-p(1,1)$&$(p(1,0)-p(0,1))/2$&$(p(1,0)+p(0,1))/2-p(1,1)$\\
  \hline
\end{tabular}
\label{table:CHattack1}
\end{table*}
Note that, for given $p(x,y|\lambda)$, we should choose the optimal strategy of $\tilde{p}_A(x)$ and $\tilde{p}_B(y)$ that maximize $J_\lambda$. Thus we here only consider the possible optimal strategies as listed in Table~\ref{table:CHattack2}. We refer to Appendix~\ref{App:proof1} for rigorous proof of why we only consider the possible optimal strategies.

\begin{table}[hbt]\label{table:CHattack2}
\centering
\caption{Possible strategies for letting $J_\lambda$ be positive. }
\begin{tabular}{cc}
  \hline
  % after \\: \hline or \cline{col1-col2} \cline{col3-col4} \dots
  $(\tilde{p}_A(0),\tilde{p}_A(1),\tilde{p}_B(0),\tilde{p}_B(1))$&$J_\lambda$\\
  \hline
(0,1,1,0)&$(p(1,0)-p(0,0))/2$\\
(0,1,1,1)&$(p(1,0)-p(0,0))/2-p(1,1)$\\
(1,0,0,1)&$(p(0,1)-p(0,0))/2$\\
(1,0,1,1)&$(p(0,1)-p(1,0))/2$\\
(1,1,0,1)&$(p(0,1)-p(0,0))/2-p(1,1)$\\
(1,1,1,0)&$(p(1,0)-p(0,1))/2$\\
(1,1,1,1)&$(p(1,0)+p(0,1))/2-p(1,1)$\\
  \hline

\end{tabular}
\label{table:CHattack2}
\end{table}

As the strategies of $(\tilde{p}_A(0), \tilde{p}_A(1),\tilde{p}_B(0),\tilde{p}_B(1)) = (0,1,1,0)$ and $(\tilde{p}_A(0), \tilde{p}_A(1),\tilde{p}_B(0),\tilde{p}_B(1)) = (1,0,0,1)$ are always better than the strategies of $(\tilde{p}_A(0), \tilde{p}_A(1),\tilde{p}_B(0),\tilde{p}_B(1)) = (0,1,1,1)$ and $(\tilde{p}_A(0), \tilde{p}_A(1),\tilde{p}_B(0),\tilde{p}_B(1)) = (1,1,0,1)$, respectively, we can always replace the later strategies with the former ones without affecting $p(x,y)$ but achieving a larger $J_\lambda$.
For simple notation, we denote $p(i,j)$ by $p_{2*i + j}$ hereafter, thus the possible deterministic strategies for $J_\lambda$ are in the following set
\begin{equation}\label{eq:chq}
\begin{aligned}
  \left\{\frac{p_2-p_0}{2}, \frac{p_1-p_0}{2}, \frac{p_1-p_2}{2}, \frac{p_2-p_1}{2}, \frac{p_2+p_1}{2}-p_3\right\}.
\end{aligned}
\end{equation}

Because there are only five possible strategies of Alice and Bob, we can also consider that there are only five different strategies of choosing the input settings. The intuition is that, for the input settings that using the same strategies of Alice and Bob, for instance, $J_\lambda = (p_2 - p_0)/2$, we can always take an average of the different strategies of $p(x,y|\lambda)$ without decreasing $J_\lambda$. We refer to Appendix~\ref{App:proof1} for a rigorous proof.
Therefore, we label $\lambda_j$ to be the $j$th strategy of choosing the input settings and $J^{\mathrm{LHVM}}_{\mathrm{CH}}$ can be rewritten in the following way,
\begin{equation}\label{eq:Proof}
\begin{aligned}
  &J^{\mathrm{LHVM}}_{\mathrm{CH}}/4 \\
  &= q(\lambda_1)(p_2(\lambda_1)-p_0(\lambda_1))/2 + q(\lambda_2)(p_1(\lambda_2)-p_0(\lambda_2))/2\\
  &+q(\lambda_3)(p_1(\lambda_3)-p_2(\lambda_3))/2+q(\lambda_4)(p_2(\lambda_4)-p_1(\lambda_4))/2\\
  &+q(\lambda_5)[(p_2(\lambda_5)+p_1(\lambda_5))/2-p_3(\lambda_5)].
\end{aligned}
\end{equation}
The constraints of $q(\lambda)$ and $p(\lambda)$ are given by
\begin{equation}\label{eq:constraints}
\begin{aligned}
  &\sum_j q(\lambda_j) p_i(\lambda_j) = 1/4, \forall i \\
  &\sum_i p_i(\lambda_j) = 1, \forall j, \\
  &\sum_j q(\lambda_j) = 1,\\
   &Q\le p_i(\lambda_j)\le P, \forall i,j.\\
\end{aligned}
\end{equation}

Furthermore, we can denote the coefficient of $q(\lambda_j)p_i(\lambda_j)$ by $\beta_{ij}$ as shown in Table~\ref{Table:coe}. Then $J^{\mathrm{LHVM}}_{\mathrm{CH}}$ can be expressed by
\begin{equation}\label{Eq:mathJlambda}
  J^{\mathrm{LHVM}}_{\mathrm{CH}} = 4\sum_{ij} \beta_{ij} q(\lambda_j)p_i(\lambda_j),
\end{equation}

\begin{table}[hbt]
\centering
\caption{The coefficient $\beta_{ij}$ of $q(\lambda_j)p_i(\lambda_j)$ in the expression of $J^{\mathrm{LHVM}}_{\mathrm{CH}}$ of the CH inequality.}
\begin{tabular}{cccccc}
  \hline
  % after \\: \hline or \cline{col1-col2} \cline{col3-col4} \dots
  %&&\multicolumn{4}{c}{$(p_B(0),p_B(1))$}\\
   &$\lambda_1$&$\lambda_2$&$\lambda_3$&$\lambda_4$&$\lambda_5$\\
   \hline
   $p_0$&$-\frac{1}{2}$&$\frac{1}{2}$&$0$&$0$&$0$\\
   $p_1$&$0$&$\frac{1}{2}$&$\frac{1}{2}$&$-\frac{1}{2}$&$\frac{1}{2}$\\
   $p_2$&$\frac{1}{2}$&$0$&$-\frac{1}{2}$&$\frac{1}{2}$&$\frac{1}{2}$\\
   $p_3$&$0$&$0$&$0$&$0$&$-1$\\
  \hline
\end{tabular}\label{Table:coe}
\end{table}

The solution to this optimization problem is shown in Appendix~\ref{app:CH1}. Based on the value of $P$ and $Q$, we give the optimal CH value $J^{\mathrm{LHVM}}_{\mathrm{CH}}$ with LHVMs by
\begin{equation}\label{eq:JCH}
  J^{\mathrm{LHVM}}_{\mathrm{CH}}(P,Q) =
  \left\{
  \begin{array}{cc}
    \frac{5}{2}(4P-1) & 3P+Q\le 1,\\
    1-4Q & 2P+Q\ge \frac{3}{4},\\
     4P-2Q-\frac{1}{2} & \mathrm{else},
  \end{array}
  \right.
\end{equation}
and plot it in Fig.~\ref{Fig:plot0}. Note that when $P$ is greater than ${3}/{8}$, the value of $J^{\mathrm{LHVM}}_{\mathrm{CH}}$ is independent of $P$. Hence, we only plot the situation where $P$ is less than 3/8.
\begin{figure}[!ht]
    \centering
 \resizebox{9cm}{!}{\includegraphics[scale=1]{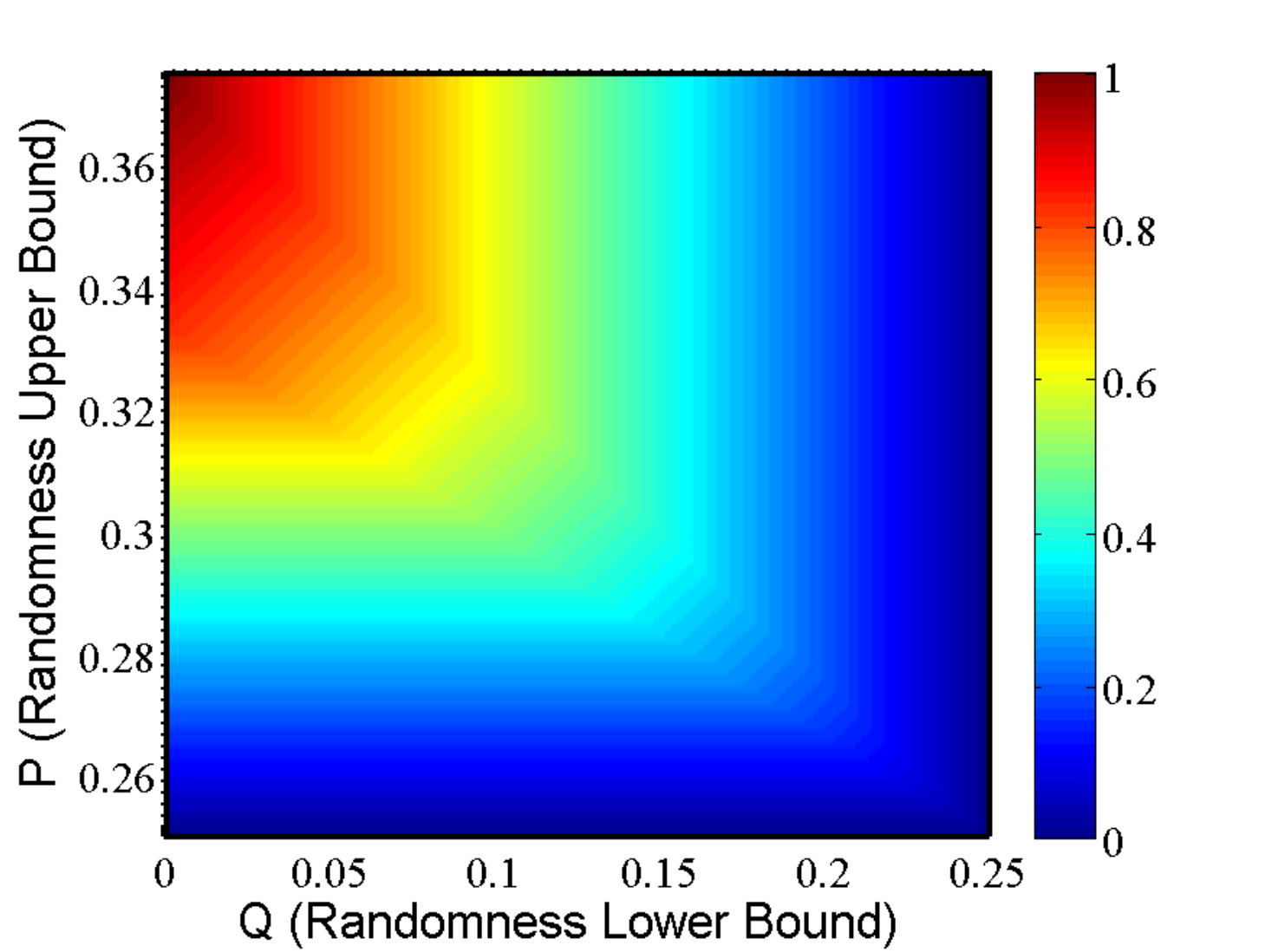}}
\caption{(Color online) The CH value $J^{\mathrm{LHVM}}_{\mathrm{CH}}$ as a function of $P$ and $Q$, according to Eq.~\eqref{eq:JCH}. }\label{Fig:plot0}
\end{figure}

In addition, we can also investigate the optimal CH value $J^{\mathrm{LHVM}}_{\mathrm{CH}}$ with input randomness quantified as in Eq.~\eqref{Eq:deltasource}. It is easy to check that $2P + Q\geq 3/4$, and the optimal CH value $J^{\mathrm{LHVM}}_{\mathrm{CH}}$ is thus
\begin{equation}\label{}
  J^{\mathrm{LHVM}}_{\mathrm{CH}}(\delta) = 4\delta.
\end{equation}

\subsection{Result}

\subsubsection{factorizable condition}
Here, we consider the optimal LHVMs strategy in the case where the probability of the input settings are  factorizable, as defined in Eq.~\eqref{Eq:Uncorrelated}.

Following a similar derivation, we show in the Appendix~\ref{app:fac} that the optimal CH value $J^{\mathrm{LHVM, Fac}}_{\mathrm{CH}}$ with LHVMs under factorizable condition is
\begin{equation}\label{}
  J^{\mathrm{LHVM, Fac}}_{\mathrm{CH}}(P,Q) =
  \left\{
  \begin{array}{cc}
    (4P-1) & P+Q\le \frac{1}{2},\\
    1-4Q & P+Q> \frac{1}{2}.\\
  \end{array}
  \right.
\end{equation}
We show the optimal value of  $J^{\mathrm{LHVM, Fac}}_{\mathrm{CH}}$  in Fig.~\ref{Fig:plot1}.
\begin{figure}[!ht]
    \centering
 \resizebox{9cm}{!}{\includegraphics[scale=1]{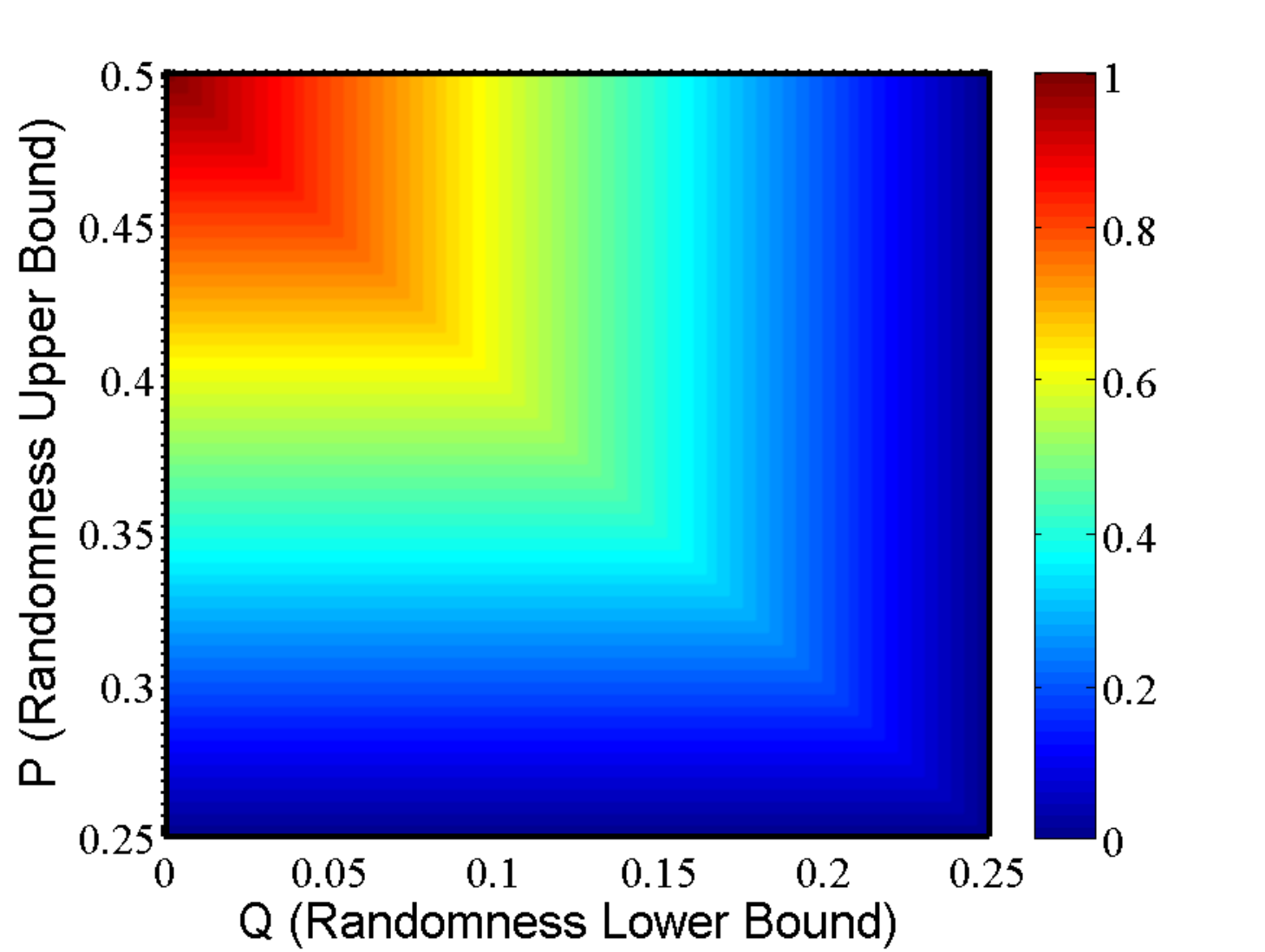}}
\caption{(Color online) The CH value $J^{\mathrm{LHVM, Fac}}_{\mathrm{CH}}$ as a function of $P$ and $Q$ with the factorizable condition Eq.~\eqref{Eq:Uncorrelated}. }\label{Fig:plot1}
\end{figure}

When we quantify $P$ and $Q$ by $P = 1/4 + \delta$ and $Q = 1/4 - \delta$, the fomular can be rewritten by
\begin{equation}\label{}
  J^{\mathrm{LHVM, Fac}}_{\mathrm{CH}}(\delta) = 4\delta.
\end{equation}
%And the randomness required to reach the quantum bound is $P\approx0.301$

It is interesting to note that the factorizable condition does not affect the optimal CH value $J^{\mathrm{LHVM}}_{\mathrm{CH}}(\delta)$ when the input randomness is quantified as in Eq.~\eqref{Eq:deltasource}. The quantum bound $J_Q$ is given by $(\sqrt{2}-1)/2$, thus we can see that $\delta$ should  at least be less than $0.051$ for all CH experiment realizations.

\subsubsection{NS condition}\label{Sec:CHSH}
In addition, we consider the scenario where the probability distribution $\tilde{p}_{AB}(a,b|x,y)$ defined in Eq.~\eqref{Eq:pabxy} satisfies the NS condition, which adds a constraint on $\tilde{p}_{AB}(a,b|x,y)$. That is, the probability of output $a$ ($b$) only relies on the input $x$ ($y$) independently of the input from the other party. To be more specific, NS requires $\tilde{p}_{AB}(a,b|x,y)$ to satisfy
 \begin{equation}\label{eq:nosignaling}
 \begin{aligned}
&\sum_b \tilde{p}_{AB}(a,b|x,y)=\sum_b \tilde{p}_{AB}(a,b|x,y')\equiv \tilde{p}_A(a|x), ~ \forall a,x,y,y'\\
&\sum_a \tilde{p}_{AB}(a,b|x,y)=\sum_a \tilde{p}_{AB}(a,b|x',y)\equiv \tilde{p}_B(b|y). ~\forall b,x,x',y
\end{aligned}
\end{equation}
We can follow the above derivation by imposing an additional NS constraint, which makes the problem even more complex.

Instead, we note that the CHSH inequality and the CH inequality are equivalent under NS, that is,
\begin{equation}\label{Eq:CHCHSH}
  J_{\mathrm{CH}}^{\mathrm{NS}} = \frac{1}{4}(J_{\mathrm{CHSH}} - 2),
\end{equation}
which we refer to Appendix~\ref{app:equiva} for a rigorous proof. As the CHSH inequality is defined with strong symmetry, we  solve the optimization problem with the CHSH inequality. We should note that we essentially take the NS condition into account when deriving the equivalence between the CH and CHSH inequality.

Based on the general definition of Bell's inequalities in Eq.~\eqref{eq:Bell}, the coefficients of the CHSH inequality is defined by
\begin{equation}\label{}
    \beta_{a,b,x,y}^{\mathrm{CHSH}} = (-1)^{xy + a + b},
\end{equation}
that is,
\begin{equation}\label{}
  J_{\mathrm{CHSH}} = \sum_{x,y,a,b\in\{0,1\}}(-1)^{xy + a + b}\tilde{p}(a, b|x, y).
\end{equation}

When considering LHVMs strategies with imperfect input randomness, the CHSH value can be written by
\begin{equation}\label{}
    J_{\mathrm{CHSH}}^{\mathrm{LHVM}} = 4\sum_\lambda q_\lambda J_\lambda,
\end{equation}
where
\begin{equation}\label{}
  J_\lambda = \sum_{x,y,a,b}(-1)^{xy + a + b}\tilde{p}_A(a|x,\lambda)\tilde{p}_B(b|y,\lambda)p(x,y|\lambda).
\end{equation}

Following a similar method described above, we first consider deterministic strategies, i.e., $\tilde{p}_A(a|x),\tilde{p}_B(b|y)\in\{0,1\}$ for the reason that any probabilistic LHVM could be realized with convex combination of deterministic ones.
%Because of the symmetry of CHSH inequality, better strategies can be easily found.
Denote $p(i,j)$ as $p_{2*i + j}$, it is easy to show that the possible optimal deterministic strategies for $J_\lambda$ are
\begin{equation}
\begin{aligned}
   \{p_0+p_1+p_2-p_3&, p_0+p_1+p_3-p_2, \\
   p_0+p_2+p_3-p_1&, p_1+p_2+p_3-p_0\},
\end{aligned}
\end{equation}
and the constraints can also be described by Eq.~\eqref{eq:constraints}. Following a similar argument, we only need to consider four different types strategies of choosing the input settings. Thus $J^{\mathrm{LHVM}}_{\mathrm{CHSH}}$ can be given by
\begin{equation}\label{chsh}
\begin{aligned}
  &J^{\mathrm{LHVM}}_{\mathrm{CHSH}}/4 \\
  &= q(\lambda_1)(p_0(\lambda_1)+p_1(\lambda_1)+p_2(\lambda_1)-p_3(\lambda_1))\\
  &+q(\lambda_2)(p_0(\lambda_2)+p_1(\lambda_2)+p_3(\lambda_2)-p_2(\lambda_2))\\
  &+q(\lambda_3)(p_0(\lambda_3)+p_2(\lambda_3)+p_3(\lambda_3)-p_1(\lambda_3))\\
  &+q(\lambda_4)(p_1(\lambda_4)+p_2(\lambda_4)+p_3(\lambda_4)-p_0(\lambda_4)).
\end{aligned}
\end{equation}
With a symbolic notation in Eq.~\eqref{Eq:mathJlambda}, we can also present the coefficient $\beta_{ij}$ in a matrix, as shown in Table~\ref{Table:coeCHSH}.

\begin{table}[hbt]
\centering
\caption{The coefficient of $p_i(\lambda_j)$ in the expression of $J^{\mathrm{LHVM}}_{\mathrm{CHSH}}$ of the CHSH inequality.}
\begin{tabular}{ccccc}
  \hline
  % after \\: \hline or \cline{col1-col2} \cline{col3-col4} \dots
  %&&\multicolumn{4}{c}{$(p_B(0),p_B(1))$}\\
   &$\lambda_1$&$\lambda_2$&$\lambda_3$&$\lambda_4$\\
   \hline
   $p_0$&$1$&$1$&$1$&$-1$\\
   $p_1$&$1$&$1$&$-1$&$1$\\
   $p_2$&$1$&$-1$&$1$&$1$\\
   $p_3$&$-1$&$1$&$1$&$1$\\
  \hline
\end{tabular}\label{Table:coeCHSH}
\end{table}

We solve the optimization problem in Appendix~\ref{app:CHSH}. Based on the value of $P$ and $Q$, we give the optimal CHSH value $J^{\mathrm{LHVM}}_{\mathrm{CHSH}}$ with LHVMs by
\begin{equation}\label{}
  J^{\mathrm{LHVM}}_{\mathrm{CHSH}}(P,Q) =
  \left\{
  \begin{array}{cc}
    24P-4 & 3P+Q\le 1,\\
    4-8Q & 3P+Q\ge 1.
  \end{array}
  \right.
\end{equation}
Then the optimal CH value $J^{\mathrm{LHVM, NS}}_{\mathrm{CH}}$ with LHVMs under NS is
\begin{equation}\label{}
  J^{\mathrm{LHVM, \mathrm{NS}}}_{\mathrm{CH}}(P,Q) =
  \left\{
  \begin{array}{cc}
    6P-3/2 & 3P+Q\le 1,\\
    1/2-2Q & 3P+Q\ge 1.
  \end{array}
  \right.
\end{equation}
We show the optimal value of  $J^{\mathrm{LHVM, \mathrm{NS}}}_{\mathrm{CH}}$  in Fig.~\ref{Fig:plot2}.
\begin{figure}[!ht]
    \centering
 \resizebox{9cm}{!}{\includegraphics[scale=1]{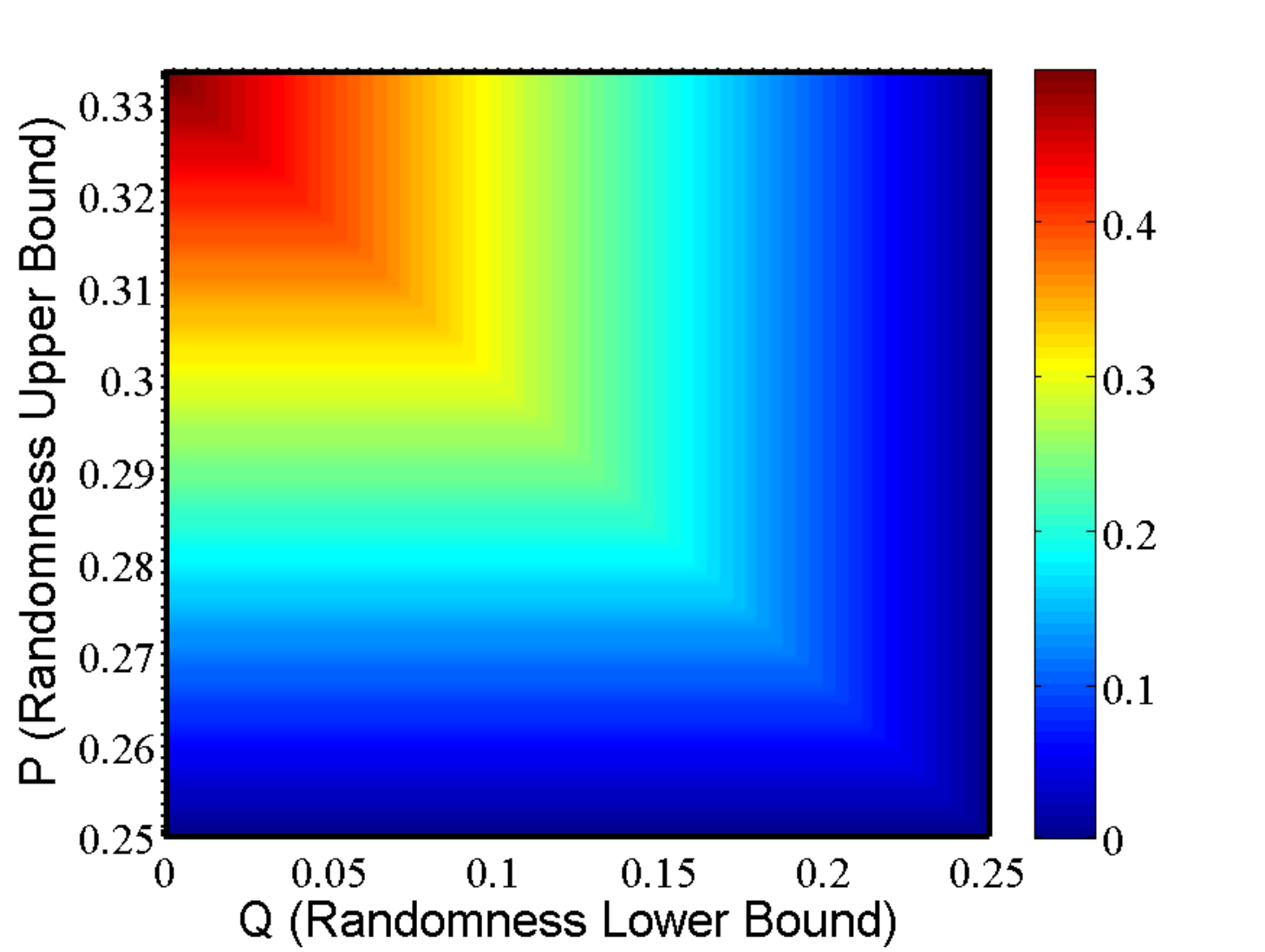}}
\caption{(Color online) The CH value $J^{\mathrm{LHVM, \mathrm{NS}}}_{\mathrm{CH}}$ as a function of $P$ and $Q$ under NS condition Eq.~\eqref{eq:nosignaling}. }\label{Fig:plot2}
\end{figure}

If we quantify the input randomness by its deviation from uniform distribution as defined in Eq.~\eqref{Eq:deltasource}, the optimal CH value $J^{\mathrm{LHVM,NS}}_{\mathrm{CH}}(\delta)$  is given by
\begin{equation}\label{}
  J^{\mathrm{LHVM}}_{\mathrm{CH,NS}}(\delta) = 2\delta.
\end{equation}
The quantum bound for the CH inequality $J_Q$ is $(\sqrt{2}-1)/2$, thus  $\delta$ should be less than $(\sqrt{2} - 1)/4\approx 0.104$ for all experiment realizations.

\subsubsection{NS condition and factorizable}
At last, we consider the probability distribution $\tilde{p}_{AB}(a,b|x,y)$ to be NS and the input randomness $p(x,y|\lambda)$ is factorizable. The optimization of the CHSH inequality is solved in Appendix~\ref{app:NSFac}, and the result is,
\begin{equation}\label{}
  J^{\mathrm{LHVM,Fac}}_{\mathrm{CHSH}}(P,Q) =
  \left\{
  \begin{array}{cc}
    8P & P+Q \le \frac{1}{2},\\
    4-8Q & P+Q> \frac{1}{2}.\\
  \end{array}
  \right.
\end{equation}
Then the optimal CH value $J^{\mathrm{LHVM, NS, Fac}}_{\mathrm{CH}}$ with LHVMs under NS and factorizable condition is
\begin{equation}\label{}
  J^{\mathrm{LHVM, NS, Fac}}_{\mathrm{CH}}(P,Q) =
  \left\{
  \begin{array}{cc}
    2P-1/2 & P+Q \le \frac{1}{2},\\
    1/2-2Q & P+Q> \frac{1}{2}.\\
  \end{array}
  \right.
\end{equation}
We show the optimal value of  $J^{\mathrm{LHVM, NS, Fac}}_{\mathrm{CH}}$  in Fig.~\ref{Fig:plot3}.
\begin{figure}[!ht]
    \centering
 \resizebox{9cm}{!}{\includegraphics[scale=1]{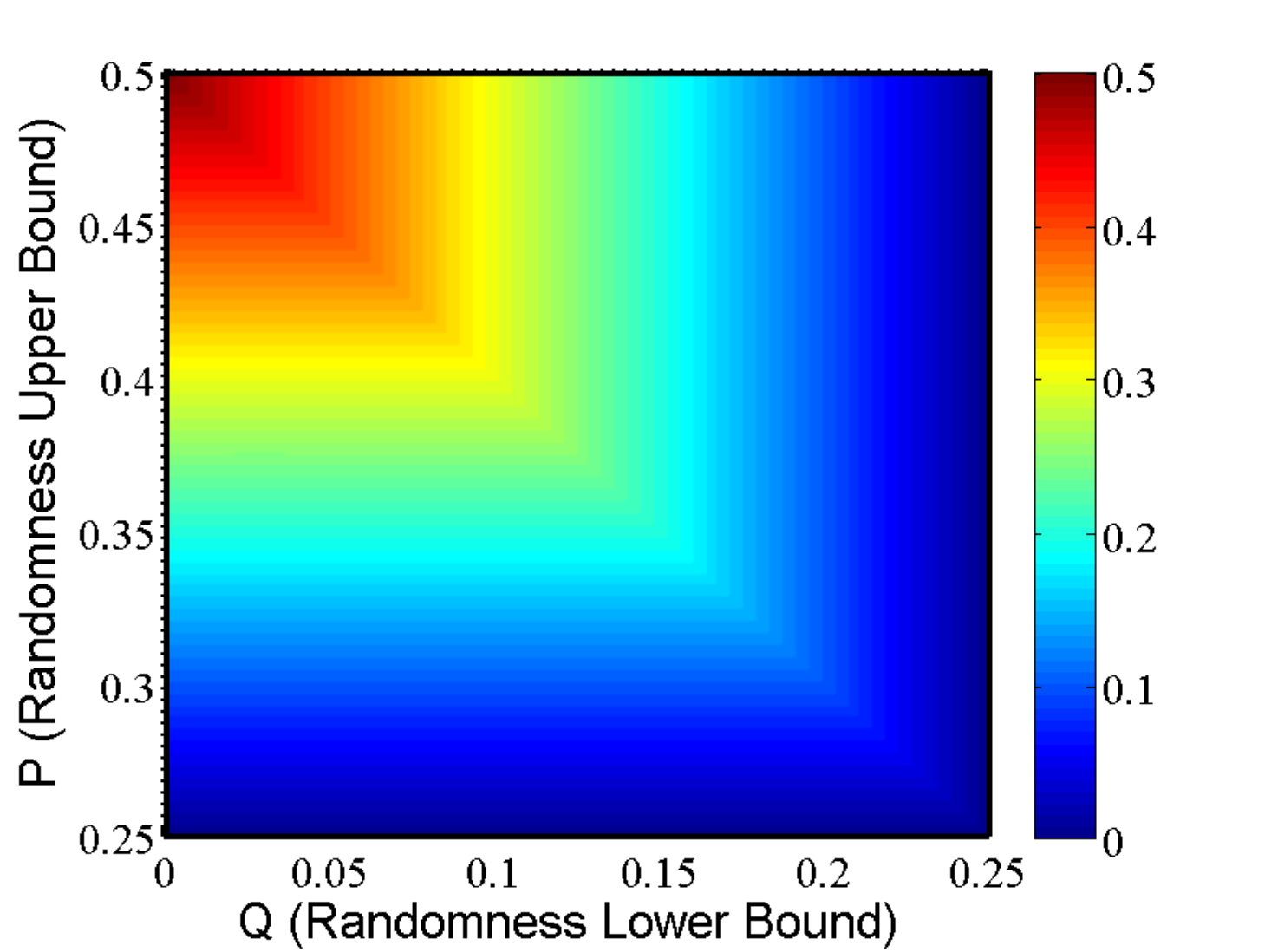}}
\caption{(Color online) The CH value $J^{\mathrm{LHVM, NS, Fac}}_{\mathrm{CH}}$ as a function of $P$ and $Q$ under factorizable Eq.~\eqref{Eq:Uncorrelated} and NS Eq.~\eqref{eq:nosignaling} conditions.} \label{Fig:plot3}
\end{figure}

When the input randomness is quantified as in Eq.~\eqref{Eq:deltasource}, where $P = 1/4+\delta$ and $Q = 1/4-\delta$, we have
\begin{equation}\label{}
  J^{\mathrm{LHVM, NS, Fac}}_{\mathrm{CH}}(\delta) = 2\delta.
\end{equation}
Again, it is interesting to note that the factorizable condition does not affect the optimal CH value $J^{\mathrm{LHVM,NS}}_{\mathrm{CH}}(\delta)$ when the input randomness is quantified as in Eq.~\eqref{Eq:deltasource}.

\subsubsection{Comparison}
Let us compare the results of the CH values $J^{\mathrm{LHVM}}_{\mathrm{CH}}$ under different conditions. For the maximal quantum violation $J_Q=(\sqrt{2}-1)/2$, we calculate the critical values of $Q$ and $P$ such that $J^{\mathrm{LHVM}}_{\mathrm{CH}}(P, Q)=J_Q$ and plot them in Fig.~\ref{Fig:PQ}. When $Q$ is small, the optimal CH value $J^{\mathrm{LHVM}}_{\mathrm{CH}}(P, Q)$ depends only on $P$. In this case, the critical values of $P$ for the signaling, signaling+fac, NS, and NS+fac are 0.207, 0.302, 0.285, 0.354, respectively. Thus, we can see that the factorizable condition puts a stronger requirement for $P$ compared to the NS condition. On the other hand, when $Q$ is large, the optimal CH value $J^{\mathrm{LHVM}}_{\mathrm{CH}}(P, Q)$ depends only on $Q$ instead. In this case, the critical values of $Q$ for the signaling and NS condition are 0.198 and 0.146, respectively. It is interesting to note that when both $P$ and $Q$ is large, the optimal CH value $J^{\mathrm{LHVM}}_{\mathrm{CH}}(P, Q)$ is independent on the factorizable condition.
\begin{figure}[!ht]
    \centering
    \resizebox{10cm}{!}{\includegraphics[scale=1]{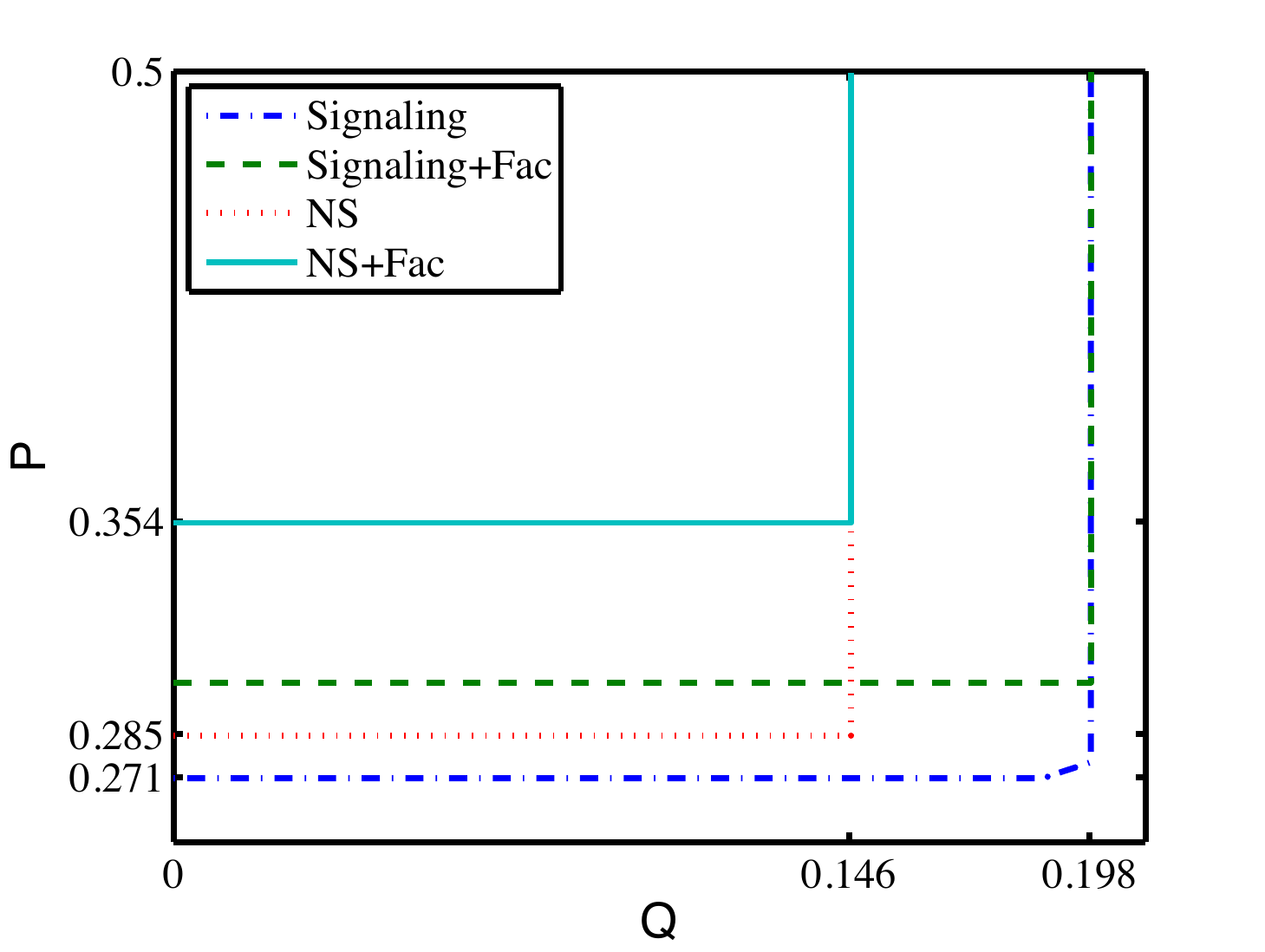}}
    \caption{The critical value of $Q$ and $P$ such that the CH value $J^{\mathrm{LHVM}}_{\mathrm{CH}}(P,Q)$ equals the maximal quantum value $J_Q=(\sqrt{2}-1)/2$.}\label{Fig:PQ}
\end{figure}

Besides, if we make use of the quantification method defined in Eq.~\eqref{Eq:deltasource}, we have already noticed that the optimal CH value $J^{\mathrm{LHVM}}_{\mathrm{CH}}(\delta)$ is independent on the factorizable condition. Here, we compare $J^{\mathrm{LHVM}}_{\mathrm{CH}}(\delta)$ between the signaling and NS condition as shown in Fig.~\ref{Fig:Jdelta}. For the maximal quantum violation $J_Q=(\sqrt{2}-1)/2$, we calculate the critical values of $\delta$ for the signaling and NS condition to be 0.052 and 0.104, respectively.

\begin{figure}[!ht]
    \centering
    \resizebox{10cm}{!}{\includegraphics[scale=1]{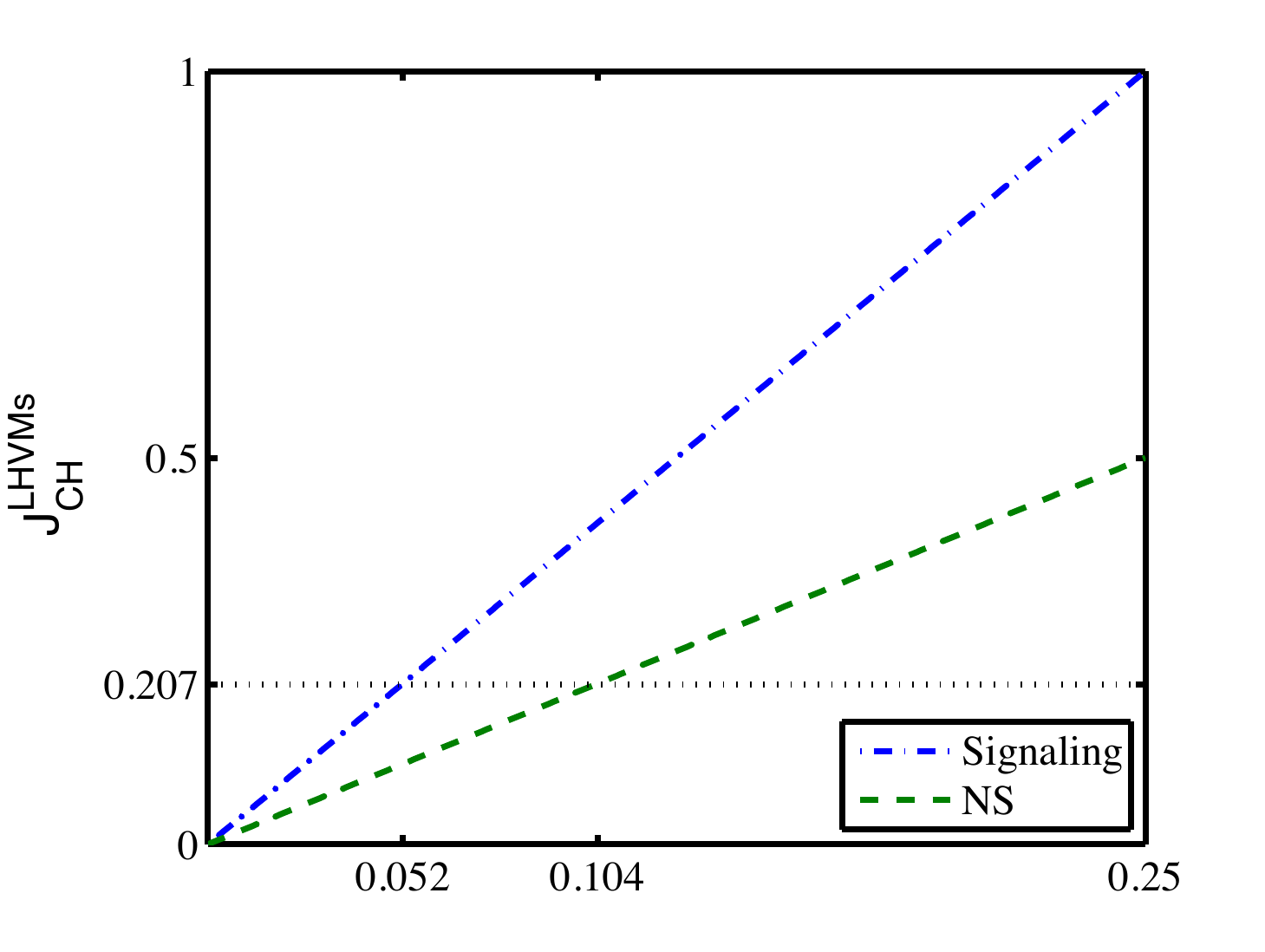}}
    \caption{The CH value $J^{\mathrm{LHVM}}_{\mathrm{CH}}(\delta)$ under different conditions.}\label{Fig:Jdelta}
\end{figure}

%\section{Conclusion}\label{Sec:conclusion}
%In this work, we investigate the randomness requirement in achieving a faithful Bell's test. We explicitly give the optimal LHVM strategy with imperfect randomness to maximize a violation of the CH inequality under different conditions, including whether the strategy is signaling or not and whether the input randomness satisfy the factorizable condition or not. For any observed CH violation, our result thus put an requirement on the input randomness such that the violation is caused by quantum instead of classical process.

When we quantify the input randomness as Eq.~\eqref{Eq:deltasource}, we found that the optimal CH value is independent of the factorizable condition, Eq.~\eqref{Eq:Uncorrelated}. Thus with the quantification method in Eq.~\eqref{Eq:deltasource}, one may not need to consider the factorizable condition.
%When the NS condition is assumed, the critical value of $\delta$ for the maximal quantum violation is found to be $0.104$.
%In experiment, such requirement of input randomness is easier to realize compared to the case when only the upper bound $P$ is considered.

For further works, it is interesting to consider joint strategies of LHVMs of the CH inequality, where the inputs of different runs are correlated. It is already shown that joint attacks to the CHSH inequality puts a very high requirement of the input randomness no matter the factorizable condition is satisfied or not \cite{Yuan15CHSH}. In addition, we can investigate the case where the input randomness is restricted by both lower and upper bounds, $P$ and $Q$. Due to the asymmetric definition of the CH inequality, the analysis may become extremely complicated with increasing number of correlated runs. Similar cases may also happen for other asymmetric Bell inequalities. %In this case, we expect that the randomness requirement will not be very high.

Furthermore, it is interesting to see whether there exist Bell's inequalities such that the randomness requirement is very low. To do so, we have to solve the problem of optimizing the Bell value with all LHVM strategies. We expect that our derivation method could provide a general way to solve this problem.
%Remaining works is to get the bounds of randomness to violate the CH inequality in other cases where the input randomness are correlated in different runs or the measurement devices are not perfect.

%\input{tail}

%\input{head}
\chapter{Source-independent quantum random number generation}
In general, a physical generator contains two parts¡ªa randomness source and its readout. The source is essential to the quality of the resulting random numbers; hence, it needs to be carefully calibrated and modeled to achieve information-theoretical provable randomness. In this chapter, we discuss quantum random number generation without trusting its randomness source \cite{Cao2016} hence semi-selftesting. We show in this chapter that randomness can be obtained even without characterizing its source.

\section{The prepare and measure model}
A typical QRNG can be described by the prepare and measure model that can be decomposed into two modules: a  randomness source (quantum state preparation) and its readout (measurement), as shown in Fig.~\ref{Fig:SRModel}. In general, the source emits quantum states that are superpositions of the measurement basis. The output (raw) random numbers are the measurement results. In many QRNGs, a short random seed is required to assist state preparation or measurement.

\begin{figure}[htb]
%\centering\includegraphics[trim=0cm 0cm 0cm 0cm, clip=true, width=8 cm]{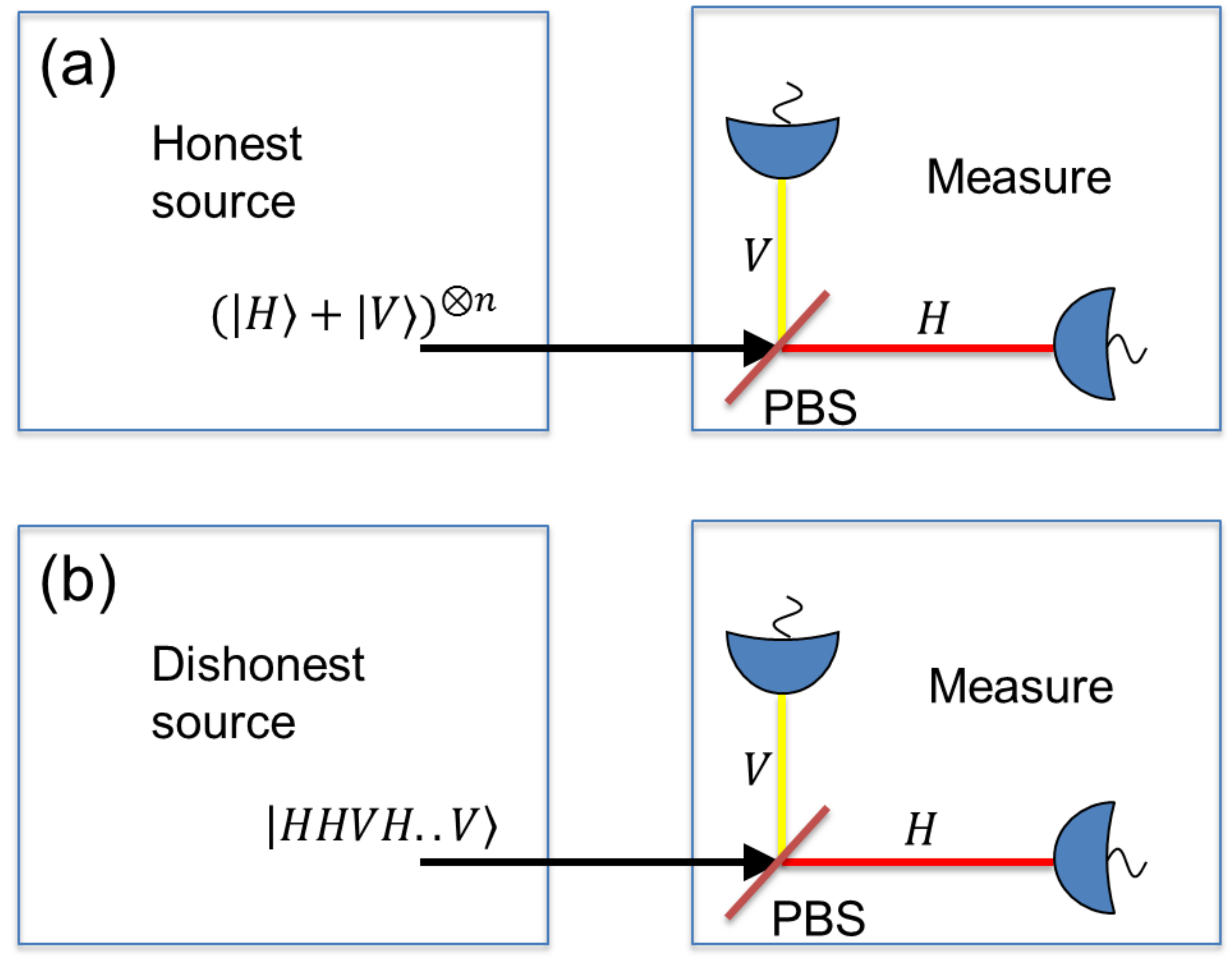}
\centering \resizebox{10cm}{!}{\includegraphics{ModelFig2.pdf}}
\caption{Illustration of a generic QRNG setup in which we take photon polarization as the example. $H$ and $V$ refer to horizontal and vertical polarizations, respectively. PBS refers to a polarizing beam splitter. (a) The source functions normally (or trusted) and sends superpositions of  $H$ and $V$ polarizations, which offers quantum randomness. (b) The source malfunctions (or untrusted) and sends $H$ and $V$ polarizations in a predetermined order, which should output no genuine randomness. From the measurement result viewpoint, one cannot distinguish these two cases.} \label{Fig:SRModel}
\end{figure}

As an example, consider a simple QRNG that projects the quantum state  $\ket{+}=(\ket{H}+\ket{V})/{\sqrt{2}}$ emitted from a single photon source on the horizontal and vertical polarization basis $\{\ket{H}, \ket{V}\}$. This QRNG can be divided into two modules, as shown in Fig.~\ref{Fig:SRModel}(a). Randomness is guaranteed by the intrinsically probabilistic nature of quantum physics. Hereafter, we denote $\ket{H}$, $\ket{V}$ ($\ket{+}$, $\ket{-}$) as the $Z$-basis ($X$-basis) eigenstates.

Existing practical QRNGs suffer from security loopholes if the devices are not perfect.
In the source readout model, the measurement device can normally be trusted due to its simple structure. For instance, in the previous example, the measurement is a simple demolition measurement on the polarization basis.
In contrast, the randomness contained in a source, such as a laser or an atomic assemble, is normally difficult to characterize completely. If the source malfunctions and emits classical signals instead of quantum ones, the outputs may not be truly random. Consider the worst-case scenario in which the devices are designed or controlled by an adversary Eve.
%For instance, Eve generates many pairs of maximally entangled qubits and stored one half in the source device. When the user Alice performs measurement on these qubits, the output will be random but not with respect to Eve. Take another example,
Eve can employ a pseudo-RNG to output a fixed (from Eve's viewpoint) string that still appears random to Alice. More concretely, in the example of the previous paragraph,
%the output is the polarization of the quantum signal, and
when a dishonest source emits $Z$-basis instead of $X$-basis eigenstates for the $Z$-basis measurement, the output will just be a fixed string, as shown in Fig.~\ref{Fig:SRModel}(b). From this perspective, with given measurement devices, justifying the randomness in a source is crucial to generating  randomness.

Most existing QRNGs use complicated physical models \cite{PhysRevA.75.032334,Xu:QRNG:2012} to quantify their sources. For example, the dimension of the source is sometimes assumed to be a fixed known number \cite{PhysRevA.90.052327}. The underlying models implicitly assume the existence of randomness in the first place, but this assumption cannot be verified experimentally. Therefore, to achieve truly reliable randomness, there is a strong motivation to avoid the use of such models. Note that removing the dimension assumption is the key challenge to the analysis for device-independent scenarios.
%Unfortunately the sources of most current RNGs depend on various models \cite{Xu:QRNG:2012,kanter2009optical} to separate quantum and classical randomness. Experimental instruments cannot testify these models, thus to achieve truly reliable randomness, there is a strong urge to completely avoid the usage of these models.
%\cite{PhysRevA.75.032334}

%{\bf DI QRNG}  ------
Thus, a QRNG without trusting the source (source-independent) is both theoretically and practically meaningful and greatly in need. A device-independent QRNG \cite{Vazirani12} can generate randomness without having to  trust the devices. This type of QRNG requires a short seed for device testing, which is the reason why they are also called randomness expansions \cite{colbeck2011private,Miller14,miller2014universal}. By observing the violation of a certain Bell's inequality, such as the Clauser--Horne--Shimony--Holt inequality \cite{CHSH}, one can guarantee the presence of randomness without any assumptions about the source or the measurement device. The main drawback of device-independent QRNGs is that they are not loss-tolerant, which typically imposes very severe requirements on experimental devices.
%requires high quantum efficiencies which are not achievable by normal single photon detectors.
%Up till now, all device-independent QRNGs rely on violations of certain Bell's inequalities. This induces a tremendous impediment for practical usage since the least detector efficiency requirement for Bell-type inequality violation is $67\%$ in Eberhard's inequality \cite{PhysRevA.47.747}. Though detectors with efficiency higher than $67\%$ do exist, their high prices make them far from commercial employment. Note that the quantum efficiency of a typical single photon detector is around $10\%$.
Furthermore, this type of QRNG generates random numbers at rates that are very low for practical applications. The highest speed of this type of QRNG has, so far, been reported to be 0.4~bps \cite{Christensen13}.

%\section{SIQRNG via uncertainty relation}

In this chapter, we will introduce a source-independent QRNG (SIQRNG) scheme based on the uncertainty relation that is loss-tolerant and hence highly practical. In particular, our scheme allows the source to have arbitrary and unknown dimensions. The loss-tolerance property enables potential high-loss implementations of our scheme, such as in integrated optic chips or with inefficient but cheap single photon detectors. We analyze the randomness of the scheme based on complementary uncertainty relations. Our analysis takes account of several practical issues, including finite-key-size effects, multi-photon components in the source, initial seed length, and losses. The analysis combines several ingredients from the security proof of quantum key distribution (QKD), a rich subject that has developed over the last two decades. These ingredients include phase error correction \cite{ShorPreskill_00}, random sampling \cite{Fung:Finite:2010}, and the squashing model \cite{BML_Squash_08}. Since the squashing model shows the equivalence between threshold detectors and qubit detectors \cite{BML_Squash_08}, our scheme allows the source to have an unfixed finite dimension as well as an infinite dimension. For simplicity, in the rest of this chapter, we assume a two-level (bit) output system. All our techniques can be directly applied to cases with more outputs.

\section{The Protocol}\label{protocol}
\subsection{Theoretical protocol}
A schematic of our SIQRNG protocol is shown in Fig.~\ref{Fig:EXPSetup}(a). Here, we take an optical implementation as the example, as shown in Fig.~\ref{Fig:EXPSetup}(b). All our results apply similarly to other implementation systems. Quantum signals from the source first go through a modulator that actively chooses between the $X$ and $Z$ bases. Then, a polarizing beam splitter and two threshold detectors perform a projective measurement. Since two detectors are used, there are four possible outcomes: no clicks (losses), two single clicks, and double clicks. This implementation is equivalent to the schematic setup of \emph{squashing model} as discussed in Section \ref{Sec:Squash}. The details of the protocol are presented in Fig.~\ref{Fig:Procedure}.

\begin{figure}[hbt]
%\centering\includegraphics[trim=0cm 0cm 0cm 0cm, clip=true, width=9cm]{ExpFig3.pdf}
\centering \resizebox{8cm}{!}{\includegraphics{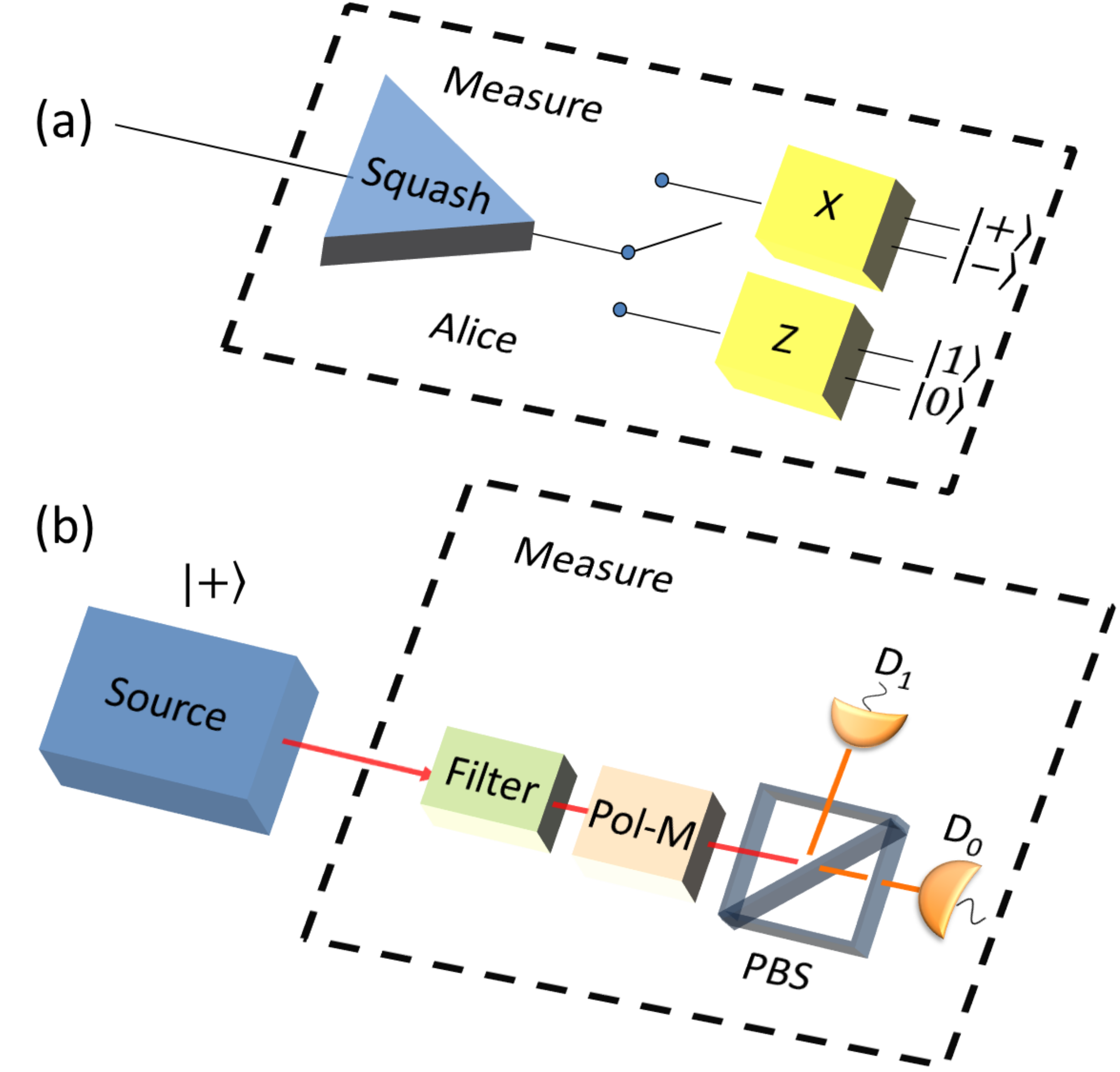}}
\caption{(a) Measurement model for  SIQRNG. The quantum state first passes through a squasher and is projected as either a qubit or a vacuum. Then, the output qubit is measured in the $X$ or $Z$ basis chosen by an active switch. There are two outcomes for each basis measurement, corresponding to the two eigenstates of the basis. (b) An optical implementation of the SIQRNG in (a), as discussed in Section~\ref{Sec:Squash}. Here Pol-M refers to a polarization modulator, PBS refers to a polarizing beam splitter, and $D_0$ and $D_1$ are the threshold detectors.} \label{Fig:EXPSetup}
\end{figure}

\begin{figure}[htbp]
\begin{framed}
\centering
\begin{enumerate}
\item
\textbf{Source:} An untrusted party, Eve, prepares many quantum states in an arbitrary and unknown dimension and feeds them into the measurement box of Alice.
\item
\textbf{Squashing:} Alice (or Eve) squashes the quantum states into qubits and vacua. Alice postselects  the vacua and obtains $n$ squashed qubits. The vacuum components take account of optical losses and quantum efficiencies.
\item
\textbf{Random sampling:} By consuming a short seed with the length given in Eq.~\eqref{Source:seed}, Alice randomly chooses $n_x$ out of the $n$ squashed qubits and measures them in the $X$ basis, each results in $\ket{+}$ or $\ket{-}$. %In the infinite data limit, the length ratio between the random seed and the final output tends to 0. In fact, given a failure probability, the number of $X$-basis measurements can be a constant.
\item
\textbf{Parameter estimation:} When  the  system  operates  properly, the source emits qubits $\ket{+}$ for all runs. Thus, a result of $\ket{-}$ in the $X$-basis measurement is defined as an error. A double click is considered to be  half an error. Alice evaluates the bit error rate $e_{bx}$ in the $X$ basis and its statistical deviation $\theta$ according to Eq.~\eqref{eq:Ptheta}. If $e_{bx}+\theta\ge1/2$, Alice aborts the protocol.
\item
\textbf{Randomness generation:}
For the remaining $n-n_x$ squashed qubits, Alice performs measurement in the $Z$ basis to generate $n_z = n-n_x$ random bits.
\item
\textbf{Randomness extraction:} Alice picks a parameter $t_e$ according to the desired failure probability restriction and extracts $n_z-n_zH(e_{bx}+\theta)-t_e$ bits of final randomness using Toeplitz-matrix hashing \cite{Mansour:Toeplitz:93,Ma2011Finite}\footnote{Other extraction methods, such as Trevisan's extractor \cite{trevisan2001extractors} can be applied, in which the relation between the failure probability and $t_e$ can differ.}.
\item
\textbf{Security parameter:} With the composable security definition, the security parameter (in trace-distance measure) is given by $\varepsilon=\sqrt{(\varepsilon_\theta+2^{-t_e})(2-\varepsilon_\theta-2^{-t_e})}$.
\end{enumerate}
\end{framed}
\caption{Source-independent QRNG with the finite data size effect. The results are proven in Section~\ref{analysis}.} \label{Fig:Procedure}
\end{figure}

%Given a failure probability of $2^{-t_e}$ for extraction \footnote{Note that, for composable security proof, failure probability for extraction becomes $2^{-t_e/2}$ \cite{Fung:Finite:2010, tomamichel2011leftover}. Correspondingly, with the same failure probability, it shortens the output by $t_e/2$ bits.}, it shortens the output further by an amount of $t_e$ bits \cite{Fung:Finite:2010} using Toeplitz matrix extractor \cite{Mansour:Toeplitz:93,Ma2011Finite} , Trevisan's extractor \cite{Trevisan:Extractor:1999} or other randomness extractors. Here, an extractor has the functionality of extracting a uniformly random string from a partially random string.
%Alice finally obtains a random bit string with a length of $n_z-n_zH(e_{bx}+\theta)-t_e$ and a failure probability of $\varepsilon=\varepsilon_\theta+2^{-t_e}$, where $n_z=n-n_x$.

\subsection{Analysis}\label{analysis}
In this part, we analyze the randomness output of the SIQRNG protocol. Strictly speaking, like device-independent QRNGs, our scheme is  a randomness expansion scheme, in which a random seed is used to generate extra independent randomness. The procedure of parameter estimation is an analog to the phase error rate estimation in QKD postprocessing \cite{Ma2011Finite}. Randomness extraction is mathematically equivalent to privacy amplification in QKD.  The difference between the biased measurement used here and the biased-basis choice QKD protocol \cite{Lo:EffBB84:2005} is that the number of $X$-basis measurements is a constant in our case, whereas in QKD, this number must go to infinity when the data size is infinitely large.

\subsubsection{Squashing model}  \label{Sec:Squash}
In the SIQRNG scheme, we assume that  measurement devices are trusted and well characterized. The key assumption here is that the \emph{measurement setup is compatible with the squashing model}. That is, a measurement can be treated in two steps. First, the (unknown arbitrary-dimensional) signal state emitted from the source is projected to a qubit or vacuum. The projection is called  a squasher, as shown in Fig.~\ref{Fig:EXPSetup}(a). Then, the squashed qubits are post-selected by discarding the vacua and measuring them in the $X$ or $Z$ basis.  This assumption can be satisfied when threshold detectors are used with random bit assignments for  double clicks \cite{BML_Squash_08}.
For the protocol described in Section \ref{protocol}, the $X$-basis measurement results are used for parameter estimation and  are then discarded in postprocessing. Thus, the random assignment can be replaced by adding  half of the double-click ratio to the $X$-basis error rate.

In practice, it is a challenge to verify whether a measurement setup is compatible with the squashing model. Much effort has been put into this question \cite{PhysRevA.86.042327}. The key point here is to make the two detectors respond equally to (four) different qubits, and hence make the measurement device basis-independent \cite{Fung:Mismatch:2009}. This can be done by adding a series of filters (including spectrum and temporal filters) before the threshold detectors, to ensure that the input states stay within a proper set of optical modes \cite{xu2014experimental}, in which the detectors have the same efficiencies \cite{BML_Squash_08, Fung:Mismatch:2009}. One can further assume that Alice uses a trusted source to calibrate the measurement devices beforehand; that is, Alice performs a quantum measurement tomography. A similar measurement calibration procedure should be done in most  current QKD and QRNG realizations. Here, we emphasize that the verification of the squashing model does not affect the source-independent property of our scheme. Thus, we leave detailed investigation on validating the measurement setting for future works.

%At the experimental level, the detectors may have different efficiencies for very different frequencies. This may induce a problem for our assumption. One can solve this problem by adding a series of filters (including frequency filters) before the threshold detectors,  to ensure that the input states stay within a set of modes, where the detectors have the same efficiency \cite{BML_Squash_08, Fung:Mismatch:2009}.
% [Right now a bit unclear, might add it after clarifying this better] These filters generally are not hard to design, since the detectors have stable efficiencies in most ranges.

Similar to the QKD case \cite{BML_Squash_08}, we can assume that the squashing operator is held by Eve in the randomness analysis. By this, we mean that Eve can choose a valid operator, so long as the output is a qubit or a vacuum. In the following discussions, we focus on the squashed qubits. We need to determine the min-entropy associated with these qubits in the $Z$-basis measurement.

\subsubsection{Complementary uncertainty relation} \label{Sec:Uncertain}
%The min-entropy of a distribution $D$ is defined as $-\log  \max \{\mathbf{P}(s): s\in D\}$, where $\mathbf{P}(s)$ stands for probability.
%The distribution $D$ in our context falls in $\{0,1\}^n$ for some natural number $n$ and $s\in D$ is a bit string of length $n$.
%We define the normalized min-entropy as follows:
%\begin{equation}
%H_{\min}\equiv -\frac{1}{n}\log \max \{\mathbf{P}(s): s\in \{0,1\}^n\}.
%\end{equation}
%The quantity, $H_{\min}$, roughly represents the number of (almost perfect) random bits that can be extracted per raw random bit.

First, we show intuitively why the protocol works. According to quantum mechanics, the outcome of projecting the state $\ket{+}$ on the $Z$ basis is random. Of course, in reality, due to device imperfections, Alice would never obtain a perfect state of $\ket{+}$. Now, the key question for Alice becomes how to verify that the source faithfully emits the state $\ket{+}$. This can be done by borrowing a similar technique from the security analysis of QKD \cite{LoChauQKD_99,ShorPreskill_00,Koashi_Uncer_06} and consider an equivalent virtual protocol depicted in Fig.~\ref{Fig:Procedureeq}, where we replace steps $5$ and $6$ by $5'$ and $6'$. In steps $3$ and $4$ of the protocol, Alice occasionally performs the $X$-basis measurement and define the \emph{phase error} rate to be the ratio of detecting $\ket{-}$. In the virtual protocol, once Alice knows the phase error rate by random sampling tests, she can perform a phase error correction (step $5'$) before the final $Z$-basis measurement (step $6'$). By an smart design of the phase error correction procedure \cite{ShorPreskill_00}, Alice can make it commute with the $Z$-basis measurement. Thus, she can perform the $Z$-basis measurement (step $5$) first and then apply randomness extraction (step $6$). At this stage, all the states have already collapsed to classical results, and the phase error correction procedure becomes randomness extraction (or privacy amplification in QKD) \cite{LoChauQKD_99,ShorPreskill_00,Koashi_Uncer_06}. Besides QKD, the argument here is similar to the one used in Ref.~\cite{Yuan15Coherence}, where one can consider the error correction process $5'$ as distilling coherence or randomness extraction.

%That is, instead of performing randomness extraction after the $Z$ basis measurement, one can equivalently first perform error correction to the $n_z$ squashed qubits and obtain nearly $n_z(1-H(e_{bx}))$ perfect copies of $\ket{+}$, and then measure in the $Z$ basis to direct get $n_z(1-H(e_{bx}))$ perfect random bits.
\begin{figure}[htbp]
\begin{framed}
\centering
\begin{enumerate}[1']
\setcounter{enumi}{4}

\item
\textbf{Error correction:} Based on phase error rate $e_{bx}$, Alice performs phase error correction and obtain  $n_z[1-H(e_{bx})]$ copies of perfect $\ket{+}$ with a nearly unit probability.

\item
\textbf{Randomness generation:}
After obtaining all states in $\ket{+}$, Alice performs measurement in the $Z$ basis to get $n_z[1-H(e_{bx})]$ random bits.

\end{enumerate}
\end{framed}
\caption{An equivalent protocol of source-independent QRNG.} \label{Fig:Procedureeq}
\end{figure}

It has been proved that the phase error correction (randomness extraction) can be efficiently done with Toeplitz-matrix hashing \cite{Mansour:Toeplitz:93}. Suppose the number of qubits measured in the $Z$ basis is $n_z$ and the phase error rate is $e_{pz}$, the number of bits sacrificed in the phase error correction is given by
\begin{equation} \label{eq:PAcost}
\begin{aligned}
n_zH(e_{pz})+t_e,
\end{aligned}
\end{equation}
and the probability that the phase error correction fails is $2^{-t_e}$ \cite{Ma2011Finite}. Here, $H(e)=-e\log e -(1-e)\log(1-e)$ is the binary Shannon entropy function, all the $\log$ is base 2 throughout this chapter, and $t_e$ is the parameter Alice picks up by balancing the failure probability and the final output length. Then, the number of final random bits is given by,
\begin{equation} \label{eq:R}
\begin{aligned}
K\ge n_z-n_zH(e_{pz})-t_e.
\end{aligned}
\end{equation}
In practice, Alice needs to prepare a Toeplitz matrix of size $n_z\times[n_z-n_zH(e_{pz})-t_e]$ for randomness extraction.

We note that the failure probability $2^{-t_e}$ quantifies  fidelity between the  state that results from the phase error correction and the ideal state $\ket{+}^{\otimes n_z}$. In the composable security definition \cite{BenOr:Security:05,Renner:Security:05}, a trace-distance measure security parameter $\varepsilon_t$ should be employed. Its relation to the fidelity measure $\varepsilon_f$ is given by \cite{Fung:Finite:2010}
\begin{equation} \label{eq:traceFidelity}
\begin{aligned}
\varepsilon_t=\sqrt{\varepsilon_f(2-\varepsilon_f)}
\end{aligned}
\end{equation}
In the following, we shall use the fidelity measure for the failure probability, which, in the end, can be conveniently converted to the trace-distance measure security parameter.

To construct the Toeplitz matrix of size $n_z\times[n_z-n_zH(e_{pz})-t_e]$, Alice needs to use $n_z+n_z-n_zH(e_{pz})-t_e-1$ random bits. Thanks to the Leftover Hash Lemma \cite{Impagliazzo:Leftover:1989}, the Toeplitz hashing extractor can be proven to be a strong extractor. That is, the output random bits are independent of the random bits used in the construction of the Toeplitz matrix \cite{frauchiger2013true}. Thus, the Toeplitz matrix can be reused.

Our result can also be derived via a different but  elegant approach by employing a newly developed seminal uncertainty relation \cite{tomamichel2012tight} and extending the Leftover Hash Lemma \cite{Impagliazzo:Leftover:1989} to the quantum scenario \cite{tomamichel2011leftover}. Interestingly, the result from that approach yields a security parameter (in trace distance measure) that is of the order of $2^{-t_e/2}$, which is consistent with ours. Such techniques have been successfully applied in some applications, including QRNGs \cite{PhysRevA.90.052327}.

%{\bf Issues in the general case.} There are a few practical concerns left for the analysis, which will be addressed later one by one.
%\begin{itemize}
%\item
%Finite data size effect. In practice, the QRNG only runs for a finite period of time. Thus, the sampling test in the $X$ basis measurement would suffer from statistical fluctuations. We essentially follow the finite-key-size effect analysis in QKD \cite{Fung:Finite:2010}.
%
%\item
%Initial seed length. Obviously, we require the seed length to be (substantially) smaller than the final output random bit length. From the randomness expansion point of view, the ratio between these two lengths should goes to 0 when the number of runs goes to infinity.
%
%\item
%Multi-photons from the source. Since the source is untrusted, it could as well emit multiple photons or some arbitrary dimensional signals. Unlike QKD where multi-photons would introduce serious security loopholes, the multi-photons do not affect our result since the source is not trusted.
%
%\item
%Channel loss. This is common in a practical system and needs consideration. For example the low detector efficiency aforementioned is part of the channel loss.
%
%\item
%Possible double clicks for the two single photon detectors which measures two eigenstates of a basis. Note that this does not appear in the ideal case of a single photon.
%\end{itemize}

\subsubsection{Finite key analysis} \label{Sec:Finite}
In practice, the QRNG only runs for a finite  time; consequently, the sampling tests for the $X$-basis measurements will suffer from statistical fluctuations. In the parameter estimation step, the key parameter $e_{pz}$ in Eq.~\eqref{eq:R} should be estimated (bounded) from the finite data size effect. %We essentially follow the finite-key-size effect analysis in QKD \cite{Fung:Finite:2010}.

In the random sampling test, Alice measures the squashed qubits in the $X$ basis and obtains the error rate, $e_{bx}$. Remember that, as required in the squashing model, this error rate includes half of the double-click ratio. Henceforth, we simply call this error rate as the $X$-basis error rate. Recall that the phase error rate $e_{pz}$ is defined as the error rate if the quantum signals measured in the $Z$ basis were measured in the $X$ basis. When the sampling size is large enough, $e_{pz}$ can be well approximated by $e_{bx}$. %We follow the random sampling argument as used for the finite-key-size analysis in QKD \cite{Fung:Finite:2010}.

Before presenting the details of the random sampling analysis, we establish a notation. Suppose Alice receives $n$ squashed qubits and randomly chooses $n_x$ of them to be measured in the $X$ basis, leaving the remaining $n_z=n-n_x$ qubits in the $Z$ basis. Let the ratio of $X$-basis measurements be $q_x=n_x/n$, the number of errors Alice finds in the $X$ basis to be $k$, and the total number of errors to be $m$ if Alice had measured all qubits in the $X$ basis. Then, the number of errors in the qubits measured in the $Z$ basis is $m-k$, which is the key parameter we need to determine through random sampling. The quantity $m-k=n_ze_{pz}$ determines the randomness extraction rate. Define the lower bound of $e_{pz}$ by,
\begin{align} \label{eq:eptheta}
e_{pz}\le e_{bx}+\theta,
\end{align}
where $\theta$ is the deviation due to statistical fluctuations.
Following the random sampling results of Fung et al.~\cite{Fung:Finite:2010}, we can bound the probability when Eq.~\eqref{eq:eptheta} fails,
\begin{equation} \label{eq:Ptheta}
\begin{aligned}
\varepsilon_\theta &= \text{Prob}(e_{pz}>e_{bx}+\theta)  \\
&\le \frac{1}{\sqrt{q_x(1-q_x)e_{bx}(1-e_{bx})n}}2^{-n\xi(\theta)},
\end{aligned}
\end{equation}
where $\xi(\theta)= H(e_{bx}+\theta-q_x\theta)-q_x H(e_{bx})-(1-q_x)H(e_{bx}+\theta)$. Note that in the unlikely event that $e_{bx}=0$, the failure probability is unbounded, and one should rederive the failure probability or simply replace $e_{bx}$ with a small value, say, $1/n_x$.

In practice, the failure probability $\varepsilon_\theta$ is normally picked to be a small number depended on applications. In later data postprocessing, we pick up $\varepsilon_\theta=2^{-100}$. Once $\varepsilon_\theta$ is fixed, there is a trade-off between $q_x$ and $\theta$ for the ratio of the final random bit length over the raw data size. Thus, the number of samples for the $X$-basis measurement should be optimized for the randomness extraction rate.

%For a string of length $n$, one chooses a random sample of size $n_x$ and finds there are $k$ errors in the sample.
One key property for the random sampling is that the $n_x$ locations of the $X$-basis measurements are randomly chosen from the total $n$ locations, i.e., the $\binom{n}{n_x}$ cases occur equally likely. Then, Alice needs a random seed with a length of
\begin{equation} \label{Source:seed}
%\begin{aligned}
n_{seed} = \log\binom{n}{n_x} \le n_x\log n.
%\end{aligned}
\end{equation}
The effect of loss on the seed length will be discussed in Section \ref{Sec:Issues}.
In Appendix \ref{App:numXmeas}, we show that $n_x$ can remain a constant, given the failure probability, when $n$ is large. Then, in the large data size limit, the seed length is exponentially small compared to the length of the output random bit. Therefore, we reach an exponential randomness expansion.
%Here we note that for privacy amplification, one also uses a  random seed.  It can be shown that this seed for privacy amplification can be reused in every run of the QRNG \cite{frauchiger2013true}. Therefore, this random seed needs not to be counted in the initial randomness that the protocol requires.

\subsubsection{Practical issues} \label{Sec:Issues}
{\bf Multi photons:} In our protocol, the source is allowed to emit multi photons, since its dimension is assumed to be uncharacterized. In other words, these components would not affect the randomness of the final output. In practice,  multi photons may introduce double clicks when threshold detectors are used \cite{BML_Squash_08}; these  double  clicks will directly contribute to the error rate term $e_{bx}$. Thus, when the multi-photon ratio is very high, the double-click ratio will increase to a point that the upper bound on information leakage $e_{pz}$ increases to oen half; at that point,  when no random bits can be extracted according to Eq.~\eqref{eq:R} and Alice simply aborts the protocol.

%Qubit measurements are assumed in the previous min-entropy lower bound derivation. However, this assumption can be exempted by the argument of the squashing model, which basically claims that a threshold detector can turn the possible multi-photon pulses into single qubit pulses. Here we briefly review squashing model \cite{BML_Squash_08} and its extensions which our protocol uses. In practical applications, the detectors are threshold detectors, which is equivalent to a squasher followed by a qubit measurement. Moreover, choosing measurement basis after the light goes through the squasher is equivalent to the opposite sequence \cite{BML_Squash_08}. This implies the squasher can be viewed as part of the untrusted source and then the detector is equivalent to a qubit-measurement detector, which insures the security of the protocol. Hence the protocol still works for sources which send multi-photon ingredients.  Besides, the squashing model can also be viewed as a reason to trust the detector module since part of the detector can even be untrusted (for example, the squasher part).

{\bf Loss:} The loss tolerance of our protocol is guaranteed by the squashing model in which the measurement is assumed to be basis independent \cite{BML_Squash_08}. This assumption can be guaranteed by the fact that the basis is chosen after losses. Alice does not anticipate the positions of losses, so she effectively decides the (random) positions for $X$-basis measurements before losses. The effect of loss only decreases the number of effective $X$ measurements, but the positions of effective $X$ measurements are still uniformly random in squashed qubits; this fulfills the requirement of  random sampling. The detailed proof is shown in Appendix \ref{app:sampling}.
%From practical point of view, the key assumption we put here is fair sampling. That is, the quantum states detected is a fair sample of the ones lost. %Note that this assumption may introduce the detection loophole.

{\bf Basis-dependent detector efficiency:} Our protocol  assumes that the efficiencies of the detectors are the same. In practice, efficiency mismatches would cause the measurement to be different for the two bases (basis dependent). A viable way to deal with this imperfection is to recalculate the rate as a function of the ratio between the efficiencies of the two bases, employing the technique used in QKD \cite{Fung:Mismatch:2009}. As indicated by the result in QKD \cite{Fung:Mismatch:2009}, the random number generation rate will slightly decrease when there is a small mismatch in detector efficiencies. More precisely, denote the ratio between the minimum and maximum efficiencies of the two detectors as $r\le1$, then the key size becomes $r n_z (1-H[(e_{bx}+\theta)/r])-t_e$ bits. We leave detailed analysis of this imperfection for future work.

%The loss tolerance of our protocol is essentially because the random selection of basis in the $N$ runs is after the channel. The effect of loss only decreases the number of effective $X$ measurements, but the positions of effective $X$ measurements are still uniformly random in the effective runs, as shown in Appendix \ref{app:sampling}. Thus the random sampling trick still applies and the min-entropy lower bound could be obtained.

{\bf Double clicks:} Our analysis takes account of the effect of double clicks by adding half of the double-click ratio to the $X$-basis error rate, as required in the squashing model. This is also essentially why multi-photon states can be used on the source side without affecting final randomness. Note that double clicks should not be discarded freely in the measurement. Otherwise a security loophole will appear, namely, a strong pulse attack \cite{Lutkenhaus_99DoubleClick}. In a strong pulse attack, Eve always sends strong signals (with many photons) in the $Z$ basis. Suppose she sends a strong state in $\ket{H}$; if Alice chooses the $Z$-basis measurement, a valid raw random bit will be obtained, but if she chooses the $X$ basis, a double click is likely to happen. In our protocol, when Alice chooses the $X$-basis measurement, she should get an error (resulting in $\ket{-}$) with a probability of one half. If Alice simply discards all double clicks, Eve's attack will not be noticed. This attack cannot be explained by a qubit measurement. This is intuitively why the squashing model requires random assignments for double clicks.

%, are more complicated than the channel loss. It changes the calculation of error rate as described in the protocol. There is a strong pulse attack for double clicks \cite{Lutkenhaus_99DoubleClick} if one just discards the double click events. The source could simply send strong $\ket{Z+}$ or $\ket{Z-}$ pulse sometimes. If one measures the $Z$ basis, it appears as one single click. But there will be no randomness. Thus this should be reflected in the calculation of error rate for the $X$ basis measurements.  When one measures a strong $\ket{Z+}$ pulse in the $X$ basis, there is a double click. The standard procedure is to randomly assign the click to one of the output, which introduces half an error. But one do not really have to carry out the random assignment which costs randomness. We just need to add half an error for each double click to the number of errors as calculated in our protocol. Finally, for the $Z$ basis measurement where we generate randomness, there is no problem of just discarding the double click events.

{\bf Basis choice:} %Normally, there are two ways to choose the basis for measurements. One is passive in the sense that the input light goes through a beam splitter and is split into two beams with the respective endpoints; the $X$ or $Z$ detectors. The other way is active measurements,
When choosing $X$- or $Z$-basis measurements, an input random string of length $N$ (as a seed) is needed to choose the basis. Suppose the number of $X$-basis measurements to be performed is $N_x$, then Alice chooses $N_x$ positions out of $N$ with equal probability, i.e., with probability $\binom{N}{N_x}^{-1}$.  Then, she needs a seed length of $\log\binom{N}{N_x}$. This is similar to Eq.~\eqref{Source:seed} with the difference  that before the measurement, Alice does not know the positions of losses. More details on how to dilute a short random seed to a longer (partially random) one are provided in Appendix \ref{app:input}.

{\bf Intensity optimization:} The intensity of the source should be optimized to maximize the randomness generation rate. With increasing intensity, the detection rate will increase along with an increases in the double-click rate (and hence $e_{pz}$ increases). There exists a trade-off between $n_z$ and $e_{pz}$, as shown in Eq.~\eqref{eq:R}.

\section{Experiment demonstration} \label{sec:exp}
In this section, we perform a proof-of-principle experimental demonstration to show the practicality of SIQRNG scheme. Our experiment setup consists of two parts, the source, owned by an untrusted party Eve, and the measurement device, owned by the user Alice. The schematic diagram is shown in Fig.~\ref{fig:expsetup}.

\begin{figure}[htb]
\centering \resizebox{10cm}{!}{\includegraphics{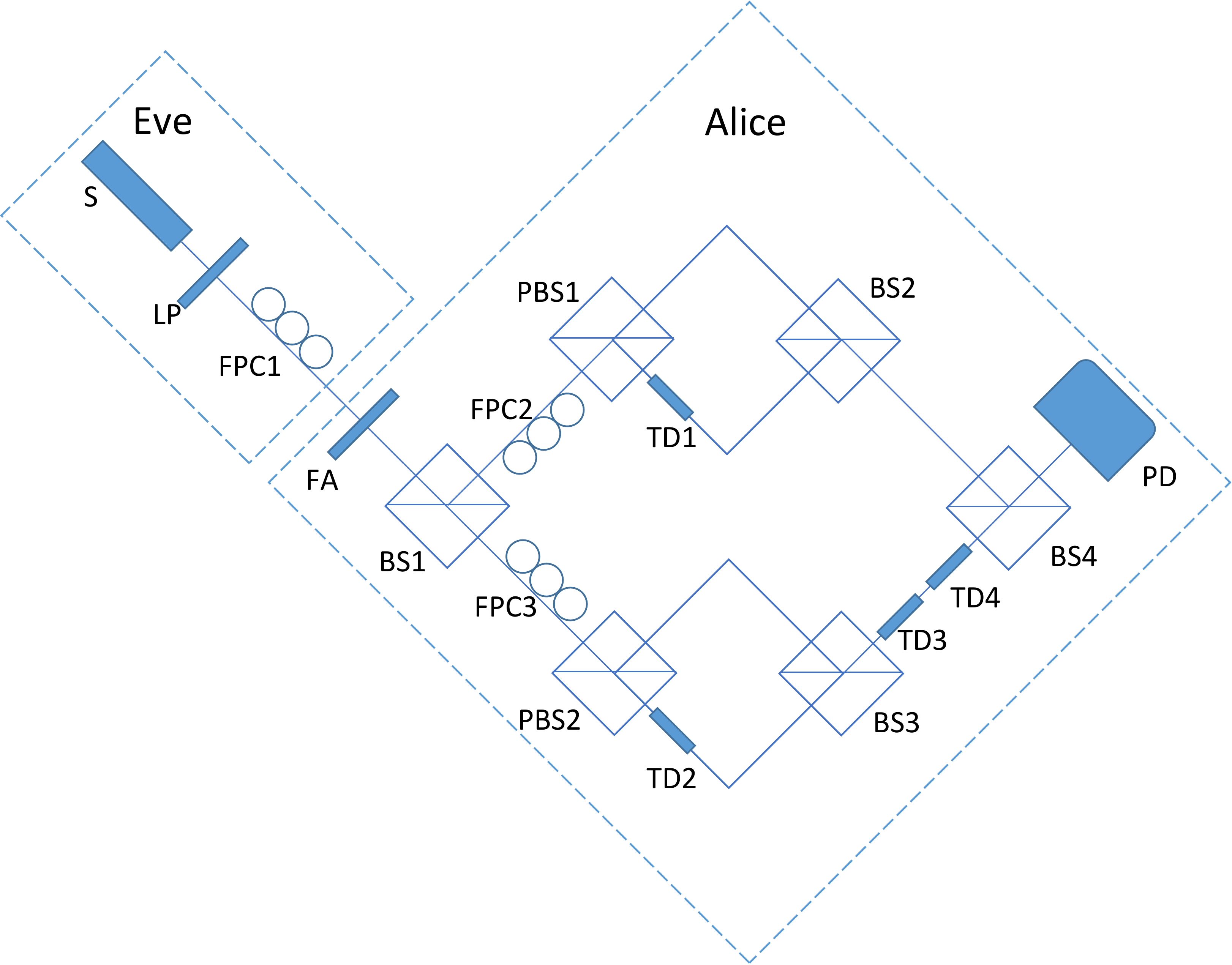}}
\caption{Experiment setup of SIQRNG. S: laser source; LP: linear polarizer; FPC: fiber polarization controller; FA: fiber attenuator; BS: beam splitter; PBS: polarizing beam splitter; TD: time delay implemented with a 12 m fiber; PD: photon detector.}
\label{fig:expsetup}
\end{figure}

On Eve's side, a laser, labeled as $S$, with a wavelength of 850 nm and a repetition rate of 1 MHz is used as a photon source. The power of the laser is adjusted to be one photon number per pulse. Instead of assuming each state to be a qubit system, each pulse that the laser sends is a coherent state of infinite dimensions.  The pulse of the laser is then modulated to $\ket{+}$ polarization by a linear polarizer (LP) and a fiber polarization controller (FPC1). Between the source and the measurement device, we put an fiber attenuator (FA) to simulate different losses in the system.

On Alice's side, first a series of filters need to be applied to ensure the measured optical mode is pure before entering the threshold detectors, as required by the squashing model. For demonstration purpose, we use a single-mode fiber to play the role of a filter. Ideally, frequency and temporal filters should be also added to further purify the optical mode in order to make the photons indistinguishable. For demonstration purpose, a biased beam splitter (BS1) with a ratio of $1:49$ is used to passively choose the $X$ or $Z$ basis. %Note that, the beam splitter is chosen only for experiment demonstration. In practice, a fully implementation of our SIQRNG scheme requires active switch of the two bases.
Finally, Alice records when the photon detector (PD) clicks. The detector is time-division-multiplexed by adding four time delays TD1 to TD4 (60~ns each) in the optical paths, so that it can simulate four detectors which detect the outcomes of both bases and each bit values. The gate width and the dead time of the detector are 10 ns and 50 ns, respectively.

%\subsection{Experiment result}

%By varying transmission loss and key size, we plot the key rate given in Eq.~\eqref{eq:R} and the finite-key effect given in Eq.~\eqref{eq:eptheta} in Fig.~\ref{fig:errorrate} and Fig.~\ref{fig:rngrate}, respectively.
The phase error rate, as calculated in Eq.~\eqref{eq:eptheta}, is plotted in
Fig.~\ref{fig:errorrate}. The related experimental parameters are listed as follows. The raw key sizes is $N=10^6$; the dark count is $10^{-5}$; the detector efficiency (without FC adaptor) is $45\%$; the misalignment error of the source is 2\%; and the failure probability is $\varepsilon_\theta=2^{-100}$.  The figure shows that the error rate increases as the loss becomes large. This is because the effect of dark counts becomes dominant when the loss is high. Due to statistical fluctuations, the phase error rate increases when the data size shrinks. Note in particular that the phase error rate can go beyond $20\%$ under high losses, which does not yield any key rates in most QKD protocols. Nevertheless, random numbers can still be generated in our SIQRNG scheme.

%We vary the loss in the QRNG system to examine the effect of different detector efficiencies and/or transmission distances on random number generation performance. Fig.~\ref{fig:errorrate} shows phase error rates, as calculated by Eq.~\eqref{eq:eptheta}, under different losses. It can be seen that the error rate increases as the loss becomes large. This is because the effect of dark counts becomes dominant when the loss is high. Due to statistical fluctuations, the phase error rate slightly increases when the data size shrinks. Note in particular that the phase error rate can go beyond $20\%$ under high losses, which does not yield any key rates in most QKD protocols. However random numbers can still be generated in SIQRNG.

\begin{figure}[htb]
\centering \resizebox{10cm}{!}{\includegraphics{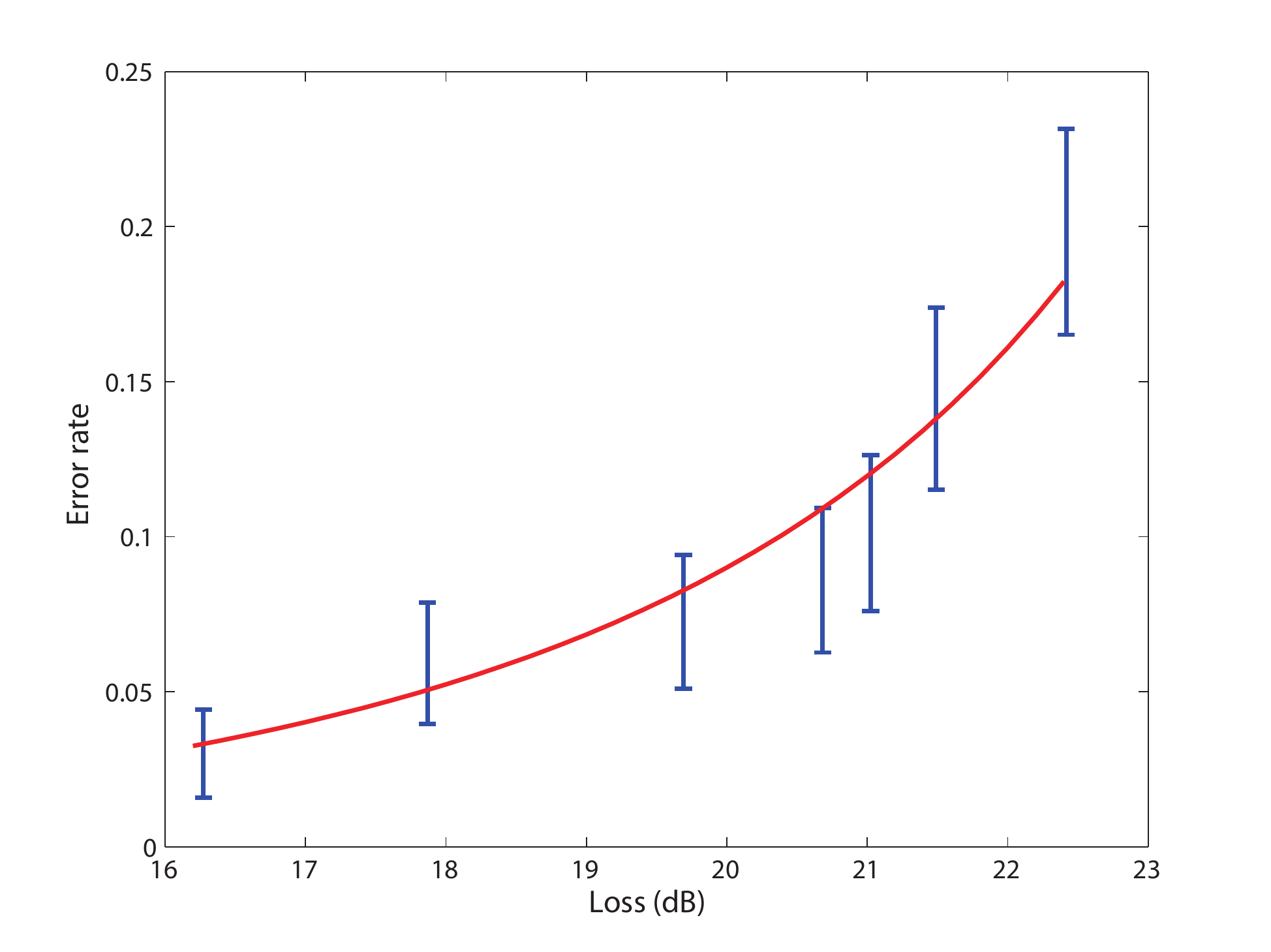}}
\caption{Relation between the phase error rate and the loss. The big error bars are caused by a very conservative estimation of statistical fluctuations and also partially by the fluctuation of experimental parameters for different losses.}
\label{fig:errorrate}
\end{figure}

The relation between the randomness generation rate and the loss is plotted in Fig.~\ref{fig:rngrate}.  It can be seen that the randomness generation rate becomes lower with a larger loss, which is consistent with Fig.~\ref{fig:errorrate}. Under practical detector efficiency, the randomness generation rate still achieves a relatively high rate of $5\times 10^{3}$~bit/s. Note that, the intensity of the source is fixed in our experimental demonstration. In practice, the intensity of the source can be increased to compensate the loss, and actually the maximum randomness generation rate in our scheme is mainly limited by the dead time of the detector. For our detector with a dead time of 50 ns, the maximum randomness generation rate is $1$ bit$/50$ ns=20 Mbps, which requires the source to be a single photon source with a repetition rate of 20 Mbps. For practical implementations with coherent-state sources, the randomness generation rate can reach the order of 2 Mbps after taking account of various errors and finite data size effects.

%Here we assume that the source is a coherent state with average 0.5 photon per pulse after all attenuation including channel loss and detector inefficiency, and there are multiple detectors. Then the vacuum is of probability $exp(-0.5)=0.6$ and the multi-photon is of probability $1-exp(-0.5)-0.5exp(-0.5)=0.1$ according to the poisson distribution. Then the randomness generation rate is $(1-exp(-0.5))(1-H(0.05))/(5\times 10^{-8})=2.2\times 10^{-6}$ bps.

\begin{figure}[htb]
\centering \resizebox{10cm}{!}{\includegraphics{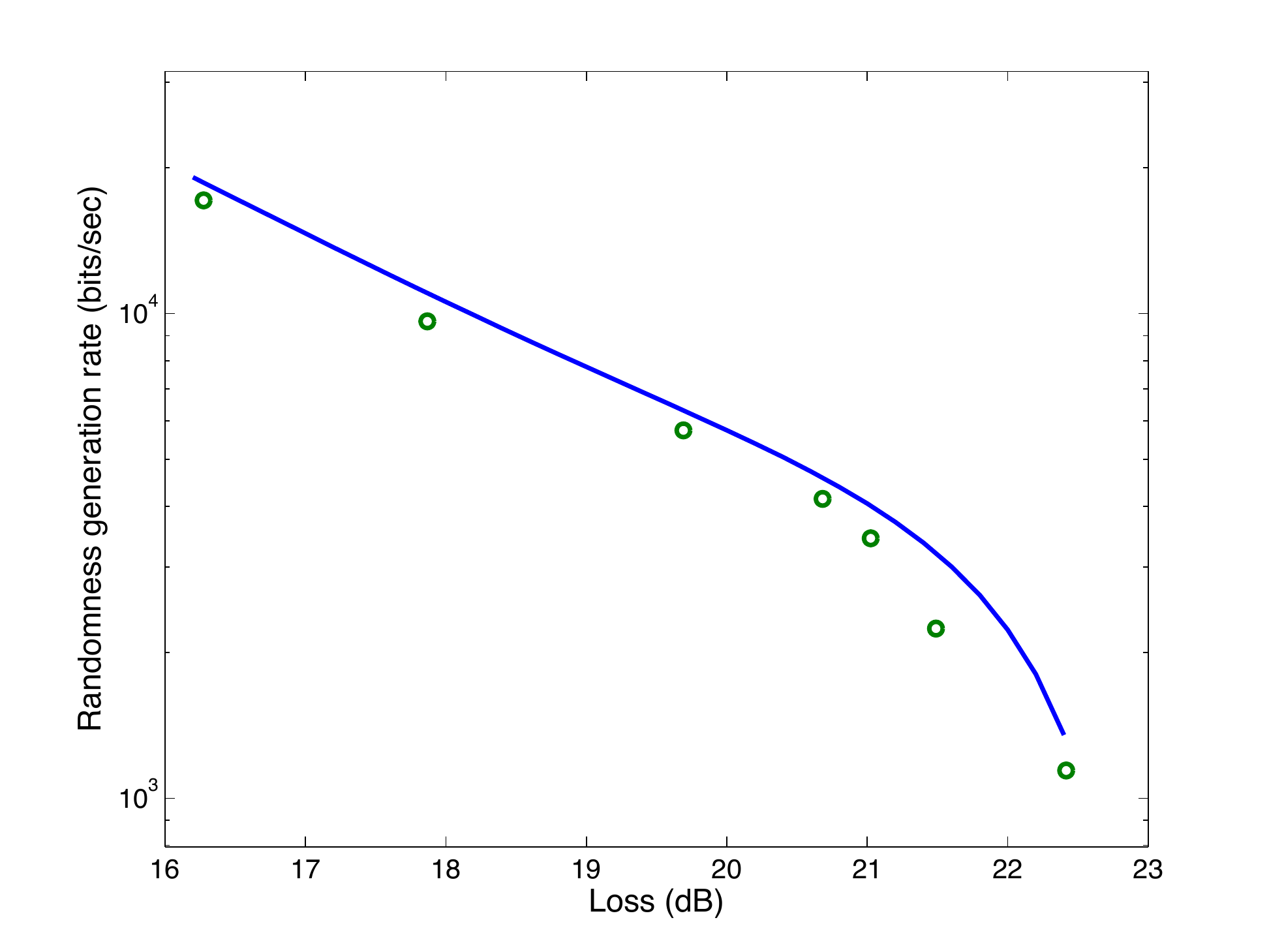}}
\caption{Dependency of randomness generation rate on the loss. The data points on the figure are taken to be the lower bound of the rate, evaluated by random sampling. The security parameter is $\varepsilon_t=2\times 2^{-50}$}
\label{fig:rngrate}
\end{figure}

%%%%%  Statistical tests %%%%%%%%%%%%%%%%%%%%

After obtaining the random bits, we apply the Toeplitz-matrix hashing \cite{Mansour:Toeplitz:93} on the raw data to obtain final random numbers. To test the randomness, we further perform two statistical tests on the output of our SIQRNG, autocorrelation test and the NIST test suite \footnote{See http://csrc.nist.gov/groups/ST/toolkit/rng.}.  The autocorrelation is defined as
\begin{equation}\label{eq:auto}
 R(j) = \frac{ \mathbb{E}[ (X_i-\mu) ( X_{i+j}- \mu )]}{\sigma^2},
\end{equation}
where $j$ is the lag between the samples, $X_i$ is the $i$-th sample bit, $\mu$ and $\sigma$ are the average and the variance of the sample, and $\mathbb{E}$ stands for expectation.
The result of autocorrelation test of raw data and final data is shown in Fig.~\ref{fig:AR}. It can be seen that the autocorrelation is substantially reduced in the final data.
\begin{figure}[htb]
\centering \resizebox{12cm}{!}{\includegraphics{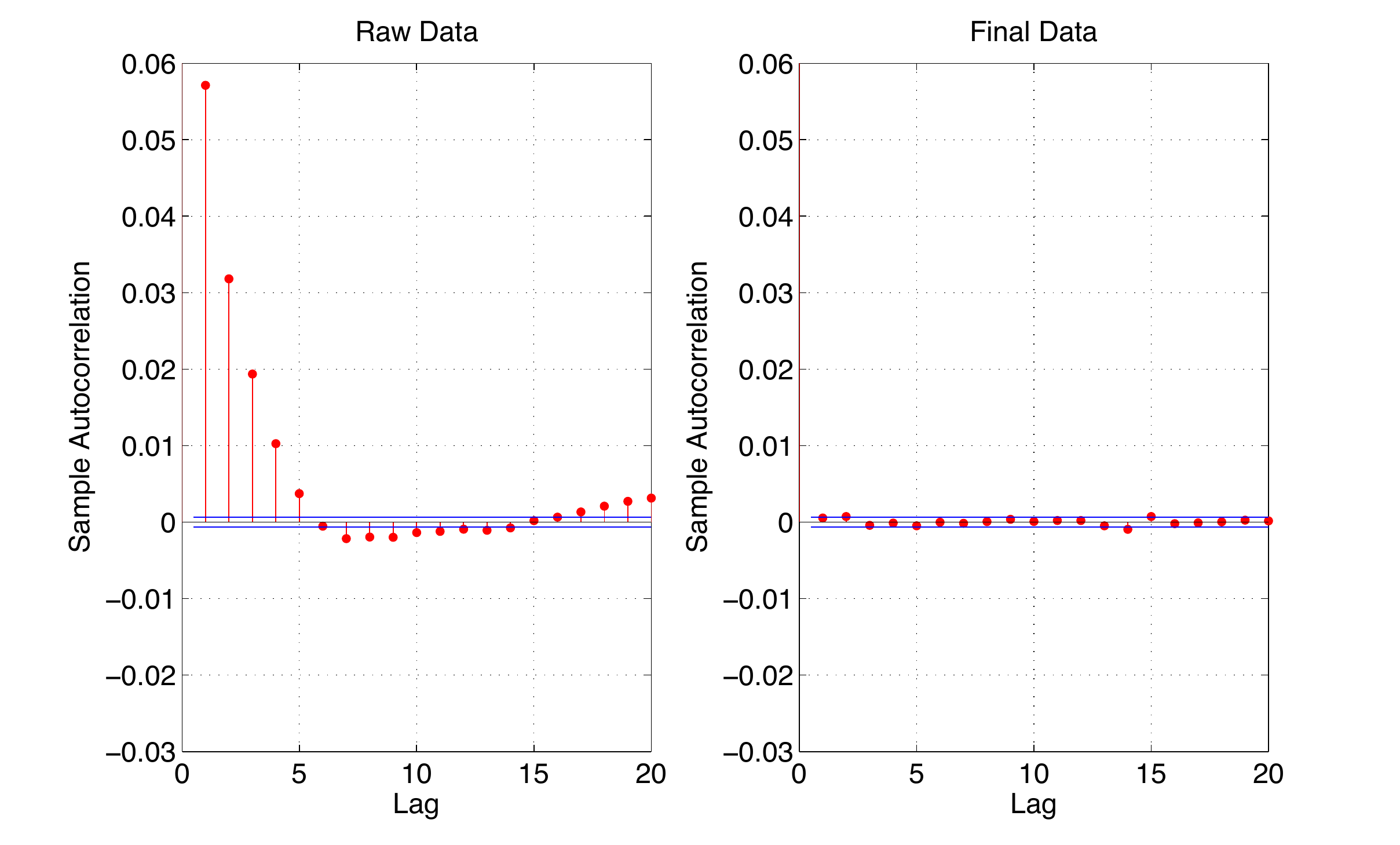}}
\caption{The autocorrelation function of the raw data and the final data. The x-axis is the lag $j$ between the sampled data $X_i$ and $X_{i+j}$, while the y-axis is the autocorrelation $R(j)$ defined in Eq.~\eqref{eq:auto}. Data sizes of both the raw data and the final data are in the order of $10^7$. The autocorrelation of the final data is significantly smaller than the raw data in absolute value.  Due to finite-key-size effect, the autocorrelation cannot be zero even for perfectly random strings.}
\label{fig:AR}
\end{figure}
The result of NIST tests on the final data is shown in Fig.~\ref{fig:p}. We can see that all tests are passed.
\begin{figure}[htb]
\centering
\includegraphics[width=10cm]{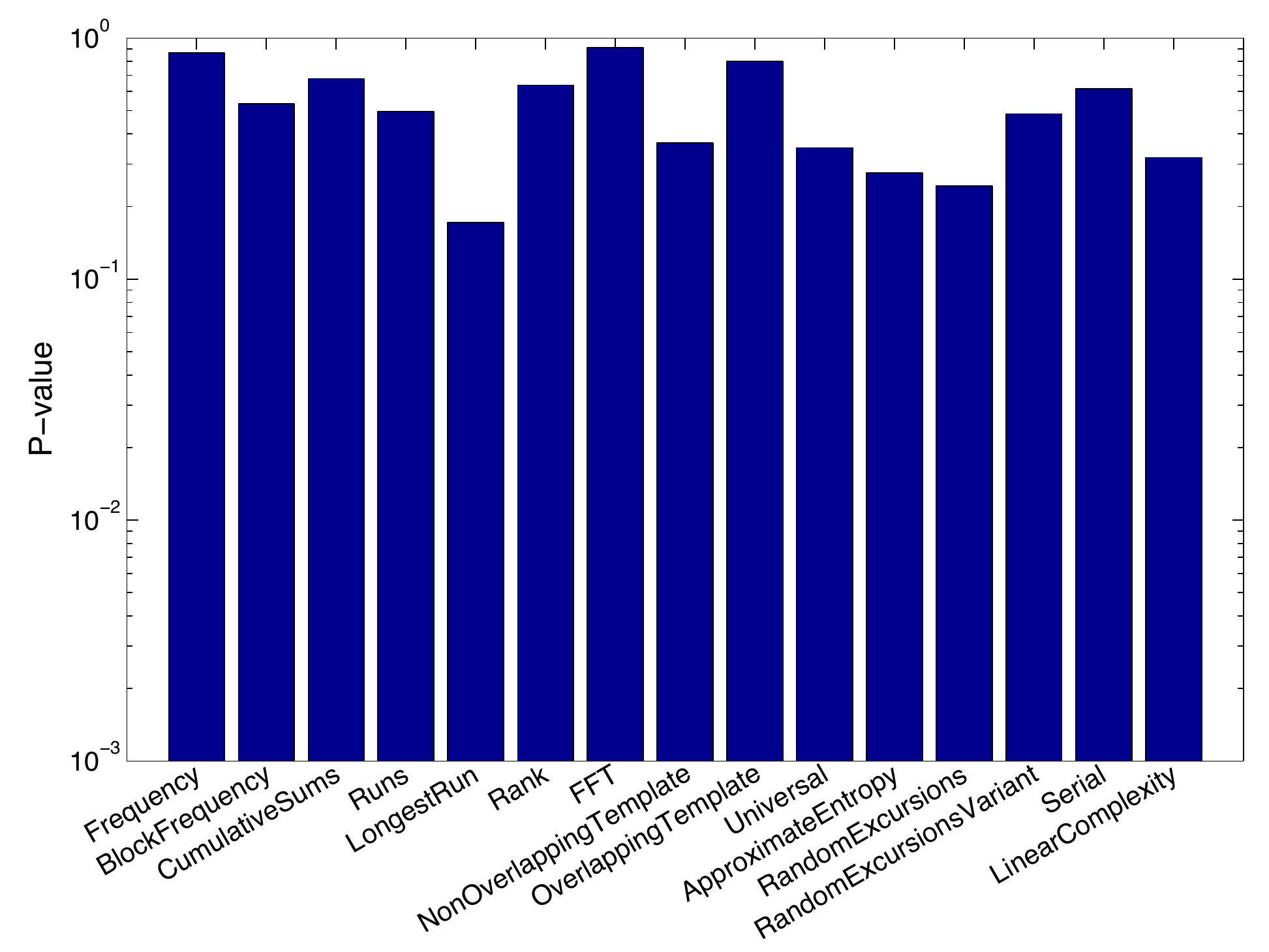}
\caption{The P-value of the statistical tests. The x-axis lists the names of statistical tests in the NIST test suite. The final data size is 91 Mbit, which is extracted from 115 Mbit raw data. To pass each test, the P-value should be at least 0.01 and the proportion of sequences that satisfy $P>0.01$ should be at least 96\%. It can be seen  in the figure that the P-values of all tests are greater than $0.01$.}
\label{fig:p}
\end{figure}

%%%%%%%%%%%  Discussion %%%%%%%%%%%%%%%%%%%
%\section{Discussion} \label{discussion}
We have proposed a source-independent and loss-tolerant QRNG scheme and its experimentally demonstration in a passive basis choice realization. From an experimental point of view, the beam splitter itself, as a part of the measurement device, may also be uncharacterized. Thus, it would also be interesting to demonstrate our scheme with an active basis choice in the future. In fact, when the source operates properly, the speed of our protocol is comparable to that of a trusted polarization-based QRNG whose frequency is limited only by single photon detectors---approximately 100~Mbps \cite{comandar2014ghz}.

Some current realizations of QRNG experiments could be converted to our SIQRNG protocol. For example, an LED could be used as the source, as  regular QRNG \cite{Mobile2014}. Since the polarizations of an LED light are random, it would be convenient to add a polarizer for the $\ket{+}$ direction to make the source polarized light. Since the detector can work in a gated mode, it does not matter whether the light source is  continuous or pulsed. This shows why the repetition rate is limited only by single photon detectors. Viewed from another angle, such a setup could also be used to test  quantum features of macroscopic sources.

%For future projects, it is interesting to investigate whether there exists a loss-tolerant measurement-independent QRNG and how it could be constructed. A further question is whether we could extend the approach here to achieve a device-independent QRNG without relying on Bell's inequalities.

%For future projects, it would be interesting to investigate other loss-tolerant self-testing QRNG schemes. Essentially, we are aiming to design a QRNG to tolerate large losses and generate fast random numbers simultaneously, given the minimum assumptions of a practical setup.

Recently,  a continuous-variable version of the source-independent QRNG is experimentally demonstrated\cite{2015arXiv150907390M}
%Despite the fact that the quadrature measurements involved may not be characterized as well as the discrete measurements, such as polarization projection, it
and achieves a randomness generation rate over 1 Gbps. Moreover, with state-of-the-art devices, it can potentially reach the speed in the order of tens of Gbps, which is similar to the trusted-device QRNGs. Hence, semi-self-testing QRNG is approaching practical regime.

Apart from the protocol based on uncertainty relation, we can also make use of the coherence on the measurement basis to quantify the randomness output. Note that, the SI-QRNG protocols based on uncertainty relation does not maximally exploits the randomness in the source. For instance, suppose the source emits state $1/2(\ket{0}+i\ket{1})$, then we have $H(X) = 1$ and hence $H(Z|E) \ge 0$. In this case, although the measurement outcome on the $Z$ basis is genuinely random, it cannot be revealed by the information on the complementary basis $X$. Instead, the genuine randomness can be extracted if the $Y = \{\ket{+i} = (\ket{0}+i\ket{1})/\sqrt{2},\ket{-i} = (\ket{0}-i\ket{1})/\sqrt{2}\}$ basis is measured. On the other hand, if the $Y$ basis is also measured, we can directly calculate the coherence of the state and the randomness output will be maximized.

\part{Other works}
\chapter{Open timelike curves}
In general relativity, closed timelike curves (CTCs) can break causality with remarkable and unsettling consequences. At the classical level,
they induce causal paradoxes disturbing enough to motivate conjectures that explicitly prevent their existence. At the quantum
level such problems can be resolved through the Deutschian formalism, however this induces radical benefits¡ªfrom cloning
unknown quantum states to solving problems intractable to quantum computers. Instinctively, one expects these benefits to vanish
if causality is respected. This chapter  shows that in harnessing entanglement, we can efficiently solve NP-complete problems and clone
arbitrary quantum states¡ªeven when all time-travelling systems are completely isolated from the past \cite{yuan15OTC}. Thus, the many defining
benefits of Deutschian closed timelike curves can still be harnessed, even when causality is preserved.

\section{Open timelike curves}
\subsection{Causality and CTC}
Causality aligns with our natural sense of reality. We expect there to be a natural chronology to our reality -  two events should not be simultaneous causes for each other. The breaking of causality defies classical logic, resulting in causal paradoxes with no simple solution - the iconic example being the case where a man travels back in time to kill his own grandfather. Thus, physical predictions that break causality face intense scrutiny - often considered to be theoretical artifacts that are likely suppressed once we gain a more complete understanding of reality - motivating various chronology protection conjectures \cite{Hawking92}.

Nevertheless, causality breaking theories are consistent with current scientific knowledge. Closed timelike curves (CTCs) are valid solutions of Einstein's equations in general relativity \cite{Godel49,Morris88,Gott91}. Meanwhile, Deustch demonstrated that in the quantum regime, the resulting causal paradoxes always have self consistent solutions \cite{deutsch1991quantum}. This resolution, however, has radical operational consequences. Many foundational constraints of quantum theory break. Non-orthogonal quantum states can be perfectly distinguished, the uncertainty principle can be violated, and arbitrary unknown quantum states can be cloned to any fixed fidelity \cite{Brun09,brun2013quantum}. In harnessing these effects, many problems thought to be intractable to standard quantum computers now field efficient solutions \cite{brun2003computers,Bacon04,aaronson2009closed,aaronson2005guest}. Though radical, these effects seem somewhat rationalized in the context of requiring broken causality - the sentiment being that they are curiosities that will vanish once causality is imposed.

\begin{figure}[hbt]
\centering
\resizebox{10cm}{!}{\includegraphics[scale=1]{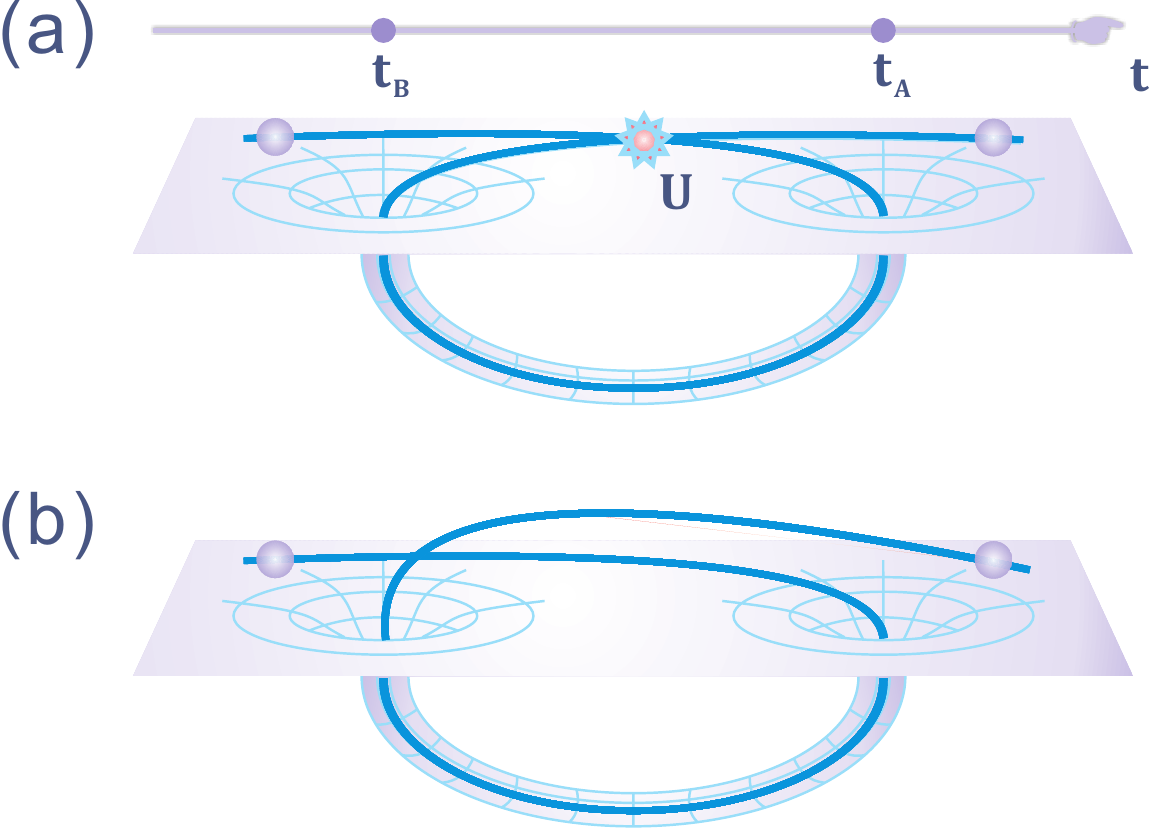}}
 \caption{\textbf{Deutschian timelike curves}. (a) depicts a physical visualization of a CTC, where an object entering one mouth of a wormhole at some point $t_A$ may jump to a prior time $t_B$ (with respect to an chronology respecting observer) and interact with its past self via some unitary $U$. (b) In the special case where no interaction occurs, we obtain an open timelike curve. This naturally occurs, for example, in instances where the wormhole mouths are spatially separated.}\label{Fig:CTC}
\end{figure}

What happens, however, if causality is not strictly broken? In this context, Pienaar et.al introduced open timelike curves \cite{Pienaar13} (OTCs). Consider a particle that travels back in time with respect to a chronology respecting observer, but is completely isolated from anything that can affect its own causal past during the time-traveling process (See Fig. \ref{Fig:CTC}). While the time-traveling particle has the potential to break causality, its complete isolation ensures that causality never breaks. Nevertheless, such OTCs can violate uncertainty principles between position and momentum. This opens a remarkable possibility - could the many other radical effects of CTCs stand independent from the breaking of causality?

Here, we demonstrate that OTCs are remarkably powerful, and can replicate many defining operational benefits of CTCs. In sending a particle back in time - even when it interacts with nothing in the past - we can clone arbitrary quantum states to any fixed accuracy, and thus violate any uncertainty principle. Meanwhile, they also grant quantum processors additional computational power, allowing efficient solution of NP-complete problems. Our results hint that the remarkable power of Deustchian CTCs may survive the censorship of chronology protection. This drastically improves the potential of harnessing such power via alternative effects - such as certain models of gravitational time dilation \cite{Pienaar13}. Thus, we open the possibility of testing the many radical protocols that harness CTCs in significantly less controversial settings.

\subsection{OTC and CTC}

In general relativity, causality can be violated due to the presence of spacetime wormholes that facilitate closed timelike curves (see Fig. \ref{Fig:CTC}). This allows a physical system $A$ to travel into its own causal past, and interact with its past self via some unitary $U$. The Deutschian model resolves potential paradoxes by enforcing temporal self-consistency \cite{deutsch1991quantum, ralph2010information}, i.e.,
\begin{equation}\label{eq:Deutsch1}
  \rho_{\mathrm{CTC}} = \mathrm{Tr}_{\neq A}\left[U(\rhoin\otimes\rho_{\mathrm{CTC}})U^\dag\right],
\end{equation}
where $\rhoin$ denotes the initial state of the system, $\rhoctc$ is the state it evolves to at the point of wormhole traversal and $\mathrm{Tr}_{\neq A}$ represents tracing over all systems other than $A$. Given a solution for $\rho_{\mathrm{CTC}}$, the final output of the process is given by
\begin{equation}\label{eq:Deutsch2}
\rhoout = \mathrm{Tr}_A\left[U(\rhoin\otimes\rho_{\mathrm{CTC}})U^\dag\right].
\end{equation}The many radical effects of CTCs rely on using specific self-interactions $U$ to break causality in different ways \cite{Brun09,brun2013quantum,brun2003computers,Bacon04,aaronson2009closed}.

\begin{figure}[hbt]
\centering
%\resizebox{12cm}{!}{\includegraphics[scale=1]{figure2_final3.pdf}}
\resizebox{8cm}{!}{\includegraphics[scale=1]{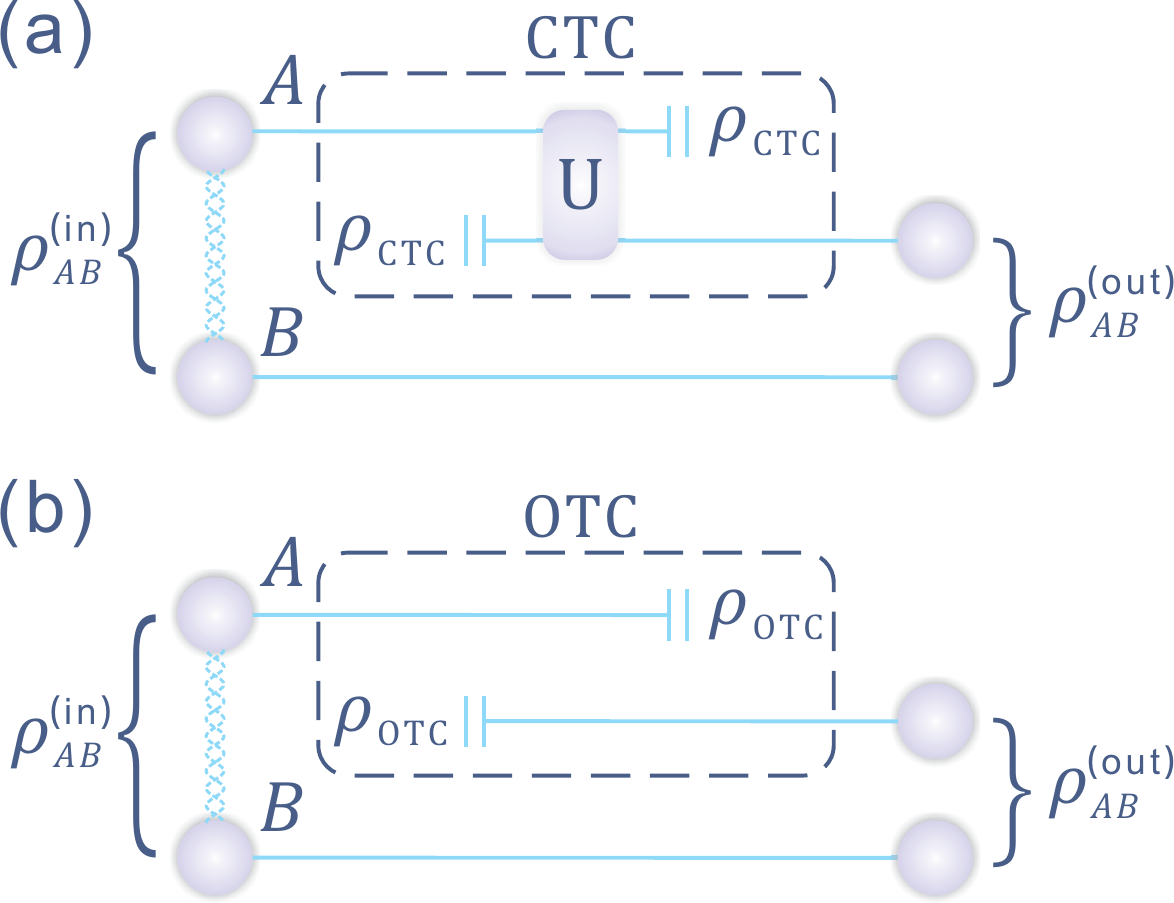}}
  \caption{\textbf{CTCs and OTCs in presence of ancilla}. $A$ represents the system to be sent through the space-time wormhole, and $B$ some chronology respecting  system initially correlated with $A$. (a) In general CTCs, temporal self-consistency demands that $\rho_{\mathrm{CTC}}$ satisfies $\rho_{\mathrm{CTC}} = \mathrm{Tr}_{\neq A}[U(\rhoin_{AB}\otimes\rho_{\mathrm{CTC}})U^\dag]$. (b) In the case of OTCs, this implies that system $A$ has state $\rho_{\mathrm{OTC}}= \mathrm{Tr}_{\neq A}\left[\rhoin_{AB}\otimes\rho_{\mathrm{OTC}}\right]=\rho_{A}$ after application of the protocol.
  }\label{Fig:OTC}
\end{figure}

Note that while the above analysis does not assume $\rhoin$ is pure, it \emph{only} applies to mixed inputs if $\rhoin$ represents one partition of a larger composite system that is pure. In the scenario where an input $\ket{\phi_k}$ is prepared with probability $p_k$, the dynamics of the CTC on each $\ket{\phi_k}$ must be analyzed separately \cite{deutsch1991quantum}. This is due to non-linearity, which implies different unravellings of the density operator yield differing outputs.

In OTCs, causality is preserved. The unitary $U$ is the identity - such that the time-travelling system does not interact with its causal past. Any observer in the frame of reference of $A$ can assign a valid chronology to all the events they witness. Meanwhile, to any outside observer, all events involving interactions with $A$ will respect causality. From an operational standpoint, there is no breaking of causality. If all information were classical, this entire procedure would only have the effect of desynchronizing $A$'s clock with that of an observer $B$.

Non-trivial effects, however, emerge when we consider quantum ancilla. Suppose we have access to a bipartite system $AB$ in state $\rho_{AB}$, where only one bipartition is sent through the OTC (see Fig.~\ref{Fig:OTC}b). The self-consistency relations imply
\begin{equation}\label{eq:decoherencor}
  \rhoout_{AB} = \mathrm{Tr}_{\neq A}[\rhoin_{AB}] \otimes \mathrm{Tr}_{A} [\rhoin_{AB}] = \rho_A \otimes \rho_B
\end{equation}
Thus, the OTC acts as a \emph{universal decorrelator} on $A$ - in sending a system $A$ though an OTC, we erase all quantum correlations between $A$ and the rest of the universe (and in particular, $B$). The resulting state, $\rho_A \otimes \rho_B$ fields identical local statistics with respect to the input $\rho_{AB}$, but none of its bipartite correlations. While this operation appears similar to trivial decoherence, it is non-linear, and shown to be impossible to synthesize with standard quantum dynamics \cite{Terno99}.

This effect is associated with the monogamy of entanglement \cite{ralph2010information} - a particle and its past self cannot be simultaneously entangled with the same external ancilla. While OTCs produce non-trivial dynamics when the input appears completely classical (e.g., when $\rhoin_{AB} = (\ket{00}\bra{00} + \ket{11}\bra{11})/2$), it applies only for mixed inputs if this mixedness is due to entanglement with some other system $C$. If we input $\ket{00}$ and $\ket{11}$ with equiprobability, then the dynamics of each input must be analyzed separately, and the OTC will have no effect.

\section{Quantum information with OTC}
\subsection{OTC enhanced measurement}
We first introduce \emph{OTC enhanced measurement}, a procedure that harnesses OTCs to measure an arbitrary observable $\hat{O}$ to any fixed precision. Specifically, given an unknown qudit ($d$ dimensional quantum system) in state $\rho$, we can determine $\langle \hat{O}\rangle = \mathrm{Tr}[\hat{O}\rho]$ to any desired accuracy $\delta > 0$ with negligible failure probability. This protocol functions as a building block for more sophisticated applications of OTCs, such as the efficient solution of NP-complete problems and cloning of unknown quantum states.

The protocol is illustrated in Fig.~\ref{Fig:measure}. Let $\ket{j}: j = 0,1,\dots,d-1$ denote a basis that diagonalizes $\hat{O}$. On this basis, we introduce the two qudit controlled addition operator, $C_+\ket{i}\ket{j} = \ket{i}\ket{j+i}$, where addition is done modulo $d$. We then
\begin{enumerate}
  \item Prepare $N$ identical ancillary states in an eigenstates of $\hat{O}$, say $\ket{0}$.
  \item Apply the $C_+$ operations $N$ times, each controlled on $\rho$ and targeting a fresh ancilla state. This correlates $\rho$ with each of the $N$ ancillaries.
  \item Pass each of the ancillaries through an OTC to destroy all correlations in this $N+1$-partite system.
\end{enumerate}

\begin{figure}[hbt]
\centering
%\resizebox{15cm}{!}{\includegraphics[scale=1]{figure3_final4.pdf}}
\resizebox{8cm}{!}{\includegraphics[scale=1]{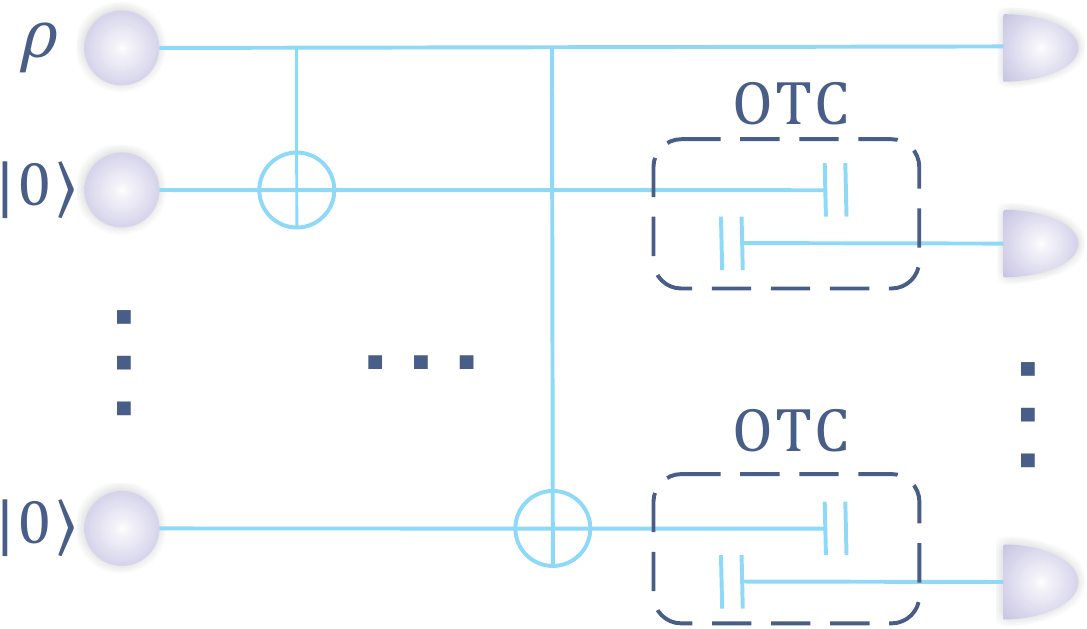}}
  \caption{\textbf{Quantum circuit of OTC enhanced measurement}. The protocol first introduces $N$ ancilla qudits, all of which are initialized in the state $\rho_E = \ket{0}\bra{0}$, where $\ket{0}$ is an eigenstate of $\hat{O}$. A sequence of $C_+$ gates then perfectly correlates each ancilla with $\rho$ with respect to $\hat{O}$ basis. The erasure of these correlations via OTCs, followed by $\hat{O}$ measurements on each individual qudit, allows determination of $\mathrm{Tr}[\hat{O}\rho]$ to a standard error that scales inversely with $N^2$.
  }\label{Fig:measure}
\end{figure}

This results in $N+1$ uncorrelated qudits, each in state $\rho_{\mathrm{diag}} = \sum_{i = 1}^d\rho_{ii}\ket{i}\bra{i}$, where $\rho_{ii}$ are the diagonal elements of $\rho$ in the $\hat{O}$ basis. Thus, each qudit exhibits identical statistics to $\rho$ when measured in the $\hat{O}$ basis. In taking the mean of these measurements, we obtain an estimate for $\langle \hat{O}\rangle$. By the central limit theorem, the error of our estimate scales linearly with $1/\sqrt{N}$. In particular, provided the eigenvalues of $\hat{O}$ are bounded, Hoeffding's bound implies we can estimate $\hat{O}$ to any desired accuracy $\delta$ and error rate $\epsilon$ using $O[1/\delta^2\log(1/\epsilon)]$ OTCs (see methods for details).

%Thus, we gain progressively better estimates of $\langle \hat{O}\rangle$ which each $N$. Application of the central theorem implies the error of our estimate will scale inversely with $N^2$.
%We now examine this protocol in more detail. By decomposing $\hat{O}$ and $\rho$ in the eigenbasis of $\hat{O}$ as $\hat{O} = \sum_{i = 1}^d o_i\ket{i}\bra{i}$ and $\rho = \sum_{i,j=1}^d\rho_{ij}\ket{i}\bra{j}$, one can easily see that measuring $\hat{O}$ yields $\sum_{i = 1}^d o_i\rho_{ii}$, which only depends on the diagonal terms of $\rho$. We can thus measure $\hat{O}$ on $\rho_{\mathrm{diag}}$ instead, where $\rho_{\mathrm{diag}} = \sum_{i = 1}^d\rho_{ii}\ket{i}\bra{i}$ is the density matrix containing only the diagonal components as $\rho$ \cite{Baumgratz13}. Upon the action of the decorrelators on the input states, it turns out that all the outputs are equal to $\rho_{\mathrm{diag}}$ (see Methods). For any observable $\hat{O}$ whose outputs belong to some finite range: $[o_{max}$ $o_{min}]$, this protocol allows us to estimate $\langle \hat{O}\rangle$ to within an arbitrary fixed precision $\delta$ and desired probability of error $\epsilon$ using order $O(d(o_{max} - o_{min})^2/\delta^2\log(1/\epsilon))$ OTCs.
\subsubsection{Scaling Analysis}
Execution of the OTC enhanced measurement with $N$ ancillaries (and therefore $N$ uses of the OTC) to estimate $\langle \hat{O}\rangle$ gives an output $O_{\mathrm{est}} = \sum_k O_k/(N+1)$. We define the measurement as being successful if the estimate achieves a desired accuracy of $\delta$ (i.e., $|O_{\mathrm{est}} - \hat{O}| < \delta$). Application of Hoeffding's inequality  \cite{hoeffding1963probability} gives failure probability $p_{f}$ that obeys
\begin{equation}\label{}
p_{f} \leq 2\exp\left[\frac{-2(N+1)\delta^2}{(O_\mathrm{max} - O_\mathrm{min})^2}\right].
\end{equation}
Here, $O_\mathrm{max}$ and $O_\mathrm{min}$ are the respective maximum and minimum eigenvalues of $\hat{O}$. Therefore
\begin{equation}\label{}
  N > \frac{(O_\mathrm{max} - O_\mathrm{min})^2}{2\delta^2}\log\frac{2}{\epsilon},
\end{equation}
OTC applications ensures a failure probability of no more than $\epsilon$. Provided $\hat{O}$ is bounded, this scales as $O[1/\delta^2 \log(1/\epsilon)]$.
 %Fig. \ref{Fig:measure}.

In OTC assisted cloning, we need to make $d^2$ informationally complete measurements, each to a desired accuracy $\delta > 0$ with negligible failure probability $\epsilon > 0$. Recall this is achieved via a $1 \rightarrow d^2$ universal cloner, whose imperfect copies are to be decorrelated via the use of OTCs (Fig. \ref{Fig:scheme}). For each of the $d^2$ copies, we apply an OTC enhanced measurement. To ensure this measurement is within accuracy $\delta$, an extra $O(d^2)$ overhead is required to compensate for the noise within the imperfect copies. The total number of OTCs required is then of order
\begin{equation}\label{}
  N > O\left[d^4\left(\frac{1}{2\delta^2}\log\frac{2}{\epsilon}\right)\right],
\end{equation}
where $O_\mathrm{max} - O_\mathrm{min}$ is equal to 1 for members of the information complete basis.

\subsection{Solving NP-complete problems}
We take inspiration from Bacon \cite{Bacon04}, who devised an efficient algorithm to solve the boolean satisfaction problem - a known NP-complete problem - using CTCs.

\subsubsection{Bacon's protocol}
Here we outline explicitly how a non-linear map that takes $\rho(n_z)$ to $\rho(n_z^2)$ allows the efficient solution of NP-complete problems. Specifically we study the satisfaction problem: \emph{Given a Boolean function $f:\{0,1\}^n\rightarrow\{0,1\}$, specified in conjunctive normal form, does there exist a satisfying assignment $(\exists b|f(b)=1)$?} This problem is known to be NP-complete.

Bacon \cite{Bacon04} showed that this problem can be efficiently solved if when given an input qubit $\rho = (I + \vec{n}\cdot\vec{\sigma})/2$, we can synthesize a quantum gate $S$ such that $S(\rho) = \frac{1}{2}\left(I + n_z^2\sigma_z\right)$. Here $\vec{n}$ denotes the Bloch sphere vector, and $\vec{\sigma}$ is a 3 component vector of Pauli matrices. In Fig. \ref{Fig:NP} we demonstrated how this gate can be synthesized using OTCs. With this established, the satisfaction problem is efficiently solved as follows:
\begin{enumerate}
  \item Prepare $n$ ancillary qubits in the state $1/\sqrt{2^n}\sum_{i=0}^{2^n-1}\ket{i}$ and a target qubit in state $\ket{0}$.
  \item Apply the unitary
      \begin{equation}\label{}
        U_f = \sum_{i=0}^{2^n-1}\ket{i}\bra{i}\otimes\sigma_x^{f(i)},
      \end{equation}
      on this system (with the last qubit representing the target). Tracing out the ancillary qubits leaves the target in
      \begin{equation}\label{}
        \rho = \frac{1}{2}\left[1+\left(1-\frac{s}{2^{n-1}}\right)\sigma_z\right],
      \end{equation}
      where $s$ is the number of $x$ satisfying $f(x)=1$.
  \item  Apply $S$ to the target via the use of OTCs (See Fig. \ref{Fig:NP}). Repeat this step $p$ times to get
    \begin{equation}\label{}
        \rho_p = \frac{1}{2}\left[1+\left(1-\frac{s}{2^{n-1}}\right)^{2^p}\sigma_z\right],
      \end{equation}

    %\item Run step (1)-(3) $q$ times.
\end{enumerate}

%\textcolor{blue}{Sorry but I am struggling to understand this paragraph. Can we be clearer about p and q, and also clarify what we meant in the sentence 'By measuring the $\sigma_z$ basis, whether there is output of $-1$ will tell us whether $s \neq 0$ or $s=0$'. I assume the proabbility of failure will make sense to me after we fix this sentence.}
Notice that, we could easily check the case of  $s=2^n$. Thus we only need to distinguish between $s=0$ and $0<s<2^n$.
With the limit of $p\rightarrow\infty$, the two output states corresponding to the cases of $s=0$ and $0<s<2^n$ are $\rho_p= \ket{0}\bra{0}$ and $\rho_p \rightarrow I/2$, respectively. By performing measurement in the $\sigma_z$ basis, one can distinguish the two types of output states $\ket{0}\bra{0}$ and $I/2$, that is, the case of  $s=0$ and $0<s<2^n$, with failure probability being $1/2$. By repeating these steps more times, say, $q$, the failing probability exponentially decays. For finite $p$ and $q$ that are polynomial in $n$, the probability of failure is given by  \cite{Bacon04}
\begin{equation}\label{}
  P_{fail}=\frac{1}{2^q}\left[1+\left(1-\frac{s}{2^{n-1}}\right)^{2^p}\right]^q.
\end{equation}

\subsubsection{Solving NP-complete problems with OTC}
We modify this algorithm to preserve causality - without losing efficiency. In the causality breaking algorithm, the key role of CTCs is to implement the non-linear map $S$ that maps an input qubit in state $\rho(n_z)$ to an output state $\rho(n_z^2)$, where $\rho(n_z) =  \frac{1}{2}\left(I + n_z \sigma_z\right)$ and $\sigma_z = \ket{0}\bra{0} -  \ket{1}\bra{1}$ denotes the Pauli $Z$ matrix (see methods for details).

\begin{figure}[hbt]
\centering
%\resizebox{12cm}{!}{\includegraphics[scale=1]{figure4_final4.pdf}}
\resizebox{8cm}{!}{\includegraphics[scale=1]{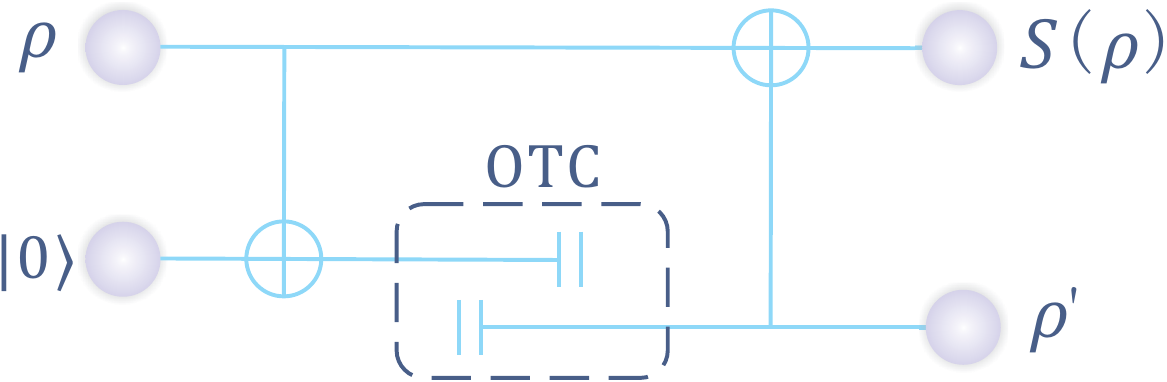}}
  \caption{\textbf{Solving NP-complete problems with OTCs.} The key non-linear gate $S$, that takes $\rho(n_z)$ to $\rho(n_z^2)$, can be implemented via open timelike curves. This is achieved by the use of a single OTC, applied between two successive $C_+$ gates.}\label{Fig:NP}
\end{figure}

This non-linear map can be replicated without breaking causality (See Fig.~\ref{Fig:NP}). Consider a special case of OTC enhanced measurement, with $\sigma_z$ as the observable of interest and a single ancilla. For the input qubit $\rho$ with matrix elements $\rho_{ij}$, application of the enhanced measurement protocol outputs two uncorrelated qubits, each in state $\rho_{\mathrm{diag}} = \rho_{00} \ket{0}\bra{0} + \rho_{11} \ket{1}\bra{1}$. Instead of measuring each in $\sigma_z$ directly, we apply a further $C_+$ gate controlled on the ancilla. After discarding the ancilla, the input qubit is now transformed to $S(\rho)$ as required.

In generating $S(\rho)$ using only OTCs, we can translate Bacon's algorithm into one that does not break causality. We note that as each call of $S(\rho)$ only takes one OTC, the translation from CTCs to OTCs incurs no overhead on the number of times a particle needs to be sent through a spacetime wormhole. Thus, for the purpose of solving NP-complete problems, an OTC, together with one bit of entanglement, is at least as a powerful as a CTC.

\subsection{Cloning with OTCs}
Given an unknown input $\rho$, OTCs allow us to generate an unlimited number of clones to arbitrary fidelity. Our approach harnesses OTC enhanced measurements as a subroutine, which allows us to accurately determine $\mathrm{Tr}[M_i \rho]$, for any observable $M_i$. First, observe that this remains possible even if we are supplied with
\begin{equation}\label{eq:cloner}
   \rho' = s\rho + \frac{1-s}{d}I;
\end{equation}
a very noisy version of $\rho$. Here $I$ is the $d$-dimensional identity matrix, and $s$ is some fixed parameter such that $0 < s < 1$.

This observation, together with imperfect quantum cloners, form the basis of OTC enhanced cloning (Fig.~\ref{Fig:scheme}). In conventional quantum theory, an unknown quantum state $\rho$ can be cloned if we are given sufficiently many copies to perform accurate tomography \cite{qubittomo_2002}.  One way to do this, is to use a set of $O(d^2)$ informationally complete measurements $\{M_i\}$, whose expectation values $\mathrm{Tr}\left[M_i \rho\right]$ has a one-to-one correspondence with the classical matrix description of $\rho$. Given only a single copy of $\rho$, this option is no longer valid. Recently, Brun et.al demonstrated that close timelike curves circumvent this restriction, and allows the estimation of each $\langle M_i\rangle$ to any desired accuracy \cite{brun2013quantum}.

OTC enhancement measurements can replicate this effect while preserving causality. We use standard methods to construct $O(d^2)$ imperfect clones in the form of Eq. \ref{eq:cloner}, where $s$ scales as $1/d$ for an optimal cloner  \cite{Werner98}. Each clone is passed through an OTC to remove all entanglement between clones. An OTC enhanced measurement is then performed on each clone with respect to a different $M_i$. The outcomes of these measurements determine the density matrix of $\rho$. In methods, we show that by using $O(d^4/\delta_c^2\log{1/\epsilon_c})$ OTCs, we can ensure that each $\langle M_i\rangle$ is obtained to an accuracy of $\delta_c$ with failure probability $\epsilon_c$.

\begin{figure}[hbt]
\centering
\resizebox{8cm}{!}{\includegraphics[scale=1]{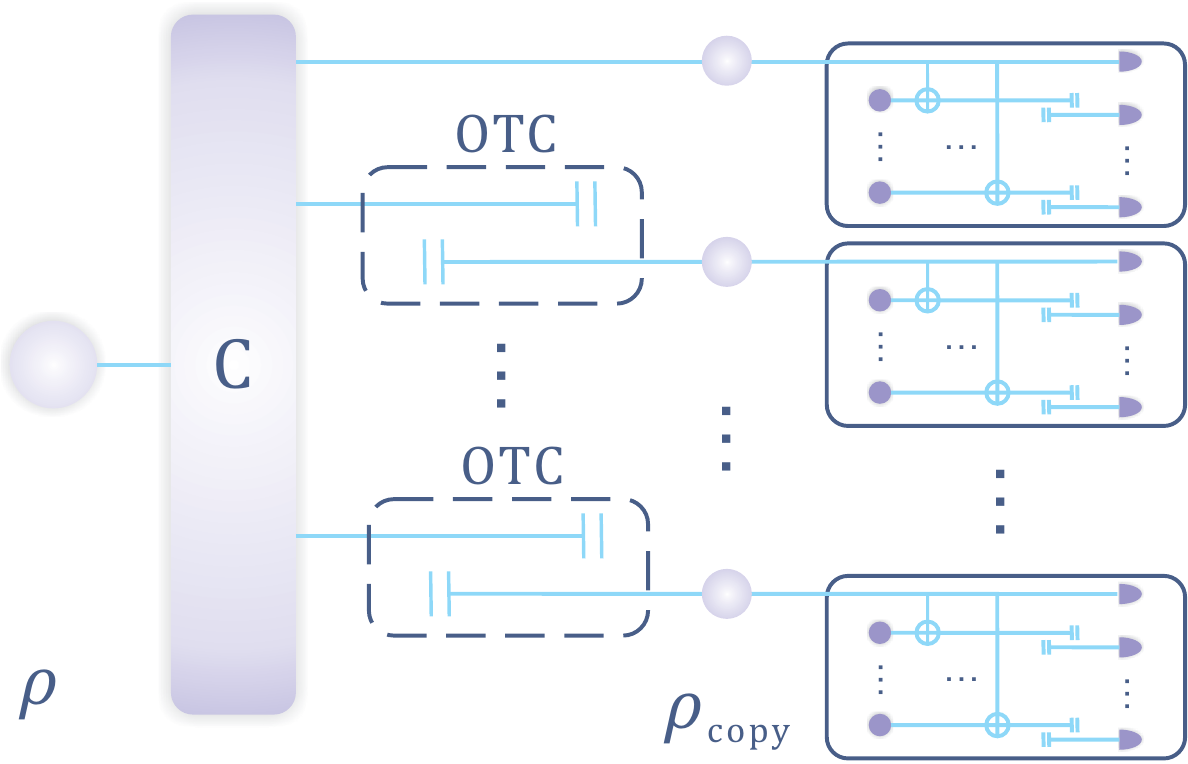}}
%\resizebox{14cm}{!}{\includegraphics[scale=1]{figure5_final4.pdf}}
  \caption{\textbf{OTC Assisted Cloning.}  An arbitrary qudit $\rho$ can be cloned to any desired fidelity. The process involves (i) application of a standard quantum cloner $C$ to generate $O(d^2)$ imperfect copies, and (ii) use of OTC enhanced measurements to measure different observables $M_i$ on each imperfect copy. We can choose $M_i$ to be informationally complete, and OTCs ensure that we can determine $\mathrm{Tr}\left[{M_i \rho}\right]$ to any desired precision. Thus this protocol can yield (to any fixed precision) the classical description of $\rho$.}\label{Fig:scheme}
\end{figure}

\subsubsection{A Simple Example}
We illustrate these ideas by cloning a qubit. Here, the Pauli operators $\sigma_k, k = x,y,z$ is informationally complete - any $\rho$ is uniquely defined by the expectation values $n_k = \mathrm{Tr}[\sigma_k \rho]$. To determine each $n_k$, we first apply a universal 1-to-3 quantum cloner to obtain three imperfect clone of $\rho$, each in state $\rho' = (I + s\vec{n}\cdot\vec{\sigma})/2$ with $s = 5/9$  \cite{Scarani05}. These copies can be made independent via OTCs.

An OTC enhanced measurement of $\sigma_z$ is then performed on one such imperfect clone. We initialize $N$ ancilla qubits in state $\ket{0}$, and apply a CNOT gate between each ancilla and the clone (with the clone as the control qubit). In erasing the resulting correlations by sending each clone through an OTC, we obtain $N+1$ qubits, each in the state $(I + s n_z \sigma_z)/2$. Provided $N$ is sufficiently large, measurement of these qubits allows $n_z$ to be determined to any desired accuracy with negligible error. Repetition of this process with $\sigma_x$ and $\sigma_y$ on the two remaining imperfect clones then yields complete information about $\rho$.

%\section{Discussion}
%Here, we demonstrated that the many defining operational benefits of closed timelike curves can be retained - without sacrificing causality. Our approach was to consider open timelike curves, where physical systems may travel back in time, but do not interact with themselves, or any other system in their past light-cone. In harnessing these open timelike curves, we developed methods to efficiently solve NP-complete problems, and cloning unknown quantum states to arbitrary fidelity. Many other radical effects attributed to CTCs come as natural consequences, such as distinguishing non-orthogonal states and violation of the Holevo limit.

%While we have limited our discussion to discrete variables, many of the ideas generalize naturally to continuous variables. The use of quantum non-demolition interactions in place of $C_+$ gates for example, would allow OTC enhanced measurements of momentum or position. This, along with the simple qubit example outlined, supplies a diverse variety of regimes for experimental realization. Indeed, there has been a surge of recent interest in experimental simulation of close timelike curves, and some of their techniques can be readily applied to the OTC framework \cite{white2014}.

Our result highlights the intricate interplay between quantum theory and general relativity. If all physical systems have a well defined local reality, open timelike curves would have no operational effect. Our protocol thus fields no classical explanation. In the quantum regime, however, entanglement exists. While the local properties of a physical system are unaffected by open timelike curves, their correlations with other chronology respecting systems are complete erased. In each of our protocols, this effect played a central role, allowing us to replicate the many defining benefits of close timelike curves. This remarkable success propels us to conjecture whether entanglement assisted open timelikes curves are operationally equivalent to their causality breaking counterparts. Could one, for example, derive a map that takes any quantum circuit with CTCs, and engineer it in a way that does not break causality?

Preserving causality has significant benefits. Breaking causality is likely to be non-trivial, and opportunities to do so are negligible in the foreseeable future. This makes it unlikely for us to directly test the predictions of Deutschian CTCs. The preservation of causality in OTCs, however, suggest that its non-linear effects may be synthesized using alternative means. For instance, from the perspective of a chronology respecting observer, a particle sent through an OTC exhibits nothing more than time delay. Thus, in order to reconcile quantum field theory with non-hyperbolic space-times, gravitational time-dilation has been conjectured to share similar operational effects as OTCs  \cite{Ralph09}. If true, our protocols suggest the exotic benefits of quantum processing in the general relativistic regime can be tested much sooner than previously expected.

%\section{Methods}

%\input{tail}

%\input{head}
\chapter{Quantum theory from axioms}
Quantum Information provided a new angle on the foundations of Quantum Mechanics, where the
emphasis is placed on operational tasks pertaining information-processing and computation. In this
spirit, several authors have proposed that the mathematical structure of Quantum Theory could
(and should) be rebuilt from purely information-theoretic principles.
This chapter  reviews the particular
route proposed by D'Ariano, Perinotti, and Chiribella \cite{chiribella10,chiribella11,chiribella2013quantum} and probes its application in nonlocality and contextuality \cite{chiribella2014measurement}.

\section{Axiomization of quantum theory}
Quantum mechanics has been one of the greatest scientific breakthroughs of the 20th century, and at the beginning of the 21st it still provides the most accurate predictions about the microscopic world and a key to new exciting discoveries. A young branch of quantum mechanics is quantum information science.        Over the past three decades, this new field brought to light a wealth of operational consequences of the mathematical structure of quantum theory: no-cloning \cite{wootters1982single}, quantum teleportation \cite{tele}, dense coding \cite{wiesner},  quantum key distribution \cite{bb84,Ekert1991}, and  quantum algorithms \cite{Grover96,Shor97} are just a few examples showing that quantum theory entails a powerful and totally new model of information processing.

Which are the conceptual \emph{ingredients} of this power? What makes the quantum model  special with respect to the classical one or to other more exotic models that we can conceive? Stimulating new questions are  the best route to answering older, long-standing questions.  Inspired by the surprising features of quantum information, many researchers---notably Fuchs \cite{fuchs} and Brassard \cite{brassard}---suggested that the whole structure of quantum theory, with its dowry of Hilbert spaces and operator algebras,  could be reconstructed from a few simple principles of information-theoretic nature.  Reconstructing quantum theory means rigorously proving that, once we define a  sufficiently general class of possible theories, quantum theory is the only one compatible  with the principles.
 Many proposals of reconstructions of quantum theory have been made, in different frameworks and with emphasis on different features \cite{chiribella11, hardy01, goyal10, masanes, hardy11, masanes12, dakic11, dariano}, bringing a breeze of fresh air on the long standing problem of a conceptual axiomatization of quantum mechanics.   Here we will focus on the reconstruction of Ref. \cite{chiribella11}, which presented a new set of information-theoretic principles directly inspired by quantum information.

The structure of Ref. \cite{chiribella11} is devised to highlight one  particular feature as the ingredient  that generates the surprising features of quantum theory:  the ingredient is the \emph{purification principle} \cite{chiribella11, chiribella10}, stating that every physical process can be simulated using only pure states and reversible interactions with an environment.     The  principle is directly inspired by quantum information, where the applications of purification are countless. In addition to  purification, Ref. \cite{chiribella11}  contains five principles that are influence the outcome  probabilities of present experiments.  Combining these five, seemingly innocuous requirements with the purification principle, Ref. \cite{chiribella11}  singled out quantum theory uniquely.  This  result  has been recently presented in a non-technical terms in Ref. \cite{chiribella12}.   In the following we will  introduce the reader to the principles of Ref.  \cite{chiribella11}, providing their  translation in the ordinary mathematical language of quantum theory. The principles will be first stated in the non-technical terms of Ref. \cite{chiribella12}, and then illustrated in the special example of quantum theory.

\subsection{Operational vs Hilbert space framework}

Most works on the reconstruction of quantum theory adopt an operational framework describing the experiments that a physicist can perform in a laboratory. Each device has an input system and an output system, a set of possible outcomes $X$ that the experimenter can read out, and a set of random processes $\{\mathcal{C}_i\}_{i\in X}$ occurring in conjunction with the outcomes.
As a special case, a device can have trivial input (i.e. no input at all), in which case its action  consists in preparing a system in a particular set of states.     Likewise,  the device can have trivial output, in which case its action consists in  a demolition measurement that absorbs the system, producing an outcome with some probability (think for example of the absorption of a photon on a photograhic plate).
Quantum theory can be cast in the operational framework as a special example.  In the mathematical language of quantum theory, systems are described by Hilbert spaces, and the outcomes of a device correspond to outcomes of an (indirect) measurement.  A device that prepares states of a system with Hilbert space $\mathcal H$ corresponds to an ensemble $\{  \rho_i, p_i\}_{i\in X}$, where
  each $\rho_i$ is a density matrix (non-negative operator with unit trace) and $\{p_i\}_{i=1}^X$ are probabilities. Here $p_i$ is the probability that the device outputs the system in the state $\rho_i$.   A device that performs a demolition measurement  will be described by a  positive operator valued measure (POVM), namely a collection $\{  P_i  \}_{i\in  X}$ of non-negative operators satisfying the condition  $\sum_{i\in  X}  P_i  =  I$,  where $I$ is the identity operator on the Hilbert space of the system.
  When a POVM measurement $\{  P_i  \}_{i\in  X}$   is performed on a system prepared in the quantum state $\rho$, one obtains the outcome $i$ with probability $p_i  =  \trg [  P_i  \rho]$.  The textbook example  of POVM measurement is the measurement on an orthonormal basis $\{|i\rangle  \}_{i=1}^d$ ($d$ being the dimension of the Hilbert space), corresponding to the POVM $\{   P_i\}_{i=1}^d$ with  $P_i  = |i\rangle\langle  i|$.

  General quantum processes (with both non-trivial input and non-trivial output) are described by the theory of open quantum systems  \cite{krausbook,lathi}.
   When a system $S$, initially prepared in state $\rho$, interacts  with an environment $E$, initially prepared in state $\sigma$, through a joint unitary evolution $U_{SE}$, the joint state changes to $U_{SE}  (\rho  \otimes \sigma)  U^\dag_{SE}$.     If we perform a measurement on the environment, with POVM  $\{  P_i  \}_{i\in  X}$, the outcome $i$ will occur with probability $p_i  = \trg  [  U_{SE}  (\rho  \otimes \sigma)  U^\dag_{SE}  (I\otimes P_i)  ] $ and, for $p_i  \not =  0$,  the state of the system conditionally to outcome $i$ will be given by  $\rho_i'  =  \map C_i (\rho)/ \trg[\map C_i (\rho)]$. Here, $\map C_i$ is the linear map
   \begin{align}\label{map}
  \map C_i  (\rho) = \trg_{\spc H_E}   [   U_{SE}  (\rho  \otimes \sigma)  U^\dag_{SE}  (I\otimes P_i) ] ,
 \end{align}
   $\trg_{\spc H_E}$ denoting the partial trace over the environment.     Any such map is \emph{completely positive}  (it transforms positive operators into positive operators even if we apply it only on one part of a composite system) and \emph{trace-decreasing}   ($\trg [\map C_i (\rho)]   \le \trg [\rho]$ for every  density matrix  $\rho$).    Completely positive trace-decreasing maps are known as \emph{quantum operations} \cite{krauspaper}.  A collection of quantum operations $\{\map C_i\}_{i\in  X}$ as in Eq. (\ref{map}) is known as \emph{quantum instrument} \cite{davieslewis}. A quantum instrument describes the action of a device that couples the system with an environment and  performs a measurement on the latter.
    A quantum operation can be always written in the Kraus form \cite{krauspaper}
    \begin{align}\label{kraus}
    \map C_i  (\rho)  =  \sum_{k=1}^K  C_{i,k}  \rho  C_{i,k}^\dag,
    \end{align}
    where $\{  C_{i,k}\}_{k=1}^K$  is a set of operators   on the system's Hilbert space.  Here we can think of each operator $C_{i,k}$ as representing a ``quantum jump" that randomly changes the state of the system to $\rho'_{i,k}  =  C_{i,k}  \rho  C_{i,k}^\dag/ \trg[ C_{i,k}  \rho  C_{i,k}^\dag]$.

 When there is a \emph{single} operator in Eq. (\ref{kraus}), we know exactly \emph{which} quantum jump has occurred for outcome $i$.     In this case, we say that the process $\map C_i$ is \emph{fine-grained}, because knowing the outcome $i$  gives full information about the jump undergone by the system.   On the contrary, a quantum operation $\map C_i$ with more than one Kraus operator represents a \emph{coarse-grained process}: in principle, one can devise a finer measurement on the environment with outcomes $(i,k)$, inducing  the quantum operations $\map C_{i,k}  (\rho)  =  C_{i,k} \rho C_{i,k}^\dag$.  For example,  if the environment is a particle with spin,  measuring only the position of the particle  would result in a coarse-grained process, because the information coming from the spin degree of freedom has been ignored.

 % The original quantum operations $\map C_i$ in Eq. (\ref{kraus}) can be obtained from this finer measurement by coarse-graining (i.e. averaging) over the possible values of $k$.

Coarse-graining over all possible outcomes $i$, the evolution of the system's state under the random-process $\{\map C_i\}_{i\in X}$ is given by $\rho  \mapsto \rho'  =  \sum_{i\in X}  \map C_i (\rho) $.     The linear map $\map C:   \rho  \mapsto \map C(\rho)=\sum_{i\in  X}  \map C_i(\rho)$  is called \emph{quantum channel}.  Due to the normalization of the probabilites,   the map is \emph{trace-preserving}, namely
\begin{align}\label{tracepres}
\sum_{i\in X}    \trg[\map C_i (\rho  )]  =  \trg[\rho]   \qquad \forall \rho
\end{align}
as it can be easily checked by taking the sum in Eq. (\ref{map}) and using the normalization $  \sum_{i\in X}  P_i = I$.
%Such a collection of quantum operations is called \emph{quantum instrument} and, within the formalism of quantum theory, it represents the concrete mathematical realization of the operational notion of ``device implementing a a set of random processes".

\subsection{Informational principles and their translation in the Hilbert space language}

At the general operational level, devices can  be connected with one another, if the output system of one device coincides with the input system of the next. Note that the notion of input and output of a device singles out in a privileged direction when we compose systems.   The first principle in Ref.  \cite{chiribella11,chiribella12}  states that information can only flow from input to output.

\begin{flushleft}
\textbf{Causality:} \emph{ The probability of an outcome at a certain step does not depend on the choice of experiments performed at later steps.}
\end{flushleft}

Let us illustrate the meaning of the principle in the Hilbert space framework.   Suppose that  a quantum system, initially prepared in the state $\rho$, is sent first to the quantum instrument  $\{  \map C_i\}_{i \in  X}$ and then to another quantum instrument  $\{  \map D_j\}_{j\in Y}$  (for example, the two instruments could be to   successive Stern-Gerlach devices).    The joint probability of observing the sequence of outcomes $(i,j)$ will be  $p(i,j)  =  \trg [\map D_j  \map C_i  (\rho) ] $.    In the general operational framework, causality states that the probability of observing outcome $i$, given by $p(i)  =  \sum_{j\in  Y}   p(i,j)$ should not depend on the choice of the particular quantum instrument  $\{\map D_j\}_{j\in Y}$  (in the example, it should not depend on the orientation of the magnetic field in the second Stern-Gerlach device).    This condition is indeed satisfied, thanks to the normalization condition  of Eq. (\ref{tracepres}), which gives $p_i  =  \sum_{j\in Y}   \trg [\map D_j \map C_i  (\rho) ]  =  \trg [\map C_i (\rho)]$, independently of the choice of   $\{\map D_j\}_{j\in  Y}$.

 In we place our operational theory in  a relativistic spacetime, than we have that causality forbids information to travel faster than the speed of light, in the sense that the outcome probabilities for an experiment performed at a  spacetime point  $P= (x,y,z,t)$ can not  depend on the  settings of experiments performed at spacetime points that do not contain $P$ in their light cone.
   As a consequence,  causality implies the \emph{no signalling principle}:  when two experiments take place in space-like separated regions, the outcome probabilities for an experiment  should not depend on the settings of the other experiment.

% For example, if two experimenters Alice and Bob perform measurements with two possible settings $x  \in  \{0,1\}  $ and $y\in \{0,1\} $  and two possible outcomes  $a\in  \{0,1\}$ and $b\in  \{0,1\}$, respectively, then the probabilities of events $p( a,b|x, y)$ should satisfy the no-signalling conditions
% \begin{align*}
%   \sum_{b}p(a,  b|x,0)&= \sum_{b}p(a,b|x,1)  \\
%  \sum_{a}p(a,  b|0,y)&= \sum_{a}p(a,b|1,y) .
 % \end{align*}

The second principle in Ref. \cite{chiribella12} pertains the extraction of information from composite systems.
\begin{flushleft}
\textbf{Local tomography:} \emph{The state of a composite system is determined by the statistics of local measurements on its components.}
\end{flushleft}

Let us illustrate the principle in the quantum case:  suppose that   two density matrices $\rho, \sigma$ on the Hilbert space $\spc H_A \otimes \spc H_B$ give the same statistics for every pair of local POVM measurements $\{P_i\}_{i\in X}$ and  $\{Q_j\}_{j\in Y}$ on $\spc H_A$ and $\spc H_B$, namely  $\trg[  (P_i \otimes Q_j)\rho]   =   \trg[  (P_i\otimes Q_j)  \sigma]  $ for all  $ i,j$.
Choosing  $P_i$ and $Q_j$ to be arbitrary rank-one projectors $P_i  =|\alpha_i\rangle\langle  \alpha_i |$   and $Q_j  =|\beta_j\rangle\langle  \beta_j |$ and using the polarization identity, we then conclude that $\rho$ must be equal to $\sigma$. This means that the probability distributions for local measurements are sufficient to identify the density matrix of the composite system $\spc H_A \otimes \spc H_B$.   Note that the property of local tomography, which holds both for classical and quantum theory, fails to hold for the variant of quantum theory on \emph{real Hilbert spaces} \cite{wootters10}.

%The previous two principles are fundamental physics requirements for constructing reasonable theory describing the world. While the next one seems not so physical but informational prerequisite, shares similar meanings with the fact that we can always read a book page by page without omitting anything.

The following principle states  an information-theoretic property of the composition  of physical processes.

\begin{flushleft}
\textbf{Fine-Grained Composition}: \emph{The sequence of two fine-grained processes is a fine-grained process.}
\end{flushleft}

As we already noted,  a fined-grained process in quantum theory is represented by a quantum operation  $\map C_i$ of the form $\map C_i (\rho)  =  C_i \rho C_i^\dag$, so that the information about the outcome $i$ is enough to identify the quantum jump  $\rho    \mapsto  \rho_i'  =  C_i\rho C_i^\dag/\trg[C_i \rho C_i^\dag]$.    Suppose now that we have two devices in a sequence, and that the two devices produce two  outcomes $i$ and $j$ corresponding to two fine-grained processes  $\map C_i (\rho)  =  C_i \rho C_i^\dag$ and  $\map D_j (\rho)  =  D_j \rho D_j^\dag$, respectively.    The composite process will be given by the quantum operation  $\mathcal{D}_j  \mathcal{C}_i(\rho) = (D_jC_i) \rho(D_jC_i)^\dag$, which still represents a fine-grained process.

%The operational meaning of this fact is the following: if at every step we have maximal information about the process undergone by the system, then we will have  maximal information also in a sequence of multiple steps.

 %From the perspective of information, we do not have the total information about a coarse-grained process in contrast to fine-grained process. In the dice-rolling game, suppose the dice has one integer number imprinted and one color painted on each side with the numbers all being different from each other and the colors being red (blue) only if the number on that side is even (odd). Then the measurement of number is fine-grained and the measurement of the color is coarse-grained. Knowing the color does not provide enough information to decide the number.

% In quantum theory, the definition for two sequential fine-grained process, $\mathcal{D_j}$ after $\mathcal{C_i}$, according to Equ. (\ref{FineGrained}), is

% which satisfies the fine-grained composition principle.

The next principle guarantees that  we can encode classical bits   using the physical systems available in our operational theory:

\begin{flushleft}
\textbf{Perfect Distinguishability:} \emph{If a state is not compatible with some preparation, then it is perfectly distinguishable from some other state.}
\end{flushleft}
What does it mean that a state is \emph{not compatible} with some preparation?   Consider a mixed quantum state, with density matrix $ \rho  = \sum_{i}  p_i  |  \psi_i    \rangle\langle \psi_i|$.   The state $\rho$ can be interpreted as the average state of the ensemble $\{    |  \psi_i    \rangle\langle \psi_i|, p_i\}$, where  the system is prepared in the pure state $|\psi_i\rangle$  with probability $p_i$. According to this interpretation, the state $\rho$ is \emph{compatible} with the system being prepared in every  pure state $|\psi_i\rangle$.    Saying that a density matrix $\rho$ is \emph{not compatible} with some preparation means  that there exists  some pure state $|\psi \rangle$ that cannot appear in any ensemble decomposition of $\rho$.   In formula, this means that the only solution to the equation
    \begin{equation}\label{}
    \rho  =  p |  \psi\rangle\langle \psi| + (1-p)  \sigma
    \end{equation}
 with $\sigma$ a density matrix, is $p=0$  and $\sigma  =  \rho$. Technically, if a density matrix $\rho$ is not compatible with some pure state, then $\rho$ cannot be invertible.   This means that $\rho$ will have a non-trivial kernel $\mathsf{Ker} (\rho)  :=  \{  |\varphi\rangle \in \spc H~|~  \rho  |\varphi \rangle   = 0\}$.   Hence, every unit vector $|\varphi\rangle  \in \mathsf {Ker} (\rho)$ will be orthogonal to $\rho$, and, therefore, it will represent a pure state of the system that is  perfectly distinguishable from $\rho$.    For example, the state  $\rho  =  1/2   (  |  0\rangle  \langle0  |  +  |  1 \rangle  \langle 1  | )$  of a three-level system   is not compatible with the system being prepared in the  pure state $|\psi\rangle   =    1/\sqrt 2     (  |1  \rangle  +  | 2 \rangle ) $. The perfect distinguishability principle then imposes that there must exist a state  $\rho'$   that is perfectly distinguishable from $\rho$: in this particular example,  $\rho' =|2 \rangle \langle 2 |$. Perfect distinguishability enables us to encode classical bits into quantum states.
 For instance, we can encode a classical bit into the angular momentum of a nucleous,  by encoding the logical 0 in the  the spin up state $|  \uparrow \rangle$ and the logical 1 in the spin down state $|\downarrow\rangle  $.

The next principle  ensures the possibility of encoding the state of a system $A$ in the state of another system $B$ of  potentially smaller ``size".
\begin{flushleft}
\textbf{Ideal Compression:} \emph{Information can be compressed in a lossless and maximally efficient fashion.}
\end{flushleft}

Let us make mathematically precise the meaning of this principle through an example in quantum theory.   Suppose that  a three-level system $A$ is prepared in the mixed state    $\rho  =  1/2   (  |  0\rangle  \langle0  |  +  |  1 \rangle  \langle 1  | )$. This preparation is compatible with the system being in every pure  state $|\psi\rangle =  \alpha |0\rangle  +  \beta  |1\rangle$, with $|\alpha|^2  +  |\beta|^2  =1$.   It is clear that we can encode these states in a two-level system using the encoding channel $\map C (\rho) =  C \rho C^\dag  +    \langle 2 |  \rho  |2\rangle   ~  |\!\uparrow\rangle \langle \uparrow\! |$ with $C  |0\rangle  =  |\!\uparrow \rangle$ and  $C  |1\rangle  =  |\!\downarrow \rangle$.  With this encoding,  the state $|\psi'\rangle =   \alpha |\!\uparrow\rangle  +  \beta  |\!\downarrow\rangle  \in  \spc H_B$ is the ``codeword" for the state $|\psi\rangle =  \alpha |0\rangle  +  \beta  |1\rangle\in\spc H_A$.     The encoding is \emph{lossless} for the information compatible with $\rho$, because we can always restore the initial state $|\psi\rangle$ from $|\psi'\rangle$.   The encoding is also \emph{maximally efficient}, because we cannot encode without losses the pure states  $\{  |\psi\rangle =  \alpha |0\rangle  +  \beta  |1\rangle , |\alpha|^2  +  |\beta|^2  =1\}$ in a system of Hilbert space dimension smaller than 2.    In general, the encoding of a mixed state $\rho$  into another physical system $B$ is maximally efficient if every pure state of $B$  is the codeword for some pure state compatible with $\rho$.    For a density matrix of rank $r$, the ideal compression is obtained by encoding the information in  a Hilbert space of dimension  $r$.

All the principles discussed so far are satisfied both by classical and quantum theory. The principle that singles out uniquely quantum theory is the following
\begin{flushleft}
\textbf{Purification Principle:} \emph{Every random process can be simulated in an essentially unique way as a reversible interaction of the system with a pure environment.}
\end{flushleft}

Let us illustrate the meaning of the principle in the Hilbert space framework, where random  processes are described by  quantum channels.   One way to simulate a  random process   $\map C (\rho)  =  \sum_{i\in  X}  C_i  \rho  C_i^\dag$  is to introduce an environment with Hilbert space $\spc H_E  =  \mathsf {Span}  \{  |i\rangle\}_{i\in  X}$.     For example,  every pure state $|\eta\rangle \in  \spc H_E$ and every unitary operator $U_{SE}$ satisfying $U_{SE} |\psi\rangle |\eta\rangle  =  \sum_{i\in  X}   C_i |\psi\rangle |i\rangle$ will give
\begin{align}
\map C (\rho)  =  \trg_{\spc H_E}   [   U_{SE}  (\rho  \otimes |\eta\rangle\langle \eta|) U_{SE}^\dag].
\end{align}
This means that the random process $\map C$ can be simulated by letting the system interact unitarily with an environment prepared in the pure state  $|\eta\rangle$.   Since unitary interactions are reversible, this is an example of the pure and reversible simulation required by the purification principle.  The pure and reversible simulation is not unique, because we are free to choose $|\eta\rangle$ to be \emph{any pure state} of the environment and to choose the basis $\{|i\rangle\}_{i\in X}$ used in the definition of $U_{SE}$ to be \emph{any orthonormal basis} for $\spc H_E$.  However, the pure and reversible simulation is \emph{essentially unique}:  once we fix the environment there is no remaining freedom except for the choice of bases for Hilbert space $\spc H_E$.

The ability  to purify every random process is a unique feature of quantum theory. The main message of Refs.\cite{chiribella11,chiribella12}  is that, among all physical theories satisfying the first five reasonable requirements,  quantum theory is the only one that enables a pure and reversible simulation of every random process.  Such a key feature places quantum theory at the core of the theory of reversible computation.  From this angle, the usual picture is turned upside down: instead of regarding quantum theory as ``incomplete" \cite{EPR35} because it fails to give deterministic predictions about the outcomes of arbitrary measurements, we are brought to regard classical theory as ``incomplete"  because it fails to provide a pure and reversible simulation of arbitrary random processes.   This type of simulation is essential if we want to reconcile information theory and physics, the former being based on the notions of random variable and noisy channel, and the latter trying to model phenomena in terms of pure states and fundamentally reversible interactions.

%\subsection{Conclusion}
%In this note  we gave a synopsis of the information-theoretic principles  of  Refs.  \cite{chiribella11,chiribella12}, illustrating their meaning through  the comparison with the standard mathematical language of quantum theory.

\section{Measurement sharpness trims nonlocality and contextualize in every physical theory}
Nonlocality \cite{EPR35,bell} and contextualize \cite{kochen,mermin93} are among the most striking features of quantum mechanics,
%The former implies that it is  impossible to assign  predefined outcomes to measurements performed at spacelike separated locations.  The latter implies that, more generally, it is impossible to find a joint assignment of predefined outcomes to certain sets of  pairwise compatible measurements.
 in radical conflict with the worldview of classical physics.  % and lead to counterintuitive consequences.
 Still, quantum mechanics is neither the most nonlocal theory one can imagine, nor the most contextual.   For nonlocality,  this observation dates back to the seminal work of Popescu and Rohrlich \cite{pr95}, who showed that relativistic  no-signalling  is compatible with correlations that are much stronger than those allowed by  quantum theory.
Their work stimulated the question whether other fundamental principles, yet to be discovered, characterize the peculiar set of  correlations observed in the quantum world.  Up to now, several candidates that partly retrieve the set of quantum correlations have been proposed, including Non-Trivial Communication Complexity \cite{Dam05, brassard2006}, No-Advantage in Nonlocal Computation \cite{linden2007}, Information Causality \cite{Paw09}, Macroscopic Locality \cite{Navascues09},  and, most recently,  Local Orthogonality (LO) \cite{Fritz12}.        The observation that quantum theory is not maximally contextual is more recent \cite{cabello2010,cabello2014} and so is the search for principles that characterize the quantum set of contextual probability distributions.  On this front, the only principle put forward so far is   Consistent Exclusivity  (CE) \cite{cabello2013,henson2012,acin2012}.
%which gets close to the quantum set but still leaves a finite gap \cite{acin2012}.

Despite many  successes, a complete characterization of the  quantum set is still challenging. What makes the problem hard is the fact that---intendedly---the principles considered so far dealt only with   input-output probability distributions, without making any hypothesis on how  these distributions are generated.  %This feature was originally motivated by applications in device independent cryptography.    Still, it
  On the other hand, a physical theory does not provide only probability distributions, but also specifies  rules on how to combine physical systems together, how to measure them, and how to evolve their state in time \cite{coecke2010}.   Considering that fundamental quantum features like no-cloning  and the possibility of universal   computation cannot be expressed just in terms of input-output distributions,  it is natural to wonder whether also quantum nonlocality and contextualize could better understood in a broader framework of general probabilistic theories  (GPTs)  \cite{hardy01,barrett2007,chiribella10,barnum11}.    Further motivation to extend the framework  comes from the latest principles in the nonlocality and contextualize camps, LO and CE.  Both principles  refer to a notion of orthogonal events and impose that the sum of the  probabilities of a set of mutually orthogonal events shall no not exceed one.    This  is a powerful requirement, which in the case of LO is even capable to rule out non-quantum correlations that are compatible with every bipartite principle \cite{gallego2011}.   But why  should Nature obey such a requirement?  And what does this requirement tell us about the fundamental laws that govern   physical processes?  % in the underlying theory that generated the outcomes?
  % For LO, these questions call for a deeper motivation.    For CE,  they point at  an even more urgent issue.    Indeed, CE applies only to the probability distributions generated by  a privileged class of measurements---in quantum theory, the projective measurements. However, what this class should be in a general theory has never been specified up to now.  Without such a specification,  CE is currently a principle without a well-defined domain of applicability, like Newton's first law of motion would be if the class of inertial systems were not specified.

Here we tackle the problem of understanding quantum nonlocality and contextualize from a new angle, which focuses  on the fundamental structure of  measurements.     In an arbitrary physical theory, we  introduce a class of ideal measurements, called sharp, that are repeatable and  cause the minimal amount of disturbance on future observations.
%Sharp measurements represent the  standard  that one would require in an ideal world.
We postulate that all measurements are  sharp at the fundamental level and we explain  the apparent unsharpness of real life experiments as due to the interaction with the environment.
% fact that the measured system is not sufficiently isolated and therefore the measurement involves also  part of the surrounding environment.
% explain the noise present in real life experiments, the reason why real measurements appear to be unsharp is that
%In quantum theory, the sharp measurements  exactly the set of projective measurements, while in a general theory it enables an operational formulation of CE.
%Then, we postulate that every measurement arises from a sharp measurement  put forward a requirement on the fundamental structure of measurements.     We postulate that all measurements must be sharp at the fundamental level and
Assuming that sharp measurements remain sharp under elementary operations, such as joining  two outcomes together  and applying two measurements in parallel,  we show that  the fundamental sharpness of measurements  implies the validity of CE and LO, thus providing a strong constraint on the set of probability distributions.  Our result demonstrates that  principles formulated in the broader framework of  GPTs   can offer an extra power in the characterization of the quantum set and  identifies the fundamental sharpness  of measurements as a candidate principle for  future axiomatizations of quantum theory.

%\subsection{Results}
\subsection{Framework}
%A  theory describes the  experiments that can be performed on a set of systems, by specifying  the possible  states,   measurements and transformations,  and by predicting the probabilities of outcomes for all possible experiments.
%The approach of generalized probabilistic theories is to abstract from the details of a specific theory, providing a neutral framework that can accommodate a variety of theories.
%In the specific case of quantum theory, the systems are described by Hilbert spaces, their states are density matrices, the measurements are resolutions of the identity, and the transformations are completely positive trace-preserving maps.
%For a given system $A$ we denote by $\St (A)$ the set of all possible states.
%The composite system of two systems $A$ and $B$, denoted by $AB$, may or may not contain correlated states, but at least it must contain the states that describe independent preparations of $A$ and $B$.  If $A$ is prepared in system $\alpha$ and $B$ is prepared in system $\beta$, we denote by $\alpha\otimes \beta$ the corresponding state of system $AB$.
In a general theory,  a measurement is described by a collection of events,   each event  labelled by an outcome.   We first consider demolition measurements, which adsorb the measured system.    In this case,
%  For example,
 % $ \{  \map M_x\}_{x\in\set X}$ denotes a measurement with  outcomes in the set $\set X$, with outcome  $x$ labelling the event $m_x$.
the measurement events are called \emph{effects} and the measurement is a collection of effects $\{m_x\}_{x\in\set X}$.
For a system prepared in the state $\rho$, the probability of the outcome $x$ is denoted by $p_x  =  (  m_x|\rho)$.    In quantum theory this is a notation for the Born rule $p_x  =  \trg  [  m_x  \rho]$, where $\rho$ is a density matrix and $m_x$ is a measurement operator.     In general theories, however, $(m_x|\rho)$ does not denote a trace of matrices and in fact the actual recipe for computing the probability  $(m_x|\rho)$ is irrelevant  here.
% what is important instead is that the probabilities should be consistent when different measurements are combined in a circuit.
We will often use the notation   $(m_x|$   and $ |\rho  ) $ for effects and states,  respectively.
It is understood that two different states  give different probabilities for at least one effect, and two different effects take place with different probabilities on at least one state.

 When two measurements $ \{m_x\}$ and $\{n_y\}$ are performed in parallel on two systems $A$ and $B$, we denote by $m_x \otimes n_y$ the measurement event labelled by the pair of outcomes $x,y$.  Similarly, when  two states of systems $A$ and $B$, say $\alpha$ and $\beta$, are prepared independently,  we denote by $\alpha  \otimes \beta$ the corresponding state of the composite system $AB$.   In quantum theory, this is the ordinary tensor product of operators, but this may not be the case in a general theory and, again,  the actual recipe for computing $\alpha\otimes \beta$  is irrelevant here.
What is relevant, instead, is that the notation is consistent with the operational notion of performing independent operations on different systems:        If two  systems   are independently prepared in states $\alpha$ and $\beta$ and undergo to independent measurements $ \{m_x\}$ and $\{n_y\}$, we impose that the  probability has the product form $ p_{xy}  =    (  m_x|\alpha) \, (n_y  |  \beta)$.  %Note that, however, the composite system $AB$ may have also some correlated state  $\rho$, where the probability distribution $p_{xy}  =  (  m_x\otimes n_y|\rho)$ is not of the product form.

The most basic operation one can perform on a measurement is to join  some outcomes together, thus obtaining a new, less informative measurement. This operation, known as \emph{coarse-graining},  is achieved by dividing the outcomes  of the original measurement $\{m_x\}_{x\in\set X}$ into disjoint groups $   \{   \set X_z\}_{z\in\set Z}$, and by identifying outcomes that belong to the same group.  The result of this procedure is a new  measurement $ \{m'_z\}_{  z\in\set Z}$ satisfying  the relation
%\begin{equation}
$(m'_z  |  \rho)  =    \sum_{x\in\set X_z}  (m_x|\rho)$
%\end{equation}
for every every $z$ and for every possible state $\rho$.   For brevity,  we write $ m'_z  =  \sum_{x\in\set X_z}   m_x$.
%For a product measurement $\{m_x  \otimes n_y\}$, it is natural to require  the condition $  m'_z  \otimes   n_y  = \sum_{x\in\set X_z}  m_x\otimes  n_y$.

Coarse-graining allows one  to  express the  principle of  \emph{causality},
%Suppose that  system  $A$ is prepared in a state $\rho$ and undergoes a random transformation in the set $\{\map T_x\}_{x\in\set X}$.  The natural question here is: what is the probability that the transformation $\map T_x$  occurs?  One way to answer is to imagine that a measurement  $\{n_y\}$ is performed after the transformation, to compute the probability $  p_{xy}   =  (n_y|    \map T_x  |\rho)$, and to say that $\map T_x$ occurs with probability $p_x  =   \sum_y p_{xy}$.  However,  different choices of the measurement $\{n_y\}$ could give different values for the probability $p_x$.  If this were the case, a measurement setting decided in the future could change the probabilities of events happening in past.
 which states that the settings of future measurements do not influence the outcome probabilities of present experiments  \cite{chiribella10}. Causality is   equivalent to the requirement that for every system $A$  there exists an effect   $u_A$, called the \emph{unit},    such that
\begin{equation}\label{causal} \sum_{x\in\set X}   m_x  =  u_A
\end{equation}
for every measurement   $\{m_x\}_{x\in\set X}$   on $A$.
  In quantum theory,   $u_A$ is the identity operator on the Hilbert space of the system and Eq. (\ref{causal})  expresses the fact that quantum measurements are resolutions of the identity.   When there is no ambiguity, we drop the subscript from $u_A$.
%In a general theory, we call a collection of effects satisfying Eq. (\ref{causal})  an \emph{effect valued measure (EVM)}.

Causality has major consequences.    First of all, it implies that the probability distributions generated by local measurements satisfy the no-signalling principle \cite{chiribella10}.
%   if two systems $A$ and $B$ are in a joint state $\rho$ and two local measurements $\{m_x\}$ and $\{n_y\}$ are performed on   $A$ and $B$, respectively, then   the outcomes have probability $p_{xy}  =  (  m_x\otimes n_y|  \rho)$, whose  marginal on $A$ does not depend on the choice of measurement on $B$, and vice-versa.
Moreover,  it allows  to perform adaptive operations:  for example, if $\{m_x\}_{x\in \set X}$ is a measurement on system $A$ and  $\{n^{(x)}_y\}_{y\in\set Y}$ is a measurement on system $B$ for every value of $x$, then causality guarantees that it is possible to choose the measurement on $B$ depending on the outcome on system $A$, \ie that $\{  m_x\otimes n^{(x)}_y\}_{x\in\set X,  y\in\set Y}$ is a legitimate measurement.   Finally, causality allows one to describe non-demolition measurements.  % which produce both   outcome probabilities and  post-measurement states.
For a non-demolition measurement  $\{\map M_x\}_{x\in\set X}$, the measurement events  are \emph{transformations}, which turn the initial state of the system, say $\rho$, into a new  unnormalized state $  \map M_x   |\rho)$.
% (again, the actual recipe for computing the state $\map M_x |\rho)$  is irrelevant here).
For a system prepared in the state $\rho$,   the probability of the outcome $x$  is   $  p_x = (u|  \map M_x  |\rho)$ and, conditionally on outcome $x$,   the post-measurement state is $ \map M_x   | \rho ) /   (u|\map M_x |  \rho)$.   We  will often refer to  the non-demolition measurements as \emph{instruments}, in analogy with the usage in quantum theory  \cite{Davies76,Ozawa84}.
Note that, thanks to causality, every instrument   $\{\map M_x\}$ is associated to a unique demolition measurement $\{m_x\}$ via  the relation
\begin{equation}\label{prob}
(m_x|     =   (u|  \map M_x \qquad \forall  x\in\set X\, .
\end{equation}
 By definition, $\{m_x\}$ describes the statistics of the instrument:  for every state $\rho$ and for every outcome $x$, one has  $p_x   =   (  u  | \map M_x  |\rho)  \equiv   (m_x|\rho)$.
%The converse is false:  given an EVM there are  many different instruments satisfying Eq. (\ref{prob}), like \eg the instruments defined by $\map M_x  =    |\rho)(m_x|$, which output a fixed state $\rho$ independently of the outcome.

\medskip
\subsubsection{Sharp measurements in arbitrary theories}   In textbook  quantum mechanics,  physical quantities are associated to self-adjoint operators, called observables  CITA.  The values of a quantity are the eigenvalues of the corresponding operator and the probability that a measurement outputs the value  $x$ is given by the Born rule $p_x  =  \trg [  P_x\rho ]$, where $P_x $ is the projector on the eigenspace for the eigenvalue $x$ and $\rho$ is the density matrix of the system before the measurement.  If the measurement gives the outcome $x$,   then the state after the measurement is  $\rho_x'  = P_x\rho P_x/  \trg[P_x\rho]$, according to the projection postulate.
These canonical  measurements, where all the measurement operators are orthogonal projectors,  are called  sharp  \cite{Busch96}.
 While it is clear that sharp measurements play a  key  role  in quantum theory %(think \eg of the projection postulate),
 it is by far less clear how to define them in an arbitrary GPT.     Here we propose a  simple definition based on the notions of repeatability and minimal disturbance.

 Let us start from repeatability.
An instrument   $\{ \map M_x\}$ is \emph{repeatable} if it gives the same outcome when performed two consecutive times,  namely
\begin{align}\label{rep}    (     m_x|  \map M_x    =  (   m_x|    \qquad \forall x\in\set X  \, ,
\end{align}
where $\{m_x\}$ is the  measurement of Eq. (\ref{prob}).
Repeatability poses a fairly weak requirement on  $\{m_x\}$:
every measurement  that discriminates perfectly among a set of states  $\{\rho_x\}$ can be realized by a repeatable instrument, which consists in measuring $\{m_x\}$ and, if the outcome is $x$, re-preparing the system in state $\rho_x$.
%In quantum theory,  this only means that  each effect $m_x$ has some eigenvalue equal to 1, so that one can set the state $\rho_x$ to be the projector on the corresponding eigenstate.

The second ingredient entering in  our definition of  sharp measurements is  minimal disturbance.     We say that the instrument $ \{\map M_x\}_{x\in\set X}$    does not disturb the  measurement   $\st n  = \{n_y\}_{y\in\set Y}$  if   the former does not  affect the statistics of the latter, namely
\begin{align}\label{nodist}
(  n_y|  \map M  =  (  n_y|  \qquad \forall   y\in\set Y   \, ,
\end{align}
where   $(n_y| \map M:  =  \sum_{x\in\set X}   (n_y| \map M_x$.
%More generally, we say that the instrument does not disturb a set of measurements $  \set N$ if  Eq. (\ref{nodist}) is satisfied for every   $\st n \in \set N$.
%We say that measurement $\st m$ does not disturb another measurement $\st n$ if there exists an instrument  $\{\map M_x\}$ that satisfies Eqs.  (\ref{prob}) and (\ref{nodist}).
 Then, we ask which instruments disturb the smallest possible set of measurements.
 % Let   $\st m$ be the EVM corresponding to the instrument $\{\map M_x\}$.
 Clearly, if   $\{\map M_x\}$  does not disturb   $\st n$, then  $\st n$ must be compatible with  the measurement $\st m  =  \{  (u|  \map M_x\}$, in the sense that $\st m$ and $\st n$  can be measured jointly.    Indeed, by measuring $\st n$ after   $ \{    \map M_x  \}$ one obtains   the  probability distribution  $ p_{xy}   =   (   n_y  |\map M_x  |  \rho)$, whose  marginals on $x$ and $y$ are equal to the probability distributions of $\st m$ and $\st n$, respectively.    Read in the contrapositive, this means that if $\st m$ and $\st n$ are incompatible,  the instrument $\{\map M_x\}$ must disturb $\st n$.
This leads us to the following definition:   an instrument  $\{\map M_x\}$ has \emph{minimal disturbance} if it  disturbs only the measurements that are incompatible with   $\st m  = \{  (u|  \map M_x\}$.
%In quantum theory, the minimal disturbance measurements are those that have operators proportional to orthogonal projectors.

%We are now ready to give our operational definition of sharp measurement.
We define an instrument to be sharp if it  is both repeatable and with minimal disturbance. We say that a measurement is  sharp if  it describes the statistics of  a sharp instrument and we call an effect sharp if it belongs to a sharp measurement.
 In quantum theory, our definition coincides with the usual one:  one can prove that the only sharp instruments are the L\"uders instruments  \cite{Luders50}, of the form $\map M_x  (\rho)   =   P_x  \rho P_x$ where $\{P_x\}$ is a collection of orthogonal projectors.  Hence, the sharp measurements are projective measurements.
 % As a consequence, our general definition of sharp measurements coincides with the ordinary one in quantum theory.
  % Note that these are exactly the measurements that generate probability distributions obeying consistent exclusivity.
In addition, we can prove that  when a sharp measurement extracts a coarse-grained information,  the experimenter can still retrieve the finer details at a later time. In fact, this is a necessary and sufficient condition for a measurement to be sharp, as proven in the Methods section.

\medskip

\subsubsection{ Fundamental sharpness of measurements}       Sharp measurements are an ideal standard---they are the measurements that generate outcomes in a repeatable way, while at the same time causing the least disturbance on future observations.  Unfortunately though, most measurements  in real life  appear to be noisy and not repeatable.  Hence the natural question:  Is noise is fundamental? Or rather it is contingent to the fact that the experimenter has  incomplete control  on the conditions of the experiment?
Here we state that noise is not fundamental and only arises from the fact that the realistic measurements do not extract information only from the system, but also  from the surrounding environment:
\begin{axiom}[Fundamental Sharpness of Measurements]\label{ax:sharpness} Every measurement arises from a sharp measurement performed jointly on the system and on the  environment.
%environment together with a sharp measurement.
%\begin{align}\label{ms}
%(  m_x | \rho)      =    \left (  M_x \right|      \rho  \otimes   \sigma)  \qquad  \forall x\in\set X  \, .
%\end{align}
\end{axiom}
Precisely, we require that for every measurement   $\st m  =  \{m_x\}_{x\in\set X} $   there exists an environment $E$, a state  $\sigma$  of $E$, and a sharp measurement $\st M  =  \{M_x\}_{x\in\set X} $  on the composite system $SE$  such that, for every state $\rho$ of system $S$, one has  $(  m_x | \rho)      =    \left (  M_x \right|      \rho  \otimes   \sigma)$ for every outcome  $ x\in\set X $.
%In classical and quantum theory, sharp measurements play a central role, because every measurement can be ultimately reduced to a sharp measurement performed on a composite system.
In quantum theory, this is the content of the celebrated Naimark's theorem  \cite{Helstrom76,Holevo11}.
%  We now promote this property to the rank of an axiom for  general probabilistic theories:
This is a deep property,  hinting at the idea there exists a  fundamental level where all measurements  are ideal.

Let us push  the idea  further.  If  measurements are sharp at the fundamental level, it is natural to assume that the set of sharp measurements is closed under the basic operation of coarse-graining, which transforms an initial measurement $\st m$ into a new, less informative measurement $\st m'$.  Indeed, since $\st m'$ provides less information than $\st m$, one expects that $\st m'$ should not be less repeatable, nor create more disturbance, than $\st m$.
 This intuition leads to the following requirement:
\begin{axiom}[Less Information, More Sharpness]\label{ax:coarse}
If a measurement is less informative than a sharp measurement, then it is sharp.
\end{axiom}
%Mathematically LIMS can be formulated as follows: given the instrument $\{\map M_x\}_{x\in\set X}$ of a minimal disturbance measurement and given a subset of the outcomes $\set X_0  \subseteq  \set X$, there exists a transformation $\map M_{\set X_0}$ such that $   \{\map M_{\set X_0}\}  \cup \{\map M_x\}_{x\in \set X\setminus \set X_0} $ is the instrument of a minimal disturbance measurement.
Suppose now that two experimenters,  Alice and Bob, perform two sharp measurements on two systems $A$ and $B$ in their laboratories.    Again, if measurements are sharp at the fundamental level, one expects the result of Alice's and Bob's measurements to be a sharp measurement  on the composite system $A B$.  If this were not the case, it would mean that at the fundamental level some measurements require  nonlocal interactions, even though at the operational level the they appear to be implemented locally by Alice and Bob.
 We then postulate the following
\begin{axiom}[Locality of Sharp Measurements]\label{ax:loc}  If  two sharp measurements are applied in parallel on  systems $A$ and $B$, then the result is a sharp measurement  on the composite system $AB$.
\end{axiom}
Axioms 1-3 lay down the fundamental structure of sharp measurements, summarized in Fig. \ref{fig:sharp}.
\begin{figure}
\centering
\includegraphics[totalheight=45mm]{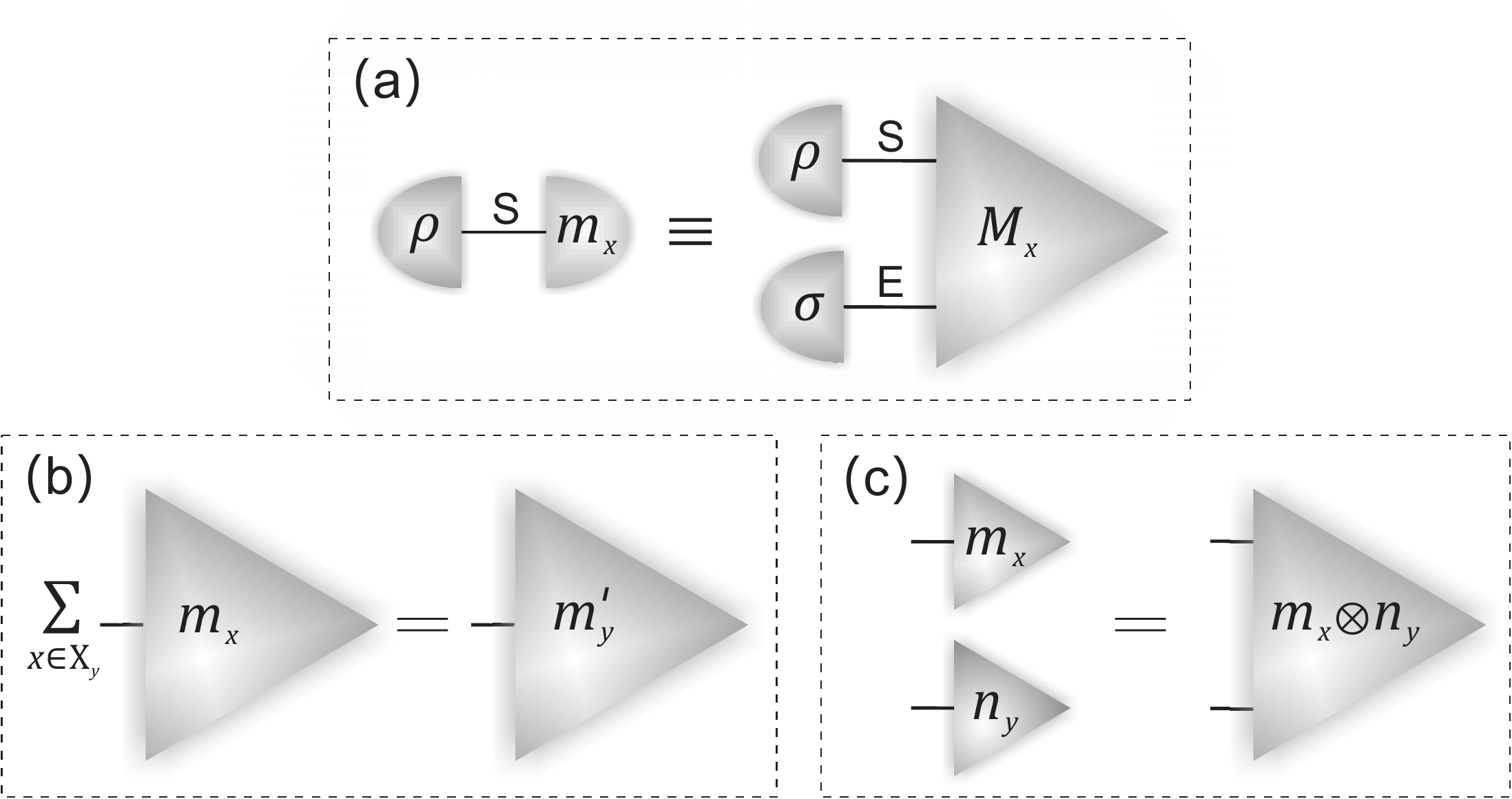}
\caption{{\bf The structure of sharp  measurements.}    ({\bf a})  Every non-sharp measurement  $\{m_x\}_{x\in\set X}$ (round diagram on the l.h.s.)  is equivalent to a sharp measurement   $\{M_x\}_{x\in\set X}$  (triangular diagram on the r.h.s.) performed on the system along with an environment.   ({\bf b})    Coarse-graining a  sharp measurement $\{ m_x\}_{x\in \set X}$ yields a new  sharp measurement $\{ m'_y \}_{y\in\set Y}$.  ({\bf c})  When two sharp measurements $\{m_x\}$ and $\{m_y\}$  are performed in parallel, they yield a new sharp measurement  $\{m_x \otimes n_y\}$.
%The probabilistic filter allows one to generate more clones and with a better quality.
}
\label{fig:sharp}
\end{figure}
They  are  satisfied by classical theory and by quantum theory, both on complex and real Hilbert spaces.    In the following we will show that the fundamental structure of sharp measurements has an enormous impact on the amount of nonlocality and contextualize that can be found in a physical theory.

%For clarity of presentation, we introduced Axioms 1, 2, and 3 separately.  However, it is important to stress that they are all motivated by one single idea: the idea that at the fundamental level measurements can be implemented through sharp repeatable and minimally disturbing.

\medskip

\subsection{ Derivation of CE}
At present, CE  is the only principle known to constrain the amount of contextualize of a generic theory.
%We now show that this principle can be formulated operationally and derived from the structure of sharp measurements.
% Thanks to the notion of sharp measurement, we are now in position to give an operational formulation of CE and to derive it from our basic axioms.
 % and to derive CE from the sharpness of measurements.
Operationally, the principle can be formulated as follows:   Consider  a collection of sharp measurements $\{\st m^{(x)} , \,    x\in\set X\}$,  each measurement having outcomes in a set $\set Y_x$.
Suppose  that the possible events have been labelled so that two effects corresponding to the same outcome coincide, \ie   $m_y^{(x)}   \equiv   m_y$, independently  of $x$.   Letting $\set Y  =  \cup_x \set Y_x$ be the set of all  outcomes, one  calls  two distinct outcomes  $y,y'  \in\set Y$ \emph{exclusive} if    there exists a measurement setting $x$   such that  both $y$ and $y'$ belong to  $ \set Y_x$.    We say that   a theory satisfies CE if for every set of mutually exclusive outcomes   $\set E$ and for  every state $\rho$ the probabilities $p_y  =   ( m_y|\rho )$  obey the bound
%\begin{equation}
$\sum_{y\in \set E}   p_y \le 1 \, .$
%\end{equation}

Our first key result is the derivation of  CE.  In fact, we prove a stronger result: We define  two sharp effects  $m$ and $m'$   to be \emph{orthogonal} if they belong to the same measurement and   we prove that mutually orthogonal effects can be combined   into a single sharp measurement  (see Methods). Clearly, since mutually exclusive outcomes correspond to mutually orthogonal effects, the existence of a joint measurement  containing the effects  $\{  m_y\}_{y\in\set E}$ implies the bound $\sum_{y\in \set E}   (m_y|  \rho)  \le 1$.    Our result implies that in a theory where measurements are fundamentally sharp  the violation of Kochen-Specker inequalities is upper bounded by the value set by CE  \cite{cabello2014}.  What is remarkable here is that a single requirement on measurements influences directly the strength of contextualize in an arbitrary physical theory.   This situation contrasts with that of the known  axiomatizations of quantum theory  \cite{hardy01,chiribella11,hardy11,masanes,Brukner,masanes12}, where the quantum bounds on contextualize    are retrieved  only indirectly through the derivation of the  Hilbert space framework.

Our axioms do not imply only CE, but also the whole hierarchy of extensions of this principle defined in Ref.   \cite{acin2012}.  The $L$-th level of the hierarchy can be defined by considering independent measurements on $L$ copies of the state $\rho$.  Denoting by  $ \st y  =  (y_1,\dots, y_L)   $  the string of all outcomes,  one says that  two strings $\st y$ and $\st y'$ are exclusive if there exists some $i$ such that $y_i$ and $y_i'$ are exclusive.      A physical theory satisfies the  $L$-th level of the hierarchy if
 the probabilities  $p_L  (\st y)  =   \prod_{i=1}^L (  m_{y_i}  |  \rho)$ obey the bound
 %\begin{equation}\label{CEN}
 $\sum_{\st y  \in  \set E}    p_L (\st y)   \le 1  $
 %\end{equation}
for every set $\set E$ of mutually exclusive strings.  In the Methods section we show that our axioms on sharp measurements imply that this bound is satisfied for every possible $L$.

\medskip

\subsection{ Derivation of LO}   In the nonlocality camp, LO  occupies a special position, being up to now the  only known principle   that rules out  non-quantum correlations that are not detected  by any bipartite principle \cite{gallego2011}.
% Our second key result  is to show  that LO follows from   axioms 1-3.
LO refers to a scenario where $N$  parties perform local measurements on $N$  systems, initially prepared in some  joint state.  The $i$-th party  can choose among different measurement settings in a set $\set X_i$ and her measurements give outcomes in another set $\set Y_i$.   Let $ \st x  =  (x_1,\dots, x_N)  $ be the string of all settings,  $ \st y  =  (y_1,\dots, y_N)   $ be the string of all outcomes, and $\st e$ be the pair $\st e = (\st  x,\st  y)$.  In this context, the pair $\st e = (\st  x,\st  y)$ is called an event and   two events are called \emph{locally orthogonal} iff  there exists a party $i$ such that $x_i =  x_i'$ and $y_i\not=y_i'$.        Setting  $p(\st e)$ to be the conditional probability distribution $   p(\st y| \st x)$,    one says that theory satisfies local orthogonality if all the probability distributions generated by local measurements   obey the bound
%\begin{align}\label{lo}
$\sum_{\st e  \in  \set O}  p  ( \st e)   \le 1$
%\end{align}
for every  set  $\set O$ of pairwise locally orthogonal events.

To derive LO, we specify how the probability distribution $p(\st y|\st x)$ is generated:  In the most general scenario,  the $N$ parties share a state $\rho$ and that, for setting $x_i$,  party $i$ performs a measurement  $\st m^{(i,x_i)}$.  Denoting the product effects $P^{(\st x)}_{\st y}  : =  \bigotimes_{i=1}^N  m^{(i,x_i)}_{y_i} $,  the probability distribution of the outcomes is given by  $  p(\st e)  : =  \left(    P^{(\st x)}_{\st y} |  \rho  \right)$.
The proof that LO follows from the axioms, provided in Methods,  consists of three steps:  First, thanks to the Fundamental Sharpness of Measurements, the problem is reduced  to proving that LO holds for probability distributions generated in a scenario  where all parties perform sharp measurements.     Then, we observe that, in the case of  sharp measurements, locally orthogonal events correspond to orthogonal effects.     Finally, we use the fact that mutually orthogonal effects  can coexist in a single measurement.  As a corollary, we obtain the bound $\sum_{\st e  \in  \set O}  p  ( \st e)   \le 1$, establishing the validity of LO for all the probability distributions generated by measurements in our theory.

Like in the case of CE, our axioms imply the whole hierarchy of extensions of LO introduced in \cite{acin2012}. The hierarchy is defined as follows:   the probabilities $p(\st y|\st x)$ satisfy the $L$-th level of the hierarchy if their product $  p(\st y_1|  \st x_1)  \cdots  p(  \st y_L |\st x_L)$ satisfies LO.
  Now, we can think of the product as being generated by measurements on $N$ copies of the state $\rho$.  In this way, we reduce the problem of proving the $L$-th level of the hierarchy to the problem of proving LO for measurements performed on the state $\rho^{\otimes  L}$.  But we already proved the validity of LO for arbitrary measurements and arbitrary states.    In conclusion, the structure of sharp measurements  implies that LO is satisfied at every possible level.  A striking consequence of this argument is that the fundamental sharpness of measurements rules out PR box correlations, as the latter violate the LO hierarchy \cite{Fritz12}.  In other words, in the world of PR boxes some measurements must be fundamentally noisy.
Finally, since our axioms imply LO, in particular they imply all the limitations  that LO sets on Bell inequalities and nonlocal games.  We will elaborate on this point in the following.

\medskip

\subsection{ Sharp Bell inequalities}   The request that measurements are ideal at the fundamental level exerts a censorship on the amount of nonlocality that can be detected by experiments.
To illustrate this fact, we show a number of Bell inequalities where the sharpness of measurements prevents every violation.  We call such inequalities sharp.  % The inequalities are best expressed in form of a nonlocal game where the optimal strategy is classical.

Consider a game played by $N$ non-communicating parties and a referee, who sends to party $i$ an input $x_i$ and receives back an output $y_i$.   The referee chooses the input string   $\st x$ at random with probability $q(\st x)$ and assigns a payoff $\omega  (\st x,\st y)$ to the players, assumed without loss of generality to be nonnegative for every $(\st x,\st y)$.  The expected payoff obtained by the players is given by $  \omega  =   \sum_{\st x,\st y}    q(\st x)  \omega  (\st x,\st y)   p(\st y|\st
 x)$,      where $p(\st y|\st x)$ is the probability distribution describing their strategy.
For a given game, the maximum payoff that can be achieved by classical strategies---call it $\omega_c$---defines a Bell inequality,  $\omega  \le \omega_c$. The  game can be associated with a graph  $\set G$, here called the \emph{winning graph}, by choosing as vertices the events  $(\st x, \st y)$ such that  $q(\st x)  \omega  (\st x,\st y)  \not = 0$ and placing an edge between two events  $\st e$ and $\st e'$  if they are not locally orthogonal, as illustrated in Figure  \ref{fig:graphs}.
\begin{figure}[hbt]
\centering
\includegraphics[totalheight=100mm]{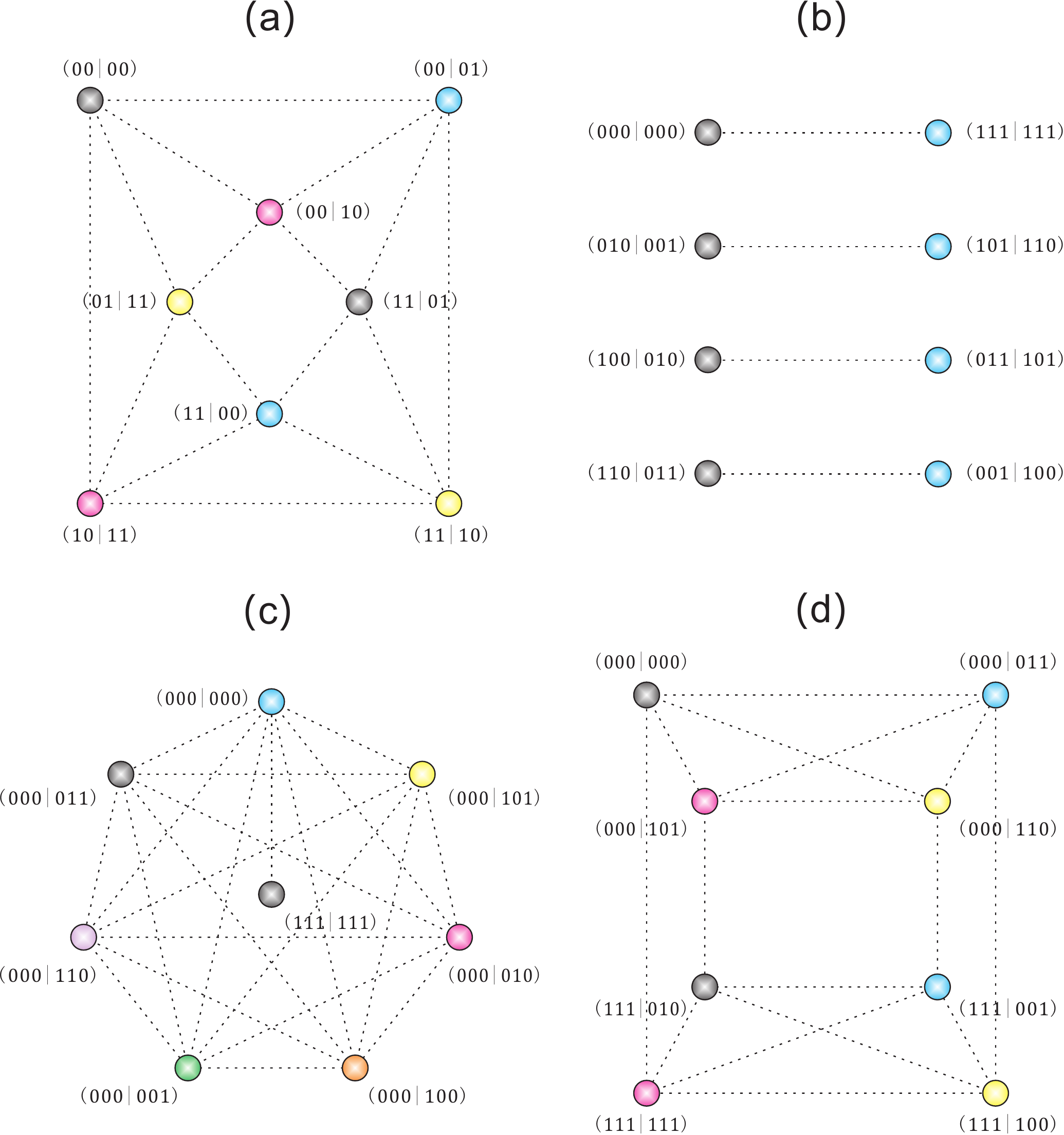}
\caption{{\bf Winning graphs for  examples of nonlocal games.}  The vertices are coloured  so that two connected vertices have distinct colours, using the minimum number of colours.    ({\bf a})  Winning graph for the CHSH game. The player   win if $y_1  \oplus  y_2  = x_1 x_2$  and 0 otherwise.    The graph is not perfect, because the largest clique in the graph has 3 vertices while the number of colours in the graph is 4.  Here classical strategies are not optimal  among the strategies that satisfy LO.       ({\bf b}) Winning graph for Guess Your Neighbor's Input  \cite{Almeida10} in the case of $N=4$ parties. The players win $+1$ if $y_i  =  x_{i+1}$ for every $i$.  The graph is a disjoint union of disconnected cliques and  therefore classical strategies are optimal among all strategies satisfying LO.      ({\bf c})  Winning graph for the game ``Guess the Product" in the case of $N=3$ parties.  The players win  $+1$ if   $y_i  =  x_1x_2 x_3$ for every $i$ and 0 otherwise.   The graph is perfect and therefore the classical strategy is optimal  \cite{acin2012}.         ({\bf d})  Winning graph for the game  ``Guess the Parity" in the case of $N=3$ parties.  The players win  + 1 if $y_i  =  x_1 \oplus  x_2 \oplus x_3$ for every $i$ and 0 otherwise. The graph is not perfect, because it contains odd  cycles with more than 3 vertices.  Still, classical strategies are optimal for this game, as shown in Supplementary Note 2 for arbitrary number of players.
}
\label{fig:graphs}
\end{figure}
In this picture, the maximum payoff  achieved by classical strategies is
\begin{equation}
\omega_c =  \max_{\set C  \subseteq \set G}   \sum_{  (\st x,\st y)\in\set C }    q(\st x) ~ \omega  (\st x,\st y)  \, ,
\end{equation}
where $\set C$ is a clique, \ie  a subset of $\set G$ with the property that every two vertices in $C$ are connected \cite{Fritz12}.

A first class of games leading to sharp Bell inequalities  is the class of games with a graph $\set G$ that is the disjoint union of mutually disconnected cliques  $\set C_k,    \,  k\in  \{1,\dots, K\}$.     This class contains the game  Guess Your Neighbor's Input \cite{Almeida10}  and the maximally difficult Distributed Guessing Problems of Ref. \cite{Fritz12}.
%where the  payoff is $+1$ if  $y_i  =   x_{  (i  +  1  )  \mod N}$ for every $i\in \{1,\dots, N\}$ and zero otherwise.
In addition, it contains other games such as, for even $N$,  the ``Guess the Parity" game where each player is asked to guess the  parity of the input string  $\st x$.
% $\st x$, with  payoff  $+1$ if $y_i   =  \sum_{j=1}^N  x_j  \mod 2 $ and zero otherwise.
%  In this case,  the vertices of $\set G$  are the events $\st e$ satisfying $\st y  =  S \st x$, where $S$ is the cyclic shift defined by $S (x_1,x_2,\dots, x_N)  :=  (x_2, \dots,  x_N,x_1)$, and  of  two events  $\st e$ and $\st e'$ are connected  if $\st x$ is the bitwise negation of $\st x'$, so that the graph is the disjoint union of cliques containing two vertices each.
For all these games, LO implies that the classical payoff is an upper bound.   Indeed,   picking    for every  $k$  the event  $\st e_k  \in\set C_k$  that has maximum probability, the payoff can be bounded as
\begin{eqnarray*}
\omega   &  = & \sum_{k=1}^K        \sum_{(\st x ,\st y)\in \set C_k }    q(\st x) ~   \omega  (\st x , \st y)    ~ p\left (\st y|   \st x \right)  \\
 & \le&  \sum_{k  = 1}^K       p\left ( \st e_k \right)          \left[ \sum_{(\st x,\st y)  \in \set   C_k}       q(\st x)   ~ \omega (\st x, \st y  )     \right]
\end{eqnarray*}
and since the events $\{  \st e_k\}_{k=1}^K$ are locally orthogonal by construction,  one has  $\sum_k  p(\st e_k)  \le 1$ and, therefore,  $\omega   \le   \max_k    q  \left(  \set   C_k \right)  \equiv  \omega_c $.  In conclusion, every sharp game defines a  Bell inequality,  $\omega  \le \omega_c$  that cannot be violated by any theory satisfying LO, and, in particular, by any theory where measurements are fundamentally repeatable and minimally disturbing.
Using a result of Ref. \cite{acin2012},  the proof that LO cuts the payoff down to its classical value can be extended  to a larger class of games, defined by the property that the winning graph $\set G$ is a perfect  graph  \cite{BergeGraphs}.  For example, one such game is the ``Guess the Product" game where the players are win if they guess the product of their inputs.  Finally, there are examples of games where the Bell inequality $\omega \le \omega_c$ is sharp even if the winning graph is not perfect, such as  Guess the Parity when the number of players is odd (see Supplementary Note 1 for the proof that the payoff is upper bounded by the classical value).
%All these results show how the requirement that measurements are fundamentally sharp  limits dramatically the amount of nonlocality in a physical theory.

%\section{Discussion}

%The power  of our approach stems from the operational definition of sharp measurement. The definition and the

Our results are derived in a minimal framework,  which avoids some assumptions  commonly made   in  GPTs. In particular,  our arguments  do not invoke local tomography \cite{hardy01,mauro2006,barrett2007}, but only the requirement that sharp measurements are local.   This requirement is strictly weaker: for example,  it is satisfied by quantum theory on real  Hilbert spaces, where local tomography fails.   Also, the validity of our results does not require   that the states of a given physical system  form a convex set.  Thanks to this feature, the results apply  also to  non-convex theories, like Spekkens' toy theory \cite{SpekkensToy}. Interestingly, probabilities themselves do not play a crucial role in our arguments and it is quite straightforward to extend the   results to theories that  only specify which outcomes are possible, impossible or certain, without specifying their probabilities,  such as Schumacher's and Westmoreland's theory \cite{ModalQT}.

Since sharp measurements play  a central role in quantum mechanics,  it is not surprising that they have been the object of extensive investigation since the early days \cite{vonNeumann,Luders50,Dirac}.   %Let us  comment on the works that are more directly related to ours.
Later, Holevo proposed a purely statistical definition of  sharp measurement, which does not refer to post-measurement states  \cite{HolevoObservable}. Although in the quantum case Holevo's definition  reduces to that of projective measurement, in general theories it is inequivalent to ours, and it is not clear how one could use it to derive features like LO and CE.    Different notions of ideal measurements were put forward by Piron \cite{PironBook,PironIdeal} and Beltrametti-Cassinelli \cite{Beltrametti81} in the framework of quantum logic.  In general, they differ form our definition in the way the condition of minimal disturbance is defined.  Most recently,  measurement disturbance  came back  to play  an important role in the search for basic principles principles, as shown \eg by the No Disturbance  Without Information principle of Ref. \cite{Pfister}.
%,  which explores the consequences of a No Disturbance Without Information principle.

Our work joined the insights from two different approaches to the foundations of quantum mechanics: the characterization of quantum correlations \cite{pr95,Dam05, brassard2006,linden2007,Paw09,Navascues09,Fritz12}  and the study of general probabilistic theories  \cite{hardy01,chiribella11,hardy11,masanes,Brukner,masanes12}.
  Although these two approaches have developed on separate tracks so far,   they share the same fundamental goal: understanding which picture of Nature lies behind the mathematical laws of quantum mechanics and guiding our intuition towards the formulation of new protocols and new  physical  theories.      Our results demonstrate that the interaction between the two approaches can be beneficial for both.    Here  LO and CE stimulated the search for new  principles in the GPT  framework, leading to  a compelling picture of nature where measurements are repeatable and cause minimal disturbance  at the fundamental level.   The idea that a noisy physical process can be reduced to an ideal process at the fundamental level reminds immediately of another quantum feature: Purification \cite{chiribella10}. Operationally, Purification is the property that every mixed state can be generated  from a pure state of a composite system by discarding one component.    This principle implies directly entanglement    and is at the core of the reconstruction of quantum theory of Ref.   \cite{chiribella11}.
  Our result suggests the possibility that purification and sharpness could be sufficient to derive quantum theory.
 In terms of quantum correlations, this would lead to the tantalizingly simple picture ``Purification brings nonlocality in, sharpness cuts it down".    Going even further, it is intriguing  wonder whether purification and sharpness can be viewed as two sides of the same medal by imposing that physical theories must satisfy  a suitable requirement of time symmetry, similarly to what was done in quantum theory by Aharonov, Bergmann and Lebowitz \cite{ABL,APT}.

\subsection{Methods}

{\bf Characterization of sharp instruments.} The starting point of our results is the observation that   an  instrument $\{\map M_x\}$ is sharp if and only if
\begin{equation}\label{alte}
(  r_{xy}  |    \map M_x  =   (r_{xy}|    \qquad  \forall x\in \set  X  \, , \forall y \in \set Y
\end{equation}
for every measurement $\st r  =  \{r_{xy}\}_{(x,y)\in\set X\times \set Y}$ that refines $\{m_x\}$, \ie  $ \sum_{y\in\set Y}  r_{xy}  =  m_x$ for every $x$.
%\medskip
%This equivalent condition turns out to be more useful than the definition when it comes to deriving properties of sharp measurements.
 Let us see why Eq. (\ref{alte}) is equivalent to sharpness.   First, suppose that Eq. (\ref{alte}) holds. Clearly, this implies that $\st m$ is repeatable, as one can see by summing over $y$.  Moreover,  Eq. (\ref{alte})  implies that $\st m$ is a minimal disturbance measurement. Indeed, take a generic measurement  $\st n$ that is compatible with $\st m$.   By definition, this means that  there exists a joint measurement $\st r  = \{  r_{xy}\}$ such that $\sum_x r_{xy}  = n_y$ for every $y$ and $\sum_y  r_{xy}  = m_x$ for every $x$.
 We then obtain
\begin{align}
\nonumber (n_y|   \map M  &  =
 \sum_{x}  (r_{xy}|    \map M_{x}   +    (s_y|   \qquad  (s_y| : =    \sum_{x,x':  x\not = x'}  (j_{xy}|    \map M_{x'} \\
 \nonumber &  =   \sum_{x}  (r_{xy}|     +   (s_y|   \\
\label{proof} &  =    (n_{y}|     +      (s_y|  \, ,
\end{align}
having used Eq. (\ref{alte}) in the second equality.
 Summing over $y$ and using the normalization of the measurement $\st n$ we obtain the condition $ \sum_y  (s_y|  = 0$, or, equivalently, $\sum_y  (s_y|  \rho)=0$ for every $\rho$.  Since probabilities are non-negative, this implies that each term in the sum vanishes, leading to the relation $s_y=  0$ for every $y$.  Inserting this relation back in Eq. (\ref{proof}) we conclude that $(n_y|  \map M  =  (n_y|$, that is, the instrument does not disturb $\st n$.  Hence,   Eq. (\ref{alte})  implies that $\{\map M_x\}$ is a sharp instrument.         Conversely,  if $\{\map M_x\}$  is a sharp instrument then Eq. (\ref{alte}) must be satisfied.  % and let $\st m $  be the corresponding measurement.
  By definition, one has
  \begin{equation*}
  (  m_x|  =   (e|  \map M_x =  \sum_{x'}  (m_{x'}|  \map M_x
  \end{equation*}
  and using the repeatability condition  $(m_x|    = (m_x|  \map M_x  $ one obtains  $\sum_{x'\not = x}   (m_{x'}|\map M_x   =  0$.
  %, or, equivalently $\sum_{x'\not = x}   (m_{x'}|\map M_x   |\rho)  =  0$ for every state $\rho$.  Since probabilities are non-negative, this
  Again, the fact that probabilities are nonnegative implies  that each term in the sum must vanish, namely
  $(m_{x'}|\map M_x   =  0 $ for every $x'  \not = x$.
   Now, let $\st r$ be a measurement such that $ \sum_y r_{xy}   =  m_x$.   Since the measurement $\st r$ is compatible with $\st m$, Eq. (\ref{nodist}) implies $ (r_{xy} |  \map M  =  (r_{xy}|$ with $\map M  =  \sum_{x'} \map M_{x'}$.    On the other hand, for $x  \not =  x'$ the condition $( m_x|  \map M_{x'}   =  0$  implies $  (r_{xy}|  \map M_{x'} = 0 $.  Hence, we conclude that $(r_{xy} |  \map M_x  =  (r_{xy} |  \map M   =  (r_{xy}|$ for every $x$ and $y$. \qed

\medskip

{\bf  Joint measurability of orthogonal effects.}     %In order to derive CE, our strategy  is to prove a stronger result, stated in terms of orthogonal effects, \ie effects that  correspond to different outcomes of the  same sharp measurement.
 The characterization of sharp instruments, combined with the Less Information-More Sharpness principle, leads directly  to the  first key result of our work:  a construction showing that  mutually orthogonal effects can be measured jointly in a single sharp measurement.  Precisely,   if   $m_k$ is orthogonal to $m_l$ for every $k,l  \in\{1,\dots, K\}$, we show that there exists a joint sharp measurement $\st j$ such that $\{m_k\}_{k=1}^K  \subseteq \st j$.

 Let us see how to  construct the  joint measurement.    Let $\st m^{(k)}$ be  the sharp measurement that contains the effect $m_k$.  By coarse graining of $\st m^{(k)}$ one obtains  the binary measurement $\st m^{(k)}  =  \{m_0^{(k)}  ,  m^{(k)}_1\}$, with $m_0^{(k)} : =  m_i$ and $m_1^{ (k) } : =  u-m_k$. By  the Less Information, More Sharpness postulate, $\st m^{(k)}$ is  sharp.    Let  $\{ \map M^{(k)}_0,  \map M_1^{(k)}  \}$ be the corresponding instrument.  Now, since $m_k$ and $m_l$ are orthogonal, $\st m^{(kl)}  = \{m_k, m_l,  e-m_k  -  m_l\}$ must be a valid measurement.     Since $\st m^{(k)}$ is a coarse-graining of  $\st m^{(kl)}$, Eq. (\ref{alte}) gives
\begin{equation}\label{aa}  (  m_l  |    \map M_1^{(k)}    =   (m_l|\,.
\end{equation}
 Now, consider the following measurement procedure:  \emph{i)} perform the first instrument   \emph{ii)}  if the outcome is $1$, then perform the second instrument,  \emph{iii)}  for every $k< K $,  if the outcome of the $k$-th instrument is $1$, perform the  $(k+1)$-th instrument.   The resulting instrument, denoted by $\{\map J_k\}_{i=1}^{K+1}$ consists of the transformations
\begin{align*}
\nonumber \map J_{1}   & :=  \map M^{(1)}_{0} \\
\nonumber \map J_{2}   &:=  \map M_0^{(2)}  \map M_1^{(1)}   \\
\nonumber \map J_{3}       &: =  \map M_0^{(3)}   \map M_1^{(2)}    \map M_{1}^{(1)}  \\
 \nonumber &\, ~~\vdots  \\
\map J_{K}   &:=   \map M_{0}^{  (K) } \map M_1^{(K-1)}\cdots    \map M_1^{(1)}  \\
\map J_{K+1}   &:=   \map M_{1}^{  (K) } \map M_1^{(K-1)}\cdots    \map M_1^{(1)}  \, .
\end{align*}
The measurement $\st j  = \{j_k\}_{k=1}^{K+1}$ associated to  the instrument $\{\map J_k\}_{k=1}^{K+1}$ is the desired joint measurement: indeed,  and using Eqs.   (\ref{prob}) and (\ref{aa})  we obtain  $(j_k|  =  (e|  \map  M_0^{(k)}   \map M_1^{(k-1)}  \cdots  \map M_1^{(1)}   =    ( m_k|   \map M_1^{(k-1)}  \cdots  \map M_1^{(1)}     =   (m_k|$ for every $k  \in   \{1,\dots,  K\}$.   In addition, the measurement $\st j$ is sharp.  Indeed, if a measurement $\{r_{kl}\}$  is a refinement of $\st j$, \ie  $  \sum_l  r_{kl}  =  j_k$ for all $k$,  then $\st r$ is also a refinement of the sharp measurement  $\st m^{(k')}$ for every fixed $k'$.   Hence, one has
\begin{align*}
(  r_{kl}|  \map M_0^{(k)}    &=  (r_{kl}  |   \qquad  \forall k \in  \{1,\dots K\}  \\
(  r_{kl} | \map M_1^{(k')}    & =  (r_{kl}|  \qquad
\forall k, k' \in  \{1,\dots K\}   ,  \,  k\not= k' \, .
\end{align*}
Using this fact and the definition of $\map J_k$ it is immediate to obtain the relation $(  r_{kl}|  \map J_k  =  (r_{kl}|$ for every $k,l$.   Thanks to Eq. (\ref{alte}), this proves that the instrument $\{  \map J_k\}$ is sharp, and so is the corresponding measurement $\st j$.   \qed

\medskip

The ability to combine orthogonal effects into a single measurement is a powerful  asset.   As we already observed,  it implies CE at its basic level.  In the following we show that it can be used also to obtain the whole CE  hierarchy.

\medskip

{\bf Orthogonality of product effects.}
 In a causal theory  the information available at a given moment of time can be used to make decisions about the settings of future experiments, thus allowing for adaptive measurements where the choice of setting for a system   $B$ depends on the outcome of a measurement on system  $A$.   In particular, if $\{m_x\}$ is a sharp measurement on $A$ and $\{  n^{(x)}_y\}_{y\in\set Y}$ is a sharp measurement for every $x$, then $\{m_x\otimes n^{(x)}_y\}$ is a legitimate measurement.  Now, the Locality of Sharp Measurements implies that      $\{m_x\otimes n^{(x_0)}_y\}$ is sharp for every fixed $x_0$.  Since $x_0$ if arbitrary, this means that each effect $m_x\otimes  n^{(x)}_y$ is sharp and that two effects  $m_x\otimes  n^{(x)}_y$  and $m_{x'}\otimes  n^{(x')}_{y'}$ are orthogonal unless $x=x'$ and $y=y'$.

%This observation  has a crucial implication:   if two effects $m$ and $m'$ are orthogonal, then the effects $m\otimes n$ and $m'\otimes n'$ are orthogonal for every pair of  sharp effects $n,n'$.
%To see why this is true it is enough to  take  the sharp measurement  on system $A$ defined by  $\{m_x\}_{x=1}^3$,  with  $m_1  =  m \, , m_2  =  m'$, and $m_3  =  u_A  -  m-m'$ and  to adapt   the setting on system $B$ so that for $x=1$  one performs the sharp measurement $ \{n,  u_B-n\}$ and  for $x\not =  1$   one performs the sharp measurement  $\{n',  u_B-n'\}$.     Since $m\otimes n$ and $m'\otimes n'$ correspond to two distinct outcomes of the sharp measurement  $\{m_x\otimes n_y^{(x)}\}$, they are orthogonal.

Thanks to this observation,  it is easy to see that  every level of the CE hierarchy is satisfied.  The key is to note that if two strings of outcomes $\st y$ and $\st y'$ are exclusive,  then the corresponding effects $P_{\st y}$  and $P_{\st y'} $ are orthogonal.  This is clear because, by definition,  the effects corresponding to two exclusive strings are of the form $   P_{\st y}   =  m_{y_i}  \otimes  n$  and $P_{\st y'}  =   m_{y_i'} \otimes n'$ where  the effects   $m_{y_i}$ and $m_{y_i'}$ are orthogonal and the effects $n  =  \otimes_{j\not  = i}  m_{y_j}$ and $  n'  =  \otimes_{j\not  = i}  m_{y_j} $ are sharp  thanks to the Locality of Sharp Measurements.    Using our result about product effects, we then have that   $P_{\st y}$ and $P_{\st y'}$ are orthogonal.
 Now,  a set of mutually exclusive strings  $\set E$ corresponds to a set of mutually orthogonal effects  $\{  P_{\st y}\}_{\st y\in\set E}$.   Since mutually orthogonal effects can be combined into a joint measurement, the probabilities $p_L (\st y)  =   (  P_{\st y}|   \rho^{\otimes L}) $ obey  the bound $\sum_{\st y\in\set E}  p_L(\st y)\le 1$, meaning that the theory satisfies the $L$-th level of the CE hierarchy for arbitrary $L$.

Note that the same argument can be used to prove the validity of LO for the probability distributions generated in a scenario where all parties perform sharp measurements.   In this scenario,  two locally orthogonal events $(\st x,\st y)$ and $(\st x',\st y')$ correspond to two orthogonal effects   $P^{(\st x)}_{\st y}$  and $P^{(\st x')}_{\st y'}$, for exactly the same reason mentioned above.  Hence,  the joint measurability of orthogonal effect implies the bound $  \sum_{(\st x,\st y)  \in\set O} p(\st x|\st y)\le 1$ for every set $\set O$ of locally orthogonal events.    In other words, all the  probability distributions  generated by sharp measurements obey LO.

\medskip

{\bf Reduction to sharp measurements.}  While CE applies only to sharp measurements, LO applies to arbitrary measurements.     This is because  every probability distribution that we can encounter in our theory is a probability distribution generated by sharp measurements.   This fact can be seen as follows:  combining the Fundamental Sharpness with the Locality of Sharp Measurements,   one can show that for  every party $i$ and every measurement $\st m^{(i,x_i)}$, there exists an ancilla   $A_i$, a state of $A_i$, call it $\sigma_i$, and a  sharp measurement $\st M^{(i,x_i)}$  such that $     \left(  m^{(i,x_i)}_{y_i}  |    \rho_i   \right)   =  \left(    M^{(i,x_i)}_{y_i} |   \rho_i\otimes \sigma_i \right)  $ for every $x_i$ and for every $y_i$
%The only point that  needs to be  proven here  is that both the ancilla and its state can be chosen to be independent of the setting $x_i$.  The proof is provided in
(cf. Supplementary Note 2 for the proof).     Now, since all  measurements that party $i$ can perform can be replaced by sharp measurements by adding an ancilla in a fixed state $\sigma_i$, the input-output distribution $p(\st y|\st x)$ generated by arbitrary measurements on the state $\rho$ coincides with the input-output distribution generated by sharp measurements on the state $\rho'  =  \rho  \otimes  \sigma_1\otimes \dots\otimes \sigma_N$.
In other words, at the level of correlations there is no difference between sharp and non-sharp measurements.  Thanks to this fact, deriving LO for sharp measurements is equivalent to deriving LO   for  arbitrary  measurements.

\appendix
\chapter{Coherence Distillation Procedure}\label{App:coherence}
A coherence distillation procedure refers to a series of incoherent operations by which a large number of identical partly coherent states can be transformed into a smaller number of maximally coherent states. This chapter  introduces a coherence distillation procedure for pure qubit states. With $N$ copies of states $\ket{\psi} = (\alpha \ket{0} + \beta \ket{1})$, we show that we can asymptotically obtain $l$ copies of $\ket{\Psi_2} = (\ket{0} + \ket{1})/\sqrt{2}$, where $l$ and $N$ satisfy $l/N \approx R_I(\ket{\psi})$. The derivation method can be generalized  to an arbitrary dimension.

\section{Coherence distillation: qubit}
First we prepare $MN$ copies of a partially coherent qubit state which will be uniformly divided into $M$ groups. The initial state of each group can be expressed according to
\begin{equation}
\ket{\psi}^{\otimes N}=\left(\alpha \ket{0}+\beta \ket{1}\right)^{\otimes N}.
\end{equation}
A binomial expansion on the computational basis contains $N+1$ distinct coefficients $\beta^N, \alpha^{1}\beta^{N-1}, \dots, \alpha^N$. Thus we can divide the original $2^N$-dimensional Hilbert space into $N+1$ subspaces according to the coefficients. For the $k$th coefficient $\alpha^{N-k}\beta^k$, the corresponding $k$th subspace is a  $D_k = C_N^k$ dimensional Hilbert space, whose basis are denoted by
\begin{equation}
\alpha^{N-k}\beta^k:\left\{\ket{e_1^k}, \ket{e_2^k}, \cdots , \ket{e_{D_k}^k}\right\}.
\end{equation}
When considering the computational basis, $\ket{e_i^k}$ $(i=, 1, 2, \cdots, D_k)$ is an N-qubit basis with $(N-k)$ $\ket{0}$s and $k$ $\ket{1}$s.

Next, we perform a projection measurement on $\ket{\psi}^{\otimes N}$ to the subspaces. In our case, the projection operator that maps onto the $k$th subspace is given by
\begin{equation}
P_k=\ket{e_1^k}\bra{e_1^k}+\ket{e_2^k}\bra{e_2^k}+\cdots+\ket{e_{D_k}^k}\bra{e_{D_k}^k}.
\end{equation}
The probability of obtaining the $k$th outcome is
\begin{equation}
p_k=C_N^k\left|\alpha\right|^{2N-2k}\left|\beta\right|^{2k}.
\end{equation}
Note that as the coefficients for the expansion are the same, the post-selection of the $k$th outcome corresponds to a maximally coherent state $\ket{\Psi_{D_k}}$  of dimension $D_k$.

If $D_k = 2^r$, we can directly convert to $r$ copies of $\ket{\Psi_2}$ as desired.
Or, we can repeat this process $M$ times, and take the tensor product of the post selected state to obtain a maximally coherent state of dimension $D$,
 \begin{equation}\label{}
      \ket{\Psi_{D}} = \ket{\Psi_{D_{k_1}}} \ket{\Psi_{D_{k_2}}} \dots\ket{\Psi_{D_{k_M}}},
\end{equation}
where $k_j$ is the outcome of the $j$th measurement, and the total dimension is $D = D_{k_1}D_{k_2}\cdots D_{k_M}$.
The total dimension $D$ will lie between $2^r$ and $2^r(1+\epsilon)$ $(0<\epsilon<1)$ for some power $r$. It can be proved \cite{Bennett96} that as $M$ increases, $\epsilon$ will asymptotically approach $0$.

Therefore, we can perform a second projection measurement to the $2^r$-dimensional Hilbert subspace and directly get obtain a final state
\begin{equation}
\ket{\Psi_2}^{\otimes r}=\left(\frac{1}{\sqrt{2}}\left(\ket{0}+\ket{1}\right)\right)^{\otimes r}
\end{equation}
Using the above procedure, $NM$ copies of a partly coherent qubit state $\alpha \ket{0}+ \beta \ket{1}$ have been distilled into $r$ copies of maximally coherent state $\ket{\Psi_2} = \ket{0}+\ket{1}$.

In the following, we will show that all the operations of the distillation protocol are incoherent operations. In addition, we will show that the number of distilled maximally coherent state $r$ and the number of initial qubit $MN$ satisfy the relation $NMR_I(\ket{\psi}) \approx r$.
\subsection{Incoherent operations}
As the only operations are the two projective measurements, we only need to prove the following lemma.
\begin{lemma}
Suppose an $n$-dimensional Hilbert space has a complete basis $I_n = \{\ket{1}, \ket{2},\cdots, \ket{n}\}$. A projection measurement that divides $I_n$ into its complementary subsets are incoherent operations on the basis of $I_n$.
\end{lemma}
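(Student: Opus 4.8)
The plan is to prove that a projection measurement which splits the computational basis $I_n$ into complementary subsets is an incoherent operation, i.e.\ that it can be written via Kraus operators $\{K_m\}$ satisfying $\sum_m K_m^\dag K_m = I$ and $K_m \mathcal{I} K_m^\dag \subset \mathcal{I}$, where $\mathcal{I}$ is the set of incoherent (diagonal) states on the basis $I_n$. First I would fix the setup: suppose the basis $I_n = \{\ket{1},\dots,\ket{n}\}$ is partitioned into disjoint subsets $S_1,\dots,S_r$ with $\bigcup_m S_m = I_n$ and $S_m \cap S_{m'} = \emptyset$ for $m \neq m'$. The projection measurement is the collection of projectors
\begin{equation}
P_m = \sum_{\ket{i}\in S_m} \ket{i}\bra{i}, \qquad m = 1,\dots,r.
\end{equation}
These serve directly as the Kraus operators of the channel.

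The key steps are then largely verifications. The completeness relation $\sum_m P_m = \sum_{i=1}^n \ket{i}\bra{i} = I$ follows immediately because the $S_m$ partition the full basis, and since each $P_m$ is a projector, $P_m^\dag P_m = P_m$, giving $\sum_m P_m^\dag P_m = I$ as required for a trace-preserving map. The substantive point is the incoherence-preserving condition: I would take an arbitrary incoherent state $\delta = \sum_i p_i \ket{i}\bra{i} \in \mathcal{I}$ and compute $P_m \delta P_m$. Because $P_m$ is diagonal in the $I_n$ basis, it commutes with every $\ket{i}\bra{i}$, and one finds
\begin{equation}
P_m \delta P_m = \sum_{\ket{i}\in S_m} p_i \ket{i}\bra{i},
\end{equation}
which is again diagonal in $I_n$ and hence lies in $\mathcal{I}$. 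After normalization the post-selected state $P_m\delta P_m / \mathrm{Tr}[P_m\delta P_m]$ remains incoherent, so the condition $K_m \mathcal{I} K_m^\dag \subset \mathcal{I}$ holds for each $m$.

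I expect no genuine obstacle here, since the whole point is that a projector onto a subset of basis vectors is diagonal and therefore manifestly maps diagonal operators to diagonal operators; the only care needed is to state precisely that ``complementary subsets'' means a genuine partition of $I_n$, so that completeness holds. The mildest subtlety worth spelling out is that the same argument applies both in the post-selected (probabilistic) picture relevant to condition (C2b) and in the coarse-grained picture $\delta \mapsto \sum_m P_m \delta P_m$, which also returns a diagonal state. Once the lemma is established, I would note that both projective measurements used in the distillation protocol (the projection onto the binomial subspaces $\{P_k\}$ and the final projection onto the $2^r$-dimensional subspace) are exactly of this form, so the entire distillation procedure consists only of incoherent operations, as claimed.
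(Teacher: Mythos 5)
Your proof is correct and takes essentially the same route as the paper's: both use the subset projectors $P_m$ themselves as the Kraus operators, verify completeness from the partition property, and check that conjugation by a projector diagonal in $I_n$ maps any diagonal state to a (subnormalized) diagonal state, hence preserves $\mathcal{I}$. Your additional remarks on the post-selected versus coarse-grained pictures and the application to the two measurements in the distillation protocol are consistent with, and slightly more explicit than, the paper's treatment.
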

\begin{proof}
Suppose that the basis $I_n$ is divided into $m$ complementary subsets $I_{n_1}, I_{n_2}, \cdots, I_{n_m}$, such that $I_{n_\alpha}\cap I_{n_\beta}= \emptyset$, for all $\alpha \neq \beta \in \{1,2,\cdots,m\}$, and $I_n = I_{n_1}\cup I_{n_2}\cup \cdots\cup I_{n_m}$. Denote the projector that projects onto the $I_{n_\alpha}$ subspace by $P_\alpha$. Thus, we can show that the projection measurement is a set of Kraus operators $\{\hat{P}_\alpha\}$ that satisfy $\hat{P}_\alpha^\dag\hat{P}_\beta = \delta_{\alpha,\beta}\hat{P}_\alpha$ and $\sum_\alpha P_\alpha = I_n$.
To prove the projection measurement to be an incoherent operation, we additionally need to show that $\hat{P}_\alpha\mathcal{I}_n \hat{P}_\alpha^\dag \subset \mathcal{I}_n$, where $\mathcal{I}_n$ is the set of all incoherent states that can be represented by $\delta=\sum_{i=1}^n \delta_i \ket{i}\bra{i}$. As the definition of $P_\alpha$, we have
\begin{equation}\label{}
\begin{aligned}
  P_\alpha \ket{i} &= \delta(\ket{i}\in I_{n_\alpha})\ket{i},
\end{aligned}
\end{equation}
where $\delta(\ket{i}\in I_{n_\alpha})=1$ if $\ket{i} \in I_{n_\alpha}$ and $\delta(\ket{i}\in I_{n_\alpha})=0$ otherwise.
Thus, we can show that for an arbitrary state $\delta=\sum_{i=1}^d \delta_i \ket{i}\bra{i}\in \mathcal{I}_n$, we have
\begin{equation}
\begin{aligned}
\hat{P}_\alpha \delta \hat{P}_\alpha^\dag &=\hat{P}_\alpha\sum_{i=1}^n \delta_i \ket{a_i}\bra{a_i} \hat{P}_\alpha^\dag \\
&= \sum_{i=1}^n \delta_i\delta(\ket{i}\in I_{n_\alpha}) \ket{a_i}\bra{a_i}\in \mathcal{I}_n.
\end{aligned}
\end{equation}
\end{proof}
Therefore, we have proven that the operations in the distillation protocol are incoherent.
\subsection{Coherence loss}
To explain why we have $NMR_I(\ket{\psi}) \approx r$, we only need to consider the coherence loss during the distillation process.
The initial state in each group can be rewrite as
\begin{equation}
\ket{\psi}^{\otimes N}=\sum_{k=0}^{N} \sqrt{C_N^k} \alpha^{N-k}\beta^k \ket{\Psi_{D_k}},
\end{equation}
where $\ket{\Psi_{D_k}}$ is a maximally coherent state of dimension $D_k$.
Thus the density matrix of the initial state is
\begin{equation}
\rho=\sum_{k,k'} \sqrt{C_N^k} \sqrt{C_N^{k'}}\alpha^{N-k}\beta^k (\alpha^*)^{N-k'}(\beta^*)^{k'} \ket{\Psi_k} \bra{\Psi_{k'}}
\end{equation}

As the coherence of $\rho$ is defined by its von Neumann entropy of its diagonal terms, we first look at $\rho^{diag}$. That is,
\begin{equation}\label{1}
\begin{aligned}
\rho^{diag}=&\sum_{i=1}^{2^N}\bra{e_i}\rho\ket{e_i}\ket{e_i}\bra{e_i}\\
=&\sum_{i=1}^{2^N}\sum_{k,k'}\sqrt{C_N^k} \sqrt{C_N^{k'}}\alpha^{N-k}\beta^k (\alpha^*)^{N-k'}(\beta^*)^{k'} \\
&\bra{e_i}{\Psi_k}\rangle \bra{\Psi_{k'}}{e_i}\rangle\ket{e_i}\bra{e_i}
\end{aligned}
\end{equation}
Here, we can see that when $k\neq k'$, $\bra{e_i}{\Psi_k}\rangle \bra{\Psi_{k'}}{e_i}\rangle=0$. Therefore Eq.~\eqref{1} can be simplified as
\begin{equation}
\rho^{diag}=\sum_{k=0}^{N} C_N^k\left|\alpha\right|^{2N-2k}\left|\beta\right|^{2k}\left(\ket{e_i}\bra{e_i}\right)^{diag}= \sum_{k=0}^{N} p_k \rho_k^{diag}
\end{equation}
Here, $\rho^{\mathrm{diag}}$ has the decomposition $\{p_k, \rho_k^{diag}\}$. Thus, we have
\begin{equation}\label{2}
S(\rho^{diag})= H(p_k)+ \sum_{k=0}^{N}p_k S(\rho_k^{diag})
\end{equation}
where $S(\rho^{diag})$ is the von Neumann entropy of $\rho^{diag}$ and $H(p_k)$ is the Shannon entropy. Considering our coherence (intrinsic randomness) definition, Eq.~\eqref{2} is equivalent to
\begin{equation}
C(\rho)=H(p_k)+\sum_{k=0}^{N} p_kC(\rho_k),
\end{equation}
where $C(\rho)$ is the average initial coherence and $\sum_{k=0}^{N} p_kC(\rho_k)$ is the average coherence left after the first projection measurement.
Therefore, the coherence loss in the first operation is
\begin{equation}
H(p_k)=-\sum_{k=0}^{N}p_k \log_2{(p_k)}\leq \log_2{N}.
\end{equation}
The coherence loss for the second projection measurement can be easily estimated by $\log_2(1+\epsilon)\approx \epsilon$. Thus the total coherence loss has an upper bound given by
\begin{equation}
M\log_2N+\log_2(1+\epsilon)
\end{equation}
which is negligible relative to the initial coherence $MNC(\ket{\psi})$ when $M$ and $N$ are large.

\section{General definition}
Generally, when considering the intrinsic randomness of multiple copies of $\rho$, we can define the average intrinsic randomness in a manner similar  to the definition of entanglement cost \cite{hayden2001asymptotic, Plenio2007Measures, Horodecki09} by
\begin{equation}\label{Eq:Nrho}
  R_I^C(\rho) = \inf\left\{r:\lim_{N\rightarrow\infty}\left[\inf_{\Phi_{\mathrm{ICPTP}}}D\left(\rho^{\otimes N}, \Phi_{\mathrm{ICPTP}}\left(\ket{\Psi_{2^{rN}}}\right)\right)\right]=0\right\},
\end{equation}
where $D(\rho_1,\rho_2)$ is a suitable measure of distance, which, for instance, could be the trace norm. In this case, the intrinsic randomness is understood as the average coherence cost in preparing $\rho$. Compared to the definition of the regulated entanglement of formation \cite{hayden2001asymptotic}, we conjecture that  $R_I^C(\rho)$ equals the regulated intrinsic randomness measure,
\begin{equation}\label{Eq:RNrho}
  R_I^\infty(\rho) = \lim_{N\rightarrow\infty}\frac{ R_I\left(\rho^{\otimes N}\right)}{N}.
\end{equation}

In the other direction, we can apply intrinsic operations to transform $N$ non-maximally coherent copies of $\rho$ to $l$ maximally coherent state $\ket{\Psi_2}$.
Similarly, we can define the distillable coherence by the supremum of $l/N$ over all possible distillation protocols \cite{Rains98, Plenio2007Measures, Horodecki09},
\begin{equation}\label{Eq:Nrho}
  R_I^D(\rho) = \sup\left\{l:\lim_{N\rightarrow\infty}\left[\inf_{\Phi_{\mathrm{ICPTP}}}D\left(\Phi_{\mathrm{ICPTP}}\left(\rho^{\otimes N}\right)-\ket{\Psi_{2^{lN}}}\right)\right]=0\right\}.
\end{equation}
This distillable coherence $R_I^D(\rho)$ can thus be considered as the amount of intrinsic randomness when a quantum extractor is performed before measurement, as shown in the main context. For a general reasonable regularized coherence measure $C_I^\infty(\rho)$ similar to Eq.~\eqref{Eq:RNrho}, we conjecture that the two measures $R_I^D$ and $R_I^C$ are equivalent for all possible distance measures. They serves as two extremal measures, such that,
$R_I^D \leq C_I^\infty \leq R_I^C$ for all regularized $C_I^\infty$. In addition, similar to entanglement measures \cite{vidal2000entanglement, Horodecki00, donald2002uniqueness}, we show in the next section that the coherence measure for pure states is unique under regularization

\section{A unique measure for pure states}
In this section, we show that the measure of randomness is unique for pure quantum states.

First note that $R_I^D(\rho) \leq R_I^C(\rho)$. Otherwise, we could first distill $N$ copies of $\rho$ into $NR_I^D(\rho)$ copies of $\ket{\Psi_2}$, and then convert into $NR_I^D(\rho)/R_I^C>N$ copies of $\rho$.
For a pure state $\rho$, we have already give a distillation protocol such that $R_I^D(\rho) = R_I^C(\rho)$.
Now, suppose that $C_I$ is a coherence measure for a single quantum state. By regularization, the coherence measure is given by
\begin{equation}\label{Eq:RNrho2}
  C_I^\infty(\rho) = \lim_{N\rightarrow\infty}\frac{ C_I\left(\rho^{\otimes N}\right)}{N}.
\end{equation}
Suppose that $C_I(\ket{\Psi_d}) = \log_2d$, we now prove that $C_I^\infty(\rho) = R_I^D(\rho) = R_I^C(\rho)$ for pure state $\rho$.
\begin{proof}
For a given pure state $\rho$, suppose that $C_I^\infty(\rho) < R_I^D(\rho)$, that is, $C_I^\infty(\rho) = R_I^D(\rho) - \Delta$, where $\Delta$ is a finite positive number. After the distillation process, we can convert $N$ copies of $\rho$ into approximately $NR_I^D(\rho)$ copies of $\ket{\Psi_2}$. From the main context, we know that the remaining coherence is $NR_I^D(\rho)$. For the $C_I^\infty(\rho)$ measure, the coherence after distillation is also $NR_I^D(\rho)$, while the initial coherence is given by $NC_I^\infty(\rho) = NR_I^D(\rho) - N\Delta$. As $\Delta$ is a finite positive number, the distillation process increases coherence, which leads to a contradiction.

If $C_I^\infty(\rho) > R_I^D(\rho)$, we can follow a similar method by considering the transformation $NR_I^C(\rho)$ copies of $\ket{\Psi_2}$ into $N$ copies of $\rho$. The contradiction originates from checking the coherence increase with the $C_I^\infty(\rho)$ measure during the incoherent transformation process.
\end{proof}

\chapter{SI-QRNG}
This chapter discusses finite size effect of the source independent quantum random number generator.
\section{Calculation of the number of effective $X$-basis measurements} \label{App:numXmeas}
In this appendix, we show that in the asymptotic limit, the number of effective $X$-basis measurements is independent of $n$. Our starting point is Eq.~\eqref{eq:Ptheta} and $\varepsilon_\theta < 2^{-100}$. Notice that normally $n$ is smaller than $10^{12} < 2^{40}$ to ease fast post-processing; thus, the term $1/\sqrt{n}$ and the other polynomial terms in Eq.~\eqref{eq:Ptheta} play a relatively small role in making $\varepsilon_\theta < 2^{-100}$. In the following, we consider only the exponent in Eq.~\eqref{eq:Ptheta}.

For ease of notation, let $x=e_{bx}$, $y=e_{bx}+\theta$ and $q=q_x$. Then the exponent of Eq.~\eqref{eq:Ptheta} becomes
\begin{equation*}
n[H((1-q)y+q x)-q H(x)-(1-q) H(y)]
\end{equation*}
and the inequality $\varepsilon_\theta < 2^{-100}$ is approximately equivalent to
 \begin{equation}
 \begin{aligned}
& n [q(H((1-q)y+q x)-H(x))+ \\
 & (1-q) (H((1-q)y+q x )- H(y))] \ge 100.
  \end{aligned}
\label{eq:A1}
 \end{equation}
Since $q$ is very small, one can make three approximations:
\begin{equation} \label{firstapprox}
H ((1-q)y+qx )-H(y)\approx -H'(y)q(y-x),
 \end{equation}
 \begin{equation} \label{secondapprox}
q[H((1-q)y+qx)-H(x)]\approx q(H(y)-H(x))
 \end{equation}
and
\begin{equation} \label{thirdapprox}
q^2 \approx 0.
 \end{equation}
Then, by applying Eqs.~\eqref{firstapprox} and \eqref{secondapprox}, the inequality \eqref{eq:A1} becomes
 \begin{equation}
n[q(H(y)-H(x))-(1-q) (H'(y)q(y-x))] \gtrsim 100.
 \end{equation}
 Applying Eq.~\eqref{thirdapprox} yields
 \begin{equation}
n[q(H(y)-H(x))- H'(y)q(y-x)] \gtrsim 100,
 \end{equation}
 and rearranging terms, we have
 \begin{equation}
q \gtrsim \frac{100}{n[H(y)-H(x)-H'(y)(y-x)]},\\
 \end{equation}
 Substituting the definitions of $x$ and $y$, we obtain
 \begin{equation}
 q \gtrsim  \frac{100}{n[H(e_{bx}+\theta)-H(e_{bx})-H'(e_{bx}+\theta)\theta]}.
 \end{equation}
Finally, we substitute $q=n_x/n$ and get
\begin{equation} \label{eq:c}
n_x \approx \frac{100}{H(e_{bx}+\theta)-H(e_{bx})-H'(e_{bx}+\theta)\theta},
\end{equation}
which is independent of $n$.

\section{Proof of the random sampling property for a type of QRNG input after loss}\label{app:sampling}
In this appendix, we first restate the setting. In the idealistic protocol, the measurement device chooses its measurement basis after confirming that the state received from the source is not a vacuum (or equivalently, not lost). In practice, confirming whether a state is
a vacuum is usually done by observing whether detectors in the measurement device click or not. Thus, it is desirable for the measurement device to choose its basis before confirming whether loss happens.

We prove that for a specific input that defines the measurement basis choices before the potential loss, the positions of $n_x$ valid $X$-basis measurements (after excluding loss events) are randomly drawn from the positions of the total of $n$ valid measurements. This proves that the random sampling technique from  Fung {\it et al.} can still be applied  when the measurement basis is chosen before the loss.

For ease of presentation, we state the input that specifies the measurement choices before the loss as follows. The input is a string of length $N=N_x+N_z$ that contains $N_x$ 0s and $N_z$ 1s. The $\binom{N}{N_z}$ possibilities for choosing the positions of $N_z$ 1s from the total $N_x+N_z$ positions are equally likely. Here, 0 stands for an $X$-basis measurement and 1 stands for a $Z$-basis measurement.  After loss, the numbers of valid $X$-basis measurements and $Z$-basis measurements are denoted by $n_x$ and $n_z$, respectively, with a total string length of
\begin{equation}
n=n_x+n_z.
\end{equation}
We need to show that the output is uniform for the $\binom{n_x+n_z}{n_z}$ possibilities of choosing the positions of $n_z$ 1s from the total $n$ positions.

The proof proceeds through a symmetry argument. The input is symmetric, i.e., if we exchange the indices of two positions, the distribution will not change. Suppose that the initial positions are $1,2,\dots, n$ and the probability of choosing specific positions for $N_z$ 1s from the total $N$ positions is
\begin{equation}
p=\frac{1}{\binom{N_x+N_z}{N_z}}.
\end{equation}
 For ease of presentation, denote the left positions after loss as $i_1<i_2< \dots<i_n$. Then each possibility with $n_x$ 0s in the left $n$ positions has the same probability
 \begin{equation}
 p_1=p \times \binom{N-n}{N_x-n_x},
 \end{equation}
which proves our claim.

As a side remark, we could see that the proof does not depend on whether the loss is basis dependent or independent. Thus, the same property also holds for a more general class of losses that could be useful in other settings. Another remark is that independent and identically distributed input also satisfies the property, as in the work of Fung {\it et al}.

\section{Random seed dilution}\label{app:input}
The input is either given directly or expanded from a uniformly random seed. Here, we provide a method for performing the expansion. The expansion is straightforward since the input is also uniformly random within its support. We can simply map a uniform seed of length $\log \binom{N}{c_1}$ bijectively to the input support, which is the $\binom{N}{c_1}$ possibilities of choosing the positions of $c_1$ 0s from the string of of length $N$. Then, we have obtained the desired input. Furthermore, note that this construction is deterministic; thus, input randomness is only needed for the uniformly random seed of length $n$.

For the input of our protocol, the ratio of the initial random seed length to the number of runs $N$ becomes negligible as $N$ goes to infinity because the number of $X$-basis measurements $c_1$ is a constant, as derived in Appendix \ref{App:numXmeas}. More precisely, the min-entropy of the input as well as the length of the uniformly random seed has an upper bound given by
\begin{equation}
\log \binom{N}{c_1}\le c_1 \log N.
\end{equation}
Note that since the detector completely controls this random seed length, calculating the exact input min-entropy is possible. This is very different from estimating the error rate in the finite-key analysis section, in which we can only estimate the range of the error rate with a high  probability of success. Apart from the input specified in the main text, independent and identically distributed bit strings are also a possible choice  for the input. Finally, we remark that the reason to include this input seed length analysis is to make our QRNG composable.

\chapter{Proof for randomness requirement for the CH inequality}
This chapter is the proof of the randomness requirement for the CH inequality.
\section{Proof for finite strategies of choosing input settings}\label{App:proof1}
As we discussed in Sec. 9.2, there are two levels of strategies. One is the strategy of choosing the input settings and the other is about the outputs conditioned on inputs of Alice and Bob. As there are finite deterministic strategies of Alice and Bob, here, we prove that the strategies of choosing input settings is finite and can be characterized by all the possible optimal strategies of Alice and Bob.

Essentially, even the strategies of Alice and Bob are finite, the strategies of choosing input settings can always be infinite. Here, what want to prove is that any optimal strategy (including both levels) can be realized with finite strategies of choosing input settings.

Suppose there exist an optimal strategy that gives maximal CH value with LHVMs. For this strategy, we suppose there are finite strategies of choosing the input settings (the proof for infinite case follows similarly). Then, it is easy to check that for a given $\lambda$ and hence $(p_0(\lambda), p_1(\lambda), p_2(\lambda), p_3(\lambda))$  in the optimal strategy, the optimal strategy for the output of Alice and Bob should be from the set  Eq.~\eqref{eq:chq}. This also proves why we only take account of the possibly optimal deterministic strategies of Alice and Bob.

Now, suppose that there exist $m$ strategies of $\lambda$ of choosing input settings for the first strategy of Alice and Bob, $(p_2-p_0)/2$, that is,
\begin{equation}\label{}
  \begin{array}{ccc}
    \lambda_{1}^1 :& q(\lambda_{1}^1),&(p_0(\lambda_{1}^1), p_1(\lambda_{1}^1), p_2(\lambda_{1}^1), p_3(\lambda_{1}^1)), \\
    \lambda_{1}^2 :& q(\lambda_{1}^2),&  (p_0(\lambda_{1}^2), p_1(\lambda_{1}^2), p_2(\lambda_{1}^2), p_3(\lambda_{1}^2)), \\
     & \dots \\
    \lambda_{1}^m :& q(\lambda_{1}^m),&  (p_0(\lambda_{1}^m), p_1(\lambda_{1}^m), p_2(\lambda_{1}^m), p_3(\lambda_{1}^m)).
  \end{array}
\end{equation}
Here the superscript denotes the $m$ strategies of $\lambda$ and the subscript denotes the strategy for Alice and Bob.
It is easy to see that we can always take an average of all the $m$ strategies without decreasing the Bell value and violate the constraints. In this case, we can define one $\lambda_1$ to the denote all the $\lambda_1^1$, $\lambda_1^2$, $\dots\lambda_1^m$. That is,
\begin{equation}\label{}
  \lambda_1: q(\lambda_1) = \sum_{t=1}^mq(\lambda_1^t), (p_0(\lambda_{1}), p_1(\lambda_{1}), p_2(\lambda_{1}), p_3(\lambda_{1})),
\end{equation}
where
\begin{equation}\label{}
  p_i(\lambda_1) = \frac{1}{q(\lambda_1)}\sum_{t=1}^mq(\lambda_1^t)p_i(\lambda_1^t), \forall i \in\{0,1,2,3\}.
\end{equation}
Thus, we show that the $m$ strategies for choosing input settings can be combined into one for any strategy of Alice and Bob. In the following, we prove this argument in more detail.

\begin{proof}
We use label $t$ to denote the $t$th strategy of choosing input settings for a given strategy of Alice and Bob, $j$ to denote the strategies of Alice and Bob, and $i$  to denote the number of inputs meaning the subscript of $(p_0(\lambda), p_1(\lambda), p_2(\lambda), p_3(\lambda))$.

We denote $\lambda_j^t$ to be the $t$th strategy of choosing input settings when the optimal strategy for Alice and Bob is $j$.
The prior probability for $\lambda$ and input settings of each strategy are denoted as $q(\lambda_j^t)$ and $p_i(\lambda_j^t)$, where $j\in\{1,2,3,4,5\}$ and $i\in \{0,1,2,3\}$, respectively. Denote the Bell value for the $j$th strategy to be $J_j$, which is linear function of $\{p_i(\lambda_j^t)\}$. Thus, the total Bell value is given by
\begin{equation}\label{}
  J = \sum_j \sum_t q(\lambda_j^t)J_j(p_i(\lambda_j^t)),
\end{equation}
and the constraints of $q(\lambda_j^t)$ and $p_i(\lambda_j^t)$ are given by,
\begin{equation}\label{infinite}
\begin{aligned}
  &\sum_{j,t} q(\lambda_j^t) p_i(\lambda_j^t) = 1/4, \forall i \\
  &\sum_i p_i(\lambda_j^t) = 1, \forall j,t \\
  &\sum_{j,t} q(\lambda_j^t) = 1.\\
\end{aligned}
\end{equation}

Just as mentioned above, we can add up $t$ by defining $q(\lambda_{j})$ and $p_i(\lambda_{j})$ by
\begin{equation}\label{normalize}
\begin{aligned}
  &q(\lambda_{j})=\sum_{t} q(\lambda_j^t),\forall j\\
  &p_i(\lambda_{j}) = \frac{\sum_t q(\lambda_j^t)p_i(\lambda_j^t)}{q(\lambda_{j})},\forall i,j.\\
\end{aligned}
\end{equation}
Take Eq.~\ref{normalize} into the Eq.~\ref{infinite}, consequently we find the constraints of $q(\lambda_{j})$ and $p_i(\lambda_{j})$ are given by
\begin{equation}\label{}
\begin{aligned}
 &\sum_{j} q(\lambda_{j}) p_i(\lambda_{j}) = 1/4, \forall i \\
  &\sum_i p_i(\lambda_{j}) = 1, \forall j \\
  &\sum_{j} q(\lambda_{j}) = 1.\\
\end{aligned}
\end{equation}

We should also note that the substitution in Eq.~\eqref{normalize} will not affect the Bell value,
\begin{equation}\label{}
\begin{aligned}
  J &= \sum_j q(\lambda_j)J_j(p_i(\lambda_j)),\\
  &=\sum_j  q(\lambda_j)J_j\left(\frac{\sum_t q(\lambda_j^t)p_i(\lambda_j^t)}{q(\lambda_{j})}\right),\\
  & = \sum_j \sum_t q(\lambda_j^t)J_j(p_i(\lambda_j^t)),
\end{aligned}
\end{equation}
where the last equality is because $J_j$ is a  linear function.

\end{proof}

\section{Optimal strategy of the CH test}
\subsection{General condition}\label{app:CH1}
In this section, we present the optimal strategy in order to maximizing $J_{\mathrm{CH}}^{\mathrm{LHVM}}$ defined in Eq.~\eqref{Eq:mathJlambda} under constraints defined in Eq.~\eqref{eq:constraints}.
\subsubsection{$Q=0$}\label{app:P}
For simplicity, we first consider the randomness requirement $P$ and set $Q$ to be 0. That is, the input randomness is upper bounded by $P$,
\begin{equation}\label{}
  0\leq p(x,y|\lambda) \leq P, \forall x,y,\lambda
\end{equation}

The Bell value $J_{\mathrm{CH}}^{\mathrm{LHVM}}$ with LHVMs is given by
\begin{equation}\label{Eq:appJ}
\begin{aligned}
  &J^{\mathrm{LHVM}}_{\mathrm{CH}}\\
  &= 4\{q(\lambda_1)(p_2(\lambda_1)-p_0(\lambda_1))/2 + q(\lambda_2)(p_1(\lambda_2)-p_0(\lambda_2))/2\\
  &+q(\lambda_3)(p_1(\lambda_3)-p_2(\lambda_3))/2+q(\lambda_4)(p_2(\lambda_4)-p_1(\lambda_4))/2\\
  &+q(\lambda_5)[(p_2(\lambda_5)+p_1(\lambda_5))/2-p_3(\lambda_5)]\}.
\end{aligned}
\end{equation}
Hereafter, we denote  $J_{\mathrm{CH}}^{\mathrm{LHVM}}$ by $J$ for simple notation. Group $J$
by the index of the strategies of Alice and Bob $p_i$, $i\in\{0,1,2,3\}$, instead of $\lambda_i$, then we have
\begin{equation}\label{eq:Proof2}
\begin{aligned}
  J &= 4(J_0 + J_1 + J_2 + J_3),
\end{aligned}
\end{equation}
where
\begin{equation}\label{eq:Proof3}
\begin{aligned}
  J_0 =& \frac{1}{2}[-q(\lambda_1)p_0(\lambda_1) -q(\lambda_2)p_0(\lambda_2)],\\
  J_1 =& \frac{1}{2}[q(\lambda_2)p_1(\lambda_2) +q(\lambda_3)p_1(\lambda_3) \\
  &- q(\lambda_4)p_1(\lambda_4) + q(\lambda_5)p_1(\lambda_5)],\\
  J_2 =& \frac{1}{2}[q(\lambda_1)p_2(\lambda_1) -q(\lambda_3)p_2(\lambda_3) \\
  &+ q(\lambda_4)p_2(\lambda_4) + q(\lambda_5)p_2(\lambda_5)],\\
  J_3 =& -q(\lambda_5)p_3(\lambda_5).\\
\end{aligned}
\end{equation}
And the constraints are given by
\begin{equation}\label{eq:appconstraints}
\begin{aligned}
  &\sum_j q(\lambda_j) p_i(\lambda_j) = 1/4, \forall i \\
  &\sum_i p_i(\lambda_j) = 1, \forall j, \\
  &\sum_j q(\lambda_j) = 1.\\
\end{aligned}
\end{equation}

In the following, we investigate the optimal strategy based on value of $P$.

(1) when $\frac{1}{4}\le P\le \frac{1}{3}$.

With the normalization condition of $p_i(\lambda_j)$, we can rewrite $J$ as
\begin{equation}\label{eq:Proof4}
\begin{aligned}
 J_0 =& -\frac{1}{8} + \frac{1}{2}[q(\lambda_3)p_0(\lambda_3) + q(\lambda_4)p_0(\lambda_4) + q(\lambda_5)p_0(|\lambda_5)],\\
  J_1 =& -\frac{1}{8} +\frac{1}{2}[q(\lambda_1)p_1(\lambda_1) +2q(\lambda_2)p_1(\lambda_2) +2q(\lambda_3)p_1(\lambda_3)\\
  &+2q(\lambda_5)p_2(\lambda_5)],\\
  J_2= & -\frac{1}{8} +\frac{1}{2}[2q(\lambda_1)p_2(\lambda_1) +q(\lambda_2)p_2(\lambda_2) +2q(\lambda_4)p_2(\lambda_4)\\
  &+ 2q(\lambda_5)p_2(\lambda_5)],\\
  J_3 =& -\frac{1}{4} +\frac{1}{2}[2q(\lambda_1)p_3(\lambda_1) +2q(\lambda_2)p_3(\lambda_2) +2q(\lambda_3)p_3(\lambda_3)\\
  &+2q(\lambda_4)p_3(\lambda_4) ].\\
\end{aligned}
\end{equation}

In this case, we can write $J$ by
\begin{equation}\label{Eq:appendixbeta}
  J = 4\left(-\frac{5}{8} + \sum_{ij} \beta_{ij} q(\lambda_j)p_i(\lambda_j)\right),
\end{equation}
where the coefficient is given in Table~\ref{Table:coe2}.
%Clearly, if we want to reach a upper bound for $J$ ,we only need to focus on the items whose corresponding coefficients are bigger. In this specific case  we need to make the item whose coefficients are positive as big as possible.
\begin{table}[hbt]
\centering
\caption{The coefficient of $p_i(\lambda_j)$ in the expression of J.}
\begin{tabular}{cccccc}
  \hline
  % after \\: \hline or \cline{col1-col2} \cline{col3-col4} \dots
  %&&\multicolumn{4}{c}{$(p_B(0),p_B(1))$}\\
   &$q(\lambda_1)$&$q(\lambda_2)$&$q(\lambda_3)$&$q(\lambda_4)$&$q(\lambda_5)$\\
   \hline
   $p_0$&$0$&$0$&$\frac{1}{2}$&$\frac{1}{2}$&$\frac{1}{2}$\\
   $p_1$&$\frac{1}{2}$&$1$&$1$&$0$&$1$\\
   $p_2$&$1$&$\frac{1}{2}$&$0$&$1$&$1$\\
   $p_3$&$1$&$1$&$1$&$1$&$0$\\
  \hline
\end{tabular}\label{Table:coe2}
\end{table}

Note that $p_i$ is upper bounded by $P$, then we have
\begin{equation}\label{eq:Proof5}
\begin{aligned}
  J_0 &\leq -\frac{1}{8} + \frac{P}{2}\left[q(\lambda_3)+ q(\lambda_4) + q(\lambda_5)\right],\\
  J_1 &\leq -\frac{1}{8} +\frac{P}{2}[q(\lambda_1) +2q(\lambda_2) +2q(\lambda_3)+ 2q(\lambda_5)],\\
  J_2 &\leq -\frac{1}{8} +\frac{P}{2}[2q(\lambda_1)+q(\lambda_2) +2q(\lambda_4)+ 2q(\lambda_5)],\\
  J_3 &\leq -\frac{1}{4} +\frac{P}{2}[2q(\lambda_1)+2q(\lambda_2) +2q(\lambda_3)+2q(\lambda_4)].\\
\end{aligned}
\end{equation}
Therefore, we have
\begin{equation}\label{eq:Proof6}
\begin{aligned}
  J &\leq -4\{\frac{5}{8} + \frac{5P}{2}[q(\lambda_1)+q(\lambda_2) +q(\lambda_3)+q(\lambda_4)+q(\lambda_5)]\}\\
  &= \frac{5}{2}(4P-1).
\end{aligned}
\end{equation}

In addition, we can see that the equality holds by simply letting $p_i(\lambda_j)$ to be $P$ for $\beta_{i,j}\neq0$ and $p_i(\lambda_j)$ to be $1-3P$ for $\beta_{i,j}=0$. This special strategy is valid when $P\leq 1/3$, we have to consider differently for the other cases.

%The intuition for maximizing $J$ is to assign bigger value to $q(\lambda_j)p_i(\lambda_j)$ when the corresponded coefficient is larger.

(2) When $\frac{1}{3}\le P\le \frac{3}{8}$.

With the constraints defined in Eq.~\eqref{eq:Proof4}, we can also write $J$ as follows,
\begin{equation}\label{eq:Proof q1}
\begin{aligned}
J=&4\{-\frac{5}{8}+q(\lambda_1)[1-\frac{1}{2} p_1(\lambda_1)-p_0(\lambda_1)]\\
&+q(\lambda_2)[1-\frac{1}{2} p_2(\lambda_2)-p_0(\lambda_2)]\\
&+q(\lambda_3)[1-\frac{1}{2} p_0(\lambda_3)-p_2(\lambda_3)]\\
&+q(\lambda_4)[1-\frac{1}{2} p_0(\lambda_4)-p_1(\lambda_4)]\\
&+q(\lambda_5)[1-\frac{1}{2} p_0(\lambda_5)-p_3(\lambda_3)]\}.\\
\end{aligned}
\end{equation}
Then $J$ can be similarly expressed by
\begin{equation}\label{Eq:appendixbeta2}
  J = 4\left(\frac{3}{8} + \sum_{ij} \beta_{ij} q(\lambda_j)p_i(\lambda_j)\right),
\end{equation}
with coefficient defined in Table~\label{table:2}.

\begin{table}[hbt]
\centering
\caption{The coefficient of $p_i(\lambda_j)$ in the expression of J.}
\begin{tabular}{cccccc}
  \hline
  % after \\: \hline or \cline{col1-col2} \cline{col3-col4} \dots
  %&&\multicolumn{4}{c}{$(p_B(0),p_B(1))$}\\
   &$q(\lambda_1)$&$q(\lambda_2)$&$q(\lambda_3)$&$q(\lambda_4)$&$q(\lambda_5)$\\
   \hline
   $p_0$&$-1$&$-1$&$-\frac{1}{2}$&$-\frac{1}{2}$&$-\frac{1}{2}$\\
   $p_1$&$-\frac{1}{2}$&$0$&$0$&$-1$&$0$\\
   $p_2$&$0$&$-\frac{1}{2}$&$-1$&$0$&$0$\\
   $p_3$&$0$&$0$&$0$&$0$&$-1$\\
  \hline
\end{tabular}
\label{table:2}
\end{table}

The intuition to maximize Eq.~\eqref{Eq:appendixbeta2} is to assign smaller values to $p_i(\lambda_j)$ for smaller corresponding coefficients. Because $\frac{1}{3}\le P\le \frac{3}{8}$, we can see that
\begin{equation}\label{}
  \frac{1}{2} p_i(\lambda_j)+p_{i'}(\lambda_j)\ge \frac{1}{2}(1-2P), \forall i,j,j'.
\end{equation}
Therefore, the Bell value defined in Eq.~\eqref{Eq:appendixbeta2} can be upper bounded by
\begin{equation}\label{eq:Proof qi4}
\begin{aligned}
J&\le4\left[ -\frac{5}{8}+ \sum_i q(\lambda_i)\left(1-\frac{1-2P}{2}\right)\right],\\
&=4P-\frac{1}{2}.
\end{aligned}
\end{equation}
This equal sign can be achieved by following parameter:
\begin{equation}\label{eq:Proof qi5}
\begin{aligned}
&q(\lambda_1)=q(\lambda_2)=\frac{1}{2}-\frac{1}{8(1-2P)};\\
&q(\lambda_3)=q(\lambda_4)=\frac{1}{8(1-2P)}+\frac{1}{8P}-\frac{1}{2};\\
&q(\lambda_5)=1-\frac{1}{4P};\\
&p_0(\lambda_1)=p_0(\lambda_2)=0,p_0(\lambda_3)=p_0(\lambda_4)=p_0(\lambda_5)=1-2P;\\
&p_1(\lambda_1)=1-2P,p_1(\lambda_2)=p_1(\lambda_3)=p_1(\lambda_5)=P,p_1(\lambda_4)=0;\\
&p_2(\lambda_2)=1-2P,p_2(\lambda_1)=p_1(\lambda_4)=p_1(\lambda_5)=P,p_2(\lambda_3)=0;\\
&p_3(\lambda_1)=p_3(\lambda_2)=p_3(\lambda_4)=p_3(\lambda_4)=p,p_3(\lambda_5)=0;\\
\end{aligned}
\end{equation}

(3) When $P\ge \frac{3}{8}$.

For this case, we can easily see that maximal Bell value can be achieved to be $1$, which is the algebra maximum of $J$. We show in the following that the Bell value cannot exceed $1$.

From Eq.~\eqref{Eq:appendixbeta2}, we know that $J$ can be expressed by
\begin{equation}\label{eq:Proof qi1}
\begin{aligned}
J=&\frac{3}{2}-4N\\
\end{aligned}
\end{equation}
where $N$ denotes the part contribute negatively,
\begin{equation}\label{eq:Proof qi2}
\begin{aligned}
N
&=\frac{1}{2}\sum_i q(\lambda_i) p_0(\lambda_i)+\frac{1}{2}[q(\lambda_1)p_0(\lambda_1)+q(\lambda_1) p_1(\lambda_1)\\
&+q(\lambda_2)p_0(\lambda_2)+q(\lambda_2) p_2(\lambda_2)+2q(\lambda_3)p_2(\lambda_3)\\
&+2q(\lambda_4)p_1(\lambda_4)+q(\lambda_5)p_3(\lambda_5)]\\
&=\frac{1}{8}+\frac{1}{2}[q(\lambda_1)(p_0(\lambda_1)+p_1(\lambda_1))+q(\lambda_2)(p_0(\lambda_2)+p_2(\lambda_2))\\
&+2q(\lambda_3)p_2(\lambda_3)+2q(\lambda_4)p_1(\lambda_4)+q(\lambda_5)p_3(\lambda_5)]\\
&\ge \frac{1}{8}.
\end{aligned}
\end{equation}
Therefore, we show that $J\leq1$. The equal sign is satisfied with the following strategy
\begin{equation}\label{eq:Proof qi3}
\begin{aligned}
&q(\lambda_1)=q(\lambda_2)=0;\\
&p_0(\lambda_3)=p_0(\lambda_4)=p_0(\lambda_5)=\frac{1}{4};\\
&p_1(\lambda_3)=p_1(\lambda_5)=\frac{3}{8},p_1(\lambda_4)=0;\\
&p_2(\lambda_4)=p_2(\lambda_5)=\frac{3}{8},p_2(\lambda_3)=0;\\
&p_3(\lambda_3)=p_1(\lambda_4)=\frac{3}{8},p_3(\lambda_5)=0.\\
\end{aligned}
\end{equation}

\subsubsection{$Q\neq0$}\label{P,Q}
In this part, we consider the input randomness quantification of $p_i(\lambda_j)$ with both $P$ and $Q$, which are defined in Eq.~\eqref{eq:randomness}. In this case, we have
\begin{equation}\label{}
  Q\leq p_i(\lambda_j) \leq P, \forall x,y,\lambda
\end{equation}

Note that, if we we substitute $p_i(\lambda_j)$ by
\begin{equation}\label{Eq:apppp}
  p'_i(\lambda_j)=\frac{p_i(\lambda_j)-Q}{1-4Q},
\end{equation}
we can show that the constraints on $p'_i(\lambda_j)$ are given by
 \begin{equation}\label{eq:constraints qi}
\begin{aligned}
  &0\le p'_i(\lambda_j)\le P' = \frac{P-Q}{1-4Q},\forall i,j \\
  &\sum_i q(\lambda_i) p'_j(\lambda_i) = 1/4, \forall j\\
  &\sum_i p'_i(\lambda_j) = 1, \forall j.
\end{aligned}
\end{equation}
Compared to Eq.\ref{Eq:appJ}, if we replace $p_i(\lambda_j)$ by $p'_i(\lambda_j)$, we obtain a new Bell value $J'$,
\begin{equation}\label{eq:def}
\begin{aligned}
  &J'/4\\
  &= q(\lambda_1)(p'_2(\lambda_1)-p'_0(\lambda_1))/2 + q(\lambda_2)(p'_1(\lambda_2)-p'_0(\lambda_2))/2\\
  &+q(\lambda_3)(p'_1(\lambda_3)-p'_2(\lambda_3))/2\\
  &+q(\lambda_4)(p'_2(\lambda_4)-p'_1(\lambda_4))/2+q(\lambda_5)[(p'_2(\lambda_5)+p'_1(\lambda_5))/2\\
  &-p'_3(\lambda_5)]
\end{aligned}
\end{equation}
Because $p_i(\lambda_j)$ and $p'_i(\lambda_j)$ are related by Eq.~\eqref{Eq:apppp}, we can prove that
\begin{equation}\label{Eq:appJJ}
  J(p_i(\lambda_j)) = (1-4Q)J'(p'_i(\lambda_j)).
\end{equation}

Therefore, instead of considering both upper and lower bound of $p_i(\lambda_j)$ in the original Bell's inequality, we can equivalently consider the same Bell inequality with $p'_i(\lambda_j)$, which has upper bound $P$ and lower bound $0$. We have our result as follows,

(1) When $\frac{P-Q}{1-4Q}\le \frac{1}{3}$, that is $3P+Q\le 1$
\begin{equation}
\begin{aligned}
J(P,Q)&=(1-4Q)J\left(\frac{P-Q}{1-4Q},0\right)\\
 &=(1-4Q)\frac{5}{2}\left(\frac{4P-4Q}{1-4Q}-1\right)\\
 &= \frac{5}{2}(4P-1)
 \end{aligned}
\end{equation}

(2) When $\frac{1}{3}\le \frac{P-Q}{1-4Q}\le \frac{3}{8} $, that is $3P+Q\ge 1$ and $2P+Q\le \frac{3}{4}$
\begin{equation}
\begin{aligned}
J(P,Q)&=(1-4Q)J\left(\frac{P-Q}{1-4Q}, 0\right)\\
 &=(1-4Q)\left(4\frac{P-Q}{1-4Q}-\frac{1}{2}\right)\\
 &= 4P-2Q-\frac{1}{2}\\
 \end{aligned}
\end{equation}

(3) When $ \frac{P-Q}{1-4Q}\ge \frac{3}{8} $, that is $2P+Q\ge \frac{3}{4}$
\begin{equation}
\begin{aligned}
J(P,Q)&=(1-4Q)J\left(\frac{P-Q}{1-4Q}, 0\right)\\
 &= 1-4Q\\
 \end{aligned}
\end{equation}

Therefore, the optimal CH value $J^{\mathrm{LHVM}}_{\mathrm{CH}}$ with LHVMs,
\begin{equation}\label{}
  J^{\mathrm{LHVM}}_{\mathrm{CH}}(P,Q) =
  \left\{
  \begin{array}{cc}
    \frac{5}{2}(4P-1) & 3P+Q\le 1\\
    1-4Q & 2P+Q\ge \frac{3}{4}\\
     4P-2Q-\frac{1}{2} & \mathrm{else}
  \end{array}
  \right.
\end{equation}

\subsection{Factorizable condition}\label{app:fac}
Now, we consider the optimal strategy of the CH test with LHVMs under factorizable condition,
\begin{equation}\label{}
  p(i,j) = p_A(i)p_B(j).
\end{equation}
As we denote $p(i, j)$ by $p_{2*i+j}$, we have
 \begin{equation}\label{eq:Proof2}
\begin{aligned}
  p_0 & = p_A(0)p_B(0)\\
  p_1 & = p_A(0)p_B(1)\\
  p_2 & = p_A(1)p_B(0)\\
  p_3 & = p_A(1)p_B(1)\\
\end{aligned}
\end{equation}

\subsubsection{$Q=0$}
Similarly, we consider first the case with $Q=0$. In the following, we show that all the five possible strategies are upper bounded by $P-1/4$.

(1) When $P\le \frac{1}{2}$.

The result is based on the order of $p_1$, $p_2$, $p_3$, and $p_4$.

(a) $p_3 \geq p_2 \geq p_1 \geq p_0$ and $p_3 \geq p_1 \geq p_2 \geq p_0$.

This case is equivalent to $p_A(1) \geq p_A(0)$ and $p_B(1) \geq p_B(0)$. Thus we have $p_A(1)p_B(1) \leq P$. Amongst the five strategies, the biggest one is $ (p_2 - p_0)/2$, which can be upper bounded by
 \begin{equation}\label{eq:}
\begin{aligned}
  (p_2 - p_0)/2 & = (2p_A(1) - 1)(1 - p_B(1))/2,\\
   & \leq \frac{1}{2}[2p_A(1) + p_B(1) - 2p_A(1)p_B(1) - 1],\\
   & \leq P -\frac{1}{4}.
\end{aligned}
\end{equation}

(b) $p_1 \geq p_0 \geq p_3 \geq p_2$ and $p_3 \geq p_1 \geq p_2 \geq p_0$.

This case is equivalent to $p_A(0) \geq p_A(1)$ and $p_B(1) \geq p_B(0)$. Thus we have $p_A(0)p_B(1) \leq P$. Amongst the five strategies, the biggest one is  $(p_1 - p_2)/2$, which can be upper bounded by
 \begin{equation}\label{eq:}
\begin{aligned}
  (p_1 - p_2)/2 & = (p_A(0)p_B(1) -(1 - p_A(0))(1 - p_B(1)))/2,\\
   & \leq \frac{1}{2}[p_A(0) + p_B(1) - 1 ],\\
   & \leq P -\frac{1}{4}.
\end{aligned}
\end{equation}

(c) $p_2 \geq p_3 \geq p_0 \geq p_1$ and $p_2 \geq p_0 \geq p_3 \geq p_1$.

This case is equivalent to $p_A(1) \geq p_A(0)$ and $p_B(0) \geq p_B(1)$. Thus we have $p_A(1)p_B(0) \leq P$. Amongst the five strategies, the biggest one is  $(p_2 - p_1)/2$, which can be upper bounded by
 \begin{equation}\label{eq:}
\begin{aligned}
  (p_2 - p_1)/2 & =  (p_A(1)p_B(0) -(1 - p_A(1))(1 - p_B(0)))/2,\\
   & \leq \frac{1}{2}[p_A(1) + p_B(0) - 1 ],\\
   & \leq P -\frac{1}{4}.
\end{aligned}
\end{equation}

(d) $p_0 \geq p_1\geq p_2 \geq p_3$ and $p_0 \geq p_2 \geq p_1 \geq p_3$.

This case is equivalent to $p_A(0) \geq p_A(1)$ and $p_B(0) \geq p_B(1)$. Thus we have $p_A(0)p_B(0) \leq P$. Amongst the five strategies, the biggest one is  $(p_1 + p_2)/2 - p_3$, which can be upper bounded by
 \begin{equation}\label{eq:}
\begin{aligned}
  &(p_1 + p_2)/2 - p_3 \\
  & =p_A(0)(1 - p_B(0))/2 + (1 - p_A(0))p_B(0)/2 \\
  &- (1 - p_A(0))(1 -  p_B(0)),\\
   & \leq \frac{3}{2}[p_A(0) + p_B(0)] - 2p_A(0)p_B(0) - 1,\\
   & \leq P -\frac{1}{4}.
\end{aligned}
\end{equation}

Therefore, we show that all the strategies are upper bounded by $P-1/4$. Then the total Bell value
\begin{equation}\label{}
  J\leq 4(P-1/4) = 4P - 1,
\end{equation}
and the equal sign holds.

(2) When $P\ge \frac{1}{2}$.

It is easy to see that the maximal Bell value $J$ reaches 1 when $P\ge \frac{1}{2}$.

Consequently, we show the optimal Bell value  $J$ with LHVMs,
\begin{equation}\label{}
  J(P) =
  \left\{
  \begin{array}{cc}
    (4P-1) & P\le \frac{1}{2}\\
    1 & P> \frac{1}{2}\\
  \end{array}
  \right.
\end{equation}

\subsubsection{$Q\neq 0$}
We can follow a similar way in Appendix~\ref{P,Q} to take account of nonzero $Q$.

(1) When $\frac{P-Q}{1-4Q}\le \frac{1}{2}$, that is $P+Q\le \frac{1}{2}$
\begin{equation}
\begin{aligned}
J(P,Q)&=(1-4Q)J\left(\frac{P-Q}{1-4Q}, 0\right)\\
 &=(1-4Q)\left(4\frac{P-Q}{1-4Q}-1\right)\\
 &= 4P-1.
 \end{aligned}
\end{equation}

(2) When $\frac{P-Q}{1-4Q}> \frac{1}{2}$, that is $P+Q> \frac{1}{2}$
\begin{equation}
\begin{aligned}
J(P,Q)&=(1-4Q)J\left(\frac{P-Q}{1-4Q}, 0\right)\\
 &= 1-4Q\\
 \end{aligned}
\end{equation}

Thus, the Bell value $J^{\mathrm{LHVM, Fac}}_{\mathrm{CH}}$ with LHVMs under factorizable condition is,
\begin{equation}\label{}
  J^{\mathrm{LHVM, Fac}}_{\mathrm{CH}}(P,Q) =
  \left\{
  \begin{array}{cc}
    4P-1 & P+Q\le \frac{1}{2}\\
    1-4Q & P+Q> \frac{1}{2}\\
  \end{array}
  \right.
\end{equation}

\section{Optimal strategy of the CHSH inequality}
\subsection{CH and CHSH inequalities under NS}\label{app:equiva}
In this section, we prove that the CH and CHSH inequality are equivalent when NS is assumed. We refer to \cite{Rosset14} for detail discussion about the connection between CH and CHSH.

\begin{proof}
According to the inputs, we can divide the CHSH inequality into four parts. When inputs are $ij$, define :
\begin{equation}\label{}
  J_{ij} = \sum_{a,b\in\{0,1\}}(-1)^{ a + b + ij}p(a, b|i, j).
\end{equation}
Owing to the NS condition, $J_{ij}$ can be rewritten by probabilities with output $0$,
\begin{equation}\label{}
\begin{aligned}
  J_{ij} &= \sum_{a,b\in\{0,1\}}(-1)^{ a + b + ij}p(a, b|i, j),\\
          &=  (-1)^{ij}[p(0,0|i,j)-p(0,1|i,j)-p(1,0|i,j)+p(1,1|0i,j)],  \\
          &=   (-1)^{ij}[p(0,0|i,j)-(p_A(0|i)- p(0,0|i,j))-(p_B(0|j)\\
          &- p(0,0|i,j))+(p_A(1|i)-p_B(0|j)+p(0,0|i,j))],\\
          &=(-1)^{ij}[1+4p(0,0|i,j)-2p_A(0|i)-2p_B(0|j)].\\
\end{aligned}
\end{equation}
%In order to get specific correlation

Therefore, we have
\begin{equation}\label{}
\begin{aligned}
J_{\mathrm{CHSH}}&= \sum_{ij} J_{ij},\\
&=\sum_{ij} (-1)^{ij}(1+4p(0,0|i,j)-2p_A(0|i)-2p_B(0|j)),\\
 &= 2+4[p(0,0|0,0)+p(0,0|0,1)+p(0,0|1,0)\\
 &-p(0,0|1,1)-p_A(0|0)-p_B(0|0)],\\
 &= 2+4J_{\mathrm{CH}}.
 \end{aligned}
\end{equation}

\end{proof}
Hence, under the NS assumption, the value of the CH and the CHSH inequality are linearly related.  To analyze the best LHVMs strategy for the CH test, we can therefore consider the CHSH Bell test instead.

\subsection{General condition}\label{app:CHSH}
Follwing the similar method described above, we first consider deterministic strategies, i.e., $p_A(0|x),p_B(0|y)\in\{0,1\}$ for the reason that any probabilistic LHVM could be realized with convex combination of deterministic ones.
%Because of the symmetry of CHSH inequality, better strategies can be easily found.
Denote $p(i,j)$ as $p_{2*i + j}$, it is easy to show that the possible optimal deterministic strategies for $J_\lambda$ are
\begin{equation}
\begin{aligned}
   \{p_0+p_1+p_2-p_3&, p_0+p_1+p_3-p_2, \\
   p_0+p_2+p_3-p_1&, p_1+p_2+p_3-p_0\},
\end{aligned}
\end{equation}

\subsubsection{$Q=0$}
Here, we also first consider that $Q=0$.

(1) When $P\le \frac{1}{3}$.

We can show that, all the four strategies are upper bounded by $6P-1$. Take  the strategy of $p_0+p_1+p_2-p_3$ as an example,

\begin{equation}
\begin{aligned}
   p_0+p_1+p_2-p_3&\le P+P+P-(1-3P)\\
   &=6P-1.
\end{aligned}
\end{equation}

In this case, we can see that the CHSH value $J$ is upper bounded by $4(6P-1)$.

(2) When $P> \frac{1}{3}$.

In this case, LHVMs reaches the maximum Bell value, that is  $J$ can be 4.

Thus, the Bell value $J$ LHVMs is
\begin{equation}\label{}
  J(P) =
  \left\{
  \begin{array}{cc}
    24P-4 & P\le \frac{1}{3},\\
    4 & P> \frac{1}{3}.
  \end{array}
  \right.
\end{equation}

\subsubsection{$Q\neq0$}
For the case that $Q$ is nonzero, we apply the same transformation as Appendix~\ref{P,Q}. After the transformation defined in Eq.~\eqref{Eq:apppp}, the relation between $J(p_i(\lambda_j))$ and $J'(p'_i(\lambda_j))$ is given by
\begin{equation}\label{Eq:appJJ2}
  J(p_i(\lambda_j)) = (1-4Q)J'(p'_i(\lambda_j)) + 8Q.
\end{equation}
In this case, the optimal Bell value $J^{\mathrm{LHVM}}_{\mathrm{CHSH}}$ for the CHSH inequality with LHVMs is
\begin{equation}\label{}
  J^{\mathrm{LHVM}}_{\mathrm{CHSH}}(P,Q) =
  \left\{
  \begin{array}{cc}
    24P-4 & 3P+Q\le 1,\\
    4-8Q & 3P+Q\ge 1.
  \end{array}
  \right.
\end{equation}
And the optimal CH value $J^{\mathrm{LHVM, NS}}_{\mathrm{CH}}$ with LHVMs under NS is
\begin{equation}\label{}
  J^{\mathrm{LHVM, \mathrm{NS}}}_{\mathrm{CH}}(P,Q) =
  \left\{
  \begin{array}{cc}
    6P-3/2 & 3P+Q\le 1,\\
    1/2-2Q & 3P+Q\ge 1.
  \end{array}
  \right.
\end{equation}

\subsection{Factorizable condition}\label{app:NSFac}
In addition, we consider the factorizable condition,
\begin{equation}\label{}
  p(i,j) = p_A(i)p_B(j).
\end{equation}
In this case, we have
 \begin{equation}\label{eq:}
\begin{aligned}
  p_0 & = p_A(0)p_B(0)\\
  p_1 & = p_A(0)p_B(1)\\
  p_2 & = p_A(1)p_B(0)\\
  p_3 & = p_A(1)p_B(1)\\
\end{aligned}
\end{equation}

\subsubsection{$Q=0$}

(1) When $P\leq\frac{1}{2}$.

For the case that $Q=0$, $p_i$ are upper bounded by $P$ only. As the four strategies are symmetric, suppose that $p_3$ is the smallest one, which  is equivalent to $p_A(0) \geq p_A(1)$ and $p_B(0) \geq p_B(1)$. Thus we can see that $p_0+p_1+p_2-p_3$ is the largest strategy and is also upper bounded by
 \begin{equation}\label{eq:Proof2}
\begin{aligned}
  p_0+p_1+p_2-p_3 & = p_A(0) +(1 - p_A(0))(2p_B(0)-1)\\
   &=1-2(1-p_A(0))(1 - p_B(0)),\\
   & \leq 1-(1-2P),\\
   &=2P
\end{aligned}
\end{equation}

Thus, we see that all the strategies are upper bounded by $2P$. Then the Bell value is upper bounded by $8P$.

(2) When$P> \frac{1}{2}$.

We can easily see that LHVMs reaches the maximum Bell value, that is  $J$ can be 4.

Consequently, the CHSH Bell value $J$ with LHVMs with factorizable condition is given by,
\begin{equation}\label{}
  J^{\mathrm{LHVM}}_{\mathrm{CHSH}}(P) =
  \left\{
  \begin{array}{cc}
    8P & P\le \frac{1}{2}\\
    4 & P> \frac{1}{2}\\
  \end{array}
  \right.
\end{equation}

\subsubsection{$Q\neq0$}
For the case where $Q\neq0$, we can similarly derive our result. The  CHSH Bell value $J^{\mathrm{LHVM,Fac}}_{\mathrm{CHSH}}(P,Q)$ is given by
\begin{equation}\label{}
  J^{\mathrm{LHVM,Fac}}_{\mathrm{CHSH}}(P,Q) =
  \left\{
  \begin{array}{cc}
    8P & P+Q \le \frac{1}{2},\\
    4-8Q & P+Q> \frac{1}{2}.\\
  \end{array}
  \right.
\end{equation}
And the optimal CH value $J^{\mathrm{LHVM, NS, Fac}}_{\mathrm{CH}}$ with LHVMs under NS and factorizable condition is
\begin{equation}\label{}
  J^{\mathrm{LHVM, NS, Fac}}_{\mathrm{CH}}(P,Q) =
  \left\{
  \begin{array}{cc}
    2P-1/2 & P+Q \le \frac{1}{2},\\
    1/2-2Q & P+Q> \frac{1}{2}.\\
  \end{array}
  \right.
\end{equation}

%% This defines the bibliography file (main.bib) and the bibliography style.
%% If you want to create a bibliography file by hand, change the contents of
%% this file to a `thebibliography' environment.  For more information
%% see section 4.3 of the LaTeX manual.
\begin{singlespace}
\bibliographystyle{unsrt}
\bibliography{Thesisbib}
\bibliographystyle{plain}
\addcontentsline{toc}{chapter}{Bibliography}
\end{singlespace}

\begin{center}
{\Huge{\textbf{Publication}}}
\end{center}
\begin{enumerate}
\item	L. Chen, Z. Li, X. Yao, M. Huang, W. Li, H. Lu, X. Yuan, Y. Zhang, X. Jiang, C. Peng, L. Li, N. Liu, X.
Ma, C. Lu, Y. Chen, J. Pan, New source of ten-photon entanglement from thin BiB3O6 crystals.
Optica, 4(1): 77-83, 2017
\item X. Yuan, K. Liu, Y. Xu, W. Wang, Y. Ma, F. Zhang, Z. Yan, R Vijay, L. Sun, X. Ma, Experimental
quantum randomness processing using superconducting qubits, Phys. Rev. Lett. 117 1 10502,
2016;
\item X. Yuan, Z. Zhang, N. Lutkenhaus, X. Ma, Simulating single photons with realistic photon
sources; Phys. Rev. A 94, 062305, 2016;
\item X. Ma, X. Yuan, Z. Cao, B. Qi, Z, Zhang, Quantum random number generation, NPJ Quantum
Information 21 16021, 2016;
\item Z. Cao, H, Hongyi, X. Yuan, X. Ma, Source-independent quantum random number generation,
Phys. Rev. X 6 1 11020, 2016;
\item X. Yuan, Q. Mei, S. Zhou, X. Ma, Reliable and robust entanglement witness, Phys. Rev. A 93 (4),
042317, 2016;
\item G. Chiribella, X. Yuan, Bridging the gap between general probabilistic theories and the device-
independent framework for nonlocality and contextuality; Information and Computation, 250
15, 2016;
\item Q. Zhao, X. Yuan, X. Ma, Efficient measurement-device-independent detection of multipartite
entanglement structure, Phys. Rev. A 94 012343, 2016; (contributed equally);
\item X. Yuan, S. Assad, J. Thompson, J. Haw, V. Vedral, T. Ralph, P. Lam, C. Weedbrook, M. Gu,
Replicating the benefits of closed timelike curves without breaking causality, NPJ Quantum Information 1 15007, 2015;
\item X. Yuan, H. Zhou, Z. Cao, X. Ma, Xiongfeng; Intrinsic randomness as a measure of quantum coherence; Physical Review A 92 2 22124, 2015;
\item X. Yuan, Q. Zhao, X. Ma, Clauser-Horne Bell test with imperfect random inputs, Physical Review A 92 2 22107, 2015;
\item X. Yuan, Z. Cao, X. Ma, Randomness requirement on the Clauser-Horne-Shimony-Holt Bell test in the multiple-run scenario, Physical Review A 91 3 32111, 2015;
\item H. Zhou, X. Yuan, X. Ma, Randomness generation based on spontaneous emissions of lasers, Physical Review A 91 6 62316, 2015;
\item P. Xu, X. Yuan, L. Chen, H. Lu, X. Yao, X. Ma, Y. Chen, J. Pan, Implementation of a measurement- device-independent entanglement witness; Physical Review Letters 112 14 140506, 2014, (contributed equally)
\item Z. Zhang, X. Yuan, Z. Cao, X. Ma, Round-robin differential-phase-shift quantum key distribution, arXiv:1505.02481, 2015; (Accepted by New J. Phys.)
\item G. Chiribella, X. Yuan, Quantum theory from quantum information: the purification route, Canadian Journal of Physics 91 6 475-478, 2013;
\item X. Yuan, G. Bai, T. Peng, X. Ma, Quantum uncertainty relation of coherence, arXiv:1612.02573,
2016;
\item X. Yuan, Q. Zhao, D. Girolami, X. Ma, Interplay between local quantum randomness and non-
local information access, arXiv:1605.07818, 2016;
\item G. Chiribella, X. Yuan, Measurement sharpness cuts nonlocality and contextuality in every
physical theory, arXiv:1404.3348, 2014.
\end{enumerate}
\end{document}